\newcommand{\N}{\mathbb{N}}
\newcommand{\F}{\mathcal{F}}
\newcommand{\sig}[1]{\funs(#1)}
\newcommand{\V}{\mathcal{V}}
\newcommand{\M}{\mathcal{M}}
\renewcommand{\P}{\mathcal{P}}
\newcommand{\Sorts}{\mathcal{S}}
\newcommand{\Terms}{\mathcal{T}}
\newcommand{\Rules}{\mathcal{R}}
\newcommand{\AlterRules}{{\mathcal{U}}}
\newcommand{\RulesEta}{\Rules^{\mathtt{ext}}}
\newcommand{\Defineds}{\mathcal{D}}
\newcommand{\FV}{\mathit{FV}}
\newcommand{\FMV}{\mathit{FMV}}
\newcommand{\domain}{\mathtt{dom}}
\newcommand{\SDP}{\mathit{SDP}}
\newcommand{\FR}{\mathit{FR}}
\newcommand{\UR}{\mathit{UR}}
\newcommand{\asort}{\iota}
\newcommand{\bsort}{\kappa}
\newcommand{\atype}{\sigma}
\newcommand{\btype}{\tau}
\newcommand{\ctype}{\pi}
\newcommand{\identifier}[1]{\mathtt{#1}}
\newcommand{\afun}{\identifier{f}}
\newcommand{\bfun}{\identifier{g}}
\newcommand{\cfun}{\identifier{h}}
\newcommand{\adpprob}{M}
\newcommand{\bdpprob}{Q}
\newcommand{\avarormeta}{b}
\newcommand{\no}{\texttt{NO}}
\newcommand{\wanda}{\textsf{WANDA}\xspace}
\newcommand{\aprove}{\textsf{AProVE}\xspace}
\newcommand{\minisat}{\textsf{MiniSat}\xspace}
\newcommand{\csiho}{$\textsf{CSI}^{\textsf{ho}}$\xspace}
\renewcommand{\csiho}{$\textsf{CSI\textasciicircum{}ho}$\xspace}
\newcommand{\acph}{\textsf{ACPH}\xspace}
\newcommand{\app}[2]{#1\ #2}
\newcommand{\apps}[3]{#1\ #2 \cdots #3}
\newcommand{\abs}[2]{\lambda #1.#2}
\newcommand{\meta}[2]{#1\langle #2\rangle}
\newcommand{\metaapply}[2]{#1 \langle\!\langle #2 \rangle\!\rangle}
\newcommand{\arity}{\mathit{arity}}
\newcommand{\minarity}[1]{\mathit{minar}}
\newcommand{\head}{\mathsf{head}}
\newcommand{\symb}[1]{\mathtt{#1}}
\newcommand{\arrtype}{\rightarrow}
\newcommand{\arrdp}{\Rrightarrow}
\newcommand{\arrz}{\Rightarrow}
\newcommand{\arr}[1]{\arrz_{#1}}
\newcommand{\arrr}[1]{\arr{#1}^*}
\newcommand{\supterm}{\rhd}
\newcommand{\suptermeq}{\unrhd}
\newcommand{\bsuptermeq}[1]{\unrhd_{#1}}
\newcommand{\bettersuptermeq}{\mathbin{\hspace{1pt}\underline{\hspace{-1pt}\blacktriangleright\hspace{-1pt}}\hspace{1pt}}}
\newcommand{\bettersupterm}{\blacktriangleright}
\newcommand{\safesup}{\unrhd_{\mathtt{safe}}^{\text{\tiny\cite{suz:kus:bla:11}}}}
\newcommand{\cand}{\mathsf{cand}}
\newcommand{\Proc}{\mathit{Proc}}
\newcommand{\static}{\mathtt{computable}}
\newcommand{\minimal}{\mathtt{minimal}}
\newcommand{\arbitrary}{\mathtt{arbitrary}}
\newcommand{\formative}{\mathtt{formative}}
\newcommand{\nonformative}{\mathtt{all}}
\newcommand{\metafy}{\mathit{metafy}}
\newcommand{\greqsort}{\succeq^{\Sorts}}
\newcommand{\leqsort}{\preceq^{\Sorts}}
\newcommand{\eqsort}{\approx^{\Sorts}}
\newcommand{\grsort}{\succ^{\Sorts}}
\newcommand{\gracsortup}{\succeq^{\Sorts}_+}
\newcommand{\gracsortdown}{\succ^{\Sorts}_-}
\newcommand{\gracc}{\unrhd_{\mathtt{acc}}}
\newcommand{\accreduce}[1]{\Rrightarrow_{#1}}
\newcommand{\Acc}{\mathit{Acc}}
\newcommand{\approxp}{\approx_\mac}
\newcommand{\project}{\overline{\nu}}
\newcommand{\funs}{\mathtt{funs}}
\newcommand{\unsharp}[1]{#1^\flat}
\newcommand{\etalong}[1]{#1\!\!\uparrow^\eta}
\newcommand{\halfetalong}[1]{\overline{#1}}
\newcommand{\pgt}{\succ} 
\newcommand{\pge}{\succcurlyeq} 
\newcommand{\rge}{\succsim} 
\newcommand{\nul}{\symb{0}}
\newcommand{\one}{\symb{1}}
\newcommand{\nil}{\symb{nil}}
\newcommand{\cons}{\symb{cons}}
\newcommand{\ssin}{\symb{sin}}
\newcommand{\suc}{\symb{s}}
\newcommand{\fix}{\symb{fix}}
\newcommand{\map}{\symb{map}}
\newcommand{\deriv}{\symb{deriv}}
\newcommand{\nat}{\symb{nat}}
\newcommand{\lijst}{\symb{list}}
\newcommand{\real}{\symb{real}}
\newcommand{\mac}{{\textcolor{red}{e}}}
\renewcommand{\mac}{k}
\newcommand{\mia}{{\textcolor{red}{k}}}
\renewcommand{\mia}{k}
\newcommand{\maa}{{\textcolor{red}{m}}}
\renewcommand{\maa}{m}
\newcommand{\refDef}[1]{Def.~\ref{#1}}
\newcommand{\refEx}[1]{Ex.~\ref{#1}}
\newcommand{\refLemma}[1]{Lemma~\ref{#1}}
\newcommand{\refThm}[1]{Thm.~\ref{#1}}
\newcommand{\refApp}[1]{Appendix~\ref{#1}}
\newcommand{\refSec}[1]{\S~\ref{#1}}
\newcommand{\onlypaper}[1]{}
\newcommand{\onlyarxiv}[1]{#1}
\begin{document}

\title{A static higher-order dependency pair framework \onlyarxiv{(extended version)}}

\author{Carsten Fuhs\inst{1} \and Cynthia Kop\inst{2}}
\institute{Dept.\ of Comp.\ Sci.\ and Inf.\ Sys.,
Birkbeck, University of London, UK \and
Dept.\ of Software Science,
Radboud University Nijmegen, The Netherlands}

\maketitle

\begin{abstract}
We revisit the static dependency pair method for proving termination
of higher-order term rewriting and extend it in a number of ways:
(1) We introduce a new rewrite formalism designed for general applicability
in termination proving of higher-order rewriting, Algebraic Functional
Systems with Meta-variables.
(2) We provide a syntactically checkable soundness criterion to make
  the method applicable to a large class of rewrite systems.
(3) We propose a modular dependency pair \emph{framework} for this
higher-order setting.
(4) We introduce a fine-grained notion of \emph{formative} and
\emph{computable} chains to render the framework more powerful.
(5) We formulate several existing and new termination proving techniques
in the form of processors within our framework.

The framework has been implemented in the (fully automatic)
higher-order termination tool \wanda.
\end{abstract}

\section{Introduction}

Term rewriting \cite{baa:nip:98,ter:03} is an important area of logic, with
applications in many different areas of computer science
\cite{bac:gan:94,der:kap:89,fuh:kop:nis:17,haf:nip:10,hoe:arv:99,mea:92,ott:bro:ess:gie:10}.
\emph{Higher-order} term rewriting -- which extends the traditional
\emph{first-order} term rewriting with higher-order types and binders
as in the $\lambda$-calculus -- offers a formal foundation of functional
programming
and a tool for equational reasoning in
higher-order logic.
A key question
in the analysis of both first- and
higher-order term rewriting is \emph{termination};
both for its own sake, and
as part of confluence and equivalence analysis.

In first-order term rewriting, a
hugely
effective method for proving
termination (both manually and automatically) is the
\emph{dependency pair (DP) approach} \cite{art:gie:00}.
This approach has been extended to the \emph{DP framework}
\cite{gie:thi:sch:05:2,gie:thi:sch:fal:06}, a highly
modular
methodology which new techniques for proving termination \emph{and
non-termination} can easily be plugged into
in the form of \emph{processors}.

In higher-order rewriting, two DP approaches
with distinct costs and benefits are used:
\emph{dynamic} \cite{sak:wat:sak:01,kop:raa:12} and
\emph{static}
\cite{bla:06,sak:kus:05,kus:iso:sak:bla:09,suz:kus:bla:11,kus:13,kus:18}
DPs.
Dynamic DPs are more broadly applicable, yet static DPs often
enable more powerful analysis techniques.
Still, neither approach has the modularity and extendability of the DP
framework, nor can they
be used to prove non-termination.
Also, these approaches consider different styles
of higher-order rewriting, which means that
for all results certain language features are not available.

In this paper, we
address these issues for the \emph{static} DP
approach by extending it to a full higher-order
\emph{dependency pair framework} for both termination and
non-termination analysis.  For broad applicability, we
introduce a new rewriting formalism, \emph{AFSMs},
to capture several
flavours of higher-order rewriting, including \emph{AFSs}
\cite{jou:rub:99} (used in the annual Termination Competition
\cite{termcomp}) and \emph{pattern HRSs} \cite{nip:91,mil:91} (used
in the annual Confluence Competition \cite{coco}).
To show the versatility and power of this methodology, we
define various processors
in the framework -- both
adaptations of existing processors from the literature and entirely
new ones.

\medskip
\emph{Detailed contributions.}
We
reformulate the results of
\cite{bla:06,sak:kus:05,kus:iso:sak:bla:09,suz:kus:bla:11,kus:13} into
a DP framework for AFSMs. In doing so, we
instantiate the applicability restriction of
\cite{kus:13} by a very liberal syntactic condition, and add two
new flags to track properties of DP problems: one
completely new, one
from an earlier work by the authors for the
\emph{first-order} DP framework \cite{fuh:kop:14}.
We
give eight \emph{processors} for reasoning
in our
framework: four
translations of techniques
from
static DP approaches, three
techniques from
first-order or dynamic DPs, and one completely new.

This is a \emph{foundational} paper, focused on defining a general
theoretical framework for higher-order termination analysis using
dependency pairs rather than questions of implementation.
We have, however, implemented most of these
results in the fully
automatic termination analysis
tool \wanda~\cite{wanda}.

\medskip
\emph{Related Work.}
There is a vast body of work in the first-order setting regarding the
DP approach \cite{art:gie:00} and framework
\cite{gie:thi:sch:05:2,gie:thi:sch:fal:06,hir:mid:07}.
We have drawn from the ideas in these works for the core structure of
the higher-order framework, but have added some new features of our
own and adapted results to the higher-order setting.

There is no true higher-order DP \emph{framework} yet:
both static and dynamic approaches actually lie halfway between the
original ``DP approach'' of first-order rewriting and a full DP
framework as in \cite{gie:thi:sch:05:2,gie:thi:sch:fal:06}.
Most of these works
\cite{kop:raa:11,kop:raa:12,kus:13,kus:iso:sak:bla:09,suz:kus:bla:11}
prove ``non-loopingness'' or ``chain-freeness'' of a set $\P$ of DPs
through a number of theorems.
Yet, there is no concept of \emph{DP problems}, and the set
$\Rules$ of rules cannot be altered.  They also fix assumptions on
dependency chains -- such as minimality \cite{kus:iso:sak:bla:09} or
being ``tagged'' \cite{kop:raa:12} -- which frustrate extendability
and are more naturally dealt with in a DP framework using flags.

The static DP approach for higher-order term rewriting
is discussed in, e.g.,
\cite{kus:iso:sak:bla:09,sak:kus:05,suz:kus:bla:11}.
The approach is limited to \emph{plain function passing (PFP)}
systems. The definition of PFP has been made more liberal
in later papers, but always concerns the position of higher-order
variables in the left-hand sides of rules.  These works include
non-pattern HRSs \cite{kus:iso:sak:bla:09,suz:kus:bla:11}, which we
do not consider, but do not employ formative rules or meta-variable
conditions, or consider non-termination, which we do.
Importantly, they do not consider strictly positive
inductive types, which could be used to significantly broaden the PFP
restriction.  Such types \emph{are} considered in an early paper which
defines a variation of static higher-order dependency
pairs~\cite{bla:06}
based on a computability closure~\cite{bla:jou:oka:02,bla:16}.
However, this work carries different restrictions (e.g., DPs
must be type-preserving and not introduce fresh variables) and
considers only one analysis technique (reduction pairs).

Definitions of DP approaches for \emph{functional programming} also
exist \cite{kus:13,kus:18}, which consider applicative systems with
ML-style polymorphism.  These works also employ a much broader,
semantic definition than PFP, which is actually more general than the
syntactic restriction we propose here.  However, like the static
approaches for term rewriting, they do not truly exploit the
computability \cite{tai:67} properties inherent in this restriction:
it is only used for the initial generation of dependency pairs.
In the present work, we will take advantage of our exact
computability notion by introducing a $\static$ flag that can be
used by the computable subterm criterion processor
(\refThm{thm:staticsubtermproc}) to handle benchmark systems that
would otherwise be beyond the reach of static DPs.  Also in these
works, formative rules, meta-variable conditions and non-termination
are not considered.

Regarding \emph{dynamic} DP approaches, a precursor of the present
work is \cite{kop:raa:12}, which provides a halfway framework
(methodology to prove ``chain-freeness'') for dynamic DPs, introduces
a notion of formative rules, and briefly translates a basic form of
static DPs to the same setting.  Our formative \emph{reductions}
consider the shape of reductions rather than the rules they use, and
they can be used as a flag in the framework to gain additional power
in other processors.
The adaptation of static DPs in \cite{kop:raa:12} was very limited,
and did not for instance consider strictly positive inductive types
or rules of functional type.

For a more elaborate discussion of both static and dynamic DP
approaches in the literature, we refer to \cite{kop:raa:12}
and the second author's PhD thesis~\cite{kop:12}.

\smallskip
\emph{Organisation of the paper.}
\refSec{sec:prelim} introduces
higher-order rewriting using AFSMs and recapitulates computability.
In \refSec{sec:restrictions} we impose restrictions on the input AFSMs
for which our framework is soundly applicable.
In \refSec{sec:dp} we define
static DPs for AFSMs, and derive the
key results on them.
\refSec{sec:framework} formulates the DP framework and a number of DP
processors for existing and new termination proving techniques.
\refSec{sec:conclusions} concludes.
Detailed proofs for all results in this paper
\onlyarxiv{(extending \cite{esop2019})}
and an experimental evaluation
are available in
\onlyarxiv{the appendix}\onlypaper{a technical report
\cite{technicalreport}}.
In addition, many of the results have been informally
published in the second author's PhD thesis \cite{kop:12}.

\section{Preliminaries}
\label{sec:prelim}

In this section, we first define our notation by introducing the AFSM
formalism.  Although not one of the standards of higher-order rewriting,
AFSMs combine features from various forms of higher-order rewriting and
can be seen as a form of IDTSs~\cite{bla:00} which includes application.
We will finish with a definition of \emph{computability}, a technique
often used for higher-order termination methods.

\subsection{Higher-order term rewriting using AFSMs}

Unlike first-order term rewriting, there is no single, unified approach
to higher-order term rewriting, but rather a number of
similar but not fully compatible systems
aiming to combine term rewriting and typed $\lambda$-calculi.
For generality, we will use
\emph{Algebraic Functional Systems with Meta-variables}: a formalism
which admits translations from the main formats of higher-order
term rewriting.

\begin{definition}[Simple types]
We fix a set $\Sorts$ of \emph{sorts}.
All sorts are simple types, and if $\atype,\btype$ are simple types,
then so is $\atype \arrtype \btype$.
\end{definition}

We let $\arrtype$ be right-associative.
Note that all types have a unique representation in the form
$\atype_1 \arrtype \dots \arrtype \atype_\maa \arrtype \asort$
with $\asort \in \Sorts$.

\begin{definition}[Terms and meta-terms]\label{def:terms}
We fix disjoint sets $\F$ of \emph{function symbols}, $\V$ of
\emph{variables} and $\M$ of \emph{meta-variables}, each symbol
equipped with a type.  Each meta-variable is additionally equipped
with a natural number.  We assume that both $\V$ and $\M$ contain infinitely
many symbols of all types.
The set $\Terms(\F,\V)$ of \emph{terms} over $\F,\V$
consists of
expressions $s$ where $s : \atype$ can be derived
for some type $\atype$ by the following clauses:

\begin{tabular}{llllcllll}
(\textsf{V}) & $x : \atype$ & if & $x : \atype \in \V$ & \quad\quad &
(\textsf{@}) & $\app{s}{t} : \btype$ & if & $s : \atype \arrtype
  \btype$ and $t : \atype$ \\
(\textsf{F}) & $\afun : \atype$ & if & $\afun : \atype \in \F$ & &
($\mathsf{\Lambda}$) & $\lambda x.s : \atype \arrtype \btype$ & if &
  $x : \atype \in \V$ and $s : \btype$
\end{tabular}

\noindent
\emph{Meta-terms} are expressions whose type can be derived by those
clauses and:

\begin{tabular}{ll}
(\textsf{M}) & $\meta{Z}{s_1,\dots,s_\mia} : \atype_{\mia+1}
  \arrtype \dots \arrtype \atype_\maa \arrtype \asort$ \\
  & if $Z :
  (\atype_1 \arrtype \dots \arrtype \atype_{\mac} \arrtype \dots
  \arrtype \atype_{\maa} \arrtype \asort,\ 
  \mia) \in \M$ and $s_1 : \atype_1,\dots,s_\mac : \atype_\mac$
\end{tabular}

\noindent
The $\lambda$ binds variables as in the $\lambda$-calculus; unbound
variables are called \emph{free}, and $\FV(s)$ is the set of free
variables in $s$.
Meta-variables cannot be bound; we write $\FMV(s)$ for the set
of meta-variables occurring in $s$.
A meta-term $s$ is called
\emph{closed} if $\FV(s) = \emptyset$ (even if $\FMV(s) \neq
\emptyset$).
Meta-terms are considered modulo $\alpha$-conversion.
Application (\textsf{@}) is left-associative;
abstractions ($\mathsf{\Lambda}$) extend as far to the right
as possible.
A meta-term $s$ \emph{has type} $\atype$ if
$s : \atype$;
it \emph{has base type} if $\atype \in \Sorts$.
We define $\head(s) = \head(s_1)$ if $s = \app{s_1}{s_2}$,
and $\head(s) = s$ otherwise.

A (meta-)term $s$ has a \emph{sub-(meta-)term} $t$,
notation $s \suptermeq t$, if either $s = t$ or $s \supterm t$,
where $s \supterm t$ if
(a) $s = \abs{x}{s'}$ and $s' \suptermeq t$, (b)
$s = \app{s_1}{s_2}$ and $s_2 \suptermeq t$ or (c)
$s = \app{s_1}{s_2}$ and $s_1 \suptermeq t$.
A (meta-)term $s$ has a \emph{fully applied sub-(meta-)term} $t$,
notation $s
\bettersuptermeq t$, if either $s = t$ or $s \bettersupterm t$,
where $s \bettersupterm t$ if (a) $s = \abs{x}{s'}$ and $s'
\bettersuptermeq t$, (b) $s = \app{s_1}{s_2}$ and $s_2
\bettersuptermeq t$ or (c) $s = \app{s_1}{s_2}$ and $s_1
\bettersupterm t$ (so if $s = \app{\app{x}{s_1}}{s_2}$, then $x$
and $\app{x}{s_1}$ are not fully applied subterms, but
$s$ and both $s_1$ and $s_2$ are).

For $Z : (\atype,\mia) \in \M$, we call $\mia$ the
\emph{arity} of $Z$, notation $\arity(Z)$.
\end{definition}

Clearly, all fully applied subterms are subterms, but not all subterms
are fully applied.
Every term $s$ has a form $\apps{t}{s_1}{s_n}$ with $n \geq 0$
and $t = \head(s)$ a variable, function symbol, or abstraction; in
meta-terms $t$ may also be a meta-variable application $\meta{F}{s_1,
\dots,s_\mia}$.
\emph{Terms} are the objects that we will rewrite;
\emph{meta-terms} are used to define rewrite rules.
Note that all our terms (and meta-terms) are, by definition,
well-typed.  For rewriting, we will employ \emph{patterns}:

\begin{definition}[Patterns]\label{def:pattern}
A meta-term is a \emph{pattern} if it has one of the forms $\meta{Z}{x_1,
\dots,x_\mia}$ with all $x_i$ distinct variables;
$\abs{x}{\ell}$ with $x \in \V$ and $\ell$ a pattern;
or $\apps{a}{\ell_1}{\ell_n}$ with $a \in \F \cup \V$ and all
$\ell_i$ patterns ($n \geq 0$).

\end{definition}

In rewrite rules, we will use meta-variables for \emph{matching}
and variables only with \emph{binders}.  In terms, variables can occur
both free and bound, and meta-variables cannot occur.
Meta-variables originate in very early forms of higher-order rewriting
(e.g., \cite{acz:78,klo:oos:raa:93}), but have also been used in later
formalisms (e.g.,~\cite{bla:jou:oka:02}).  They strike a balance between
matching modulo $\beta$ and syntactic matching.
By \linebreak
using meta-variables, we obtain the same expressive power as
with Miller patterns~\cite{mil:91}, but
do so without including a reversed $\beta$-reduction as part of matching.

\medskip
\emph{Notational conventions:}
We will use $x,y,z$ for variables, $X,Y,Z$ for
meta-variables, $\avarormeta$ for symbols that could be variables or
meta-variables, $\afun,\bfun,\cfun$ or more suggestive notation for
function symbols, and $s,t,u,v,q,w$ for (meta-)terms.  Types are denoted
$\sigma,\tau$, and $\asort,\bsort$ are sorts.
We will regularly overload notation and write $x \in \V$, $\afun \in \F$
or $Z \in \M$ without stating a type (or minimal arity).  For
meta-terms $\meta{Z}{}$ we will usually omit the brackets, writing
just $Z$.

\begin{definition}[Substitution]
A \emph{meta-substitution} is a type-preserving function $\gamma$
from variables and meta-variables to meta-terms.
Let the \emph{domain} of $\gamma$ be given by:
$\domain(\gamma) = \{ (x : \atype) \in \V \mid \gamma(x) \neq x \} \cup
\{ (Z : (\atype,\mia)) \in \M \mid  \gamma(Z) \neq \abs{y_1 \dots
y_{\mia}}{\meta{Z}{y_1,\dots,y_{\mia}}} \}$; this domain is allowed
to be infinite.
We let $[\avarormeta_1:=s_1,\dots,\avarormeta_n:=s_n]$ denote the
meta-substitution $\gamma$ with $\gamma(\avarormeta_i) = s_i$ and
$\gamma(z) = z$ for $(z : \atype) \in \V \setminus \{\avarormeta_1,
\dots,\avarormeta_n\}$, and $\gamma(Z) = \abs{y_1\dots y_{\mia}}{
\meta{Z}{y_1,\dots, y_{\mia}}}$ for $(Z : (\atype,\mia)) \in \M
\setminus \{\avarormeta_1,\dots,\avarormeta_n\}$.
We assume there are infinitely many variables $x$ of all types such
that (a) $x \notin \domain(\gamma)$ and (b) for all $\avarormeta \in
\domain(\gamma)$: $x \notin \FV(\gamma(\avarormeta))$.

A \emph{substitution} is a meta-substitution mapping everything in its
domain to terms.
The result $s\gamma$ of applying a meta-substitution $\gamma$ to a
term $s$ is obtained by:

\noindent
\begin{tabular}{rclllcrclll}
$x\gamma$ & $=$ & $\gamma(x)$ & if & $x \in \V$ & \quad\quad &
$(s\ t)\gamma$ & $=$ & $(s\gamma)\ (t\gamma)$ \\
$\afun\gamma$ & $=$ & $\afun$ & if & $\afun \in \F$ & &
$(\lambda x.s)\gamma$ & $=$ & $\lambda x.(s\gamma)$ & if &
  $\gamma(x) = x
  \wedge x \notin \bigcup_{y \in \domain(\gamma)} \FV(\gamma(y))$
\end{tabular}

For meta-terms, the result $s\gamma$ is obtained by the clauses above
and:

\noindent
\begin{tabular}{rcl}
$\meta{Z}{s_1,\dots,s_\mac}\gamma$ & $=$ & 
  $\meta{\gamma(Z)}{s_1\gamma,\dots,s_\mac\gamma}$ \quad
  if $Z \notin \domain(\gamma)$ \\
$\meta{Z}{s_1,\dots,s_\mac}\gamma$ & $=$ & 
  $\metaapply{\gamma(Z)}{s_1\gamma,\dots,s_\mac\gamma}$ \quad
  if $Z \in \domain(\gamma)$ \\
$\metaapply{(\abs{x_1 \dots x_\mac}{s})}{t_1,\dots,t_\mac}$ & $=$ &
  $s[x_1:=t_1,\dots,x_\mac:=t_\mac]$ \\
$\metaapply{(\abs{x_1 \dots x_n}{s})}{t_1,\dots,t_\mac}$ & $=$ &
  $\apps{s[x_1:=t_1,\dots,x_n:=t_n]}{t_{n+1}}{t_\mac}$ \quad
  if $n < \mac$ \\
  & & \phantom{x}\hfill and $s$ is not an abstraction \\
\end{tabular}
\end{definition}

Note that for fixed $\mac$, any term
has exactly one of the
two forms above ($\abs{x_1 \dots x_n}{s}$ with $n < \mac$ and $s$ not
an abstraction, or $\abs{x_1 \dots x_\mac}{s}$).

\medskip
Essentially, applying a meta-substitution that has meta-variables in
its domain combines a substitution with
(possibly several) $\beta$-steps.
For example, we have that:\linebreak
$\symb{deriv}\ (\abs{x}{\symb{sin}\ (\meta{F}{x})})[F
:=\abs{y}{\symb{plus}\ y\ x}]$ equals $\symb{deriv}\ 
(\abs{z}{\symb{sin}\ (\symb{plus}\ z\ x}))$.
We also have:
$\meta{X}{\symb{0},\nil}[X:=\abs{x}{\map\ (\abs{y}{x})}]$ equals
$\symb{map}\ (\abs{y}{\symb{0}})\ \nil$.

\begin{definition}[Rules and rewriting]\label{def:rule}
Let $\F,\V,\M$ be fixed sets of function symbols, variables
and meta-variables respectively.
A \emph{rule} is a pair $\ell \arrz r$ of closed meta-terms of the
same type such that $\ell$ is a pattern of the form
$\apps{\afun}{\ell_1}{\ell_n}$
with $\afun \in \F$ and $\FMV(r) \subseteq \FMV(\ell)$.
A set of rules $\Rules$ defines a rewrite
relation $\arr{\Rules}$
as the smallest monotonic relation on terms which includes:

\noindent
\begin{tabular}{l@{\hskip 2pt}c@{\hskip 8pt}l@{\hskip 8pt}cl}
(\textsf{Rule}) &
  $\ell\delta$ & $\arr{\Rules}$ & $r\delta$ & if $\ell \arrz r \in
  \Rules$ and $\domain(\delta) = \FMV(\ell)$ \\
(\textsf{Beta}) & $(\abs{x}{s})\ t$ & $\arr{\Rules}$ & $s[x:=t]$ \\
\end{tabular}

\noindent
We say $s \arr{\beta} t$ if $s \arr{\Rules} t$ is derived using a
(\textsf{Beta}) step.
A term $s$ is \emph{terminating} under $\arr{\Rules}$ if there is no
infinite reduction $s = s_0 \arr{\Rules} s_1 \arr{\Rules} \dots$,
is \emph{in normal form} if there is no $t$ such that $s \arr{\Rules}
t$, and is \emph{$\beta$-normal} if there is no $t$
with $s \arr{\beta} t$.
Note that we are allowed to reduce at any position of a term, even
below a $\lambda$.
The relation $\arr{\Rules}$ is terminating if all terms over $\F,\V$
are terminating.
The set $\Defineds \subseteq \F$ of \emph{defined symbols} consists
of those
$(\afun : \atype) \in \F$ such that a rule $\apps{\afun}{\ell_1}{
\ell_n} \arrz r$ exists;
all other
symbols are called \emph{constructors}.
\end{definition}

Note that $\Rules$ is allowed to be infinite, which is useful for
instance to model polymorphic systems.  Also, right-hand sides of
rules do not have to be in $\beta$-normal form.
While this is rarely used in practical examples, non-$\beta$-normal
rules may arise through transformations,
and we lose nothing by allowing them.

\begin{example}\label{ex:mapintro}
Let $\F \supseteq \{ \nul : \nat,\ \suc : \nat \arrtype \nat,\ \nil :
\lijst,
\cons : \nat \arrtype \lijst \arrtype \lijst,\ 
\map : (\nat \arrtype \nat) \arrtype \lijst \arrtype \lijst\}$ and
consider the following rules $\Rules$:
\vspace{-2pt}
\[
\begin{array}{rcl}
\map\ (\abs{x}{\meta{Z}{x}})\ \nil & \arrz & \nil \\
\map\ (\abs{x}{\meta{Z}{x}})\ (\cons\ H\ T) & \arrz &
  \cons\ \meta{Z}{H}\ (\map\ (\abs{x}{\meta{Z}{x}})\ T) \\
\end{array}
\vspace{-2pt}
\]
Then $\map\ (\abs{y}{\nul})\ (\cons\ (\suc\ \nul)\ \nil) \arr{\Rules} 
\cons\ \nul\ (\map\ (\abs{y}{\nul})\ \nil) \arr{\Rules} \cons\ \nul\ 
\nil$.
Note that the
bound variable $y$ does not need to occur in the body of $\abs{y}{\nul}$
to match $\abs{x}{\meta{Z}{x}}$.
However,
a term like
$\map\ \suc\ (\cons\ \nul\ \nil)$ \emph{cannot} be reduced, because
$\suc$ does not instantiate $\abs{x}{\meta{Z}{x}}$. We could
alternatively consider the rules:
\vspace{-5pt}
\[
\begin{array}{rcl}
\map\ Z\ \nil & \arrz & \nil \\
\map\ Z\ (\cons\ H\ T) & \arrz & \cons\ (Z\ H)\ (\map\ Z\ T) \\
\end{array}
\]
Where the system before had $(Z : (\nat \arrtype \nat, 1)) \in \M$,
here we assume $(Z : (\nat \arrtype \nat, 0)) \in \M$.
Thus, rather than meta-variable application $\meta{Z}{H}$ we use
explicit application $\app{Z}{H}$.
Then $\map\ \suc\ (\cons\ \nul\ \nil) \arr{\Rules} \cons\ (\suc\ 
\nul)\ (\map\ \suc\ \nil)$.  However, we will often need explicit
$\beta$-reductions;
e.g., $\map\ (\abs{y}{\nul})\ (\cons\ (\suc\ \nul)\ \nil)
\linebreak
\arr{\Rules}
\cons\ ((\abs{y}{\nul})\ (\suc\ \nul))\ (\map\ (\abs{y}{\nul})\ \nil)
\arr{\beta} \cons\ \nul\ (\map\ (\abs{y}{\nul})\ \nil)$.
\end{example}

\begin{definition}[AFSM]\label{def:afsm}
An \emph{AFSM} is a tuple $(\F,\V,\M,\Rules)$ of a signature
and a set of rules built from meta-terms over $\F,\V,\M$; as types of relevant
variables and meta-variables can always be derived from context, we
will typically just refer to the AFSM $(\F,\Rules)$.
An AFSM implicitly defines the abstract reduction system
$(\Terms(\F,\V), \arr{\Rules})$:
a set of terms and a rewrite
relation on this set.
An AFSM is terminating if $\arr{\Rules}$ is terminating (on all terms
in $\Terms(\F,\V)$).
\end{definition}

\emph{Discussion:}
The two most common
formalisms in
termination analysis of higher-order rewriting
are \emph{algebraic functional systems}~\cite{jou:rub:99} (AFSs)
and \emph{higher-order rewriting  systems}~\cite{nip:91,mil:91} (HRSs).
AFSs are very similar to our AFSMs, but
use variables
for matching rather than meta-variables; this
is
trivially
translated to the AFSM format, giving rules where all meta-variables
have
arity $0$, like the ``alternative'' rules in \refEx{ex:mapintro}.
HRSs use matching
modulo $\beta/\eta$, but the common restriction of \emph{pattern HRSs}
can be directly translated into AFSMs,
provided
terms are
$\beta$-normalised after every reduction step.
Even without this $\beta$-normalisation step, termination of the
obtained AFSM implies termination of the original HRS;
for second-order systems, termination is equivalent.
AFSMs can also naturally encode CRSs~\cite{klo:oos:raa:93}
and several applicative systems (cf. \cite[Chapter 3]{kop:12}).

\begin{example}[Ordinal recursion]\label{ex:ordrec}
A running example
is the AFSM
$(\F,\Rules)$ with
$\F \supseteq \{
\nul : \symb{ord},\ \suc : \symb{ord} \arrtype \symb{ord},
\symb{lim} : (\symb{nat} \arrtype \symb{ord}) \arrtype \symb{ord},\ 
\symb{rec} : \symb{ord} \arrtype \symb{nat} \arrtype (\symb{ord}
\arrtype \symb{nat} \arrtype \symb{nat}) \arrtype ((\symb{nat}
\arrtype \symb{ord}) \arrtype (\symb{nat} \arrtype \symb{nat}) \arrtype
\symb{nat}) \arrtype \symb{nat} \}$
and $\Rules$ given below.
As all meta-variables have arity $0$, this can be seen as an AFS.
\[
\begin{array}{rcl}
\symb{rec}\ \nul\ K\ F\ G & \arrz & K \\
\symb{rec}\ (\suc\ X)\ K\ F\ G & \arrz & F\ X\ (\symb{rec}\ X\ K\ F\ G) \\
\symb{rec}\ (\symb{lim}\ H)\ K\ F\ G & \arrz & G\ H\ (\abs{m}{
  \symb{rec}\ (H\ m)\ K\ F\ G}) \\
\end{array}
\]
\end{example}

Observant readers may notice that by the given constructors, the
type $\nat$ in \refEx{ex:ordrec} is not inhabited.  However,
as the given symbols are only a subset of $\F$, additional
symbols (such as constructors for the $\symb{nat}$ type) may be
included.
The presence of additional function symbols does not affect
termination of AFSMs:

\begin{theorem}[Invariance of termination under signature extensions]
\label{thm:invar}
For an AFSM $(\F,\Rules)$ with $\F$ at most countably
  infinite, let $\sig{\Rules} \subseteq \F$ be the set of function
  symbols occurring in some
  rule of $\Rules$. Then
  $(\Terms(\F,\V), \arr{\Rules})$ is terminating if and only if
  $(\Terms(\sig{\Rules},\V), \arr{\Rules})$ is terminating.
\end{theorem}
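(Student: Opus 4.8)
The plan is to prove the two implications separately. The direction from left to right is immediate: since $\sig{\Rules} \subseteq \F$ we have $\Terms(\sig{\Rules},\V) \subseteq \Terms(\F,\V)$, so termination on the larger term set entails termination on the smaller one. For the converse I would argue by contraposition, starting from an infinite reduction $s_0 \arr{\Rules} s_1 \arr{\Rules} \cdots$ with all $s_i \in \Terms(\F,\V)$ and manufacturing one over the restricted signature. The crucial preliminary observation is that rewriting never introduces ``alien'' function symbols: in a (\textsf{Rule}) step $\ell\delta \arr{\Rules} r\delta$ the explicit symbols of $r$ lie in $\sig{\Rules}$, while the images of $\delta$ are subterms of the contracted redex and hence introduce only symbols already present in $s_i$; a (\textsf{Beta}) step introduces no symbols at all. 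Consequently the set $G$ of symbols in $\F \setminus \sig{\Rules}$ occurring anywhere in the reduction is contained in the finitely many symbols of $s_0$, so $G$ is \emph{finite}.

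Next I would replace these finitely many alien symbols by variables. For each $(\afun : \atype) \in G$ pick a distinct variable $x_\afun : \atype$; since $G$ is finite and only countably many variables occur in the reduction, and $\V$ has infinitely many symbols of every type, the $x_\afun$ can be chosen pairwise distinct and distinct from every free variable of every $s_i$, and by $\alpha$-conversion we may further assume no bound variable of any $s_i$ equals any $x_\afun$. Define $\phi : \Terms(\F,\V) \to \Terms(\sig{\Rules},\V)$ by $\phi(x) = x$, $\phi(\afun) = \afun$ for $\afun \in \sig{\Rules}$, $\phi(\afun) = x_\afun$ for $\afun \in G$, $\phi(\app{s}{t}) = \app{\phi(s)}{\phi(t)}$, and $\phi(\abs{x}{s}) = \abs{x}{\phi(s)}$ (well defined and capture-free by the freshness of the $x_\afun$). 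Then $\phi$ is type-preserving and lands in $\Terms(\sig{\Rules},\V)$.

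Finally I would show that $\phi$ maps the reduction step-for-step into an infinite $\arr{\Rules}$-reduction over $\sig{\Rules}$, contradicting termination on $\Terms(\sig{\Rules},\V)$. The key is a commutation lemma: for every meta-term $m$ all of whose function symbols lie in $\sig{\Rules}$ and every substitution $\delta$ (with the $x_\afun$ avoiding the $\FV$ of the relevant images), $\phi(m\delta) = m\delta^\phi$, where $\delta^\phi$ sends each $\avarormeta$ to $\phi(\delta(\avarormeta))$; this is proved by induction on the definition of meta-substitution application. Applying it to both sides of a rule $\ell \arrz r$ (both having symbols only in $\sig{\Rules}$) shows a (\textsf{Rule}) step $s_i \arr{\Rules} s_{i+1}$ becomes $\phi(s_i) \arr{\Rules} \phi(s_{i+1})$, while the special case $\phi(s[x:=t]) = \phi(s)[x:=\phi(t)]$ handles (\textsf{Beta}) steps. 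Hence $\phi(s_0) \arr{\Rules} \phi(s_1) \arr{\Rules} \cdots$ is the desired infinite reduction in $\Terms(\sig{\Rules},\V)$.

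The step I expect to be the main obstacle is the commutation lemma for (\textsf{Rule}) steps: meta-substitution application interleaves ordinary substitution with the built-in $\beta$-steps of $\metaapply{\cdot}{\cdot}$, so one must verify carefully that $\phi$ commutes with those $\beta$-steps and that pushing $\phi$ under binders never captures an $x_\afun$. Both points reduce to $\phi$ being a homomorphism for application and abstraction that fixes all variables, together with the global freshness of the chosen variables $x_\afun$.
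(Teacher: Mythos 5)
Your proposal is correct and follows exactly the paper's approach: the paper's entire proof is the one-line remark that the claim is ``trivial by replacing all function symbols in $\F \setminus \sig{\Rules}$ by corresponding variables of the same type,'' which is precisely your map $\phi$. You have simply filled in the details the paper leaves implicit (finiteness of the alien symbols along a reduction, freshness of the chosen variables, and the commutation of $\phi$ with (\textsf{Rule}) and (\textsf{Beta}) steps), all of which check out.
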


\begin{proof}
Trivial by replacing all function symbols in $\F \setminus
\sig{\Rules}$ by corresponding variables of the same type.
\qed
\end{proof}

Therefore,
we will typically only state the types of symbols occurring in the
rules, but may safely assume that infinitely many symbols of all types
are present (which for instance allows us to select unused constructors
in some proofs).

\subsection{Computability}\label{subsec:computability}

A common technique in higher-order termination is Tait and Gi\-rard's
\emph{computability} notion~\cite{tai:67}.
There are several ways to define computability predicates; here we
follow, e.g., \cite{bla:00,bla:jou:oka:02,bla:jou:rub:15,bla:16} 
in considering \emph{accessible meta-terms} using strictly positive
inductive types.
The definition
presented
below is adapted from these works, both to
account for the altered formalism and to introduce (and obtain
termination of) a relation $\accreduce{C}$ that we will use in
the ``computable subterm criterion processor'' of
\refThm{thm:staticsubtermproc}
(a termination criterion that allows us to handle systems
that would otherwise be beyond the reach of static DPs).
This allows for a minimal presentation that avoids the use of ordinals
that would otherwise be needed to obtain $\accreduce{C}$
(see, e.g., \cite{bla:jou:rub:15,bla:16}).

To define computability, we use the notion of an \emph{RC-set}:

\edef\defRCset{\number\value{theorem}}
\edef\defRCsetSec{\number\value{section}}
\begin{definition}\label{def:RCset}
A \emph{set of reducibility candidates}, or \emph{RC-set}, for
a rewrite relation $\arr{\Rules}$ of an AFSM
is a set $I$ of base-type terms $s$ such that:
every term in $I$ is terminating under $\arr{\Rules}$;
$I$ is closed under $\arr{\Rules}$ (so if $s \in I$ and $s
  \arr{\Rules} t$ then $t \in I$);
if $s = \apps{x}{s_1}{s_n}$ with $x \in \V$ or $s =
  \apps{(\abs{x}{u})}{s_0}{s_n}$ with $n \geq 0$, and for all $t$ with
  $s \arr{\Rules} t$ we have $t \in I$, then $s \in I$
  (for any $u,s_0,\dots,s_n \in \Terms(\F,\V)$).

We define $I$-computability for an RC-set $I$ by induction on types.
For $s \in \Terms(\F,\V)$, we say that $s$ is $I$-computable if
either
$s$ is of base type and $s \in I$;
or
 $s : \atype \arrtype \btype$ and for all
  $t : \atype$ that are $I$-computable, $\app{s}{t}$ is $I$-computable.
\end{definition}

The traditional notion of computability is obtained by taking for $I$
the set of
all terminating base-type terms.
Then, a term $s$ is computable if and only if (a) $s$ has
base type and is terminating; or (b) $s : \atype \arrtype \btype$ and
for all computable $t : \atype$ the term $\app{s}{t}$ is computable.
This choice is simple but, for reasoning, not ideal: we
do not have a property like: ``if $\symb{f}\ s_1 \cdots s_n$ is
computable then so is each $s_i$''.
Such a property would be valuable to have for generalising termination
proofs from first-order to higher-order rewriting, as it allows us to
use computability where the first-order proof uses termination.
While it is not possible to define a computability notion with this
property alongside case (b) (as such a notion would not be well-founded),
we can come \emph{close} to this property by choosing a different set
for $I$.  To define this set, we will use the notion of
\emph{accessible arguments}, which is used for the same purpose also
in the
\emph{General Schema} \cite{bla:jou:oka:02}, the
\emph{Computability Path Ordering}~\cite{bla:jou:rub:15}, and the
\emph{Computability Closure}~\cite{bla:16}.

\begin{definition}[Accessible arguments]\label{def:accArgs}
We fix a quasi-ordering $\greqsort$ on $\Sorts$ with well-founded
strict part $\grsort\ :=\ \greqsort \setminus
\leqsort$.\footnote{Well-foundedness is immediate if
  $\Sorts$ is finite, but we have not imposed that requirement.}
For a type
$\atype \equiv \atype_1 \!\arrtype\! \dots \!\arrtype\! \atype_\maa
\!\arrtype\! \bsort$ (with $\bsort \in \Sorts$) and sort $\asort$,
let $\asort \gracsortup \atype$ if $\asort \greqsort \bsort$ and
$\asort \gracsortdown \atype_i$ for all $i$, and let
$\asort \gracsortdown \atype$ if $\asort \grsort \bsort$ and $\asort
\gracsortup \atype_i$ for all $i$.\footnote{Here
  $\asort \gracsortup \atype$ corresponds to
  ``$\asort$ occurs only positively in $\atype$'' in
  \cite{bla:00,bla:jou:oka:02,bla:jou:rub:15}.}

For $\afun : \atype_1 \arrtype \dots \arrtype \atype_\maa \arrtype
\asort \in \F$, let $\Acc(\afun) = \{ i \mid 1 \leq i \leq \maa
\wedge \asort \gracsortup \atype_i \}$.
For $x : \atype_1 \arrtype \dots \arrtype \atype_\maa \arrtype \asort
\in \V$, let $\Acc(x) = \{ i \mid 1 \leq i \leq \maa \wedge \atype_i$
has the form $\btype_1 \arrtype \dots \arrtype \btype_n \arrtype
\bsort$ with $\asort \greqsort \bsort \}$.
We write $s \gracc t$ if either $s = t$, or $s = \abs{x}{s'}$ and $s'
\gracc t$, or $s = \apps{a}{s_1}{s_n}$ with $a \in \F \cup \V$ and
$s_i \gracc t$ for some $i\in \Acc(a)$
with $a \notin \FV(s_i)$.
\end{definition}

With this definition, we will be able to define a set $C$ such that,
roughly, $s$ is $C$-computable if and only if (a) $s : \atype \arrtype
\btype$ and $s\ t$ is $C$-computable for all $C$-computable $t$, or
(b) $s$ has base type, is terminating, and if $s = \apps{\afun}{s_1}{
s_m}$ then $s_i$ is $C$-computable for all \emph{accessible} $i$ (see
\refThm{thm:defC} below).
The reason that $\Acc(x)$ for $x \in \V$ is different is
proof-technical:
computability of $\abs{x}{\apps{x}{s_1}{s_\maa}}$ implies the
computability of more arguments $s_i$ than computability of
$\apps{\afun}{s_1}{s_\maa}$ does, since $x$ can be instantiated by
anything.

\begin{example}\label{ex:lim}
Consider a quasi-ordering $\greqsort$ such that
$\symb{ord} \grsort \symb{nat}$. In \refEx{ex:ordrec}, we then have
$\symb{ord} \: \gracsortup \: \symb{nat} \arrtype \symb{ord}$.
Thus, $1 \in \Acc(\symb{lim})$, which gives $\symb{lim}\ H \gracc H$.
\end{example}

\begin{theorem}\label{thm:defCoriginal}\label{thm:defC}
Let $(\F,\Rules)$ be an AFSM.
Let $\apps{\afun}{s_1}{s_\maa} \accreduce{I} \apps{s_i}{t_1}{t_n}$ if
both sides have base type, $i \in \Acc(\afun)$,
and all $t_j$ are $I$-computable.
There is an RC-set $C$ such that $C = \{ s \in \Terms(\F,\V) \mid
s$ has base type $\wedge\ s$ is terminating under
$\arr{\Rules} \cup \accreduce{C} \wedge {}$
if $s \arrr{\Rules} \apps{\afun}{s_1}{s_\maa}$
then $s_i$ is
$C$-computable for all $i \in \Acc(\afun) \}$.
\end{theorem}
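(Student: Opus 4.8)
We must construct an RC-set $C$ (in the sense of \refDef{def:RCset}) that simultaneously satisfies the fixpoint-style characterisation

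$$C = \{ s : s \text{ has base type} \wedge s \text{ is terminating under } \arr{\Rules} \cup \accreduce{C} \wedge (s \arrr{\Rules} \apps{\afun}{s_1}{s_\maa} \Rightarrow s_i \text{ is } C\text{-computable for all } i \in \Acc(\afun)) \}.$$

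The circularity is the crux: the definition of $C$ refers to $C$-computability, which via \refDef{def:RCset} refers to $C$ itself, and the relation $\accreduce{C}$ also mentions $I$-computability with $I = C$.

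**The approach.** The plan is to build $C$ as a least fixpoint by an inductive/stratified construction on the sort ordering $\grsort$, and to verify the three RC-set closure conditions afterwards. Because $\grsort$ (the strict part of $\greqsort$) is well-founded, I would define computability at a base sort $\asort$ in terms of computability only at sorts that are $\preceq^{\Sorts} \asort$, appealing to the fact that in the accessibility relation $\gracsortup$, every argument sort is $\greqsort$-below the output sort, with the strictly-positive cases tracked by $\gracsortdown$. Concretely, I would let $F$ be the monotone operator on sets $I$ of terminating base-type terms that sends $I$ to the right-hand side of the displayed equation (with $\accreduce{I}$ and $I$-computability substituted), and seek a fixpoint of $F$. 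The main subtlety is that $F$ is \emph{not} obviously monotone, because $I$-computability occurs both positively (we require $s_i$ to be $I$-computable) and, through the termination-under-$\accreduce{I}$ clause, in a mixed polarity position. This is exactly why the naive Knaster--Tarski argument fails and why the well-founded sort ordering is essential.

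**Key steps, in order.** First I would set up an induction on sorts: assuming $C$ has been defined (as a set of computable terms) on all sorts strictly below $\asort$ under $\grsort$, I define the set of $C$-computable terms of every type whose output sort is $\asort$, using \refDef{def:RCset}'s clause for arrow types to lift from base types to higher types. Second, within a fixed sort $\asort$, I would handle the remaining self-reference (terms of sort $\asort$ whose accessible arguments also have sort $\asort$, which happens precisely when $\asort \eqsort \bsort$) by a secondary least-fixpoint construction: order candidate sets by inclusion and take $C_\asort = \bigcap \{ I : F_\asort(I) \subseteq I \}$, after checking $F_\asort$ restricted to this sort is monotone once the strictly-below-sort data is frozen. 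Third, I would verify that the resulting $C$ satisfies the three RC-set conditions: termination of every member is immediate from the defining clause; closure under $\arr{\Rules}$ follows because $s \arr{\Rules} t$ gives $t \arrr{\Rules} \apps{\afun}{s_1}{s_\maa}$ whenever $s$ does, and termination under $\arr{\Rules} \cup \accreduce{C}$ is inherited by reducts; the neutral-term clause (for $s = \apps{x}{s_1}{s_n}$ or $s = \apps{(\abs{x}{u})}{s_0}{s_n}$) follows because such $s$ has no head $\afun \in \F$, so the accessibility requirement is vacuous and termination under the combined relation reduces to termination of all one-step reducts. Finally I would confirm the fixpoint equation holds with equality, i.e.\ that $C$ both contains and is contained in the displayed set, which is automatic once $C$ is taken as the fixpoint of $F$.

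**The main obstacle.** The hard part will be justifying well-definedness despite the apparent non-monotonicity of the defining operator. The polarity of the $\accreduce{C}$ clause is the danger: $\apps{\afun}{s_1}{s_\maa} \accreduce{I} \apps{s_i}{t_1}{t_n}$ requires the $t_j$ to be $I$-computable, so enlarging $I$ enlarges $\accreduce{I}$, which \emph{shrinks} the set of terms terminating under $\arr{\Rules} \cup \accreduce{I}$ — a negative occurrence. I expect to resolve this by observing that the $t_j$ in an $\accreduce{C}$-step have types whose sorts are strictly $\grsort$-below $\asort$ (this is where $i \in \Acc(\afun)$ and the $\gracsortdown$ condition force strict descent, as in \refEx{ex:lim}), so their $C$-computability is already fixed by the outer induction hypothesis and does \emph{not} depend on the sort-$\asort$ stage of the construction. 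Once that is established, $\accreduce{C}$ at sort $\asort$ depends only on lower-sort data, restoring monotonicity of $F_\asort$ and making the least-fixpoint well-defined; the strict positivity encoded in \refDef{def:accArgs} is precisely what makes this stratification go through.
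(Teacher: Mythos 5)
Your overall architecture --- an outer well-founded induction on sorts that freezes the construction at strictly smaller sorts, followed by an inner Knaster--Tarski fixpoint within the current sort-equivalence class, and a final verification of the RC-set clauses --- is exactly the paper's proof structure (Theorem~\ref{thm:defC2} in the appendix). The gap is in your resolution of the obstacle you yourself flag as the crux. You claim that in a step $\apps{\afun}{s_1}{s_\maa} \accreduce{I} \apps{s_i}{t_1}{t_n}$ the arguments $t_j$ ``have types whose sorts are strictly $\grsort$-below $\asort$'', so that their computability is already fixed by the outer induction. This is false. Membership $i \in \Acc(\afun)$ gives $\asort \gracsortdown \btype_j$ for the type $\btype_j$ of each $t_j$, and $\gracsortdown$ forces only the \emph{output} sort of $\btype_j$ strictly below $\asort$; the \emph{argument} types occurring inside $\btype_j$ may again involve sorts $\eqsort$-equivalent to $\asort$, in positive position. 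Concretely, take $\afun : ((\asort \arrtype \bsort) \arrtype \asort) \arrtype \asort$ with $\asort \grsort \bsort$. Then $1 \in \Acc(\afun)$, and a step $\afun\ s_1 \accreduce{I} s_1\ t_1$ requires $I$-computability of $t_1 : \asort \arrtype \bsort$, which by definition quantifies over all $I$-computable terms of sort $\asort$ --- precisely the stage being constructed. So the computability of the $t_j$ is \emph{not} determined by the lower stages, and monotonicity of your $F_\asort$ does not follow from stratification alone; as it stands, your inner fixpoint is not justified.

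What is missing is a two-polarity lemma proved by induction on types (\refLemma{lem:gracsortinterplay} in the paper): if $I \subseteq J$ are RC-sets agreeing on all sorts strictly below $\asort$, then (i) for types $\atype$ with $\asort \gracsortup \atype$, every $I$-computable term of type $\atype$ is $J$-computable, and (ii) for types $\atype$ with $\asort \gracsortdown \atype$, every $J$-computable term of type $\atype$ is $I$-computable. Direction (ii) is the one that rescues the construction, and it even refutes your premise that enlarging $I$ enlarges $\accreduce{I}$: since the $t_j$ must be computable at $\gracsortdown$-types, where computability is \emph{anti}-monotone in $I$, one gets $\accreduce{J} \subseteq \accreduce{I}$ whenever $I \subseteq J$ in the lattice, so the termination clause of $F_\asort$ is monotone after all, while direction (i) handles the positive occurrence in the accessible-arguments clause. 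In other words, strict positivity does its work not by pushing all dependence down to smaller sorts, but by ensuring that the residual same-sort dependence has the right polarity. A smaller slip: in your verification of the neutral-term RC-clause, the reduct condition is not ``vacuous'' for neutral terms of the form $\apps{(\abs{x}{u})}{s_0}{s_n}$ --- such terms can very well reduce to $\afun$-headed terms; the condition is instead inherited from the one-step reducts, which are assumed to lie in $C$.
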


\begin{proof}[sketch]
Note that we cannot \emph{define} $C$ as this set, as the set
relies on the notion of $C$-computability.  However, we
\emph{can} define $C$ as the
fixpoint of a monotone function operating on RC-sets.  This follows the
proof in, e.g., \cite{bla:jou:oka:02,bla:jou:rub:15}.
\qed

\onlyarxiv{The full proof (for the definitions in this paper) is available in
\refApp{app:computability}.}%
\onlypaper{The complete proof is available in
  \cite[Appendix~A]{technicalreport}.}
\end{proof}

\section{Restrictions}
\label{sec:restrictions}

The termination methodology in this paper is restricted to AFSMs that
satisfy certain limitations: they must be \emph{properly applied}
(a restriction on the number of terms each function symbol is applied
to) and \emph{accessible function passing} (a restriction on the
positions of variables of a functional type in the left-hand sides of
rules).  Both are syntactic restrictions that are easily checked by a
computer (mostly; the latter requires a search for a sort
ordering, but this is typically easy).

\subsection{Properly applied AFSMs}\label{subsec:properlyapplied}

In \emph{properly applied AFSMs}, function symbols are assigned a
certain, minimal number of arguments that they must always be applied
to.

\begin{definition}
An AFSM $(\F,\Rules)$ is \emph{properly applied} if for every
$\afun\in \Defineds$ there exists an integer $\mia$ such that
for all rules $\ell \arrz r \in \Rules$:
(1)
  if $\ell = \apps{\afun}{\ell_1}{\ell_n}$ then $n = \mia$;
and (2)
  if $r \bettersuptermeq \apps{\afun}{r_1}{r_n}$ then $n \geq \mia$.
We denote $\minarity{\Rules}(\afun) = \mia$.
\end{definition}

That is, every occurrence of a function symbol in the \emph{right-hand}
side of a rule has at least as many arguments as the occurrences in
the \emph{left-hand} sides of rules.  This means that partially applied
functions are often not allowed: an AFSM with rules such as
$\symb{double}\ X \arrz \symb{plus}\ X\ X$ and $\symb{doublelist}\ L
\arrz \symb{map}\ \symb{double}\ L$ is not properly applied, because
$\symb{double}$ is applied to one argument in the left-hand side of
some rule, and to zero in the right-hand side of another.

This restriction is not as severe as it may initially seem since partial
applications can be replaced by $\lambda$-abstractions; e.g., the rules
above can be made properly applied by replacing the second rule by:
$\symb{doublelist}\ L \arrz \symb{map}\ (\abs{x}{\symb{double}\ x})\ L$.
By using $\eta$-expansion, we can transform any AFSM to satisfy this
restriction:

\begin{definition}[$\Rules^\uparrow$]\label{def:etalong}
Given a set of rules
$\Rules$, let their \emph{$\eta$-expansion} be given by
$\Rules^\uparrow = \{ \etalong{(\apps{\ell}{Z_1}{Z_\maa})}
\ \arrz \etalong{(\apps{r}{Z_1}{Z_\maa})}
\mid \ell \arrz r \in \Rules$ with $r : \atype_1 \arrtype \dots \arrtype
\atype_\maa \arrtype \asort$,
$\asort \in \Sorts$,
and
$Z_1,\dots, Z_\maa$ fresh meta-variables$\}$,
where
\begin{itemize}
\item $\etalong{s} = \abs{x_1 \dots x_\maa}{\apps{\halfetalong{s}}{
  (\etalong{x_1})}{(\etalong{x_\maa})}}$ if $s$ is an application or
  element of $\V \cup \F$, and $\etalong{s} = \halfetalong{s}$
  otherwise;
\item $\halfetalong{\afun} = \afun$ for $\afun \in \F$
  and $\halfetalong{x} = x$ for $x \in \V$, while
  $\halfetalong{\meta{Z}{s_1,\dots,s_\mac}} =
  \meta{Z}{\halfetalong{s_1},\dots,\halfetalong{s_\mac}}$ and
  $\halfetalong{(\abs{x}{s})} = \abs{x}{(\etalong{s})}$ and
  $\halfetalong{s_1\ s_2} = \halfetalong{s_1}\ (\etalong{s_2})$.
\end{itemize}
\end{definition}

Note that $\etalong{\ell}$ is a pattern if $\ell$ is.  By
\cite[Thm.~2.16]{kop:12}, a relation $\arr{\Rules}$ is
terminating if $\arr{\Rules^\uparrow}$ is terminating, which allows
us to transpose any methods to prove termination of properly
applied AFSMs to all AFSMs.

However, there is a caveat: this transformation can introduce
non-termination in some special cases,
e.g.,
the terminating rule
$\symb{f}\ X \arrz \symb{g}\ \symb{f}$ with $\symb{f} : \symb{o}
\arrtype \symb{o}$ and $\symb{g} : (\symb{o} \arrtype \symb{o})
\arrtype \symb{o}$, whose $\eta$-expansion
$\symb{f}\ X \arrz \symb{g}\ (\abs{x}{(\symb{f}\ x)})$ is
non-terminating.
Thus, for a properly applied AFSM the methods in this paper
apply directly.  For an AFSM that is not properly applied, we can use
the methods to prove \emph{termination} (but not non-termination) by
first $\eta$-expanding the rules.  Of course, if this analysis leads to
a \emph{counterexample} for termination, we may still be able to verify
whether this counterexample applies in the original, untransformed
AFSM.

\begin{example}
Both AFSMs in \refEx{ex:mapintro} and the AFSM in
\refEx{ex:ordrec} are properly applied.
\end{example}

\begin{example}\label{ex:deriv}
Consider an AFSM $(\F,\Rules)$ with $\F \supseteq \{ \symb{sin},
\symb{cos} : \real \arrtype \real,\linebreak \symb{times} :
\real \arrtype \real \arrtype \real,\  \deriv
: (\real \arrtype \real) \arrtype \real \arrtype
\real \}$ and $\Rules = \{
\deriv\ (\abs{x}{\symb{sin}\ \meta{F}{x}})
\arrz
  \abs{y}{\symb{times}\ (\deriv\ (\abs{x}{\meta{F}{x}})\ y)\ 
  (\symb{cos}\ \meta{F}{y})} 
\}
$.
Although the one rule has a functional output type ($\real \arrtype
\real$), this AFSM is properly applied, with $\deriv$ having always
at least $1$ argument.  Therefore, we do not need to use $\Rules^\uparrow$.
However, if $\Rules$ were to additionally include some rules that did
not satisfy the restriction (such as the $\symb{double}$ and
$\symb{doublelist}$ rules above), then $\eta$-expanding \emph{all}
rules, including this one, would be necessary.  We have:
$\Rules^\uparrow =
\{ \deriv\ (\abs{x}{\symb{sin}\ \meta{F}{x}})\ Y \arrz
(\abs{y}{\symb{times}\ 
(\deriv\ (\abs{x}{\meta{F}{x}})\ y)\ (\symb{cos}\ \meta{F}{y})})\ 
Y \}$.
Note that the right-hand side of the $\eta$-expanded $\deriv$ rule
is not $\beta$-normal.
\end{example}

\subsection{Accessible Function Passing AFSMs}

In \emph{accessible function passing} AFSMs, variables of functional
type may not occur at arbitrary places in the left-hand sides of rules:
their positions are restricted using the sort ordering
$\greqsort$ and
accessibility relation $\gracc$ from \refDef{def:accArgs}.

\begin{definition}[Accessible function passing]\label{def:apfp}
An AFSM $(\F,\Rules)$ is \emph{accessible function passing (AFP)}
if there exists a sort ordering $\greqsort$ following
\refDef{def:accArgs} such that:
  for all $\apps{\afun}{\ell_1}{\ell_n} \arrz r \in \Rules$ and all
  $Z \in \FMV(r)$: there are variables $x_1,\dots,x_\mia$
  and some $i$ such that $\ell_i \gracc \meta{Z}{x_1,\dots,x_\mia}$.
\end{definition}

The key idea of this definition is that computability of
each $\ell_i$ implies computability of all meta-variables in $r$.
This excludes cases like Example \ref{ex:lambdadynamic} below.
Many common examples satisfy this restriction, including those we
saw before:

\begin{example}\label{ex:afpgood}\label{ex:derivafp}
Both systems from \refEx{ex:mapintro} are AFP: choosing the sort
ordering $\greqsort$ that equates $\nat$ and $\lijst$, we indeed have
$\cons\ H\ T \gracc H$ and $\cons\ H\ T \gracc T$
(as $\Acc(\cons) = \{1,2\}$) and both
$\abs{x}{\meta{Z}{x}} \gracc \meta{Z}{x}$ and $Z \gracc Z$.
The AFSM from \refEx{ex:ordrec} is AFP because we can choose
$\symb{ord} \grsort \symb{nat}$ and have $\symb{lim}\ H \gracc H$
following \refEx{ex:lim}
(and also $\symb{s}\ X \gracc X$ and $K \gracc K,\ F \gracc
F,\ G \gracc G$).
The AFSM from \refEx{ex:deriv} is AFP, because
$\abs{x}{\ssin\ \meta{F}{x}} \gracc \meta{F}{x}$ for any
$\greqsort$: $\abs{x}{\ssin\ \meta{F}{x}} \gracc \meta{F}{x}$ because
$\ssin\ \meta{F}{x} \gracc \meta{F}{x}$ because $1 \in \Acc(\ssin)$.
\end{example}

In fact, \emph{all} first-order AFSMs (where all fully applied
sub-meta-terms of the left-hand side of a rule have base type) are
AFP
via the sort
ordering $\greqsort$ that equates all sorts.  Also (with the same
sort ordering), an AFSM $(\F,\Rules)$ is
AFP
if, for all rules $\apps{\afun}{\ell_1}{\ell_\mia} \arrz r \in
\Rules$ and all $1 \leq i \leq \mia$, we can write:
$\ell_i = \abs{x_1 \dots x_{n_i}}{\ell'}$ where $n_i \geq 0$ and all
fully applied sub-meta-terms of $\ell'$ have base type.

This covers many practical systems, although for \refEx{ex:ordrec}
we need a non-trivial sort ordering.  Also, there are AFSMs
that cannot be handled with \emph{any} $\greqsort$.

\begin{example}[Encoding the untyped $\lambda$-calculus]\label{ex:lambdadynamic}
Consider an AFSM with $\F \supseteq \{ \symb{ap} : \symb{o} \arrtype
\symb{o} \arrtype \symb{o},\ \symb{lm} : (\symb{o} \arrtype \symb{o})
\arrtype \symb{o} \}$ and $\Rules = \{ \symb{ap}\ (\symb{lm}\ F)
\arrz F \}$ (note that the only rule has type $\symb{o} \arrtype
\symb{o}$).  This AFSM is not accessible function passing, because
$\symb{lm}\ F \gracc F$ cannot hold for any $\greqsort$ (as this
would require $\symb{o} \grsort \symb{o}$).

Note that this example is also not terminating. With
$t = \symb{lm}\ (\abs{x}{\symb{ap}\ x\ x})$,
we get this self-loop as evidence:
$\symb{ap}\ t\ t\ \arr{\Rules} (\abs{x}{\symb{ap}\ x\ x})\ t \arr{\beta}
\symb{ap}\ t\ t$.
\end{example}

Intuitively: in an accessible function passing AFSM, meta-variables of
a higher type may occur only in ``safe'' places in the left-hand sides
of rules.  Rules like the ones in \refEx{ex:lambdadynamic}, where a
higher-order meta-variable is lifted out of a base-type term, are not
admitted (unless the base type is greater than the higher type).

\medskip
In the remainder of this paper, we will refer to a \emph{properly
applied, accessible function passing} AFSM as a PA-AFP AFSM.

\medskip
\emph{Discussion:}
This definition is strictly more liberal than the notions of
``plain function passing'' in both \cite{kus:iso:sak:bla:09} and
\cite{suz:kus:bla:11} as adapted to AFSMs.  The notion in
\cite{suz:kus:bla:11} largely corresponds to AFP if $\greqsort$
equates all sorts, and the HRS formalism guarantees that rules
are properly applied (in fact, all fully applied sub-meta-terms of
both left- and right-hand sides of rules have base type).  The notion
in \cite{kus:iso:sak:bla:09} is more restrictive.  The current
restriction of PA-AFP AFSMs
lets us
handle examples like
ordinal recursion (\refEx{ex:ordrec})
which are not covered by \cite{kus:iso:sak:bla:09,suz:kus:bla:11}.
However, note that \cite{kus:iso:sak:bla:09,suz:kus:bla:11} consider
a different formalism, which does take rules whose left-hand side is
not a pattern into account (which we do not consider).
Our restriction also quite resembles the ``admissible'' rules
in \cite{bla:06} which are defined using a pattern computability
closure \cite{bla:00}, but that work carries additional restrictions.

In later work~\cite{kus:13,kus:18}, K.\ Kusakari extends the static
DP approach to forms of polymorphic functional programming, with a
very liberal restriction: the definition is parametrised with an
\emph{arbitrary} RC-set and corresponding accessibility (``safety'')
notion.  Our AFP restriction is actually an instance of this condition
(although a more liberal one than the example RC-set used in
\cite{kus:13,kus:18}).  We have chosen a specific instance
because it allows us to use dedicated techniques for the RC-set; for
example, our \emph{computable subterm criterion processor}
(\refThm{thm:staticsubtermproc}).

\section{Static higher-order dependency pairs}\label{sec:dp}

To obtain sufficient criteria for both termination and
non-termination of AFSMs, we will now
transpose the definition of static
dependency pairs~\cite{bla:06,kus:iso:sak:bla:09,suz:kus:bla:11,kus:18}
to AFSMs.
In addition,
we will add the new features of \emph{meta-variable conditions},
\emph{formative reductions}, and \emph{computable chains}.
Complete versions of all proof sketches in this section are
available in \onlyarxiv{\refApp{app:staticdp}}%
\onlypaper{\cite[Appendix~B]{technicalreport}}.

Although we retain the first-order terminology of dependency
\emph{pairs}, the setting with meta-variables makes it more suitable
to define DPs as \emph{triples}.

\begin{definition}[(Static) Dependency Pair]\label{def:dp}
A \emph{dependency pair (DP)} is a triple $\ell \arrdp p\ (A)$,
where $\ell$ is a closed pattern $\apps{\afun}{\ell_1}{\ell_\mia}$,
$p$ is a closed meta-term $\apps{\bfun}{p_1}{p_n}$,
and $A$ is a set of \emph{meta-variable conditions}: pairs $Z : i$
indicating that $Z$ regards its $i^{\text{th}}$ argument.
A DP is \emph{conservative} if $\FMV(p) \subseteq \FMV(\ell)$.

A substitution $\gamma$ \emph{respects} a set of meta-variable
conditions $A$ if for all $Z : i$ in $A$ we have $\gamma(Z) =
\abs{x_1 \dots x_j}{t}$ with either $i > j$, or $i \leq j$ and $x_i \in
\FV(t)$.
DPs will be used only with substitutions that respect
their meta-variable conditions.

For $\ell \arrdp p\ (\emptyset)$ (so a DP whose set of meta-variable
conditions is empty), we often omit the third component and just
write $\ell \arrdp p$.
\end{definition}

Like the first-order setting, the static
DP approach
employs \emph{marked function symbols} to obtain meta-terms whose
instances cannot be reduced at the root.

\begin{definition}[Marked symbols]\label{def:marked}
Let $(\F,\Rules)$ be an AFSM.
Define $\F^\sharp := \F \uplus \{ \afun^\sharp : \atype \mid \afun :
\atype \in \Defineds \}$.
For a meta-term $s = \apps{\afun}{s_1}{s_\mia}$ with $\afun \in
\Defineds$ and $\mia =
\minarity{\Rules}(\afun)$, we let $s^\sharp =
\apps{\afun^\sharp}{s_1}{s_\mia}$; for $s$ of other forms $s^\sharp$
is not defined.
\end{definition}

Moreover, we will consider \emph{candidates}.  In the first-order
setting, candidate terms are subterms of the right-hand sides of rules
whose root symbol is a defined symbol. Intuitively, these
subterms correspond to function calls. In the current setting, we
have to consider also meta-variables as well as rules whose right-hand
side is not $\beta$-normal (which might arise for instance due to
$\eta$-expansion).

\begin{definition}[$\beta$-reduced-sub-meta-term,
  $\bsuptermeq{\beta}$, $\bsuptermeq{A}$]
A meta-term $s$ has a fully applied
\emph{$\beta$-reduced-sub-meta-term} $t$ (shortly, \emph{BRSMT}),
notation $s \bsuptermeq{\beta} t$, if there exists a set of
meta-variable conditions $A$
with $s \bsuptermeq{A} t$.
Here $s \bsuptermeq{A} t$ holds if:
\begin{itemize}
\item $s = t$, or
\item $s = \abs{x}{u}$ and $u \bsuptermeq{A} t$, or
\item $s = \apps{(\abs{x}{u})}{s_0}{s_n}$ and
  some $s_i \bsuptermeq{A} t$, or
  $\apps{u[x:=s_0]}{s_1}{s_n} \bsuptermeq{A} t$, or
\item $s = \apps{a}{s_1}{s_n}$ with $a \in \F \cup \V$ and
  some $s_i \bsuptermeq{A} t$, or
\item $s = \apps{\meta{Z}{t_1,\dots,t_\mac}}{s_1}{s_n}$ and
  some $s_i \bsuptermeq{A} t$, or
\item $s = \apps{\meta{Z}{t_1,\dots,t_\mac}}{s_1}{s_n}$ and $t_i
  \bsuptermeq{A} t$ for some $i \in \{1,\dots,\mac\}$ with
  $(Z : i) \in A$.
\end{itemize}
\end{definition}

Essentially, $s \,\bsuptermeq{A}\, t$ means that $t$ can be reached from
$s$ by taking $\beta$-reductions at the root and ``subterm''-steps,
where $Z : i$ is in $A$ whenever we pass into argument $i$ of a
meta-variable $Z$.
BRSMTs are used to generate \emph{candidates}:

\begin{definition}[Candidates]\label{def:candidates}
For a meta-term $s$, the set $\cand(s)$ of \emph{candidates of $s$}
consists of those pairs $t\ (A)$
such that
(a) $t$ has the form
$\apps{\identifier{f}}{s_1}{s_\mia}$ with $\identifier{f} \in \Defineds$
and $\mia = \minarity{\Rules}(\identifier{f})$, and
(b) there are $s_{\mia+1},\dots,s_n$ (with $n \geq k$) such that
$s \bsuptermeq{A} \apps{t}{s_{\mia+1}}{s_n}$, and
(c) $A$ is minimal: there is no subset
$A' \subsetneq A$ with $s \bsuptermeq{A'} t$.
\end{definition}

\begin{example}\label{ex:metanonapplied}
In AFSMs where
all meta-variables have
arity $0$ and
the right-hand sides of rules are $\beta$-normal,
the set $\cand(s)$ for a meta-term $s$ consists exactly of the
pairs $t\ (\emptyset)$ where $t$ has the form
$\apps{\afun}{s_1}{s_{\minarity{\Rules}(\afun)}}$ and
$t$ occurs as part of $s$.
In
\refEx{ex:ordrec},
we thus have
$\cand(G\ H\ (\abs{m}{\symb{rec}\ (H\ m)\ K\ F\ G}))
=\{\,\symb{rec}\ (H\ m)\ K\ F\ G\ (\emptyset)\,\}$.
\end{example}

If some of the meta-variables \emph{do} take arguments, then the
meta-variable conditions
matter: candidates of $s$ are pairs
$t\  (A)$ where $A$ contains exactly those pairs $Z : i$ for which we
pass through the $i^{\text{th}}$ argument of $Z$ to reach $t$ in $s$.

\begin{example}\label{ex:metaapplied}
Consider an AFSM with the signature from \refEx{ex:ordrec} but a rule
using meta-variables with larger
arities:
\[
\begin{array}{c}
\symb{rec}\ (\symb{lim}\ (\abs{n}{\meta{H}{n}}))\ K\ 
  (\abs{x}{\abs{n}{\meta{F}{x,n}}})\ (\abs{f}{\abs{g}{\meta{G}{f,g}}})\ 
 \arrz \\
  \meta{G}{\abs{n}{\meta{H}{n}},\ 
  \abs{m}{\symb{rec}\ \meta{H}{m}\ K\ (\abs{x}{\abs{n}{\meta{F}{x,n}}})\ 
  (\abs{f}{\abs{g}{\meta{G}{f,g}}})}}
\end{array}
\]
The right-hand side has one candidate:
\[
\symb{rec}\ \meta{H}{m}\ K\ (\abs{x}{\abs{n}{\meta{F}{x,n}}})\ 
(\abs{f}{\abs{g}{\meta{G}{f,g}}})\ (\{G : 2 \})
\]
\end{example}

The original static approaches define
DPs as
pairs $\ell^\sharp \arrdp p^\sharp$ where $\ell \arrz r$ is a rule
and $p$ a subterm of $r$ of the form $\apps{\afun}{r_1}{r_\maa}$ --
as their rules are built using terms, not meta-terms.
This can set variables bound in $r$ free in $p$.
In the
current setting, we use candidates with their meta-variable conditions
and implicit $\beta$-steps rather than subterms, and we replace such
variables by meta-variables.

\begin{definition}[$\SDP$]\label{def:sdp}
Let $s$ be a meta-term and $(\F,\Rules)$ be an AFSM.
Let $\metafy(s)$ denote $s$ with all free variables replaced by
corresponding meta-variables.
Now $\SDP(\Rules) = \{ \ell^\sharp \arrdp \metafy(p^\sharp)\ 
(A) \mid \ell \arrz r \in \Rules \wedge p\ (A) \in \cand(r) \}$.
\end{definition}

Although static
DPs always have a pleasant form
$\apps{\afun^\sharp}{\ell_1}{\ell_\mia} \arrdp \apps{\bfun^\sharp}{p_1}{p_n}\ 
(A)$ (as opposed to the \emph{dynamic}
DPs of, e.g.,
\cite{kop:raa:12}, whose right-hand sides
can have a meta-variable
at the head, which complicates various techniques in the
framework),
they have two important complications
not present in first-order
DPs: the right-hand side $p$ of
a DP $\ell \arrdp p\ (A)$ may contain meta-variables that
do not occur in the left-hand side $\ell$ --
traditional analysis techniques are not really equipped for
this -- and
the left- and right-hand sides may have different types.
In \refSec{sec:framework} we will explore some methods to deal with
these features.

\begin{example}\label{ex:derivsdp}
For the non-$\eta$-expanded rules of \refEx{ex:deriv}, the set
$\SDP(\Rules)$ has one element: 
$\deriv^\sharp\ (\abs{x}{\symb{sin}\ \meta{F}{x}}) \arrdp
\deriv^\sharp\ (\abs{x}{\meta{F}{x}})$.
(As $\symb{times}$ and $\symb{cos}$ are not defined
symbols, they do not generate dependency pairs.)
The set $\SDP(\Rules^\uparrow)$ for the $\eta$-expanded rules
is
$\{
\deriv^\sharp\ (\abs{x}{\symb{sin}\ \meta{F}{x}})\ Y \arrdp
  \deriv^\sharp\ (\abs{x}{\meta{F}{x}})\ Y
\}$.
To obtain the relevant candidate, we used the $\beta$-reduction
step of BRSMTs.
\end{example}

\begin{example}\label{ex:ordrecstatic}
The AFSM from \refEx{ex:ordrec} is AFP following \refEx{ex:afpgood};
here $\SDP(\Rules)$ is:
\[
\begin{array}{rcll}
\symb{rec}^\sharp\ (\suc\ X)\ K\ F\ G & \arrdp & \symb{rec}^\sharp\ 
  X\ K\ F\ G\ (\emptyset) \\
\symb{rec}^\sharp\ (\symb{lim}\ H)\ K\ F\ G & \arrdp &
  \symb{rec}^\sharp\ (H\ M)\ K\ F\ G\ (\emptyset) \\
\end{array}
\]
Note that the right-hand side of the second DP contains a
meta-variable that
is not
on the left.
As we will see in
\refEx{ex:ordrecdone}, that is not problematic
here.
\end{example}

Termination analysis using dependency pairs importantly considers the
notion of a \emph{dependency chain}.  This notion is fairly similar to
the first-order setting:

\begin{definition}[Dependency chain]\label{def:chain}
Let $\P$ be a set of DPs and $\Rules$ a set of rules.
A (finite or infinite) \emph{$(\P,\Rules)$-dependency chain} (or just
$(\P,\Rules)$-chain) is a sequence
$[(\ell_0 \arrdp p_0\ (A_0),s_0,t_0),
  (\ell_1 \arrdp p_1\ (A_1),s_1,t_1),\ldots]$
where each $\ell_i \arrdp p_i\ (A_i) \in \P$ and all $s_i,t_i$ are
terms, such that for all $i$:
\begin{enumerate}
\item\label{depchain:dp}
  there exists a substitution $\gamma$ on domain $\FMV(\ell_i)
  \cup \FMV(p_i)$ such that $s_i = \ell_i\gamma,\ t_i = p_i\gamma$
  and for all $Z \in \domain(\gamma)$: $\gamma(Z)$ respects $A_i$;
\item\label{depchain:reduce}
  we can write $t_i = \apps{\afun}{u_1}{u_n}$ and $s_{i+1} =
  \apps{\afun}{w_1}{w_n}$ and each $u_j \arrr{\Rules} w_j$.
\end{enumerate}
\end{definition}

\begin{example}\label{ex:mapdp}
In the (first) AFSM from \refEx{ex:mapintro},
we have
$\SDP(\Rules) = \{ 
\map^{\sharp}\ (\abs{x}{\meta{Z}{x}}) \linebreak 
(\cons\ 
H\ T) \arrdp \map^{\sharp}\ (\abs{x}{\meta{Z}{x}})\ T \}$.
An example of a finite dependency chain is
$[(\rho,s_1,t_1),(\rho,s_2,t_2)]$ where $\rho$ is the one DP,
$s_1 = \map^\sharp\ (\abs{x}{\suc\ x})\ (\cons\ \nul\ (\cons\ 
(\suc\ \nul)\ \linebreak
(\map\ (\abs{x}{x})\ \nil)))$ and
$t_1 = \map^\sharp\ (\abs{x}{\suc\ x})\ (\cons\ (\suc\ \nul)\ 
(\map\ (\abs{x}{x})\ \nil))$ and
$s_2 = \map^\sharp\ (\abs{x}{\suc\ x})\ (\cons\ (\suc\ \nul)\ \nil)$ and
$t_2 = \map^\sharp\ (\abs{x}{\suc\ x})\ \nil$.

Note that here $t_1$ reduces to $s_2$ in a single step ($\map\ (\abs{x}{x})\ 
\nil \arr{\Rules} \nil$).
\end{example}

We have the following key result:

\edef\thmBasicChain{\number\value{theorem}}
\edef\thmBasicChainSec{\number\value{section}}
\newcommand{\basicChainTheThm}{
\begin{theorem}\label{thm:basicchain}
Let $(\F,\Rules)$ be a
PA-AFP
AFSM.
If $(\F,\Rules)$ is non-terminating, then there is an infinite
$(\SDP(\Rules),\Rules)$-dependency chain.
\end{theorem}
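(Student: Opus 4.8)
The plan is to prove the contrapositive via Tait--Girard computability: assuming there is \emph{no} infinite $(\SDP(\Rules),\Rules)$-chain, I will show that every term is computable and hence (since computability implies termination) that $(\F,\Rules)$ is terminating. Fix a sort ordering $\greqsort$ witnessing the accessible-function-passing property and let $C$ be the RC-set of \refThm{thm:defC}; write ``computable'' for $C$-computable. I will use throughout the standard RC-set facts from \refDef{def:RCset}: computable terms are terminating and closed under $\arr{\Rules}$; a neutral term (headed by a variable or a $\beta$-redex) is computable as soon as all its one-step reducts are; and a constructor-headed term all of whose accessible arguments are computable is itself computable. I also record the \emph{accessibility transfer} property implicit in \refThm{thm:defC}: if $s$ is computable and $s \gracc t$, then $t$ is computable.

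Two lemmas do the real work. The first is an \emph{AFP lemma}: for a rule $\apps{\afun}{\ell_1}{\ell_m} \arrz r$ and a substitution $\delta$ with every $\ell_i\delta$ computable, every $Z \in \FMV(r)$ has $\delta(Z)$ computable. This is exactly where \refDef{def:apfp} enters: the AFP condition supplies some $i$ and distinct variables $\vec x$ with $\ell_i \gracc \meta{Z}{x_1,\dots,x_{\arity(Z)}}$, and applying $\delta$ together with the accessibility transfer property (instantiating the $x_j$ by computable variables) yields computability of $\delta(Z)$ applied to computable arguments, hence of $\delta(Z)$ itself. The second is a \emph{candidate-extraction lemma}: if $q\gamma$ is non-computable for a substitution $\gamma$ with all images computable, then $\cand(q)$ contains a pair $p\ (A)$ with $p = \apps{\bfun}{p_1}{p_j}$, $\bfun \in \Defineds$ and $j = \minarity{\Rules}(\bfun)$, together with a computable-valued extension $\gamma'$ of $\gamma$ respecting $A$, such that $p\gamma'$ is non-computable while every $p_i\gamma'$ is computable. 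I would prove this by induction following the clauses of $\bsuptermeq{A}$: variable- and constructor-headed cases push non-computability into an argument; abstractions extend $\gamma'$ by a computable (e.g.\ variable) image; $\beta$-redexes use the $\beta$-step built into $\bsuptermeq{A}$; and a defined-headed call in $\minarity{\Rules}$-form with computable arguments is the sought candidate. Descending into argument $i$ of a meta-variable $Z$ can only reach the non-computable part when $\gamma'(Z)$ genuinely regards that argument, which is precisely what recording $Z : i$ in $A$ and demanding that $\gamma'$ respect $A$ expresses.

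With these, I build the chain by maintaining the invariant that I hold a non-computable term $M = \apps{\afun}{s_1}{s_m}$ with $\afun \in \Defineds$, $m = \minarity{\Rules}(\afun)$, and all $s_i$ computable. A routine minimality argument using the RC-set facts of the first paragraph (neutral and constructor terms with computable parts are computable) shows that \emph{any} non-computable term yields such an $M_0$. For the step, since the $s_i$ are terminating and $\accreduce{C}$-steps out of $M$ land in computable terms, any witness to $M \notin C$ must eventually use a root rule step: $M \arrr{\Rules} \ell\delta$ followed by a root reduction to $r\delta$ (possibly carrying trailing computable arguments), where $\ell = \apps{\afun}{\ell_1}{\ell_m}$ --- here \emph{proper application} guarantees the left-hand side has exactly $m$ arguments --- and the reduct is again non-computable. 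Closure under reduction makes each $\ell_i\delta$ computable, so the AFP lemma makes every $\delta(Z)$ computable, and the candidate-extraction lemma applied to $r$ produces a DP $\ell^\sharp \arrdp \metafy(p^\sharp)\ (A) \in \SDP(\Rules)$ with a respecting, computable-valued $\gamma'$ such that $\metafy(p^\sharp)\gamma'$ is non-computable, has computable arguments, and (unmarked) re-establishes the invariant. Taking $s_i := \ell^\sharp\gamma'$ (reached from the marked $M$ by reducing arguments) and $t_i := \metafy(p^\sharp)\gamma'$ yields one link of a chain, with matching head symbols on consecutive terms; iterating would give an infinite $(\SDP(\Rules),\Rules)$-chain. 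Under the assumption that no such chain exists, no $M_0$ can exist, so no term is non-computable, and $(\F,\Rules)$ terminates.

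The main obstacle is the candidate-extraction lemma, and in particular making it compatible with the exact definitions of $\cand$ and of the meta-variable conditions. One must localise the non-computability to a defined-headed call in $\minarity{\Rules}$-form whose \emph{own} arguments are again computable (so the invariant is preserved and the chain continues), while threading $\beta$-redexes correctly through $\bsuptermeq{A}$ and generating exactly the set $A$ that the continuation substitution respects. A secondary subtlety is the AFP lemma's reliance on a substitution-stability property of $\gracc$: I must check that $\ell_i \gracc \meta{Z}{\vec x}$ lifts through $\delta$ so that the accessibility transfer property of \refThm{thm:defC} applies. The two structural restrictions are thereby pinpointed: \emph{proper application} ensures that root reductions expose rules in full $\minarity{\Rules}$-argument form (so that marking and candidates are well-behaved), while \emph{accessible function passing} is exactly the hypothesis that powers the AFP lemma.
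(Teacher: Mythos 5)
Your proposal is correct and takes essentially the same route as the paper's own proof (which is given in full for the stronger \refThm{thm:sschain}, from which \refThm{thm:basicchain} follows): your ``AFP lemma'' is the paper's \refLemma{lem:pfp}, your candidate-extraction lemma is the paper's \refLemma{lem:scandcomplete}, and your invariant of a defined-headed non-computable term with computable arguments in $\minarity{\Rules}$-form is exactly the paper's notion of a minimal non-computable (MNC) term, iterated via a forced head rewrite step in the same way. The contrapositive phrasing and the omission of the formative and $\Rules$-computable refinements (which \refThm{thm:basicchain} does not require) are the only differences.
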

}\basicChainTheThm

\begin{proof}[sketch]
The proof is an adaptation of the
one in~\cite{kus:iso:sak:bla:09},
altered for the more permissive definition of
\emph{accessible function passing} over \emph{plain function passing}
as well as the meta-variable conditions; it also follows from
\refThm{thm:sschain} below.
\qed
\end{proof}

By this result we can use dependency pairs to prove termination of
a given properly applied and AFP AFSM: if we can prove that there is
no infinite $(\SDP(\Rules),\Rules)$-chain, then termination follows
immediately.
Note, however,  that the reverse result does \emph{not} hold:
it is possible to
have an infinite
$(\SDP(\Rules),\Rules)$-dependency chain
even for a terminating PA-AFP AFSM.

\begin{example}\label{ex:staticbad}
Let
$\F \supseteq \{
\nul,\one : \nat$,
$\symb{f} : \nat \arrtype \nat$,
$\symb{g} : (\nat \arrtype \nat)
\arrtype \nat\}$ and
$\Rules = \{ \symb{f}\ \nul \arrz \symb{g}\ 
(\abs{x}{\symb{f}\ x}),
\symb{g}\ (\abs{x}{
\meta{F}{x}}) \arrz
\meta{F}{\one} \}$.
This AFSM is PA-AFP, with
$\SDP(\Rules) = \{ \symb{f}^\sharp\ \nul \arrdp \symb{g}^\sharp\ 
  (\abs{x}{\symb{f}\ x}),\ 
\symb{f}^\sharp\ \nul \arrdp \symb{f}^\sharp\ X\}$; the second rule
does not cause the addition of any dependency pairs.  Although
$\arr{\Rules}$
is terminating, there is an infinite $(\SDP(\Rules),\Rules)$-chain
$[(\identifier{f}^\sharp\ \nul \arrdp \identifier{f}^\sharp\ X,
\identifier{f}^\sharp\ \nul, \identifier{f}^\sharp\ \nul),
(\identifier{f}^\sharp\ \nul \arrdp \identifier{f}^\sharp\ X,
\identifier{f}^\sharp\ \nul, \identifier{f}^\sharp\ \nul),\ldots
]$.
\end{example}

The problem in \refEx{ex:staticbad} is the \emph{non-conservative}
DP $\identifier{f}^\sharp\ \nul \arrdp \identifier{f}^\sharp\ X$,
with $X$
on the right but not on the left.
Such
DPs arise from \emph{abstractions} in
the right-hand sides of rules.  Unfortunately, abstractions are
introduced by the restricted $\eta$-expansion (\refDef{def:etalong})
that we may need to make an AFSM properly applied.  Even so,
often all DPs are conservative, like Ex.\ 
\ref{ex:mapintro} and \ref{ex:deriv}.  There, we do have the inverse
result:

\edef\thmReverseChain{\number\value{theorem}}
\edef\thmReverseChainSec{\number\value{section}}
\newcommand{\reverseChainTheThm}{
\begin{theorem}\label{thm:chainreverse}
For any AFSM $(\F,\Rules)$:
if there is an infinite $(\SDP(\Rules),\Rules)$-chain
$[(\rho_0, s_0, t_0),(\rho_1, s_1, t_1),\ldots]$ with all $\rho_i$
conservative, then $\arr{\Rules}$ is non-terminating.
\end{theorem}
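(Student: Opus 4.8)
The plan is to reconstruct a concrete infinite $\arr{\Rules}$-reduction on \emph{unmarked} terms from the given chain, assembled as an ever-deepening nest of contexts. Write $\rho_i$ for the $i$-th DP; by \refDef{def:sdp} it arises from a rule $L_i \arrz r_i \in \Rules$ together with a candidate $q_i\ (A_i) \in \cand(r_i)$, so that $\ell_i = L_i^\sharp$ and $p_i = \metafy(q_i^\sharp)$. The first clause of \refDef{def:chain} supplies, for each $i$, a substitution $\gamma_i$ respecting $A_i$ with $s_i = \ell_i\gamma_i$ and $t_i = p_i\gamma_i$ (so, writing $\unsharp{\cdot}$ for unmarking, $\unsharp{s_i} = L_i\gamma_i$); the second clause gives $\unsharp{t_i} \arrr{\Rules} \unsharp{s_{i+1}}$ (equal heads, arguments reduced pairwise, then unmarked). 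I would then show that
\[ \unsharp{s_0} \arr{\Rules} r_0\gamma_0 \arrr{\beta} C_0[\unsharp{t_0}] \arrr{\Rules} C_0[\unsharp{s_1}] \arr{\Rules} C_0[r_1\gamma_1] \arrr{\beta} C_0[C_1[\unsharp{t_1}]] \arrr{\Rules} \cdots \]
is a valid reduction. Since every round contains the genuine rule step $\unsharp{s_i} = L_i\gamma_i \arr{\Rules} r_i\gamma_i$, the reduction is infinite, witnessing non-termination.

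The role of \emph{conservativity} is precisely to guarantee that $\unsharp{t_i}$ really occurs inside $r_i\gamma_i$. Since $\metafy$ turns the free variables of $q_i^\sharp$ --- those bound by abstractions of $r_i$ under which the candidate was formed --- into \emph{fresh} meta-variables, the inclusion $\FMV(p_i) \subseteq \FMV(\ell_i)$ can hold only if no such variable survives in $q_i$; hence $q_i$ is closed, $p_i = q_i^\sharp$, and $\unsharp{t_i} = q_i\gamma_i$. For a non-conservative DP, $\gamma_i$ could map these fresh meta-variables to arbitrary terms, so $\unsharp{t_i}$ would not sit inside $r_i\gamma_i$, where the corresponding variables are still bound; this is exactly the gap that makes the converse fail in \refEx{ex:staticbad}.

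The technical heart is a substitution lemma for the $\beta$-reduced-sub-meta-term relation: \emph{if $s \bsuptermeq{A} t$ and $\gamma$ respects $A$, then $s\gamma \arrr{\beta} u$ for some $u$ with $u \suptermeq t\gamma$.} I would prove this by induction on the derivation of $s \bsuptermeq{A} t$. Descent through abstractions, applications, and explicit $\beta$-redexes is routine: an explicit redex $\apps{(\abs{x}{v})}{s_0}{s_n}$ instantiates to a genuine $\beta$-redex, after which the induction hypothesis applies (via the substitution lemma and closure under contexts). The delicate clause is descent into the $i$-th argument of a meta-variable with $(Z : i) \in A$: there $\meta{Z}{\ldots}\gamma$ unfolds $\gamma(Z) = \abs{x_1 \dots x_j}{w}$ against its arguments, and respecting $A$ ensures either $i > j$ (the $i$-th argument survives as a trailing application argument) or $i \leq j$ with $x_i \in \FV(w)$ (it survives through the substitution into $w$); in both cases the instantiated $i$-th argument remains a subterm, and the induction hypothesis applies. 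Instantiating the lemma with $s = r_i$ and $t$ the candidate $q_i$ extended by its discarded arguments (\refDef{def:candidates}) yields $r_i\gamma_i \arrr{\beta} u$ with $u \suptermeq q_i\gamma_i = \unsharp{t_i}$, since $\unsharp{t_i}$ is the head --- hence a subterm --- of that extended candidate under $\gamma_i$; writing $u = C_i[\unsharp{t_i}]$ fixes the context.

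Assembling the pieces: each $\unsharp{s_i} = L_i\gamma_i \arr{\Rules} r_i\gamma_i$ is a rule step, because the restriction of $\gamma_i$ to $\FMV(L_i)$ is a matching substitution and $\FMV(r_i) \subseteq \FMV(L_i)$; the lemma drops $\unsharp{t_i}$ into a context $C_i$; closure of $\arr{\Rules}$ under contexts lifts $\unsharp{t_i} \arrr{\Rules} \unsharp{s_{i+1}}$ to $C_i[\,\cdot\,]$ (reducing below binders where $C_i$ requires it, which the formalism permits); and the contexts nest as displayed. Unmarking commutes with substitution and with $\arr{\Rules}$ (rules range over $\F$, not $\F^\sharp$), so the unmarking steps are harmless. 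I expect the main obstacle to be the substitution lemma for $\bsuptermeq{A}$ --- specifically the meta-variable-descent case, where one must check that the \emph{respects}-conditions are exactly strong enough to preserve the subterm occurrence across the combined $\beta$/substitution step; everything else is bookkeeping about contexts and unmarking.
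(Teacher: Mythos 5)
Your proposal is correct and takes essentially the same route as the paper: your substitution lemma is exactly the paper's \refLemma{lem:candidatereduce} (stated there in the interleaved form $s\gamma\ (\bettersupterm \mathop{\cup} \arr{\beta})^*\ t\gamma$, equivalent to your beta-first form since $\arr{\beta}$ commutes past subterm steps by monotonicity), and your observation that conservativity forces the candidate to be closed is precisely how the paper's \refLemma{lem:dpreduce} uses the hypothesis that no fresh meta-variables occur on the right. The only presentational difference lies in the final assembly: the paper strings the chain into an infinite $(\arr{\Rules} \mathop{\cup} \supterm)$-sequence and converts it into a pure $\arr{\Rules}$-reduction by a single appeal to monotonicity at the end, whereas you perform that conversion inline by carrying explicit nested contexts $C_0[C_1[\cdots]]$.
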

}\reverseChainTheThm

\begin{proof}[sketch]
If $\FMV(p_i) \subseteq \FMV(\ell_i)$, then we can see that
$s_i \arr{\Rules} \cdot \arrr{\beta} t_i'$ for some term $t_i'$ of
which $t_i$ is a subterm.  Since also each $t_i \arrr{\Rules} s_{
i+1}$, the infinite chain induces an infinite reduction $s_0
\arr{\Rules}^+ t_0' \arrr{\Rules} s_1' \arr{\Rules}^+ t_1''
\arrr{\Rules} \dots$.
\qed
\end{proof}

The core of the dependency pair \emph{framework} is to
systematically simplify a set of pairs $(\P,\Rules)$ to prove either
absence or presence of an infinite $(\P,\Rules)$-chain, thus showing termination
or non-termination as appropriate.  By Theorems \ref{thm:basicchain}
and \ref{thm:chainreverse} we can do so, although with some conditions
on the non-termination result.
We can do better by tracking certain properties of dependency
chains.

\begin{definition}[Minimal and Computable chains]
Let $(\F,\AlterRules)$ be an AFSM and $C_{\AlterRules}$
an RC-set satisfying the properties of \refThm{thm:defC} for
$(\F,\AlterRules)$. Let $\F$ contain,
for every type $\sigma$, at least
countably many symbols $\afun : \sigma$
not used in $\AlterRules$.

A $(\P,\Rules)$-chain $[(\rho_0,s_0,t_0),(\rho_1,s_1,t_1),\ldots]$
is \emph{$\AlterRules$-computable} if:
$\arr{\AlterRules} \mathop{\supseteq} \arr{\Rules}$, and
for all $i \in \N$ there exists a substitution $\gamma_i$ such
  that $\rho_i = \ell_i \arrdp p_i\ (A_i)$ with $s_i = \ell_i
  \gamma_i$ and $t_i = p_i\gamma_i$,
  and $(\abs{x_1 \dots x_n}{v})\gamma_i$ is $C_{\AlterRules}$-computable
  for all $v$ and $B$ such that $p_i \bsuptermeq{B} v$,
  $\gamma_i$ respects $B$, and $\FV(v) = \{x_1,\dots,x_n\}$.

A chain is \emph{minimal} if the strict subterms of all $t_i$ are
terminating under $\arr{\Rules}$.
\end{definition}

In the first-order DP framework, \emph{minimal} chains
give access to several powerful techniques to
prove absence of infinite chains, such as
the \emph{subterm criterion}
\cite{hir:mid:07} and \emph{usable rules}
\cite{gie:thi:sch:fal:06,hir:mid:07}.
\emph{Computable} chains go a step further, by building on the
computability inherent in the proof of \refThm{thm:basicchain} and the
notion of \emph{accessible function passing} AFSMs.  In computable
chains, we can require that (some of) the subterms of all $t_i$ are
\emph{computable} rather than merely \emph{terminating}.
This property will be essential in the \emph{computable subterm
criterion processor} (\refThm{thm:staticsubtermproc}).

Another property of dependency chains is the use of \emph{formative
rules}, which has proven very useful
for dynamic DPs
\cite{kop:raa:12}.  Here
we go
further and
consider \emph{formative reductions}, which were
introduced
for the first-order DP framework in \cite{fuh:kop:14}.
This property will be essential in the \emph{formative rules
processor} (\refThm{thm:formativeproc}).

\begin{definition}[Formative chain, formative reduction]\label{def:formative}
A $(\P,\Rules)$-chain $[(\ell_0 \arrdp p_0\ (A_0),s_0,t_0),
(\ell_1 \arrdp p_1\ (A_1),s_1,t_1),\ldots]$
is \emph{formative} if for all $i$,
the reduction $t_i \arrr{\Rules} s_{i+1}$ is $\ell_{i+1}$-formative.
Here, for a pattern $\ell$, substitution $\gamma$ and
term $s$, a reduction $s \arrr{\Rules} \ell\gamma$ is
\emph{$\ell$-formative} if 
one of the following
holds:
\begin{itemize}
\item $\ell$ is not a fully extended linear pattern; that is:
  some meta-variable occurs more than once in $\ell$ or $\ell$ has a
  sub-meta-term $\abs{x}{C[\meta{Z}{\vec{s}}]}$ with $x \notin \{
  \vec{s}\}$
\item $\ell$ is a meta-variable application $\meta{Z}{x_1,\dots,x_\mia}$
  and $s = \ell\gamma$
\item $s = \apps{a}{s_1}{s_n}$ and $\ell = \apps{a}{\ell_1}{\ell_n}$
  with $a \in \F^\sharp \cup \V$ and each $s_i \arrr{\Rules} \ell_i
  \gamma$ by an $\ell_i$-formative reduction
\item $s = \abs{x}{s'}$ and $\ell = \abs{x}{\ell'}$ and $s' \arrr{\Rules}
  \ell'\gamma$ by an $\ell'$-formative reduction
\item $s = \apps{(\abs{x}{u})\ v}{w_1}{w_n}$ and
  $\apps{u[x:=v]}{w_1}{w_n} \arrr{\Rules} \ell\gamma$ by an
  $\ell$-formative reduction
\item $\ell$ is not a meta-variable application, and there
  are
  $\ell' \arrz r' \in \Rules$, meta-variables $Z_1 \dots Z_n$
  ($n \geq 0$) and $\delta$
  such that $s \arrr{\Rules} (\apps{\ell'}{Z_1}{Z_n})\delta$ by an
  $(\apps{\ell'}{Z_1}{Z_n})$-formative reduction, and
  $(\apps{r'}{Z_1}{Z_n})\delta
  \arrr{\Rules} \ell\gamma$ by an $\ell$-formative reduction.
\end{itemize}
\end{definition}

The idea of a formative reduction is to avoid redundant steps: if $s
\arrr{\Rules} \ell\gamma$ by an $\ell$-formative reduction, then this
reduction takes only the steps needed to obtain an instance of $\ell$.
Suppose
that we have rules
$\symb{plus}\ 
\nul\ Y \arrz Y,\ \symb{plus}\ (\suc\ X)\ Y \arrz \suc\ (\symb{plus}\ 
X\ Y)$.
Let $\ell := \symb{g}\ \nul\ X$ and $t := \symb{plus}\ \nul\ \nul$.
Then the reduction $\symb{g}\ t\ t \arr{\Rules} \symb{g}\ \nul\ t$
is $\ell$-formative: we
must reduce the first argument to
get an
instance of $\ell$.  The reduction $\symb{g}\ t\ t \arr{\Rules}
\symb{g}\ t\ \nul \arr{\Rules} \symb{g}\ \nul\ \nul$ is not
$\ell$-formative, because the reduction in the second argument
does not contribute to
the non-meta-variable positions
of $\ell$.
This matters when we consider $\ell$ as the
left-hand side of a rule, say $\symb{g}\ \nul\ X \arrz \nul$: if we
reduce $\symb{g}\ t\ t \arr{\Rules} \symb{g}\ t\ \nul \arr{\Rules}
\symb{g}\ \nul\ \nul \arr{\Rules} \nul$, then the first step was
redundant: removing this step gives a shorter reduction to
the same result: $\symb{g}\ t\ t \arr{\Rules} \symb{g}\ \nul\ t
\arr{\Rules} \nul$.
In an infinite reduction, redundant steps may also be postponed
indefinitely.

\medskip
We can now strengthen the result of \refThm{thm:basicchain}
with two new properties.

\edef\thmSSChain{\number\value{theorem}}
\edef\thmSSChainSec{\number\value{section}}
\newcommand{\ssChainTheThm}{
\begin{theorem}\label{thm:sschain}
Let $(\F,\Rules)$ be a \emph{properly applied}, \emph{accessible
function passing} AFSM.
If $(\F,\Rules)$ is non-terminating, then there is an infinite
$\Rules$-computable
formative $(\SDP(\Rules),\Rules)$-dependency chain.
\end{theorem}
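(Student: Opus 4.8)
The plan is to prove the statement in contrapositive, strengthened form: assuming there is \emph{no} infinite $\Rules$-computable formative $(\SDP(\Rules),\Rules)$-chain, I show that every term is $C$-computable for the RC-set $C = C_\Rules$ supplied by \refThm{thm:defC} (taking $\AlterRules := \Rules$, so that $\arr{\AlterRules} = \arr{\Rules}$ trivially satisfies the inclusion demanded of $\Rules$-computability), and hence that $\arr{\Rules}$ terminates. Since $C$-computability implies termination and reduces at functional types to base-type instances, it suffices to establish the key claim $(\star)$: for every $\afun \in \Defineds$ and all $C$-computable $u_1,\dots,u_k$, the term $\apps{\afun}{u_1}{u_k}$ is $C$-computable. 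Terms headed by a variable, an abstraction (a $\beta$-redex), or a constructor are handled directly by the closure and neutrality conditions of \refDef{def:RCset} together with preservation of computability under $\arr{\Rules}$, so that ``all terms computable'' follows from $(\star)$ by induction on term structure.

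Before $(\star)$ I would isolate the two lemmas that make the \emph{accessible function passing} restriction do its work. The \textbf{Accessibility Lemma} states that if $s$ is $C$-computable and $s \gracc t$, then $t$ is $C$-computable; its proof is an induction on \refDef{def:accArgs}, using the characterisation of $C$ in \refThm{thm:defC} for the function-symbol case (accessible arguments of a computable $\afun$-headed term are computable) and the tailored definition of $\Acc(x)$ for the variable case. Combined with \refDef{def:apfp}, this yields the \textbf{Substitution Lemma}: for a rule $\apps{\afun}{\ell_1}{\ell_n} \arrz r$ and a substitution $\delta$ with every $\ell_i\delta$ $C$-computable, every $\delta(Z)$ with $Z \in \FMV(r)$ is $C$-computable. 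I would also invoke a \textbf{Formative Reduction Lemma} (matching \refDef{def:formative}): any reduction $s \arrr{\Rules} \ell\gamma$ to an instance of a pattern $\ell$ can be replaced by an $\ell$-formative reduction to some instance $\ell\gamma'$. Since the head of a DP is never rewritten and we only reduce the arguments of $\apps{\afun^\sharp}{w_1}{w_n}$ when matching the next DP, the matched arguments stay reducts of $C$-computable terms, hence remain $C$-computable; so imposing formativity never destroys the computability bookkeeping.

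The core is a well-founded induction whose well-foundedness is exactly the no-infinite-chain hypothesis. I define a ``call'' relation on pairs $(\afun,\vec u)$ with $\vec u$ $C$-computable: $(\afun,\vec u)$ calls $(\bfun,\vec v)$ when, reducing $\apps{\afun}{u_1}{u_k}$ by an $\ell$-formative reduction to an instance $\ell\delta$ of a left-hand side and firing the rule to $r\delta$, there is a candidate $p\ (A) \in \cand(r)$ with $\metafy(p^\sharp)$ the corresponding DP right-hand side and $\vec v = \vec{p}\delta$; by construction each such edge is precisely one step of an $\Rules$-computable formative chain (the substitution is computable by the Substitution Lemma, its restriction respects $A$, and the BRSMT-closure conditions of $\Rules$-computability are the computability facts established along the way). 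The hypothesis makes this relation well-founded, so I prove $(\star)$ by induction along it. In the inductive step I verify $\apps{\afun}{u_1}{u_k} \in C$ via \refThm{thm:defC}: argument-internal reductions terminate (the arguments are computable) and $\accreduce{C}$-steps land in computable terms, so any putative non-termination or bad accessible argument forces a head-rule step $\ell\delta \arr{\Rules} r\delta$ with all $\ell_i\delta$ computable; by the Substitution Lemma all $\delta(Z)$ are then computable, and I prove $r\delta$ computable by an inner structural induction on the meta-term $r$. That inner induction uses the RC-set properties for non-defined heads and, crucially, treats a fully applied defined-headed sub-meta-term — exactly a candidate $\apps{\bfun}{p_1}{p_{\minarity{\Rules}(\bfun)}}$, identified through $\bsuptermeq{A}$ so that $\beta$-redexes in non-$\beta$-normal right-hand sides and the meta-variable conditions $A$ are accounted for — by first obtaining computability of the $p_j\delta$ from the structural induction and then invoking the outer ``call'' induction hypothesis for $(\bfun,\vec{p}\delta)$. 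Thus $r\delta$, and hence $\apps{\afun}{u_1}{u_k}$, is computable, proving $(\star)$. Contrapositively, if $\arr{\Rules}$ is non-terminating then some term is non-$C$-computable, the ``call'' relation cannot be well-founded, and an infinite descending sequence of calls is precisely the desired infinite $\Rules$-computable formative $(\SDP(\Rules),\Rules)$-chain.

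I expect the main obstacle to be this inner structural induction extracting the next DP: one must select the correct fully applied defined-headed sub-meta-term of $r$ using $\bsuptermeq{A}$ so that it is a genuine candidate (\refDef{def:candidates}), show that $\metafy$ turns the enclosing bound variables into meta-variables whose $\delta$-instantiation can be taken $C$-computable, and verify that the substitution respects the minimal condition set $A$ while delivering exactly the BRSMT-closure computability demanded by $\Rules$-computability. Reconciling this with non-$\beta$-normal right-hand sides (where the candidate is reached only after implicit $\beta$-steps) and with the formative replacement of the connecting reduction is where the bookkeeping is most delicate; the Accessibility and Substitution Lemmas are what keep every extracted substitution computable, and proper application is what guarantees that each candidate supplies the $\minarity{\Rules}(\bfun)$ arguments needed to mark it and to continue the chain.
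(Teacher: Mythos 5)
Your strategy is workable, and it is genuinely different in organization from the paper's proof. The paper argues directly: from a non-terminating term it extracts a \emph{minimal non-computable} term $\apps{\afun}{s_1}{s_\mia}$ (arguments computable, whole term not), shows such a term must eventually admit a head step $\ell\gamma \arr{\Rules} r\gamma$ with $r\gamma$ non-computable, and then applies a minimal-candidate selection lemma (\refLemma{lem:scandcomplete}) to extract the next dependency pair, iterating forever to build the chain; its formative and substitution lemmas (\refLemma{lem:formative}, \refLemma{lem:pfp}) coincide with your Formative Reduction and Substitution Lemmas. Your proof is the contrapositive, Tait-style organization: the absence of chains is wielded as a well-founded induction principle (your ``call'' relation), and your inner structural induction on $r$ is exactly the contrapositive of \refLemma{lem:scandcomplete} --- where the paper selects one minimal non-computable candidate, you prove all respected candidate instances computable bottom-up. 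The paper's version only has to follow a single non-computable branch; yours has to treat every head-step reduct and every candidate, but replaces the minimal-choice machinery by an induction principle closer to standard strong-normalization proofs. Note also that your inner induction cannot be literally ``structural'': BRSMTs are reached through root $\beta$-steps on non-$\beta$-normal right-hand sides, so you need induction on the well-founded relation $\bsuptermeq{\beta}$ (as the paper does), a point you flag yourself.

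There is, however, one step that fails as stated: the claim that the no-chain hypothesis makes your call relation well-founded. As you define it, $(\afun,\vec u)$ calls $(\bfun,\vec v)$ whenever \emph{some} candidate $p\ (A) \in \cand(r)$ satisfies $\vec v = \vec p\delta$; nothing in that definition forces $\delta$ to respect $A$, nor forces the instantiated BRSMTs of $\metafy(p^\sharp)$ to be $C$-computable. A descending sequence of such edges therefore need not be a $(\SDP(\Rules),\Rules)$-chain at all --- by \refDef{def:chain} a chain step \emph{requires} the substitution to respect $A_i$, and $\Rules$-computability additionally requires computability of all respected BRSMT instances --- so the hypothesis tells you nothing about such sequences. (Concretely: a candidate reached through the first argument of $Z$ carries $Z : 1$ in $A$, but $\delta(Z)$ may be $\abs{x}{\nul}$, which does not respect $A$; your parenthetical ``its restriction respects $A$'' is simply not true of every edge fitting your definition.) The repair is to build these side conditions into the relation itself: edges are exactly those tuples that constitute a formative, condition-respecting, BRSMT-computable chain step. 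Well-foundedness then follows, but the burden shifts to the inner induction: before invoking the outer induction hypothesis at a defined-headed sub-meta-term, it must (i) have descended only through meta-variable argument positions that $\delta$ actually uses, and then argue that some candidate with a \emph{minimal} condition set contained in the collected conditions exists and is respected, and (ii) have already established computability of all smaller respected BRSMT instances, so that the edge is legitimate. This is precisely the bookkeeping the paper packages into \refLemma{lem:scandcomplete}; your final paragraph anticipates it, but it must live in the definition of the relation rather than be claimed as a consequence of it.
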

}
\ssChainTheThm

\begin{proof}[sketch]
We select a \emph{minimal non-computable (MNC)} term $s :=
\apps{\afun}{s_1}{s_\mia}$ (where all $s_i$ are $C_\Rules$-computable)
and an infinite reduction starting in $s$.  Then we stepwise build an
infinite dependency chain, as follows.
Since $s$ is non-computable but each $s_i$ terminates
(as computability implies termination), there exist a rule
$\apps{\afun}{\ell_1}{\ell_\mia} \arrz r$ and substitution $\gamma$
such that each $s_i \arrr{\Rules} \ell_i\gamma$ and $r\gamma$ is
non-computable.
We can then identify a candidate $t\ (A)$ of $r$ such that $\gamma$
respects $A$ and $t\gamma$ is a MNC subterm of $r\gamma$; we continue
the process with $t\gamma$ (or a term at its head).
For the \emph{formative} property, we note that if $s \arrr{\Rules}
\ell\gamma$ and $u$ is terminating, then $u \arrr{\Rules} \ell\delta$
by an $\ell$-formative reduction for substitution $\delta$ such that
each $\delta(Z) \arrr{\Rules} \gamma(Z)$.  This follows by postponing
those reduction steps not needed to obtain an instance of $\ell$.
The resulting infinite chain is $\Rules$-computable because we can show, by
induction on the definition of $\gracc$, that if $\ell \arrz r$ is an
AFP rule and $\ell\gamma$ is a MNC term, then $\gamma(Z)$ is
$C_{\Rules}$-computable for all $Z \in \FMV(r)$.
\qed
\end{proof}

As
it is easily seen that
all $C_{\AlterRules}$-computable terms are
$\arr{\AlterRules}$-terminating and
therefore $\arr{\Rules}$-terminating,
every $\AlterRules$-computable $(\P,\Rules)$-dependency chain is also
minimal.
The notions of $\Rules$-computable and formative chains
still do
not suffice to obtain a true inverse result, however (i.e., to prove
that termination implies the absence of an infinite $\Rules$-computable
chain over $\SDP(\Rules)$): the infinite chain in \refEx{ex:staticbad} is
$\Rules$-computable.

\medskip
To see why the two restrictions that the AFSM must be \emph{properly
applied} and \emph{accessible function passing} are necessary, consider
the following examples.

\begin{example}
Consider
$\F \supseteq \{ \fix :
((\symb{o} \arrtype \symb{o}) \arrtype \symb{o} \arrtype \symb{o})
\arrtype \symb{o} \arrtype \symb{o} \}$ and
$\Rules = \{ \fix\ F\ X \arrz F\ (\fix\ F)\ X \}$.
This AFSM is not properly applied; it is also not terminating, as can
be seen by instantiating $F$ with $\abs{y}{y}$.  However, it does not
have any static DPs, since $\symb{fix}\ F$ is not a candidate.  Even
if we altered the definition of static DPs to admit a dependency
pair $\symb{fix}^\sharp\ F\ X \arrdp \symb{fix}^\sharp\ F$, this pair
could not be used to build an infinite dependency chain.

Note that the problem does not arise if we study the $\eta$-expanded
rules $\Rules^\uparrow = \{ \fix\ F\ X \arrz F\ (\abs{z}{\fix\ F\ z)\ 
X} \}$, as the dependency pair $\fix^\sharp\ F\ X \arrdp \fix^\sharp\ 
F\ Z$ does admit an infinite chain.  Unfortunately, as the one
dependency pair does not satisfy the conditions of
\refThm{thm:chainreverse}, we cannot use this to prove non-termination.
\end{example}

\begin{example}
The AFSM from \refEx{ex:lambdadynamic} is not
accessible function passing, since
$\Acc(\symb{lm}) = \emptyset$.
This is good because the set $\SDP(\Rules)$ is empty, which would lead
us to falsely conclude termination without the restriction.
\end{example}

\emph{Discussion:}
\refThm{thm:sschain} transposes the work of
\cite{kus:iso:sak:bla:09,suz:kus:bla:11}
to AFSMs and extends it by using a more liberal restriction, by
limiting interest to \emph{formative}, \emph{$\Rules$-computable}
chains, and by including meta-variable conditions.  Both of these
new properties of chains will support new termination techniques
within the DP framework.

The relationship with the works for functional programming
\cite{kus:13,kus:18} is less clear: they define a different form of
chains suited well to polymorphic systems, but which requires more
intricate reasoning for non-polymorphic systems, as DPs can be used
for reductions at the head of a term.
It is not clear whether there are non-polymorphic systems that can
be handled with one and not the other.  The notions of formative
and $\Rules$-computable chains are not considered there;
meta-variable conditions are not relevant to their $\lambda$-free
formalism.

\section{The static higher-order DP framework}\label{sec:framework}

In first-order term rewriting, the DP \emph{framework}
\cite{gie:thi:sch:05:2} is an extendable framework to prove
termination and non-termination.  As observed in the introduction,
DP analyses in higher-order rewriting typically go beyond the initial
DP \emph{approach} \cite{art:gie:00},
but fall short of the full \emph{framework}.
Here, we define the latter for static DPs.
Complete versions of all proof sketches in this section are
\onlyarxiv{available in \refApp{app:processors}}%
\onlypaper{in \cite[Appendix~C]{technicalreport}}.

\medskip
We have now reduced the problem of termination to
non-existence
of certain chains.  In the DP framework, we formalise this in the
notion of a \emph{DP problem}:

\begin{definition}[DP problem]\label{def:dpproblem}
A \emph{DP problem} is a tuple $(\P,\Rules,m,f)$ with $\P$ a set of
DPs, $\Rules$ a set of rules, $m \in \{ \minimal,
\arbitrary \} \cup \{ \static_\AlterRules \mid \text{any set of rules}\ 
\AlterRules \}$, and $f \in
\{ \formative, \nonformative \}$.%
\footnote{Our framework is implicitly parametrised by
  the signature $\F^\sharp$ used for term formation.
  As none of the processors we present modify this component
  (as indeed there is no need to by \refThm{thm:invar}),
  we leave it implicit.}

A DP problem $(\P,\Rules,m,f)$ is
\emph{finite} if there exists
no infinite $(\P,\Rules)$-chain that is $\AlterRules$-computable if
$m = \static_\AlterRules$, is minimal if $m = \minimal$,
and is formative if $f = \formative$.
It is \emph{infinite} if $\Rules$ is non-terminating, or if there
exists an infinite $(\P,\Rules)$-chain
where all DPs
used
in the chain are conservative.

To capture the
levels of permissiveness in the $m$ flag,
we use a transitive-reflexive relation $\succeq$ generated by
$\static_{\AlterRules} \succeq \minimal \succeq \arbitrary$.
\end{definition}

Thus, the combination of Theorems \ref{thm:sschain} and
\ref{thm:chainreverse} can be rephrased as: an AFSM $(\F,\Rules)$ is
terminating if $(\SDP(\Rules),\Rules,\static_\Rules,\formative)$ is
finite, and is non-terminating if $(\SDP(\Rules),\Rules,m,f)$ is
infinite for some $m \in \{\static_\AlterRules,\minimal,\arbitrary\}$
and $f \in \{\formative,\nonformative\}$.\footnote{The processors
  in this paper do not \emph{alter}
  the flag $m$, but some \emph{require} minimality or computability.
  We include the $\minimal$ option and the subscript $\AlterRules$
  for the sake of future generalisations, and for
  reuse of processors in
  the \emph{dynamic} approach of~\cite{kop:raa:12}.}

\medskip
The core idea of the DP framework is to iteratively simplify a set of
DP problems via \emph{processors} until nothing remains to be proved:

\begin{definition}[Processor]\label{def:proc}
A \emph{dependency pair processor} (or just \emph{processor}) is a
function that takes a DP problem and returns either \no\ or
a set of DP problems.
A processor $\Proc$ is \emph{sound} if a DP problem $\adpprob$ is finite
whenever $\Proc(\adpprob) \neq \no$ and all elements of $\Proc(\adpprob)$
are finite.
A processor $\Proc$ is \emph{complete} if a DP problem $\adpprob$ is
infinite whenever $\Proc(\adpprob) = \no$ or contains an infinite
element.
\end{definition}

To prove finiteness of a DP problem $\adpprob$ with the DP framework,
we proceed analogously to the first-order DP framework
\cite{gie:thi:sch:fal:06}: we repeatedly apply sound DP processors
starting from $\adpprob$ until none remain.  That is, we execute the
following rough procedure:
  (1) let $A := \{ \adpprob \}$;
  (2) while $A \neq \emptyset$: select a problem $\bdpprob \in A$ and a
  sound processor $\Proc$ with
  $\Proc(\bdpprob) \neq \no$, and let
  $A := (A \setminus \{ \bdpprob \}) \cup \Proc(\bdpprob)$.
If this procedure terminates, then $\adpprob$ is a finite DP problem.

To prove termination of an AFSM $(\F,\Rules)$, we would use as initial
DP problem $(\SDP(\Rules),\Rules,\static_{\Rules},\formative)$,
provided that $\Rules$ is properly applied and accessible function
passing (where $\eta$-expansion following \refDef{def:etalong}
may be applied first).
If the procedure terminates -- so finiteness of $\adpprob$ is proved
by the definition of soundness -- then \refThm{thm:sschain} provides
termination of $\arr{\Rules}$.

Similarly, we can use the DP framework to prove infiniteness:
  (1) let $A := \{ \adpprob \}$;
  (2) while $A \neq \no$: select a problem $\bdpprob \in A$ and a complete
    processor $\Proc$, and let $A := \no$ if $\Proc(\bdpprob) = \no$, or
    $A := (A \setminus \{ \bdpprob \}) \cup \Proc(\bdpprob)$ otherwise.
For non-termination of  $(\F,\Rules)$, the initial DP problem should be
$(\SDP(\Rules),\Rules,m,f)$, where $m,f$ can be any flag
(see \refThm{thm:chainreverse}).
Note that the algorithms coincide while processors are used that are
both
sound \emph{and} complete.
In
a tool,
automation (or the user)
must resolve
the non-determinism
and select
suitable processors.

\smallskip
Below, we will present a number of processors within the framework.
We will typically present processors by writing
``for a DP problem $\adpprob$ satisfying $X$, $Y$, $Z$,
$\Proc(\adpprob) = \dots$''.
In these cases, we let
$\Proc(\adpprob) = \{ \adpprob \}$ for any problem $\adpprob$ not
satisfying the given properties.
Many more processors are possible, but we have chosen to present a
selection which touches on all aspects of the DP framework:
\begin{itemize}
\item processors which map a DP problem to $\no$
  (\refThm{def:nontermproc}), a singleton set (most processors) and
  a non-singleton set (\refThm{def:depgraph});
\item changing
  the set $\Rules$ (\refThm{thm:formativeproc}, \ref{thm:usable})
  and
  various flags (\refThm{thm:usable});
\item using specific values of the
  $f$
  (\refThm{thm:formativeproc}) and
  $m$ flags (\refThm{thm:subtermproc}, \ref{thm:usable},
  \ref{thm:staticsubtermproc});
\item using term orderings (\refThm{thm:redpairproc},
  \ref{thm:basetriple}), a key part of many termination
  proofs.
\end{itemize}

\subsection{The dependency graph}\label{subsec:graph}

We can leverage reachability information to \emph{decompose} DP
problems. In first-order rewriting, a graph structure
is used to track which DPs can possibly follow one another in a
chain~\cite{art:gie:00}.
Here, we define this
\emph{dependency graph} as follows.

\begin{definition}[Dependency graph]\label{def:depgraph}
A DP problem $(\P,\Rules,m,f)$ induces a graph structure $\mathit{DG}$,
called its \emph{dependency graph}, whose nodes are the elements of
$\P$. There is a (directed) edge from $\rho_1$ to $\rho_2$
in $\mathit{DG}$ iff
there exist $s_1,t_1,s_2,t_2$ such that
$[(\rho_1,s_1,t_1),(\rho_2,s_2,t_2)]$ is a $(\P,\Rules)$-chain
with the properties for $m,f$.
\end{definition}

\begin{example}\label{ex:graphconditions}
Consider an AFSM with
$\F \supseteq \{ \symb{f} : (\nat \arrtype \nat) \arrtype \nat
\arrtype \nat \}$ and $\Rules = \{
\symb{f}\ (\abs{x}{\meta{F}{x}})\ (\suc\ Y) \arrz \meta{F}{
  \symb{f}\ (\abs{x}{\nul})\ (\symb{f}\ (\abs{x}{\meta{F}{x}})\ Y)}
\}$.
Let $\P := \SDP(\Rules) =$
\[
\left\{
\begin{array}{rrcll}
(1) & \symb{f}^\sharp\ (\abs{x}{\meta{F}{x}})\ (\suc\ Y) & \arrdp &
  \symb{f}^\sharp\ (\abs{x}{\nul})\ 
  (\symb{f}\ (\abs{x}{\meta{F}{x}})\ Y) & (\{ F : 1 \}) \\
(2) & \symb{f}^\sharp\ (\abs{x}{\meta{F}{x}})\ (\suc\ Y) & \arrdp &
  \symb{f}^\sharp\ (\abs{x}{\meta{F}{x}})\ Y & (\{ F : 1 \}) \\
\end{array}
\right\}
\]
The dependency graph of $(\P,\Rules,\minimal,\formative)$ is:

\vspace{-5ex}
\begin{center}
\begin{tikzpicture}[draw=darkgray]
  \tikzstyle{node} = [draw, fill=white, rounded corners]

  \node (pair1) [node] {$(1)$};
  \node (pair2) [node, right of=pair1, node distance=7em] {$(2)$};
  \draw [->] (pair2) -- (pair1);
  \draw [->] (pair2) to[out=90,in=0,looseness=5] (pair2);
\end{tikzpicture}
\end{center}
\vspace{-1.5ex}
There is no edge from (1) to itself or (2) because there is no
substitution $\gamma$ such that $(\abs{x}{\nul})\gamma$ can be reduced
to a term $(\abs{x}{\meta{F}{x}})\delta$ where $\delta(F)$ regards its
first argument (as $\arrr{\Rules}$ cannot introduce new variables).
\end{example}

In general, the dependency graph for a given DP problem is undecidable,
which is why we consider \emph{approximations}.

\begin{definition}[Dependency graph approximation \cite{kop:raa:12}]
A finite graph $G_\theta$ \emph{approximates}
$\mathit{DG}$ if $\theta$ is a function that maps the nodes of
$\mathit{DG}$ to the nodes of $G_\theta$ such that, whenever
$\mathit{DG}$ has
an edge from $\rho_1$ to $\rho_2$,
$G_\theta$
has an edge from
$\theta(\rho_1)$ to $\theta(\rho_2)$.
($G_\theta$ may have edges
that have no corresponding edge in $\mathit{DG}$.)

\end{definition}

Note that this definition allows for an \emph{infinite} graph to be
approximated by a \emph{finite} one; infinite graphs may occur if
$\Rules$ is infinite
(e.g., the union of all simply-typed instances of
polymorphic rules).

If $\P$ is finite, we can
take a graph approximation $G_{\mathtt{id}}$
with the same nodes as $\mathit{DG}$.
A simple approximation may have an edge from $\ell_1 \arrdp p_1\ (A_1)$
to $\ell_2 \arrdp p_2\ (A_2)$ whenever both $p_1$ and $\ell_2$ have the
form $\apps{\afun^\sharp}{s_1}{s_\mia}$ for the same $\afun$ and $\mia$.
However, one can also take the meta-variable conditions into account,
as we did in \refEx{ex:graphconditions}.

\edef\procDPGraph{\number\value{theorem}}
\edef\procDPGraphSec{\number\value{section}}
\newcommand{\DPGraphTheProc}{
\begin{theorem}[Dependency graph processor]\label{def:graphproc}
The processor $\Proc_{G_\theta}$\ that maps a DP problem $\adpprob =
(\P,\Rules,m,f)$
to $\{ (\{ \rho \in \P \mid \theta(\rho) \in C_i \},\Rules,m,f) \mid
1 \leq i \leq n \}$ if $G_\theta$ is an approximation of the
dependency graph of $\adpprob$ and $C_1,\dots,C_n$
are the (nodes of the) non-trivial strongly connected components (SCCs)
of $G_\theta$,
is both sound and complete.
\end{theorem}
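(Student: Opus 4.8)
The plan is to prove soundness and completeness together by showing that the non-trivial SCCs of $G_\theta$ capture all "eventually recurring" behaviour in any infinite chain, so that an infinite $(\P,\Rules)$-chain exists (with the properties dictated by $m,f$) if and only if such a chain exists confined to a single SCC. The key structural fact is that a $(\P,\Rules)$-chain is, by \refDef{def:chain}, a walk in the dependency graph $\mathit{DG}$: consecutive triples $(\rho_i,s_i,t_i),(\rho_{i+1},s_{i+1},t_{i+1})$ witness precisely an edge from $\rho_i$ to $\rho_{i+1}$. Since $G_\theta$ approximates $\mathit{DG}$, the image sequence $\theta(\rho_0),\theta(\rho_1),\ldots$ is an infinite walk in the \emph{finite} graph $G_\theta$.

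For \textbf{soundness}, suppose every problem in $\Proc_{G_\theta}(\adpprob)$ is finite; I must show $\adpprob$ is finite. I argue by contraposition: assume an infinite $(\P,\Rules)$-chain with the $m,f$ properties exists. By the observation above, it induces an infinite walk in the finite graph $G_\theta$, so some node is visited infinitely often; hence by finiteness of $G_\theta$ the walk eventually stays within a single strongly connected component, and since it traverses edges infinitely it must lie in a \emph{non-trivial} SCC $C_i$. First I would discard the finite prefix before the walk enters $C_i$ permanently, obtaining an infinite tail $[(\rho_j,s_j,t_j),(\rho_{j+1},s_{j+1},t_{j+1}),\ldots]$ with every $\theta(\rho_k)\in C_i$; a tail of a chain is again a chain, and all of the properties in \refDef{def:chain} together with the $m$-flag condition (minimality or $\AlterRules$-computability) and the $f$-flag condition (being formative) are preserved under taking suffixes, since each is a condition quantified over the indices of the chain. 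This tail is then an infinite chain for the sub-problem $(\{\rho\in\P\mid\theta(\rho)\in C_i\},\Rules,m,f)$, contradicting its finiteness.

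For \textbf{completeness}, I must show that if $\Proc_{G_\theta}(\adpprob)$ contains an infinite element, then $\adpprob$ is infinite. This direction is the more routine one: the sub-problem for $C_i$ uses the \emph{same} $\Rules$, $m$, and $f$, and its DP set is a subset of $\P$. Thus any infinite chain witnessing infiniteness of the sub-problem (either an infinite $(\P',\Rules)$-chain of conservative DPs, or non-termination of $\Rules$ itself) immediately serves as a witness for $\adpprob$, since $\Rules$ is unchanged and the conservativity of the DPs used is a property of the individual DPs, which remain in $\P$. The notion of "infinite" from \refDef{def:dpproblem} transfers directly because it depends only on $\Rules$ and on properties of DPs drawn from the problem's pair set.

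The main obstacle I anticipate is not the graph-theoretic core but the careful verification that the chain properties genuinely survive restriction to a suffix and to a subset of the pairs. In particular, the $\AlterRules$-computability condition (\refDef{} of minimal and computable chains) quantifies over all $v,B$ with $p_i\bsuptermeq{B} v$ for each index $i$, so I must confirm that dropping a finite prefix does not invalidate the per-index computability requirements of the remaining triples — which it does not, since each requirement is local to its own index and its own $\gamma_i$. Likewise the formative property is indexed per step $t_i\arrr{\Rules} s_{i+1}$, so it too is inherited by any suffix. Handling these flag conditions uniformly, rather than only treating the flag-free case, is where I would spend the most care, and I would state once and for all that each of the $m$- and $f$-flag conditions is a conjunction over chain indices and hence closed under passing to suffixes and under restricting the underlying DP set.
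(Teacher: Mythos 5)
Your proof is correct and follows essentially the same route as the paper's: an infinite chain induces an infinite walk in the finite graph $G_\theta$, which must eventually remain inside a non-trivial SCC, so a tail of the chain (which inherits the $m$- and $f$-properties, since these are per-index conditions) witnesses non-finiteness of one of the returned sub-problems, while completeness holds because each sub-problem keeps $\Rules$, $m$, $f$ and only shrinks the DP set. The paper merely packages the same argument differently, via a lemma that the walk eventually stays on a cycle (then passing to the enclosing SCC) and a generic completeness lemma for processors that only return sub-problems.
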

}
\DPGraphTheProc

\begin{proof}[sketch]
In an infinite $(\P,\Rules)$-chain $[(\rho_0,s_0,t_0),(\rho_1,s_1,t_1),\ldots]$,
there is always a path from $\rho_i$ to $\rho_{i+1}$ in DG.
Since $G_\theta$ is finite, every infinite path in $DG$ eventually
remains in a cycle in $G_\theta$. This cycle is part of an SCC.
\qed
\end{proof}

\begin{example}\label{ex:finishgraph}
Let $\Rules$ be the set of rules from \refEx{ex:graphconditions}
and $G$ be the graph given there.
Then $\Proc_{G}(\SDP(\Rules),\Rules,
\mathtt{computable}_\Rules,\formative)
= \{ (\{ \symb{f}^\sharp\ (\abs{x}{\meta{F}{x}})\ (\suc\ Y) \arrdp
  \symb{f}^\sharp\ (\abs{x}{\meta{F}{x}})\ Y\ (\{ F : 1 \}) \},
\Rules,
\mathtt{computable}_\Rules,\formative) \}$.
\end{example}

\begin{example}\label{ex:staticgraph}
Let $\Rules$ consist of the rules for $\symb{map}$ from
\refEx{ex:mapintro} along with $\symb{f}\ L \arrz \symb{map}\ (\abs{x}{
\symb{g}\ x})\ L$ and $\symb{g}\ X \arrz X$.
Then $\SDP(\Rules)
= \{
(1)\ \symb{map}^\sharp\ (\abs{x}{\meta{Z}{x}})\ (\symb{cons}\ H\ T)$\linebreak
$\arrdp \symb{map}^\sharp\ (\abs{x}{\meta{Z}{x}})\ T,\
(2)\ \symb{f}^\sharp\ L \arrdp \symb{map}^\sharp\ (\abs{x}{\symb{g}\ x})\ L,\
(3)\ \symb{f}^\sharp\ L \arrdp \symb{g}^\sharp\ X \}$.
DP (3) is not conservative, but it is not on any cycle in the
graph approximation $G_{\mathtt{id}}$ obtained by considering head
symbols as described above:
\vspace{-4ex}
\begin{center}
\begin{tikzpicture}[draw=darkgray]
  \tikzstyle{node} = [draw, fill=white, rounded corners]

  \node (pair4) [node] {$(3)$};
  \node (pair3) [node, right of=pair4, node distance=7em] {$(2)$};
  \node (pair2) [node, right of=pair3, node distance=7em] {$(1)$};
  \draw [->] (pair3) -- (pair2);
  \draw [->] (pair2) to[out=0,in=90,looseness=5] (pair2);
\end{tikzpicture}
\end{center}
As $(1)$ is the only DP on a cycle,
$\Proc_{\SDP_{G_{\mathtt{id}}}}(\SDP(\Rules),\Rules,
\mathtt{computable}_\Rules,\linebreak
\formative) =
\{\ (\{(1)\},\Rules,\mathtt{computable}_\Rules,\formative)\ \}$.
\end{example}

\emph{Discussion:} The dependency graph is a powerful tool for
simplifying DP problems,
used since early versions of the
DP approach \cite{art:gie:00}.  Our notion of a dependency graph
approximation, taken from \cite{kop:raa:12}, strictly generalises the
original
notion in \cite{art:gie:00}, which uses a graph
on the same node set as $DG$ with possibly further edges.
One can
get this notion here by using a graph $G_{\symb{id}}$.
The advantage of
our definition is that it ensures soundness
of the dependency graph processor also for \emph{infinite} sets of DPs.
This overcomes a restriction in the literature
\cite[Corollary 5.13]{kus:iso:sak:bla:09} to dependency graphs
without non-cyclic infinite paths.

\subsection{Processors based on reduction triples}\label{subsec:triple}

At the heart of most DP-based approaches to termination proving lie
well-founded orderings
to delete DPs (or rules).
For this, we use
\emph{reduction triples}~\cite{hir:mid:07,kop:raa:12}.

\begin{definition}[Reduction triple]\label{def:redpair}
A \emph{reduction triple} $(\rge,\pge,\pgt)$ consists of
two quasi-orderings $\rge$ and $\pge$ and a well-founded strict
ordering $\pgt$ on meta-terms such that $\rge$ is monotonic,
all of $\rge,\pge,\pgt$ are meta-stable (that is, $\ell \rge r$
implies $\ell\gamma \rge r\gamma$ if $\ell$ is a closed pattern and
$\gamma$ a substitution on domain $\FMV(\ell) \cup \FMV(r)$, and the
same for $\pge$ and $\pgt$),
$\arr{\beta} \mathop{\subseteq} \rge$,
and both $\rge \circ \pgt \mathop{\subseteq} \pgt$
and $\pge \circ \pgt \mathop{\subseteq} \pgt$.
\end{definition}

In the first-order DP framework, the reduction pair processor
\cite{gie:thi:sch:05:2} seeks to orient all rules with $\rge$ and all
DPs with either $\rge$ or $\pgt$;
if this succeeds, those pairs oriented with $\pgt$ may be removed.
Using reduction \emph{triples} rather than pairs, we obtain the following
extension to the higher-order setting: 

\edef\procRedpair{\number\value{theorem}}
\edef\procRedpairSec{\number\value{section}}
\newcommand{\redpairTheProc}{
\begin{theorem}[Basic reduction triple processor]\label{thm:redpairproc}
Let $M = (\P_1 \uplus \P_2,\Rules,m,\linebreak
f)$ be a DP problem.
If $(\rge,\pge,\pgt)$ is a reduction triple such that
\begin{enumerate}
\item
for all $\ell \arrz r \in \Rules$, we have $\ell \rge r$;
\item
for all $\ell \arrdp p\ (A) \in \P_1$, we have $\ell \pgt p$;
\item
for all $\ell \arrdp p\ (A) \in \P_2$, we have $\ell \pge p$;
\end{enumerate}
then the
processor that maps $M$ to $\{(\P_2,\Rules,m,f)\}$
is both sound and complete.
\end{theorem}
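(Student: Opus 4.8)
The plan is to show that any infinite $(\P_1 \uplus \P_2, \Rules)$-chain with the relevant $m,f$ properties would yield an infinite descent under $\pgt$, contradicting its well-foundedness. The soundness direction is the substantive one; completeness follows easily because the processor only removes DPs, so an infinite chain over $(\P_2,\Rules,m,f)$ is in particular an infinite chain over the larger problem.

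\textbf{Soundness.} Suppose $(\P_2,\Rules,m,f)$ is finite; I must show $M = (\P_1 \uplus \P_2, \Rules, m, f)$ is finite. Arguing by contradiction, assume there is an infinite $(\P_1 \uplus \P_2,\Rules)$-chain $[(\ell_i \arrdp p_i\ (A_i), s_i, t_i)]_{i \in \N}$ satisfying the conditions imposed by $m$ and $f$. For each step, the chain gives a substitution $\gamma_i$ (respecting $A_i$) with $s_i = \ell_i\gamma_i$ and $t_i = p_i\gamma_i$, and a reduction $t_i \arrr{\Rules} s_{i+1}$. The key computation is to connect consecutive left-hand side instances. First I would observe that by condition~(1) and monotonicity of $\rge$, together with $\arr{\beta} \subseteq \rge$, the rewrite relation $\arr{\Rules}$ is contained in $\rge$; hence the reduction $t_i \arrr{\Rules} s_{i+1}$ gives $t_i \rge s_{i+1}$, i.e. $p_i\gamma_i \rge \ell_{i+1}\gamma_{i+1}$. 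Next, by meta-stability applied to each DP: if $\ell_i \arrdp p_i\ (A_i) \in \P_1$ then $\ell_i \pgt p_i$ gives $s_i = \ell_i\gamma_i \pgt p_i\gamma_i = t_i$, and if it lies in $\P_2$ then $\ell_i \pge p_i$ gives $s_i \pge t_i$. Chaining these, each chain step contributes either $s_i \pgt t_i \rge s_{i+1}$ or $s_i \pge t_i \rge s_{i+1}$. Using the compatibility absorption properties $\rge \circ \pgt \subseteq \pgt$ and $\pge \circ \pgt \subseteq \pgt$, a step through a $\P_1$-pair yields $s_i \pgt s_{i+1}$, while a step through a $\P_2$-pair yields $s_i \pge s_{i+1}$.

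\textbf{Infinite descent.} Since the chain is infinite and $\P_1$ and $\P_2$ partition the DPs, there are two cases. If infinitely many indices $i$ use a DP from $\P_1$, then along the sequence $s_0, s_1, s_2, \dots$ we have $\pge$ between consecutive terms everywhere and $\pgt$ at infinitely many places; by the absorption properties $\pge \circ \pgt \subseteq \pgt$ (and transitivity of $\pge$), this collapses to an infinite strictly $\pgt$-decreasing sequence, contradicting well-foundedness of $\pgt$. Otherwise only finitely many indices use $\P_1$, so from some point on every DP lies in $\P_2$; dropping the finite prefix yields an infinite $(\P_2,\Rules)$-chain that still satisfies the $m$ and $f$ conditions (these are tail-closed), contradicting finiteness of $(\P_2,\Rules,m,f)$. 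Either way we reach a contradiction, so $M$ is finite.

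\textbf{Main obstacle.} The routine verifications are the meta-stability applications and the inclusion $\arr{\Rules} \subseteq \rge$; the genuinely delicate point is checking that discarding the finite prefix in the second case preserves the chain properties attached to the flags $m$ and $f$ (minimality, $\AlterRules$-computability, formativeness are all defined pointwise over the chain and hence inherited by any suffix), and that the absorption of $\pge$-steps into $\pgt$-steps is applied correctly to extract a genuinely well-founded-contradicting sequence rather than merely a quasi-decreasing one. I expect the compatibility conditions $\rge \circ \pgt \subseteq \pgt$ and $\pge \circ \pgt \subseteq \pgt$ to be exactly what makes this absorption go through, so the proof reduces to a careful bookkeeping of which relation holds at each step.
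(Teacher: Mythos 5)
Your overall strategy (completeness because the processor only shrinks the DP set; soundness by absorbing quasi-ordering steps into strict steps and invoking well-foundedness of $\pgt$) is the same as the paper's, but the central bookkeeping step is not licensed by the definition of a reduction triple. You claim that a $\P_1$-step collapses to $s_i \pgt s_{i+1}$ and a $\P_2$-step collapses to $s_i \pge s_{i+1}$, citing $\rge \circ \pgt \subseteq \pgt$ and $\pge \circ \pgt \subseteq \pgt$. Neither inference follows. In the paper's reading of $\circ$ (the same left-to-right reading it uses when writing $s_i \mathrel{(\pge \cup \pgt)} \cdot \rge\ s_{i+1}$ in its own proof), these axioms say that an $\rge$- or $\pge$-step \emph{followed by} a $\pgt$-step yields $\pgt$; they say nothing about an $\rge$-step that comes \emph{after} the DP step, which is the situation you are in: $s_i \pgt t_i \rge s_{i+1}$ would need $\pgt \circ \rge \subseteq \pgt$, and $s_i \pge t_i \rge s_{i+1}$ would need $\pge \circ \rge \subseteq \pge$. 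The second of these is unobtainable under \emph{any} reading of $\circ$: a reduction triple imposes no compatibility at all between the two quasi-orderings $\pge$ and $\rge$; they are linked only through $\pgt$. (At the level of the ordering axioms, take $\pgt = \{(a,b)\}$, $\pge$ the identity, $\rge$ the reflexive closure of $\{(b,c)\}$: both absorption axioms hold, yet $a \pgt b \rge c$ with $a \not\pgt c$, and $b \pge b \rge c$ with $b \not\pge c$.) Consequently your infinite-descent paragraph reasons about relations between the $s_i$ that you have not established, and its appeal to transitivity of $\pge$ is applied to blocks that are in truth mixed $\pge/\rge$ sequences (note also that $\pgt \subseteq \pge$ is not given, so "$\pge$ between consecutive terms everywhere" is not available either).

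The repair is to group the other way, which is what the paper does. Keep the raw alternating sequence $s_0 \mathrel{R_0} t_0 \rge s_1 \mathrel{R_1} t_1 \rge s_2 \cdots$ with $R_i \in \{\pgt,\pge\}$, and suppose toward a contradiction that infinitely many positions $i_0 < i_1 < \cdots$ carry pairs from $\P_1$. The segment from $t_{i_k}$ to $t_{i_{k+1}}$ consists of finitely many $\rge$- and $\pge$-steps followed by one $\pgt$-step ($s_{i_{k+1}} \pgt t_{i_{k+1}}$); by induction on the segment length, absorbing from the right using exactly $\rge \circ \pgt \subseteq \pgt$ and $\pge \circ \pgt \subseteq \pgt$, you get $t_{i_k} \pgt t_{i_{k+1}}$. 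This produces $t_{i_0} \pgt t_{i_1} \pgt \cdots$, contradicting well-foundedness of $\pgt$; the remaining tail argument you give (the flags $m$ and $f$ are inherited by suffixes of a chain) is then fine. The rest of your proposal is correct: the inclusion $\arrr{\Rules} \subseteq \rge$ does follow from meta-stability, monotonicity, $\arr{\beta} \subseteq \rge$ and the quasi-ordering properties of $\rge$, and your completeness argument matches the paper's (its Lemma on subset processors additionally notes that the non-termination disjunct of "infinite" and conservativeness of the DPs carry over trivially, which your one-liner leaves implicit).
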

}
\redpairTheProc

\begin{proof}[sketch]
For an infinite $(\P_1 \uplus \P_2,
\Rules)$-chain $[(\rho_0,s_0,t_0),(\rho_1,s_1,t_1),\ldots]$
the requirements provide that,
for all $i$:
  (a) $s_i \pgt t_i$ if $\rho_i \in \P_1$;
  (b) $s_i \pge t_i$ if $\rho_i \in \P_2$; and
  (c) $t_i \rge s_{i+1}$.
Since $\pgt$ is well-founded, only finitely many DPs can be in $\P_1$,
so a tail of the chain is actually an infinite $(\P_2,\Rules,m,f)$-chain.
\qed
\end{proof}

\begin{example}\label{ex:derivhorpo}
Let $(\F,\Rules)$ be the (non-$\eta$-expanded) rules from
\refEx{ex:deriv}, and $\SDP(\Rules)$ the
DPs
from \refEx{ex:derivsdp}.
From \refThm{thm:redpairproc}, we
get the following ordering
requirements:
\[
\begin{array}{rcl}
\deriv\ (\abs{x}{\symb{sin}\ \meta{F}{x}}) & \rge &
  \abs{y}{\symb{times}\ (\deriv\ (\abs{x}{\meta{F}{x}})\ y)\ 
  (\symb{cos}\ \meta{F}{y})} \\
\deriv^\sharp\ (\abs{x}{\symb{sin}\ \meta{F}{x}}) & \pgt &
  \deriv^\sharp\ (\abs{x}{\meta{F}{x}}) \\
\end{array}
\]
We can handle both requirements by using a polynomial interpretation
$\mathcal{J}$ to $\mathbb{N}$ \cite{pol:96,fuh:kop:12}, by choosing
$\mathcal{J}_{\symb{sin}}(n) = n + 1$,\ 
$\mathcal{J}_{\symb{cos}}(n) = 0$,\ 
$\mathcal{J}_{\symb{times}}(n_1,n_2) = n_1$,\ 
$\mathcal{J}_{\symb{deriv}}(f) =
\mathcal{J}_{\symb{deriv}^\sharp}(f) = \lambda n.f(n)$.
Then the requirements are evaluated to: $\lambda n.f(n) + 1 \geq
\lambda n.f(n)$ and $\lambda n.f(n) + 1 > \lambda n.f(n)$, which
holds on $\N$.
\end{example}

\refThm{thm:redpairproc} is not ideal since, by definition,
the left- and right-hand side of a
DP may have
different types.  Such
DPs are
hard to handle with traditional
techniques
such as HORPO \cite{jou:rub:99} or
polynomial interpretations \cite{pol:96,fuh:kop:12}, as these methods
compare only (meta-)terms of the same type (modulo renaming of sorts).

\begin{example}\label{ex:differenttypes}
Consider the toy AFSM with
$\Rules = \{ \symb{f}\ (\suc\ X)\ Y \arrz \symb{g}\ X\ Y,\ 
\symb{g}\ X \arrz \abs{z}{\symb{f}\ X\ z} \}$ and
$\SDP(\Rules) = \{ \symb{f}^\sharp\ (\suc\ X)\ Y \arrdp \symb{g}^\sharp\ 
X,\ \symb{g}^\sharp\ X \arrdp \symb{f}^\sharp\ X\ Z \}$.
If $\symb{f}$ and $\symb{g}$ both have a type $\nat \arrtype \nat
\arrtype \nat$, then in the first DP, the left-hand side has type $\nat$
while the right-hand side has type $\nat\arrtype\nat$.  In the second
DP, the left-hand side has type $\nat \arrtype \nat$ and the right-hand
side has type $\nat$.
\end{example}

To be able to handle examples like the one above, we adapt
\cite[Thm.~5.21]{kop:raa:12} by altering the ordering requirements
to have base type.
\pagebreak

\edef\procBasetypeRedpair{\number\value{theorem}}
\edef\procBasetypeRedpairSec{\number\value{section}}
\newcommand{\basetypeRedpairTheProc}{
\begin{theorem}[Reduction triple processor]\label{thm:basetriple}
Let $\mathsf{Bot}$ be a set $ \{ \bot_\atype : \atype \mid$
$\atype \text{ a type}\} \subseteq \F^\sharp$ of unused constructors,
$\adpprob = (\P_1 \uplus \P_2,\Rules,m,f)$ a DP problem and
$(\rge,\pge,\pgt)$ a reduction triple such that:
(a) for all $\ell \arrz r \in \Rules$, we have $\ell \rge r$;
and (b)
  for all $\ell \arrdp p\ (A) \in \P_1 \uplus \P_2$ with $\ell :
  \atype_1 \arrtype \dots \arrtype \atype_\maa \arrtype \asort$
  and $p : \btype_1 \arrtype \dots \arrtype \btype_n \arrtype \bsort$
  we have, for fresh meta-variables
  $Z_1 : \atype_1,\dots,Z_\maa : \atype_\maa$:
  \begin{itemize}
  \item $\apps{\ell}{Z_1}{Z_\maa} \pgt \apps{p}{
    \bot_{\btype_1}}{\bot_{\btype_n}}$ if $\ell \arrdp p\ (A) \in \P_1$
  \item $\apps{\ell}{Z_1}{Z_\maa} \pge \apps{p}{
    \bot_{\btype_1}}{\bot_{\btype_n}}$ if $\ell \arrdp p\ (A) \in \P_2$
  \end{itemize}
Then the processor
that maps $M$ to
$\{(\P_2,\Rules,m,f)\}$ is both sound and complete.
\end{theorem}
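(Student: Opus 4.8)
The plan is to establish soundness and completeness separately, with soundness being the substantial direction. \emph{Completeness} is immediate: since $\P_2 \subseteq \P_1 \uplus \P_2$ and $\Rules$ is unchanged, every infinite $(\P_2,\Rules)$-chain whose DPs are all conservative is also such a chain over $\P_1 \uplus \P_2$; hence if $(\P_2,\Rules,m,f)$ is infinite, then so is $M$. For \emph{soundness} I would argue by contraposition: assume an infinite $(\P_1 \uplus \P_2,\Rules)$-chain $[(\rho_0,s_0,t_0),(\rho_1,s_1,t_1),\ldots]$ with the properties demanded by $m$ and $f$, and show that only finitely many $\rho_i$ lie in $\P_1$. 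Dropping the corresponding prefix then leaves an infinite $(\P_2,\Rules)$-chain, and since minimality/computability and formativeness are all preserved under passing to a tail, this chain still satisfies the $m,f$ properties, so $(\P_2,\Rules,m,f)$ is not finite.

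The first key step is to transport the per-step ordering requirements, which are stated on the \emph{padded} meta-terms $\apps{\ell}{Z_1}{Z_\maa}$ and $\apps{p}{\bot_{\btype_1}}{\bot_{\btype_n}}$, to concrete base-type instances. Writing $\rho_i = \ell_i \arrdp p_i\ (A_i)$ with the witnessing substitution $\gamma_i$ from \refDef{def:chain} (so $s_i = \ell_i\gamma_i$ and $t_i = p_i\gamma_i$), and letting $\ell_i$ have type $\atype_1 \arrtype \dots \arrtype \atype_\maa \arrtype \asort$ and $p_i$ have type $\btype_1 \arrtype \dots \arrtype \btype_n \arrtype \bsort$, I set $v_i := \apps{s_i}{\bot_{\atype_1}}{\bot_{\atype_\maa}}$ and $w_i := \apps{t_i}{\bot_{\btype_1}}{\bot_{\btype_n}}$, both of base type. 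Applying meta-stability of $\pgt$ (respectively $\pge$, see \refDef{def:redpair}) to the closed pattern $\apps{\ell_i}{Z_1}{Z_\maa}$ with the substitution extending $\gamma_i$ by $Z_j := \bot_{\atype_j}$ yields $v_i \pgt w_i$ when $\rho_i \in \P_1$ and $v_i \pge w_i$ when $\rho_i \in \P_2$.

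The second step links consecutive DPs. The reduction condition of \refDef{def:chain} gives $t_i \arrr{\Rules} s_{i+1}$, and since $\ell \rge r$ holds for every rule, $\arr{\beta} \mathop{\subseteq} \rge$, and $\rge$ is a monotonic quasi-ordering, we get $\arrr{\Rules} \mathop{\subseteq} \rge$, so $t_i \rge s_{i+1}$. Crucially, $t_i$ and $s_{i+1}$ share the \emph{same} type $\btype_1 \arrtype \dots \arrtype \btype_n \arrtype \bsort$ (type preservation under $\arrr{\Rules}$, this being the type of $p_i$ and hence also of $\ell_{i+1}$), so the identical tuple of $\bot$-constants pads both, and monotonicity of $\rge$ gives $w_i \rge v_{i+1}$. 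This produces a single base-type sequence $v_0 \mathbin{R_0} w_0 \rge v_1 \mathbin{R_1} w_1 \rge \cdots$ with $R_i \in \{\pgt,\pge\}$ and $R_i = \pgt$ exactly when $\rho_i \in \P_1$. The final step is the combinatorial argument already used for \refThm{thm:redpairproc}: if infinitely many $R_i$ equalled $\pgt$, take those indices $j_0 < j_1 < \cdots$ and derive $w_{j_k} \pgt w_{j_{k+1}}$ by peeling the intervening $\rge$/$\pge$ steps off the left one at a time, each absorption being licensed by $\rge \circ \pgt \mathop{\subseteq} \pgt$ or $\pge \circ \pgt \mathop{\subseteq} \pgt$; the resulting infinite $\pgt$-descent contradicts well-foundedness.

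The main obstacle lies in the two transport steps, and is precisely the feature that distinguishes this processor from \refThm{thm:redpairproc}: because $\ell$ and $p$ may carry different, possibly functional, types, the orientation must be funnelled into one common base-type sequence, which succeeds only because the $\bot$-padding on the right of each DP coincides with the padding on the left of its successor. Pinning down the type identity of $t_i$ and $s_{i+1}$ is therefore the crux. A secondary point to watch is that the two available compatibilities place $\pgt$ on the \emph{right} of the composition, so the well-foundedness argument must peel weak steps from the left end rather than the right; no composition of two weak relations with each other is ever needed, which is what makes the argument go through under the stated reduction-triple axioms alone.
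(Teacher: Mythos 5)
Your proof is correct, but it is organised differently from the paper's. The paper proves this theorem by \emph{reduction} to \refThm{thm:redpairproc}: for $R \in \{\pge,\pgt\}$ it defines a lifted relation $R'$ on terms of arbitrary type by ``$s\ R'\ t$ iff for all $u_1,\dots,u_\maa$ there exist $w_1,\dots,w_n$ with $\apps{s}{u_1}{u_\maa}\ R\ \apps{t}{w_1}{w_n}$'', verifies that $(\rge,\pge',\pgt')$ is again a reduction triple satisfying the hypotheses of \refThm{thm:redpairproc} (meta-stability, transitivity, reflexivity of $\pge'$, and well-foundedness of $\pgt'$, the last via a witness-choosing diagonal argument), and then invokes that theorem as a black box. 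You instead inline everything: you instantiate the fresh meta-variables $Z_j$ concretely with $\bot_{\atype_j}$, obtain via meta-stability a single base-type sequence $v_0 \mathbin{R_0} w_0 \rge v_1 \mathbin{R_1} w_1 \rge \cdots$, and re-run the peeling/descent argument from \refThm{thm:redpairproc} directly. The two proofs hinge on the same padding idea, but the trade-off differs: the paper's route buys modularity (the descent argument is proven once, and the lifted-triple construction is a reusable device), at the cost of checking that the lifted relations really form a reduction triple --- including compatibility conditions such as $\rge \circ \pgt' \subseteq \pgt'$, which the appendix leaves largely implicit; your route is more elementary and self-contained, and it makes fully explicit the actual crux, namely that type preservation along $t_i \arrr{\Rules} s_{i+1}$ forces the $\bot$-padding on the right of DP $i$ to coincide with the padding on the left of DP $i+1$, so the base-type instances link up into one sequence. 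You also correctly observe that the compatibility axioms place $\pgt$ on the right of the composition, so weak steps must be absorbed from the left --- the same care the paper needs inside \refThm{thm:redpairproc}. One cosmetic remark: for completeness you should note (as the paper's \refLemma{lem:subsetcomplete} does generically) that a DP problem can also be infinite because $\arr{\Rules}$ is non-terminating; since $\Rules$ is unchanged by the processor this case is immediate, and your phrase ``$\Rules$ is unchanged'' implicitly covers it, but it deserves a word.
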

}
\basetypeRedpairTheProc

\begin{proof}[sketch]
If $(\rge,\pge,\pgt)$ is such a triple, then for
$R \in \{\pge,\pgt\}$ define $R'$ as
follows: for $s : \atype_1 \arrtype \dots \arrtype \atype_\maa
\arrtype \asort$ and $t : \btype_1 \arrtype \dots \arrtype \btype_n
\arrtype \bsort$, let $s\ R'\ t$ if for all $u_1 : \atype_1,\dots,
u_\maa : \atype_\maa$ there exist $w_1 : \btype_1,\dots,w_n :
\btype_n$ such that $\apps{s}{u_1}{u_\maa}\ R\ \apps{t}{w_1}{w_n}$.
Now apply \refThm{thm:redpairproc} with the triple
$(\rge,\pge',\pgt')$.
\qed
\end{proof}

Here, the elements of $\mathsf{Bot}$ take the role of minimal terms for
the ordering.  We use them to flatten the type of the right-hand
sides of ordering requirements, which makes it easier to use
traditional methods to generate a reduction triple.

While $\pgt$ and $\pge$ may still have to orient meta-terms of
distinct types,
these
are always \emph{base} types, which we
could
collapse
to a single sort.  The only relation
required to be monotonic, $\rge$,
regards
pairs of
meta-terms of the \emph{same} type.
This makes it feasible to apply orderings
like HORPO or polynomial
interpretations.

Both the basic and non-basic reduction triple processor are
difficult to use for \emph{non-conservative} DPs, which generate
ordering requirements whose right-hand side contains a meta-variable
not occurring on the left.
This is typically difficult for traditional techniques,
although possible to overcome, by choosing triples that do not
regard such meta-variables (e.g., via an argument filtering
\cite{kus:nak:toy:99,suz:kus:bla:11}):

\begin{example}\label{ex:disregard}
We apply \refThm{thm:basetriple} on the DP problem $(\SDP(\Rules),
\Rules,\static_\Rules,\linebreak
\formative)$ of \refEx{ex:differenttypes}.  This gives for instance
the following ordering requirements:
\[
\begin{array}{rclcrcl}
\symb{f}\ (\suc\ X)\ Y & \rge & \symb{g}\ X\ Y & \quad &
\symb{f}^\sharp\ (\suc\ X)\ Y & \pgt & \symb{g}^\sharp\ X\ \bot_\nat \\
\symb{g}\ X & \rge &  \abs{z}{\symb{f}\ X\ z} & \quad &
\symb{g}^\sharp\ X\ Y & \pge & \symb{f}^\sharp\ X\ Z \\
\end{array}
\]
The right-hand side of the last DP uses a meta-variable $Z$
that does not occur on the left.
As
neither $\pgt$ nor $\pge$ are
required to be monotonic (only $\rge$ is),
function symbols do not have to regard all their arguments.
Thus, we can
use a polynomial
interpretation $\mathcal{J}$ to $\mathbb{N}$ with
$\mathcal{J}_{\bot_\nat} = 0$,\ 
$\mathcal{J}_{\suc}(n) = n + 1$ and
$\mathcal{J}_{\symb{h}}(n_1,n_2) = n_1$ for $\symb{h} \in \{
\symb{f},\symb{f}^\sharp,\symb{g},\symb{g}^\sharp \}$.
The ordering requirements then translate to $X + 1 \geq X$ and
$\lambda y.X \geq \lambda z.X$ for the rules, and $X + 1 > X$ and
$X \geq X$ for the DPs.  All these
inequalities on $\N$ are clearly satisfied, so
we can remove the first DP.
The remaining problem is quickly dispersed with the dependency graph
processor.
\end{example}

\subsection{Rule removal without search for orderings}\label{subsec:ruleremove}

While processors often simplify only $\P$, they can also simplify
$\Rules$.
One of the most powerful techniques in first-order DP approaches
that can do this are \emph{usable rules}.
The idea is that for a given set $\P$ of DPs, we only need to consider
a \emph{subset} $\UR(\P,\Rules)$ of $\Rules$.  Combined with the
dependency graph processor, this makes it possible to split a large
term rewriting system into a number of small problems.

\pagebreak
In the higher-order setting, simple versions of usable rules have also
been defined \cite{suz:kus:bla:11,kop:raa:12}.  We can easily extend
these definitions to AFSMs:

\begin{theorem}\label{thm:usable}
Given a DP problem $\adpprob = (\P,\Rules,m,f)$ with $m \succeq
\minimal$ and $\Rules$ finite,
let $\UR(\P,\Rules)$ be
the smallest
subset of $\Rules$ such that:
\begin{itemize}
\item if a symbol $\afun$ occurs in the right-hand side of an element
  of $\P$ or $\UR(\P,\Rules)$, and there is a rule $\apps{\afun}{
  \ell_1}{\ell_\mia} \arrz r$, then this rule is also in $\UR(\P,
  \Rules)$;
\item if there exists $\ell \arrz r \in \Rules$ or $\ell \arrdp r\ (A)
  \in \P$ such that $r \suptermeq \apps{\meta{F}{s_1,\dots,
  s_\mac}}{t_1}{t_n}$ with $s_1,\dots,s_\mac$ not all distinct
  variables or with $n > 0$, then $\UR(\P,\Rules) = \Rules$.
\end{itemize}
Then the processor
that maps $\adpprob$ to $\{(\P,
\UR(\P,\Rules),\arbitrary,\nonformative)\}$ is sound.
\end{theorem}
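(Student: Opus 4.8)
The plan is to prove soundness directly in its contrapositive form: assuming the target problem $(\P,\UR(\P,\Rules),\arbitrary,\nonformative)$ is finite, I show that the input $(\P,\Rules,m,f)$ with $m \succeq \minimal$ is finite, i.e.\ from an infinite $(\P,\Rules)$-chain carrying the minimality and formativeness properties required by $m,f$ I build an infinite $(\P,\UR(\P,\Rules))$-chain. Write $U := \UR(\P,\Rules)$. Note first that, since $m \succeq \minimal$, the given chain is in particular \emph{minimal}: every strict subterm of every $t_i$ is $\arr{\Rules}$-terminating (for $m = \static_\AlterRules$ this is because computability implies termination, as observed after \refThm{thm:sschain}). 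I then split on the definition of $U$. If the second clause fires, $U = \Rules$, and because $\arbitrary$ and $\nonformative$ are the weakest flags — they impose nothing beyond being a $(\P,\Rules)$-chain — the given chain \emph{is} an infinite $(\P,U)$-chain, contradicting finiteness of the target. So the real work is the case $U \subsetneq \Rules$.

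In that case the second clause does \emph{not} fire, which means that in every right-hand side of a rule of $\Rules$ and of a DP of $\P$, each meta-variable $Z$ occurs only in the ``simple'' shape $\meta{Z}{x_1,\dots,x_\mac}$ with the $x_i$ distinct variables and with no further arguments applied. This syntactic discipline is exactly what lets an interpretation commute with substitution. Call a function symbol \emph{usable} if its rules lie in $U$; by construction the right-hand sides of $\P$ and of $U$ contain only usable symbols together with the marked symbols $\afun^\sharp$ (which carry no rules). Exploiting minimality, I define on \emph{terminating} terms an interpretation $\mathcal{I}$ that is homomorphic on variables, abstractions, applications, constructors, marked symbols and usable symbols, and that ``freezes'' every maximal subterm headed by a \emph{non-usable} defined symbol, replacing it by a term that simultaneously records the frozen subterm and all of its one-step reducts through fresh collapsing symbols $c$ with the $\Ce$-rules $c\ x\ y \arrz x$ and $c\ x\ y \arrz y$.

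The engine of the proof is a simulation lemma: for terminating $s$ with $s \arr{\Rules} t$ one has $\mathcal{I}(s) \arrr{U\cup\Ce} \mathcal{I}(t)$, where a step by a usable rule and every $\beta$-step are mapped to genuine $U$-steps, while a step by a non-usable rule is absorbed into the frozen component via a $\Ce$-projection (or becomes an identity). Because the meta-variables of the usable left- and right-hand sides are simple, $\mathcal{I}$ commutes with the matching substitutions: an instance $\ell\gamma$ of a usable rule or of a DP is sent to $\ell\hat\gamma$ with $\hat\gamma(Z) := \mathcal{I}(\gamma(Z))$, and the meta-variable conditions $A_i$ are preserved. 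Applying this argument-wise to each chain step $t_i = \apps{\afun}{u_1}{u_n} \arrr{\Rules} \apps{\afun}{w_1}{w_n} = s_{i+1}$ — legitimate since each $u_j$ is a terminating strict subterm of $t_i$ — and to the DP steps $\ell_i\gamma_i \arrdp p_i\gamma_i$, the original chain becomes an infinite chain over $\P$ whose rewrite steps use only $U$, the auxiliary collapsing steps on the fresh $c$ carrying no DP applications. The output flags $\arbitrary$ and $\nonformative$ are forced precisely because $\mathcal{I}$ destroys both minimality (the frozen combinator subterms need no longer terminate below the head) and the formative shape of reductions.

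The main obstacle is the simulation lemma in the higher-order setting. One must verify that $\mathcal{I}$ behaves correctly under $\lambda$-binders and for $\beta$-reduction, and — the genuinely delicate point — that it commutes with the substitutions instantiating meta-variables, including the reconciliation of the $\Ce$-projection steps with a pure $U$-chain. Commutation fails for general meta-variable occurrences such as $\meta{Z}{t}$ with $t$ a non-variable or an applied $\app{Z}{t}$, since then a non-usable redex frozen inside $\gamma(Z)$ can be duplicated or activated by the surrounding context and thereby inspected by a usable rule; this is exactly the configuration excluded by the second clause, which collapses $U$ to $\Rules$ whenever it can occur. Showing that, under the simple-meta-variable discipline, the frozen components are never examined by any usable rule — so that the $\Ce$-construction faithfully preserves reducibility at usable positions while hiding the non-usable part — is where the bulk of the effort lies.
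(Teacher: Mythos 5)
Your overall strategy -- the Urbain/Gramlich-style interpretation that freezes every maximal subterm headed by a non-usable defined symbol into a structure recording that subterm together with its reducts via fresh symbols $\symb{c}$ with projection rules $\symb{c}\ X\ Y \arrz X$ and $\symb{c}\ X\ Y \arrz Y$ -- is exactly the technique behind the proofs the paper points to: the paper itself gives \emph{no} argument for this theorem (it is the one processor not proved in the appendix) and simply defers to the cited works. The problem is your last step. Your simulation lemma yields $\mathcal{I}(s) \arrr{U \cup \Ce} \mathcal{I}(t)$, so applying it to the reductions $t_i \arrr{\Rules} s_{i+1}$ of the given chain produces an infinite $(\P,\, U \cup \Ce)$-chain, \emph{not} a $(\P,U)$-chain. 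The sentence with which you discard the projection steps (``the auxiliary collapsing steps on the fresh $\symb{c}$ carrying no DP applications'') is not a valid move: by \refDef{def:chain}, every step in the reduction from $t_i$ to $s_{i+1}$ must be a $\arr{U}$-step (a rule of $U$ or $\beta$), and the $\Ce$-steps are neither. Nor can they be eliminated, since they are precisely what allows one frozen non-usable subterm to be ``reduced'' to different values at different argument positions of the chain.

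This gap is not repairable, because the statement with output rule set $\UR(\P,\Rules)$ instead of $\UR(\P,\Rules) \cup \Ce$ is false. Take $\Rules = \{\symb{f}\ \symb{a}\ \symb{b}\ X \arrz \symb{f}\ X\ X\ X,\ \ \symb{c} \arrz \symb{a},\ \ \symb{c} \arrz \symb{b}\}$ with $\symb{a},\symb{b},\symb{c} : \symb{o}$ and $\symb{f} : \symb{o} \arrtype \symb{o} \arrtype \symb{o} \arrtype \symb{o}$, and $\P = \SDP(\Rules) = \{\symb{f}^\sharp\ \symb{a}\ \symb{b}\ X \arrdp \symb{f}^\sharp\ X\ X\ X\}$. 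The second clause of the definition does not fire (every meta-variable occurrence is a plain $X$), and no defined symbol of $\Rules$ occurs in the right-hand side of the DP, so $\UR(\P,\Rules) = \emptyset$ (even on a generous reading where $\symb{f}^\sharp$ makes the $\symb{f}$-rule usable, the $\symb{c}$-rules stay out, and the argument below is unchanged by confluence of the remaining orthogonal system). Now $(\P,\Rules,\minimal,\nonformative)$ is not finite: taking $\gamma(X) = \symb{c}$ gives the infinite chain with $s_i = \symb{f}^\sharp\ \symb{a}\ \symb{b}\ \symb{c}$ and $t_i = \symb{f}^\sharp\ \symb{c}\ \symb{c}\ \symb{c}$, which is minimal because all strict subterms of $t_i$ are terminating; note that $\arr{\Rules}$ itself is non-terminating here ($\symb{f}\ \symb{a}\ \symb{b}\ \symb{c} \arr{\Rules} \symb{f}\ \symb{c}\ \symb{c}\ \symb{c} \arrr{\Rules} \symb{f}\ \symb{a}\ \symb{b}\ \symb{c}$). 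Yet $(\P,\emptyset,\arbitrary,\nonformative)$ \emph{is} finite: a chain step would need a term $u$ with $u \arrr{\beta} \symb{a}$ and $u \arrr{\beta} \symb{b}$, impossible since $\beta$-reduction is confluent and $\symb{a},\symb{b}$ are distinct normal forms. So no proof of the statement as you (and the paper) phrase it can exist; what your construction actually establishes -- correctly, and in agreement with the references the paper appeals to -- is soundness of the processor mapping $\adpprob$ to $\{(\P,\, \UR(\P,\Rules) \cup \Ce,\, \arbitrary,\, \nonformative)\}$. The $\Ce$-rules must appear in the output problem, and your write-up papers over exactly this point.
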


For the proof we refer to the very similar proofs in
\cite{suz:kus:bla:11,kop:raa:12}.

\begin{example}\label{ex:maprecusable}
For the set $\SDP(\Rules)$ of the ordinal recursion example
(Ex.\ \ref{ex:ordrec} and \ref{ex:ordrecstatic}), all rules are
usable due to the occurrence of $H\ M$ in the second DP.
For the set $\SDP(\Rules)$ of the map example (Ex.\ \ref{ex:mapintro}
and \ref{ex:mapdp}), there are no usable rules, since the one DP contains no
defined function symbols or applied meta-variables.
\end{example}

This higher-order processor is much less powerful than its first-order
version: if any DP or usable rule has a sub-meta-term of the
form $F\ s$ or $\meta{F}{s_1,\dots,s_\mac}$ with $s_1,\dots,s_\mac$ not
all distinct variables, then \emph{all} rules are usable.
Since applying a higher-order meta-variable to some argument is
extremely
common in
higher-order rewriting, the technique is usually not applicable.
Also, this processor imposes a heavy price on the flags:
minimality (at least) is required, but is lost; the formative flag
is also lost.
Thus, usable rules are often 
combined with reduction triples to temporarily disregard rules,
rather than as a way to permanently remove rules.

\medskip
To address these weaknesses, we
consider a processor that uses
similar ideas to usable rules, but operates from the \emph{left-hand}
sides of rules and
DPs rather than the right.  This
adapts the technique from \cite{kop:raa:12} that relies on
the new \emph{formative} flag.  As in the first-order case
\cite{fuh:kop:14}, we use a semantic characterisation of formative
rules. In practice, we
then work with over-approximations of
this characterisation, analogous to the use of dependency graph
approximations in \refThm{def:graphproc}.

\begin{definition}
\label{def:formativerules}
A function $\FR$ that maps a pattern $\ell$ and a set of rules $\Rules$
to a set $\FR(\ell,\Rules) \subseteq \Rules$ is a \emph{formative
  rules approximation} if for all $s$ and $\gamma$: if
$s \arrr{\Rules} \ell\gamma$ by an $\ell$-formative reduction,
then this reduction can be done using only rules in $\FR(\ell,\Rules)$.

We let $\FR(\P,\Rules) = \bigcup \{ \FR(\ell_i,
\Rules) \mid \apps{\afun}{\ell_1}{\ell_n} \arrdp p\, (A)\: \in\, \P
\wedge 1 \leq i \leq n \}$.
\end{definition}

Thus, a formative rules approximation is a subset of $\Rules$ that
is \emph{sufficient} for a formative reduction: if
$s \arrr{\Rules}\ell\gamma$,
then
$s \arrr{\FR(\ell,\Rules)} \ell\gamma$.  It is
allowed for there to exist other formative reductions that do use
additional rules.

\begin{example}\label{ex:formative}
We define a simple formative rules approximation:
(1)
$\FR(Z,\Rules) = \emptyset$ if $Z$
is a meta-variable;
(2)
$\FR(\apps{\afun}{\ell_1}{\ell_\maa},\Rules) =
  \FR(\ell_1,\Rules) \cup \dots \cup \FR(\ell_\maa,\Rules)$ if
  $\symb{f} : \atype_1 \arrtype \dots \arrtype \atype_\maa \arrtype
  \asort$ and no rules have type $\asort$;
(3) $\FR(s,\Rules) = \Rules$
otherwise.
This is a formative rules approximation: if $s \arrr{\Rules} Z\gamma$
by a $Z$-formative reduction, then $s = Z\gamma$, and if $s \arrr{
\Rules} \apps{\afun}{\ell_1}{\ell_\maa}$ and no rules have
the same output type as $s$, then $s = \apps{\afun}{s_1}{s_\maa}$
and each $s_i \arrr{\Rules} \ell_i\gamma$ (by an $\ell_i$-formative
reduction).
\end{example}

\pagebreak
The following result follows directly from the definition
of formative rules.

\edef\procFormative{\number\value{theorem}}
\edef\procFormativeSec{\number\value{section}}
\newcommand{\formativeTheProc}{
\begin{theorem}[Formative rules processor]
\label{thm:formativeproc}
For a formative rules approximation $\FR$, the processor $\Proc_{\FR}$
that maps a DP problem $(\P,\Rules,m,\formative)$ to $\{ (\P,
\FR(\P,
\Rules),m,\formative) \}$ is both
sound and complete.
\end{theorem}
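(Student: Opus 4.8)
The plan is to prove soundness and completeness separately. Since $\Proc_{\FR}(M)$ is always the singleton $\{(\P,\FR(\P,\Rules),m,\formative)\}$ and never $\no$, soundness amounts to ``$M = (\P,\Rules,m,\formative)$ is finite whenever $M' := (\P,\FR(\P,\Rules),m,\formative)$ is finite'', and completeness to ``$M$ is infinite whenever $M'$ is infinite''. Both directions rest on the single inclusion $\FR(\P,\Rules) \subseteq \Rules$, whence $\arr{\FR(\P,\Rules)} \mathop{\subseteq} \arr{\Rules}$.

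Completeness is immediate. If $\FR(\P,\Rules)$ is non-terminating, then so is the larger system $\Rules$; and any infinite $(\P,\FR(\P,\Rules))$-chain all of whose DPs are conservative is, by the inclusion of rewrite relations, also an infinite $(\P,\Rules)$-chain with conservative DPs. Either witness for infiniteness of $M'$ thus transfers verbatim to $M$.

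For soundness I argue by contraposition: from an infinite $(\P,\Rules)$-chain $[(\rho_i,s_i,t_i)]_i$ that is formative and carries the property demanded by $m$, I build an infinite $(\P,\FR(\P,\Rules))$-chain on the \emph{same} triples. Clause~\ref{depchain:dp} of \refDef{def:chain} mentions only $\P$, so it is untouched. For the connecting reductions, write $\rho_{i+1} = \ell_{i+1} \arrdp p_{i+1}\ (A_{i+1})$ with $\ell_{i+1} = \apps{\afun^\sharp}{q_1}{q_\mia}$, so $s_{i+1} = \apps{\afun^\sharp}{q_1}{q_\mia}\gamma_{i+1}$. Since the marked symbol $\afun^\sharp$ heads no rule of $\Rules$, it is never a redex, so $t_i \arrr{\Rules} s_{i+1}$ acts only inside the arguments: $t_i = \apps{\afun^\sharp}{u_1}{u_\mia}$ with each $u_j \arrr{\Rules} q_j\gamma_{i+1}$. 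The chain being formative makes this an $\ell_{i+1}$-formative reduction; inspecting \refDef{def:formative}, for a term headed by the non-redex $\afun^\sharp$ the only clauses that can fire are the congruence clause and the ``not fully extended linear'' clause. In the congruence case each $u_j \arrr{\Rules} q_j\gamma_{i+1}$ is itself $q_j$-formative, so by the defining property of a formative rules approximation (\refDef{def:formativerules}) it can be replayed using only rules of $\FR(q_j,\Rules) \subseteq \FR(\P,\Rules)$; recombining yields $t_i \arrr{\FR(\P,\Rules)} s_{i+1}$, again $\ell_{i+1}$-formative by the congruence clause.

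Finally, the flag $m$ is preserved by monotonicity in the rule set: if the chain is minimal, the strict subterms of each $t_i$ terminate under $\arr{\Rules}$ and hence under the smaller $\arr{\FR(\P,\Rules)}$; if $m = \static_\AlterRules$, the condition $\arr{\AlterRules} \mathop{\supseteq} \arr{\Rules} \mathop{\supseteq} \arr{\FR(\P,\Rules)}$ persists and the $C_\AlterRules$-computability requirements refer only to $\AlterRules$ and the DPs, not to the ambient rules. The result is an infinite formative $(\P,\FR(\P,\Rules))$-chain with the $m$-property, so $M'$ is not finite, as contraposition requires. I expect the delicate point to be exactly the formative-reduction bookkeeping above: handling the vacuous ``not fully extended linear'' clause, where formativity holds for free and the argument reductions need \emph{not} be $q_j$-formative. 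Here I would invoke the structural lemma underlying \refThm{thm:sschain} --- that a reduction to $\ell\gamma$ may be replaced by an $\ell$-formative one reaching a less-reduced instance $\ell\delta$ with each $\delta(Z) \arrr{\Rules} \gamma(Z)$ --- to recover a formative connecting reduction (adjusting the substitutions carried along the chain), so that the per-argument approximation applies and recombines into a genuine formative reduction over $\FR(\P,\Rules)$.
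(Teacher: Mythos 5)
Your completeness argument and the congruence-clause half of your soundness argument are correct and follow the same route as the paper, only in more detail: the paper handles completeness by the general observation that a processor which merely removes rules is complete (\refLemma{lem:subsetcomplete}), and handles soundness with a single sentence appealing to \refDef{def:formativerules}, leaving implicit both the decomposition of the connecting reduction at the (marked, hence non-redex) head symbol and the convention that a replayed reduction is again formative over the smaller rule set -- two points you rightly make explicit. Your treatment of the $m$ flag is also fine.

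The genuine gap is exactly the case you flag as delicate, and your proposed repair does not close it. First, \refLemma{lem:formative} applies only to terminating source terms, so it is unavailable when $m = \arbitrary$, which the theorem also covers. Second, and more fundamentally, clause~1 of \refDef{def:formative} fires precisely when, e.g., a meta-variable is shared between two \emph{different} arguments $q_j,q_{j'}$ of $\ell_{i+1}$; applying the postponement lemma argument by argument then yields substitutions $\delta_j,\delta_{j'}$ that need not agree on the shared meta-variable, so the replayed argument reductions do not recombine into an instance of $\ell_{i+1}$ at all, while applying the lemma to $\ell_{i+1}$ as a whole is vacuous, since for a non-fully-extended-linear pattern \emph{every} reduction to an instance is already formative -- which is exactly why the hypothesis gives no structural information in this case. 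Indeed, no argument can close this case as the definitions stand: take $\P = \{\afun^\sharp\ X\ X \arrdp \afun^\sharp\ \symb{a}\ \symb{b}\ (\emptyset)\}$ and $\Rules = \{\symb{a} \arrz \symb{b}\}$, and the approximation with $\FR(\meta{Z}{x_1,\dots,x_\mia},\AlterRules) = \emptyset$ and $\FR(\ell,\AlterRules) = \AlterRules$ for all other patterns; this satisfies \refDef{def:formativerules} because formative reductions into instances of a meta-variable pattern consist of head $\beta$-steps only (the paper's own \refEx{ex:formative} makes the same choice). Then $\FR(\P,\Rules) = \emptyset$ and $(\P,\emptyset,\minimal,\formative)$ is finite, yet $(\P,\Rules,\minimal,\formative)$ admits the infinite minimal chain repeating the triple $(\rho,\ \afun^\sharp\ \symb{b}\ \symb{b},\ \afun^\sharp\ \symb{a}\ \symb{b})$, whose connecting reduction $\afun^\sharp\ \symb{a}\ \symb{b} \arr{\Rules} \afun^\sharp\ \symb{b}\ \symb{b} = (\afun^\sharp\ X\ X)[X:=\symb{b}]$ is vacuously $\ell$-formative by clause~1. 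So the clause-1 case is not just unfinished in your write-up; it is an obstruction at the level of the definitions (one which the paper's one-sentence proof also silently skips), and can only be removed by strengthening them, e.g.\ by setting $\FR(\P,\Rules) = \Rules$ whenever some left-hand side in $\P$ is not a fully extended linear pattern.
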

}
\formativeTheProc

\begin{proof}[sketch]
A processor that only removes rules (or DPs) is always complete.
For soundness, if the chain is formative then each step
$t_i \arrr{\Rules} s_{i+1}$ can be replaced by $t_i \arrr{\FR(\P,
\Rules)} s_{i+1}$.  Thus, the chain can be seen as a $(\P,\FR(\P,
\Rules))$-chain.
\qed
\end{proof}

\begin{example}\label{ex:plode}
For our ordinal recursion example
(Ex.\ \ref{ex:ordrec} and \ref{ex:ordrecstatic}),
\emph{none} of the rules
are included when we use the approximation of
\refEx{ex:formative} since all rules have output type
$\symb{ord}$.
Thus,
$\Proc_\FR$ maps
$(\SDP(\Rules),\Rules,\static_\Rules,
\formative)$ to
$(\SDP(\Rules),\emptyset,\static_\Rules,\formative)$.
\emph{Note:}
this example can also be completed without formative rules (see
Ex.~\ref{ex:ordrecdone}).
Here we illustrate that, even
with a simple formative rules approximation,
we can often delete all rules of a given type.
\end{example}

Formative rules are introduced in~\cite{kop:raa:12}, and the
definitions can be adapted to a more powerful formative rules
approximation than the one sketched in \refEx{ex:plode}.  Several
examples and deeper intuition for the first-order setting are
given in \cite{fuh:kop:14}.

\subsection{Subterm criterion processors}

Reduction triple processors are powerful, but
they exert a computational
price:
we must orient all rules in $\Rules$.  The subterm criterion processor
allows us to remove DPs without considering
$\Rules$ at all.
It is based on a \emph{projection function} \cite{hir:mid:07},
whose higher-order counterpart
\cite{kus:iso:sak:bla:09,suz:kus:bla:11,kop:raa:12} is the following:

\begin{definition}
For $\P$ a set of DPs, let
$\mathtt{heads}(\P)$ be the set of all symbols $\afun$ that occur as
the head of a left- or right-hand side of a DP in $\P$.
A \emph{projection function} for $\P$ is a function $\nu :
\mathtt{heads}(\P) \to \N$ such that for all DPs
$\ell \arrdp p\ (A) \in \P$, the function $\project$ with
$\project(\apps{\afun}{s_1}{s_n}) = s_{\nu(\afun)}$ is well-defined
both for $\ell$ and for $p$.
\end{definition}

\edef\procSubtermCriterion{\number\value{theorem}}
\edef\procSubtermCriterionSec{\number\value{section}}
\newcommand{\subtermCriterionTheProc}{
\begin{theorem}[Subterm criterion processor]\label{thm:subtermproc}
The processor
$\Proc_{\mathtt{subcrit}}$ that maps a DP problem
$(\P_1 \uplus \P_2,\Rules,m,f)$ with $m \succeq \minimal$
to $\{(\P_2,\Rules,m,f)\}$
if a projection function $\nu$ exists such that
  $\project(\ell) \supterm \project(p)$ for all
  $\ell \arrdp p\ (A) \in \P_1$ and
  $\project(\ell) = \project(p)$ for all
  $\ell \arrdp p\ (A) \in \P_2$,
is sound and complete.
\end{theorem}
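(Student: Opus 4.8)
The plan is to dispatch completeness immediately and then prove soundness by contraposition, projecting an infinite chain down to an infinite descent under a well-founded combination of $\supterm$ and $\arr{\Rules}$.

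\emph{Completeness} is free: since $\Proc_{\mathtt{subcrit}}$ only deletes dependency pairs, any witness for infiniteness of $(\P_2,\Rules,m,f)$ -- be it non-termination of $\Rules$ or an infinite chain built from conservative DPs -- is at once a witness for infiniteness of $(\P_1 \uplus \P_2,\Rules,m,f)$, as $\P_2 \subseteq \P_1 \uplus \P_2$ and $\Rules$ is untouched. For \emph{soundness} I would first record the auxiliary fact that $\supterm$ is stable under substitution: $s \supterm t$ implies $s\gamma \supterm t\gamma$. This goes by a routine induction on the definition of $\supterm$; the only delicate point is that $\supterm$ never descends into the arguments of a meta-variable application, so the $\beta$-activity that meta-substitution can trigger at a meta-variable node is never exposed, while applications and abstractions are substituted homomorphically.

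Now assume $(\P_1 \uplus \P_2,\Rules,m,f)$ is not finite, so there is an infinite chain $[(\ell_i \arrdp p_i\ (A_i), s_i, t_i)]_{i \in \N}$ with the properties dictated by $m$ and $f$; in particular, since $m \succeq \minimal$, it is minimal. Writing $s_i = \ell_i\gamma_i$ and $t_i = p_i\gamma_i$, and noting that $\project$ selects one argument of a head application, we get $\project(s_i) = \project(\ell_i)\gamma_i$ and $\project(t_i) = \project(p_i)\gamma_i$. Set $v_i := \project(s_i)$ and $v_i' := \project(t_i)$. The orientation hypotheses together with the substitution lemma give $v_i \supterm v_i'$ when $\rho_i \in \P_1$ and $v_i = v_i'$ when $\rho_i \in \P_2$. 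Because $t_i$ and $s_{i+1}$ share head symbol and arity (a chain requirement) and the projected position agrees for both, the reduction requirement of \refDef{def:chain} yields $v_i' \arrr{\Rules} v_{i+1}$. Splicing these facts produces the infinite sequence
\[
v_0 \mathrel{R_0} v_0' \arrr{\Rules} v_1 \mathrel{R_1} v_1' \arrr{\Rules} v_2 \mathrel{R_2} \cdots,
\]
where $R_i$ is $\supterm$ if $\rho_i \in \P_1$ and $=$ otherwise.

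Minimality now enters: each $v_i' = \project(t_i)$ is a \emph{strict} subterm of $t_i$, hence $\arr{\Rules}$-terminating, and since both reducts and subterms of terminating terms are terminating (the latter by monotonicity of $\arr{\Rules}$), every term from $v_0'$ onward is $\arr{\Rules}$-terminating. The crux is the well-foundedness of $\supterm \cup \arr{\Rules}$ on terminating terms, which I would obtain by mapping each terminating $t$ to the lexicographic pair $(r(t),\mathit{size}(t))$, where $r(t) = \sup\{r(u)+1 \mid t \arr{\Rules} u\}$ is the ordinal rank (well-defined as $\arr{\Rules}$ is well-founded on terminating terms). An $\arr{\Rules}$-step strictly lowers $r$; a $\supterm$-step cannot raise $r$ (every reduction of a subterm lifts, by monotonicity, to a reduction of the whole term) and strictly lowers $\mathit{size}$, so the pair strictly decreases under either relation. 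Hence the displayed sequence, read from $v_0'$ on, admits only finitely many $\supterm$-steps, i.e.\ only finitely many $i$ have $\rho_i \in \P_1$.

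Discarding that finite prefix leaves an infinite tail in which every $\rho_i \in \P_2$; a tail of a chain is again a chain and inherits minimality/computability and the formative property, so it is an infinite $(\P_2,\Rules)$-chain with the properties for $m$ and $f$, contradicting finiteness of $(\P_2,\Rules,m,f)$. This establishes soundness. I expect the main obstacle to be the well-foundedness of $\supterm \cup \arr{\Rules}$ on terminating terms -- precisely the place where $m \succeq \minimal$ is indispensable, since it is what makes the projected strict subterms $v_i'$ terminating; the substitution-stability of $\supterm$, though the genuinely higher-order ingredient, is by comparison routine.
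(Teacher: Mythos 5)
Your proof is correct and takes essentially the same route as the paper's: completeness because only DPs are removed (the paper's generic subset lemma, which you inline), and soundness by projecting the chain to an infinite $\suptermeq \cdot \arrr{\Rules}$ sequence in which, by minimality of the chain and well-foundedness of $\supterm \cup \arr{\Rules}$ on $\arr{\Rules}$-terminating terms, only finitely many strict $\supterm$ steps can occur, so a tail of the chain lies in $\P_2$. The differences are only presentational: you explicitly prove the substitution-stability of $\supterm$ and the well-foundedness fact (via an ordinal-rank argument), both of which the paper leaves implicit, while conversely you assert rather than re-derive that an $\AlterRules$-computable chain is minimal when $m = \static_\AlterRules$ -- a fact the paper establishes separately (and states in its main text), so relying on it is legitimate.
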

}
\subtermCriterionTheProc

\begin{proof}[sketch]
If the
conditions are satisfied, every infinite $(\P,
\Rules)$-chain induces an infinite $\suptermeq \mathop{\cdot}
\arrr{\Rules}$ sequence that starts in a strict subterm of $t_1$,
contradicting minimality unless all but finitely many steps are
equality.  Since every occurrence of a pair in $\P_1$ results in a
strict $\supterm$ step, a tail of the chain lies in $\P_2$.
  \qed
\end{proof}

\begin{example}\label{ex:mapfinish}
Using
$\nu(\symb{map}^\sharp) = 2$,
$\Proc_{\mathtt{subcrit}}$ maps the DP problem $(\{(1)\},\Rules,
\linebreak
\static_\Rules,\formative)$ from \refEx{ex:staticgraph} to
$\{(\emptyset,\Rules,\static_\Rules,\formative)\}$.
\end{example}

The subterm criterion can be strengthened, following
\cite{kus:iso:sak:bla:09,suz:kus:bla:11}, to
also
handle
DPs like the one in \refEx{ex:derivsdp}.
Here, we focus on a new idea.
For \emph{computable} chains, we can build on the idea of
the subterm criterion to get something more.

\edef\procStaticSubtermCriterion{\number\value{theorem}}
\edef\procStaticSubtermCriterionSec{\number\value{section}}
\newcommand{\staticSubtermCriterionTheProc}{
\begin{theorem}[Computable subterm criterion processor]\label{thm:staticsubtermproc}
The processor $\Proc_{\mathtt{statcrit}}$  that maps a DP problem
$(P_1 \uplus \P_2,\Rules,\static_\AlterRules,f)$ to
$\{(\P_2,\Rules,\linebreak
\static_\AlterRules,
f)\}$ if a projection function $\nu$ exists such that
$\project(\ell) \sqsupset \project(p)$ for all $\ell \arrdp
  p\ (A) \in \P_1$ and
$\project(\ell) = \project(p)$ for all $\ell \arrdp p\ (A) \in
  \P_2$,
 is
 sound and complete.
Here, $\sqsupset$ is the relation on base-type terms
with
$s
\sqsupset t$ if $s \neq t$ and (a) $s \gracc t$ or (b)
a meta-variable $Z$ exists
with
 $s \gracc \meta{Z}{x_1,\dots,x_\mia}$
and $t = \apps{\meta{Z}{t_1,\dots,t_\mac}}{s_1}{s_n}$.
\end{theorem}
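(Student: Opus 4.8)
The plan is to adapt the soundness argument of the ordinary subterm criterion processor (\refThm{thm:subtermproc}), replacing its appeal to minimality by an appeal to computability and the strict subterm relation by $\sqsupset$. Completeness is immediate, since $\Proc_{\mathtt{statcrit}}$ never returns \no\ and only deletes dependency pairs: any infinite chain witnessing that $(\P_2,\Rules,\static_\AlterRules,f)$ is infinite is also a $(\P_1\uplus\P_2,\Rules)$-chain with the same flags, so $\adpprob$ is infinite whenever its image is. The work lies in soundness.

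For soundness I would assume an infinite $\AlterRules$-computable $(\P_1\uplus\P_2,\Rules)$-chain $[(\rho_0,s_0,t_0),\ldots]$ with $\rho_i=\ell_i\arrdp p_i\ (A_i)$, $s_i=\ell_i\gamma_i$, $t_i=p_i\gamma_i$, and derive a contradiction from the assumption that infinitely many $\rho_i$ lie in $\P_1$. Applying $\project$ and using that the heads of $t_i$ and $s_{i+1}$ coincide while $\arrr{\Rules}$ acts argumentwise (\refDef{def:chain}), I obtain $\project(t_i)\arrr{\Rules}\project(s_{i+1})$, together with $\project(s_i)=\project(t_i)$ for $\rho_i\in\P_2$ and an instance of $\sqsupset$ for $\rho_i\in\P_1$. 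The crucial point is that every projected term from $\project(t_0)$ on is $C_{\AlterRules}$-computable: $\project(p_i)$ is a fully applied $\beta$-reduced-sub-meta-term of the closed meta-term $p_i$ reached with empty meta-variable condition and no free variables, so the definition of an $\AlterRules$-computable chain yields that $\project(t_i)=\project(p_i)\gamma_i$ is $C_{\AlterRules}$-computable, and closure of the RC-set under $\arr{\AlterRules}\supseteq\arr{\Rules}$ propagates this along the reductions $\project(t_i)\arrr{\Rules}\project(s_{i+1})$. I would measure the sequence by the relation $\arr{\AlterRules}\cup\accreduce{C_{\AlterRules}}$, which by \refThm{thm:defC} is well founded on $C_{\AlterRules}$-computable base-type terms; here $\accreduce{C_{\AlterRules}}$-reducts of computable terms remain computable (accessible arguments of computable terms are computable) and strictly decrease, as do $\arr{\AlterRules}$-reducts. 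Reductions give non-strict steps and $\P_2$-pairs give equalities, so infinitely many $\P_1$-pairs would produce an infinite $\arr{\AlterRules}\cup\accreduce{C_{\AlterRules}}$-sequence out of $\project(t_0)$, contradicting well-foundedness; hence only finitely many $\rho_i$ lie in $\P_1$ and a tail of the chain is an infinite $\AlterRules$-computable $(\P_2,\Rules)$-chain.

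The heart of the argument, and the step I expect to be hardest, is the lemma that each $\P_1$-pair contributes a strict step: if $u\sqsupset w$ with $u\gamma$ a $C_{\AlterRules}$-computable base-type term and $\gamma$ respecting the ambient meta-variable conditions, then $u\gamma\ (\arr{\AlterRules}\cup\accreduce{C_{\AlterRules}})^+\ w\gamma$. I would prove this by induction on the derivation of the underlying $u\gracc\cdot$. Since $u=\project(\ell_i)$ has base type and is a top-level argument of a closed pattern, it is not a bare meta-variable application, so the first $\gracc$ step is a genuine access into an accessible argument of a function symbol or variable; this is mirrored by an $\accreduce{C_{\AlterRules}}$ step that applies the accessed (possibly higher-type) argument to freshly chosen $C_{\AlterRules}$-computable terms, with intervening $\arr{\beta}$-steps (which lie in $\arr{\AlterRules}$) whenever the descent passes under an abstraction, instantiating the bound variable by the chosen argument. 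In case (a) this reaches $w\gamma$ directly; in case (b) it reaches the instantiation of $\meta{Z}{x_1,\dots,x_\mia}$, and the trailing applications of $w=\apps{\meta{Z}{t_1,\dots,t_\mac}}{s_1}{s_n}$ are recovered by feeding $\gamma(Z)$ the instantiated arguments $t_j\gamma$ and $s_k\gamma$.

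The delicate point is that every argument fed to an $\accreduce{C_{\AlterRules}}$ step must itself be $C_{\AlterRules}$-computable. For the applied arguments $s_k$ this holds because they are $\beta$-reduced-sub-meta-terms of $p_i$ with empty condition. For the meta-variable arguments $t_j$ I would use the dichotomy built into the notion of respecting a meta-variable condition: if $\gamma(Z)$ regards its $j$-th argument, then $t_j$ is reachable as a $\beta$-reduced-sub-meta-term of $p_i$ under the condition $Z:j$, which $\gamma$ then respects, so $t_j\gamma$ is $C_{\AlterRules}$-computable; and if $\gamma(Z)$ does not regard its $j$-th argument, then $\gamma(Z)$ discards that argument, so any $C_{\AlterRules}$-computable term may be supplied in its place without changing the resulting term up to $\arr{\beta}$. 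This is precisely what the meta-variable conditions are designed to track, and getting the induction to thread this dichotomy together with the instantiation of bound variables by chosen computable arguments --- while keeping each intermediate term computable so that the constructed sequence genuinely descends in $\arr{\AlterRules}\cup\accreduce{C_{\AlterRules}}$ --- is the main obstacle.
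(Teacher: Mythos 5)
Your proposal is correct and follows essentially the same route as the paper's proof: completeness via the fact that the processor only deletes pairs, and soundness by projecting the chain, using the $\AlterRules$-computable-chain condition (with empty meta-variable conditions and closedness of the DP sides) to get computability of the projected terms, and then contradicting termination of $\arr{\AlterRules} \cup \accreduce{C_{\AlterRules}}$ on $C_{\AlterRules}$-computable terms, with each $\P_1$-pair contributing at least one strict step. The inductive simulation of $\gracc$-descent by $\accreduce{C_{\AlterRules}} \cup \arr{\beta}$ steps that you sketch is exactly the paper's separately-proved accessibility lemma (its Lemma~\ref{lem:preservecompacchelper}), and your "regards/discards" dichotomy for the meta-variable arguments in case (b) is precisely the paper's case split on whether the bound variable occurs in the body of $\gamma(Z)$, with fresh computable terms substituted for discarded arguments.
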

}
\staticSubtermCriterionTheProc

\begin{proof}[sketch]
By the conditions,
every infinite $(\P,
\Rules)$-chain induces an infinite $(\accreduce{C_\AlterRules} \cup
\arr{\beta})^* \cdot \arrr{\Rules}$ sequence (where $C_\AlterRules$
is
defined following \refThm{thm:defC}).  This contradicts
computability unless there are only finitely
many inequality steps.  As pairs in $\P_1$ give rise to a strict
decrease, they may occur only finitely often.
\qed
\end{proof}

\begin{example}\label{ex:ordrecdone}
Following \refEx{ex:ordrec} and \ref{ex:ordrecstatic},
consider the projection function $\nu$ with $\nu(\symb{rec}^\sharp) =
1$.
As $\suc\ X \gracc X$ and $\symb{lim}\ H \gracc H$,
both $\suc\ X \sqsupset X$ and $\symb{lim}\ H \sqsupset H\ M$
hold.
Thus $\Proc_{\mathtt{statc}}(\P,\Rules,\static_\Rules,\formative) =
\{ (\emptyset,\Rules,\static_\Rules,\linebreak
\formative) \}$.
By the dependency graph processor, the AFSM is terminating.
\end{example}

The computable subterm criterion processor fundamentally relies on the
new $\static_\AlterRules$ flag, so it has no counterpart in the literature
so far.

\subsection{Non-termination}

While (most of) the processors presented
so far are complete,
none of them can
actually
return \no.
We have not yet implemented
such a processor;
however, we can already provide a general specification of
a \emph{non-termination processor}.

\edef\procInfinite{\number\value{theorem}}
\edef\procInfiniteSec{\number\value{section}}
\newcommand{\infiniteTheProc}{
\begin{theorem}[Non-termination processor]\label{def:nontermproc}
Let $M = (\P,\Rules,m,f)$ be a DP problem.
The processor
that
maps $M$ to
\no\ if it
determines that a sufficient criterion for non-termination of
$\arr{\Rules}$
or for existence of an infinite conservative
$(\P,\Rules)$-chain
according to the flags $m$ and $f$ holds is sound and complete.
\end{theorem}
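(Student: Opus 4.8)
The plan is to read off both properties directly from \refDef{def:proc} and the notion of an \emph{infinite} DP problem in \refDef{def:dpproblem}, exploiting that this processor only ever outputs $\no$ or, following the standing convention, leaves the problem unchanged. Thus the processor maps $M$ to $\no$ exactly when it has determined that the stated sufficient criterion holds, and to $\{M\}$ otherwise.

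For soundness I would observe that the obligation is vacuous or trivial. A processor is sound if $M$ is finite whenever $\Proc(M) \neq \no$ and every element of $\Proc(M)$ is finite. When the criterion holds we have $\Proc(M) = \no$, so the premise $\Proc(M) \neq \no$ already fails and there is nothing to prove; when it does not hold we have $\Proc(M) = \{M\}$, and the hypothesis that all elements of $\{M\}$ are finite is literally the statement that $M$ is finite, which is the desired conclusion. Hence soundness needs no further argument.

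The content lies in completeness, which demands that $M$ be infinite whenever $\Proc(M) = \no$ or $\Proc(M)$ contains an infinite element. The second case collapses to $\Proc(M) = \{M\}$ with $M$ infinite and is immediate, so I would concentrate on $\Proc(M) = \no$. Here, by hypothesis, the processor has established a \emph{sufficient} criterion for one of the two disjuncts, and I would split accordingly. If the criterion witnesses non-termination of $\arr{\Rules}$, then $\Rules$ is non-terminating and $M$ is infinite by the first clause of \refDef{def:dpproblem}. If instead it witnesses an infinite conservative $(\P,\Rules)$-chain respecting the flags $m$ and $f$, I would note that forgetting the extra flag properties leaves an infinite $(\P,\Rules)$-chain in which all DPs used are conservative, which is precisely the witness required by the second clause; so again $M$ is infinite.

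The only delicate point, and the closest thing to an obstacle, is the interaction with the flags: one must check that a chain strengthened with the flag properties still lies in the class of $(\P,\Rules)$-chains appearing in the definition of \emph{infinite}. This is immediate once one recalls that being $\AlterRules$-computable, minimal, or formative are all defined as additional properties \emph{of} a $(\P,\Rules)$-chain, so dropping them returns an ordinary conservative chain; and since the definition of an infinite DP problem imposes no flag constraints at all, any flag-respecting witness is in particular a witness for infiniteness. Completeness, and with it the theorem, then follows by unfolding the definitions.
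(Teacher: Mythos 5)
Your proposal is correct and follows exactly the route the paper takes: the paper's proof is simply ``Obvious'' (elaborated in the appendix as a direct consequence of \refDef{def:dpproblem} and \refDef{def:proc}), and your argument is precisely that definition-unfolding, with the soundness case vacuous and the completeness case reading off the two clauses of \emph{infinite}. Your added check that a flag-respecting conservative chain remains a valid witness for infiniteness is the right (and only) point needing a remark, and you resolve it correctly.
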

}
\infiniteTheProc

\begin{proof}
Obvious.
\qed
\end{proof}

This is a very general processor, which does not tell us \emph{how} to
determine
such a sufficient criterion.
However, it allows us to conclude
non-termination as part of the framework by identifying a
suitable infinite chain.

\begin{example}\label{ex:nontermproc}
If we can find a finite $(\P,\Rules)$-chain $[(\rho_0,s_0,t_0),\dots,
(\rho_n,s_n,t_n)]$ with $t_n = s_0\gamma$ for some substitution
$\gamma$ which uses only conservative DPs, is formative if
$f = \formative$ and is $\AlterRules$-computable if
$m = \static_\AlterRules$,
such a chain
is clearly a sufficient criterion: there is an
infinite chain
$[(\rho_0,s_0,t_0),\dots,
  (\rho_0,s_0\gamma,t_0\gamma), \dots,\linebreak
  (\rho_0,s_0\gamma\gamma,t_0\gamma\gamma),\ldots]$.
If $m = \minimal$
  and we find such a chain that is however not minimal, then note that
$\arr{\Rules}$ is non-terminating, which also suffices.

For example, for a DP problem $(\P,\Rules,\minimal,\nonformative)$
with $\P = \{ \symb{f}^\sharp\ F\ X \arrdp \symb{g}^\sharp\ (F\ X),\ 
\symb{g}^\sharp\ X \arrdp \symb{f}^\sharp\ \symb{h}\ X \}$, there is
a finite dependency chain:
$
[(\symb{f}^\sharp\ F\ X \arrdp \symb{g}^\sharp\ (F\ X),\ 
  \symb{f}^\sharp\ \symb{h}\ x,\ \symb{g}^\sharp\ (\symb{h}\ x)),\ \;
(\symb{g}^\sharp\ X \arrdp \symb{f}^\sharp\ \symb{h}\ X,\ 
  \symb{g}^\sharp\ (\symb{h}\ x),\ \symb{f}^\sharp\ \symb{h}\ 
  (\symb{h}\ x))]
$.
As $\symb{f}^\sharp\ \symb{h}\ (\symb{h}\ x)$ is an instance of
$\symb{f}^\sharp\ \symb{h}\ x$, the processor maps this DP problem to
$\no$.
\end{example}

To instantiate \refThm{def:nontermproc}, we can borrow non-termination
criteria from first-order rewriting
\cite{gie:thi:sch:05:1,pay:08,emm:eng:gie:12}, with minor adaptions to
the typed setting.
Of course, it is worthwhile to also investigate dedicated
higher-order non-termination criteria.

\section{Conclusions and Future Work}\label{sec:conclusions}

We have built on the static dependency pair approach
\cite{bla:06,kus:iso:sak:bla:09,suz:kus:bla:11,kus:18} and
formulated it in the language of the DP \emph{framework} from
first-order rewriting \cite{gie:thi:sch:05:2,gie:thi:sch:fal:06}.
Our formulation is based on AFSMs, a dedicated formalism designed
to make termination proofs transferrable to
various higher-order rewriting formalisms.

This framework has two important additions over existing higher-order
DP approaches in the literature.
First, we consider not only arbitrary and minimally non-terminating
dependency chains, but also minimally \emph{non-computable} chains;
this is tracked by the $\static_\AlterRules$ flag.
Using the flag,
a dedicated processor allows us to efficiently
handle rules like \refEx{ex:ordrec}.
This flag has no counterpart in the first-order setting.
Second, we have generalised the idea of formative rules in
\cite{kop:raa:12} to a notion of formative \emph{chains}, tracked by
a $\formative$ flag.  This makes it possible to define a corresponding
processor that permanently removes rules.

\smallskip
\emph{Implementation and experiments.}
To provide a strong formal groundwork, 
we have presented several
processors in a general
way, using semantic definitions of, e.g., the dependency
graph approximation and
formative rules rather than
syntactic definitions using functions like
$\mathit{TCap}$ \cite{gie:thi:sch:05:1}.
Even so, most parts of the DP framework for AFSMs have been
implemented in the open-source termination prover
\wanda~\cite{wanda},
alongside a dynamic DP framework~\cite{kop:raa:12} and
a mechanism to delegate some ordering constraints to a first-order
tool~\cite{fuh:kop:11}.  For reduction triples, polynomial
interpretations \cite{fuh:kop:12} and a version of HORPO
\cite[Ch.~5]{kop:12} are used.  To solve the constraints arising
in the search for these orderings, and also to determine sort
orderings (for the accessibility relation) and projection functions
(for the subterm criteria), \wanda employs an external SAT-solver.
\wanda\ has won the higher-order category of the
International Termination Competition~\cite{termcomp}
four times.
In the International Confluence Competition~\cite{coco},
the
tools \acph~\cite{acph17} and \csiho~\cite{csiho18}
use \wanda\ as their ``oracle''
for
termination proofs on HRSs.

We have tested \wanda\ on the \emph{Termination Problems Data
Base} \cite{tpdb}, using
\aprove\ \cite{gie:asc:bro:emm:fro:fuh:hen:ott:plu:sch:str:swi:thi:17}
and 
\minisat \cite{minisat}
as back-ends.
When no additional features are enabled, 
\wanda proves termination of 124 (out of 198) benchmarks with static DPs,
versus 92 with only a search for reduction orderings; a 34\% increase.
When all features except static
DPs are enabled, \wanda\ succeeds on 153 benchmarks, versus 166 with also
static DPs; an 8\% increase, or alternatively, a 29\% decrease in failure
rate.
The full evaluation is available in \onlypaper{\cite[Appendix D]{technicalreport}}{\onlyarxiv{\refApp{sec:experiments}}}.

\smallskip

\emph{Future work.}
While the static and the dynamic DP approaches each have their own
strengths, there has thus far been little progress on a \emph{unified}
approach, which could
take advantage of the syntactic benefits of both styles.
We plan to combine the present work
with the ideas of \cite{kop:raa:12} into such a
unified DP framework.

In addition, we plan to extend the higher-order DP framework
to rewriting with \emph{strategies}, such as
implicit $\beta$-normalisation or strategies inspired by functional
programming languages like OCaml and Haskell.
Other natural directions
are dedicated automation to detect non-termination,
and reducing
the number of term constraints
solved by the reduction triple
processor via a tighter integration with usable and formative rules
with respect to argument filterings.

\bibliography{references}

\newpage
\appendix
\edef\savedcounter{\number0}
\edef\savedcounterSec{\number0}
\newcommand{\startappendixcounters}{
\setcounter{theorem}{\savedcounter}
}
\newcommand{\oldcounter}[2]{
\edef\savedcounter{\number\value{theorem}}
\setcounter{theorem}{#1}
}

\newcommand{\comprel}{\sqsupset}

\begin{center}
\textbf{\huge -- APPENDIX --}
\end{center}

This appendix contains detailed proofs for all results in
the paper.  Proofs in higher-order rewriting are typically intricate
and subject to errors in the small details, so we have strived to be
very precise.  However, aside from \refApp{app:computability} (which
is simply an adaptation of an existing technique to the present
setting), the main idea of all proofs is captured by the proof
sketches in the paper.

In addition, Appendix~\ref{sec:experiments} presents an experimental
evaluation that considers the power of the techniques in this paper on
the termination problem database~\cite{tpdb}.

\section{Computability: the set $C$}\label{app:computability}

In this appendix, we prove \refThm{thm:defC}: the existence of an
RC-set $C$ that provides an accessibility relation $\gracc$ that
preserves computability, and a base-type accessibility step
$\accreduce{C}$ that preserves both computability and termination.

As we have said before, $\V$ and $\F$ contain infinitely many symbols
of all types.  We will use this to select variables or constructor
symbol of any given type without further explanation.

These proofs \emph{do not require} that computability is considered
with respect to a rewrite relation: other relations (such as recursive
path orderings) may be used as well.  To make this explicit, we will
use an alternative relation symbol, $\comprel$.

The proofs here consider a computability notion over the set
$\Terms(\F,\V)$ of terms without restrictions.  However, they could
easily be extended to subsets of a different set of terms $T$,
provided $T$ is closed under $\arr{\Rules}$.  This could for instance
be used to obtain a computability result for terms that satisfy
certain arity restrictions.  To make this generality clear, each
quantification over terms is explicitly marked with $\Terms(\F,\V)$.

Note: a more extensive discussion of computability can be found
in~\cite{bla:16}.  Our notion of accessibility largely corresponds
to
membership of the computability closure defined there (although not
completely).

\subsection{Definitions and main computability result}

\begin{definition}\label{def:comprel}
In \refApp{app:computability}, $\comprel$ is assumed to be a given
relation on terms of the same type, with respect to which we consider
computability.  We require that:
\begin{itemize}
\item $\comprel$ is monotonic (that is, $s \comprel t$ implies that
  $s\ u \comprel t\ u$ and $u\ s \comprel u\ t$ and $\abs{x}{s}
  \comprel \abs{x}{t}$);
\item for all variables $x$: $\apps{x}{s_1}{s_n} \comprel t$ implies
  that $t$ has the form $\apps{x}{s_1}{s_i'} \cdots s_n$ with
  $s_i \comprel s_i'$;
\item if $s \arrr{\mathtt{head}\beta} u$ and $s \comprel t$, then
  there exists $v$ such that $u \comprel^* v$ and $t \arrr{\mathtt{
  head}\beta} v$;
  here, $\arr{\mathtt{head}\beta}$ is the relation generated by the
  step $\apps{(\abs{x}{u})\ v}{w_1}{w_n}
  \arr{\mathtt{head}\beta} \apps{u[x:=v]}{w_1}{w_n}$;
\item if $t$ is the $\mathtt{head}\beta$-normal form of $s$, then
  $s \comprel^* t$.
\end{itemize}
We call a term \emph{neutral} if it has the form $\apps{x}{s_1}{s_n}$
or $\apps{(\abs{x}{u})}{s_0}{s_n}$.
\end{definition}

The generality obtained by imposing only the minimal requirements
on $\comprel$ is not needed in the current paper (where we only consider
computability with respect to a rewrite relation), but could be used to
extend the method to other domains.
First note:

\begin{lemma}\label{lem:rewriterelationsuffices}
A rewrite relation $\arr{\Rules}$ satisfies the requirements of
$\comprel$ stated in \refDef{def:comprel}.
\end{lemma}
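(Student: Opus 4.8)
The plan is to verify, one clause at a time, that the instance $\comprel\ :=\ \arr{\Rules}$ (so that $\comprel^*$ is $\arrr{\Rules}$) meets the four requirements of \refDef{def:comprel}; three are essentially immediate and the fourth, commutation, carries the real work. \emph{Monotonicity} is free: by \refDef{def:rule}, $\arr{\Rules}$ is \emph{defined} as the smallest monotonic relation closed under (\textsf{Rule}) and (\textsf{Beta}), so the three monotonicity conditions hold by construction. For the \emph{variable-head clause} I would analyse a single step $\apps{x}{s_1}{s_n} \arr{\Rules} t$ by the position at which it is contracted. Every subterm of $\apps{x}{s_1}{s_n}$ is either a prefix $\apps{x}{s_1}{s_j}$, all of which have the variable $x$ as head, or lies inside some argument $s_i$. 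Since every (\textsf{Rule}) redex has a function symbol at its head and every (\textsf{Beta}) redex has an abstraction at its head, no redex can sit at a prefix; hence the step lies inside a single argument, giving $s_i \arr{\Rules} s_i'$ and $t = \apps{x}{s_1}{s_i'} \cdots s_n$, as required.

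For the \emph{head-normal-form clause} the key observation is that $\arr{\mathtt{head}\beta}\ \subseteq\ \arr{\Rules}$: the step $\apps{(\abs{x}{u})\ v}{w_1}{w_n} \arr{\mathtt{head}\beta} \apps{u[x:=v]}{w_1}{w_n}$ is exactly a (\textsf{Beta}) step performed underneath the arguments $w_1,\dots,w_n$, which $\arr{\Rules}$ admits by monotonicity. Hence $\arrr{\mathtt{head}\beta}\ \subseteq\ \arrr{\Rules}$, so if $t$ is the head-beta normal form of $s$ then $s \arrr{\mathtt{head}\beta} t$, and therefore $s \arrr{\Rules} t$, i.e.\ $s \comprel^* t$. (Existence and uniqueness of the head normal form follow from strong normalisation of simply-typed $\beta$-reduction and determinism of the head redex, but only the inclusion $\arr{\mathtt{head}\beta}\subseteq\arr{\Rules}$ is needed for the implication.)

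The substantial clause is \emph{commutation}: from $s \arrr{\mathtt{head}\beta} u$ and $s \arr{\Rules} t$ I must produce $v$ with $u \arrr{\Rules} v$ and $t \arrr{\mathtt{head}\beta} v$. I would first prove the single-step local version: if $s \arr{\mathtt{head}\beta} u$ and $s \arr{\Rules} t$, then some $v$ satisfies $u \arrr{\Rules} v$ and $t$ reaches $v$ in \emph{at most one} head-beta step. Writing $s = \apps{(\abs{x}{a})\ b}{w_1}{w_n}$ and $u = \apps{a[x:=b]}{w_1}{w_n}$, the $\arr{\Rules}$ step either is the head redex itself (then $t = u$ and $v := u$ works), or lies inside $a$, inside $b$, or inside some $w_i$. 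In each remaining case I close the diamond by performing the head-beta step on $t$ and using stability of $\arr{\Rules}$ under substitution and contexts. The delicate case is a step inside $b$: the bound variable $x$ may occur several times in $a$, so $a[x:=b] \arrr{\Rules} a[x:=b']$ requires \emph{several} $\arr{\Rules}$ steps, which is precisely why the valley must use $\arrr{\Rules}$ rather than a single step.

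Finally I would lift this single-step diamond to the stated reflexive--transitive form. Since the head-beta side of the diamond always closes in \emph{at most one} step, the single-step version is exactly a \emph{strong commutation} condition for $\arr{\mathtt{head}\beta}$ against $\arr{\Rules}$, so Hindley's strong-commutation lemma yields that the two relations commute: $s \arrr{\mathtt{head}\beta} u$ together with $s \arr{\Rules} t$ (indeed $s \arrr{\Rules} t$) gives a common reduct $v$ with $u \arrr{\Rules} v$ and $t \arrr{\mathtt{head}\beta} v$, which is the required clause. The main obstacle is this commutation step: getting the single-step case analysis right, in particular the substitution/duplication subtlety of the $b$-case, and then invoking strong commutation to lift to the two closures, thereby sidestepping the circularity that a naive induction over both closures would run into.
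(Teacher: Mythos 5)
Your proof is correct, and for the three easy clauses (monotonicity, the variable-head clause, and deriving the head-normal-form clause from $\arr{\mathtt{head}\beta} \subseteq \arr{\Rules}$) it coincides with the paper's. For the commutation clause you take a genuinely different route. The paper strengthens the claim to many-step reductions $s \arrr{\Rules} t$ and argues by well-founded induction on $s$ ordered by $\arr{\beta}$: if the reduction $s \arrr{\Rules} t$ contains no head step, the head-$\beta$ step is replayed directly on $t$; otherwise the head step can be assumed to come first, so that $s' \arrr{\Rules} t$ holds for the head-$\beta$ reduct $s'$ of $s$, and the induction hypothesis applies to $s'$. You instead verify the single-step strong-commutation diamond -- at most one $\arr{\mathtt{head}\beta}$ step closing one side, $\arrr{\Rules}$ on the other, the many steps being forced exactly by the duplication case you flag (a step inside the argument $b$ of the head redex) -- and then lift to the closures via Hindley's strong-commutation lemma. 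The two case analyses are essentially the same (head redex, inside $a$, inside $b$, inside some $w_i$); what differs is the lifting mechanism, and each choice buys something. Your factorisation through the abstract lemma needs no termination assumption at all: the tiling behind strong commutation terminates purely because the head-$\beta$ side closes in at most one step, whereas the paper's induction ``on $s$ with $\arr{\beta}$'' implicitly leans on the (classical, but nontrivial) termination of $\beta$-reduction on simply-typed terms. The paper's argument is self-contained -- it invokes no external abstract-rewriting result -- at the cost of a more delicate bookkeeping with the strengthened induction hypothesis; indeed, in its no-head-step case the claim $u \arrr{\Rules} v$ ``by monotonicity'' really requires one further appeal to the induction hypothesis unless $u = s'$, a subtlety that your route discharges once and for all inside Hindley's lemma.
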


\begin{proof}
Clearly $\arr{\Rules}$ is monotonic, applications with a variable at
the head cannot be reduced at the head, and
moreover
$\arr{\Rules}$ includes
$\arr{\mathtt{head}\beta}$.

The third property we prove by induction
on $s$ with $\arr{\beta}$, using $\arrr{\Rules}$ instead of
$\arr{\Rules}$ for a stronger induction hypothesis.  If $s = u$, then we
are done choosing $v := t$.  Otherwise we can write
$s = \apps{(\abs{x}{q})\ w_0}{w_1}{w_n}$
and $s \arr{\mathtt{head}\beta} s' := \apps{q[x:=w_0]}{w_1}{w_n}$,
and $s' \arrr{\mathtt{head}\beta} u$.
If the reduction $s \arrr{\Rules} t$ does not take any head steps, then
\[
t = \apps{(\abs{x}{q'})\ w_0'}{w_1'}{w_n'} \arrr{\mathtt{head}\beta}
\apps{q'[x:=w_0']}{w_1'}{w_n'} =: v
\]
and indeed $u \arrr{\Rules} v$ by
monotonicity.  Otherwise, by the same argument we can safely assume
that the head step is done first, so $s' \arrr{\Rules} t$; we complete
by the induction hypothesis.
\qed
\end{proof}

\medskip
Recall \refDef{def:RCset} from the text.

\oldcounter{\defRCset}{\defRCsetSec}
\begin{definition}[with $\comprel$ rather than $\arr{\Rules}$]
A \emph{set of reducibility candidates}, or \emph{RC-set}, for a
relation $\comprel$ as in \refDef{def:comprel} is a set $I$ of
base-type terms $s \in \Terms(\F,\V)$ such that:
\begin{itemize}
\item every term in $I$ is terminating under $\comprel$
\item $I$ is closed under $\comprel$ (so if $s \in I$ and $s \comprel t$
  then $t \in I$)
\item if $s \in \Terms(\F,\V)$ is neutral, and for all $t$ with $s
  \comprel t$ we have $t \in I$, then $s \in I$
\end{itemize}
We define $I$-computability for an RC-set $I$ by induction on types;
for $s \in \Terms(\F,\V)$ we say $s$ is $I$-computable if:
\begin{itemize}
\item $s : \asort$ for some $\asort \in \Sorts$ and
  $s \in I$ ($\asort \in \Sorts$)
\item $s : \atype \arrtype \btype$ and for all terms $t : \atype \in
  \Terms(\F,\V)$ that are $I$-computable, $\app{s}{t}$ is $I$-computable
\end{itemize}
\end{definition}
\startappendixcounters

For $\asort$ a sort and $I$ an RC-set, we will write $I(\asort) = \{
s \in I \mid s : \asort \}$.

Let us illustrate \refDef{def:RCset} with two examples:

\begin{lemma}\label{lem:MINSN}
The set \textsf{SN} of all terminating base-type terms in $\Terms(\F,
\V)$ is an RC-set.
The set \textsf{MIN} of all terminating base-type terms in $\Terms(\F,
\V)$ whose $\mathtt{head}\beta$-normal form can be written
$\apps{x}{s_1}{s_\maa}$ with $x \in \V$ is also an RC-set.
\end{lemma}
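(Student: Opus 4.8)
The plan is to verify, for each of $\textsf{SN}$ and $\textsf{MIN}$, the three defining clauses of an RC-set: (i) every element is $\comprel$-terminating; (ii) the set is closed under $\comprel$; and (iii) the neutral-closure property. Since both sets consist, by definition, of terminating base-type terms, clause (i) and the base-type requirement hold immediately in both cases, so the real work is (ii) and (iii).

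For $\textsf{SN}$ everything is routine. Closure under $\comprel$ holds because a $\comprel$-reduct of a terminating term is again terminating and, as $\comprel$ relates only terms of equal type, still of base type. For clause (iii), if $s$ is neutral of base type and every $t$ with $s \comprel t$ is terminating, then $s$ admits no infinite reduction (any such reduction would begin $s \comprel t \comprel \cdots$ with $t$ already non-terminating), so $s \in \textsf{SN}$.

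The content of the lemma lies in $\textsf{MIN}$, and the crucial tool I would isolate is a propagation fact: \emph{if $a \comprel b$ and the $\mathtt{head}\beta$-normal form $a^\ast$ of $a$ is variable-headed, then $b$ has a $\mathtt{head}\beta$-normal form that is variable-headed.} I would prove this directly from the stated properties of $\comprel$: applying property~3 to $a \arrr{\mathtt{head}\beta} a^\ast$ together with $a \comprel b$ yields a $v$ with $a^\ast \comprel^\ast v$ and $b \arrr{\mathtt{head}\beta} v$; since $a^\ast = \apps{x}{s_1}{s_n}$ is variable-headed, property~2 iterated along $a^\ast \comprel^\ast v$ forces $v = \apps{x}{\dots}$ with the same head $x$ and arity, so $v$ is itself $\mathtt{head}\beta$-normal and, by determinism of $\arr{\mathtt{head}\beta}$, is the (unique) normal form of $b$. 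Granting this fact, clause (ii) for $\textsf{MIN}$ is immediate: a $\comprel$-reduct of a terminating term is terminating, and the fact transfers variable-headedness of the normal form, simultaneously establishing its existence.

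For clause (iii) of $\textsf{MIN}$, let $s$ be neutral of base type with every $\comprel$-reduct in $\textsf{MIN}$; termination of $s$ follows as in the $\textsf{SN}$ case. If $s = \apps{x}{s_1}{s_n}$ with $x \in \V$, then $s$ is already its own $\mathtt{head}\beta$-normal form and variable-headed, so $s \in \textsf{MIN}$. The delicate case is $s = \apps{(\abs{x}{u})}{s_0}{s_n}$, which carries a head redex: here I would use property~4 to obtain a chain $s \comprel^\ast s^\ast$ to the $\mathtt{head}\beta$-normal form of $s$; as $s$ is not normal the chain has a first step $s \comprel t_1$, so $t_1 \in \textsf{MIN}$ and its normal form is variable-headed, and propagating variable-headedness forward along $t_1 \comprel^\ast s^\ast$ via the fact above (noting $s^\ast$ is itself $\mathtt{head}\beta$-normal) shows $s^\ast$ is variable-headed. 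I expect this last case to be the main obstacle: the subtlety is that $s$ is reduced only through $\comprel$, so one must ensure the $\mathtt{head}\beta$-normal form of $s$ exists and is variable-headed by combining the commutation property~3, the variable-preservation property~2, and the reachability property~4 — for the concrete instantiation $\comprel\ =\ \arr{\Rules}$ one may shortcut this, since $\arr{\mathtt{head}\beta} \subseteq \arr{\beta} \subseteq \arr{\Rules}$ gives $s \comprel \apps{u[x:=s_0]}{s_1}{s_n} \in \textsf{MIN}$ directly.
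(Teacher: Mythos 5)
Your proof is correct and takes essentially the same approach as the paper's: \textsf{SN} is routine, closure of \textsf{MIN} follows from the commutation property (3) combined with preservation of variable-headed terms (property 2) --- exactly your propagation fact --- and the neutral clause for a term with a head redex uses property (4) to reach the $\mathtt{head}\beta$-normal form through the relation, so that the first step lands in \textsf{MIN} and variable-headedness is carried along the rest of the chain. The only cosmetic differences are that the paper phrases the neutral clause as an induction along $\arr{\beta}$ and applies closure of \textsf{MIN} wholesale to place the $\mathtt{head}\beta$-normal form in \textsf{MIN}, where you unroll the same argument step by step via your propagation fact.
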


\begin{proof}
It is easy to verify that the requirements hold for \textsf{SN}.  For
\textsf{MIN}, clearly termination holds.  If $s \in \textsf{MIN}$,
then $s \arrr{\mathtt{head}\beta} \apps{x}{s_1}{s_\maa} =: s'$, so for
any $t$ with $s \comprel^* t$ the assumptions on $\comprel$ provide
that $t \arrr{\mathtt{head}\beta} v$ for some $\comprel^*$-reduct of
$s'$, which can only have the form $\apps{x}{t_1}{t_\maa}$.
Finally, we prove that a neutral term $s \in \Terms(\F,\V)$ is in
\textsf{MIN} if all
its $\comprel^+$-reducts are, by induction on $s$ with $\arr{\beta}$
(this suffices because we have already seen that \textsf{MIN} is
closed under $\comprel$).
If $s = \apps{x}{s_1}{s_\maa}$ then it is included in \textsf{MIN} if
it is terminating, which is the case if all its reducts are
terminating, which is certainly the case if they are in \textsf{MIN}.
If $s = \apps{(\abs{x}{u})\ v}{w_1}{w_\maa}$ then it is included if
(a) all its reducts are terminating (which is satisfied if they are in
\textsf{MIN}), and (b) the $\mathtt{head}\beta$-normal form $s'$ of
$s$ has the right form, which holds because $s \comprel^+ s'$
(as $\arr{\mathtt{head}\beta}$ is included in $\comprel$) and
therefore $s' \in \textsf{MIN}$ by assumption.
\qed
\end{proof}

In fact, we have that \textsf{MIN} $\subseteq I \subseteq$ \textsf{SN}
for all RC-sets $I$.  The latter inclusion is obvious by the
termination requirement in the definition of RC-sets.  The former
inclusion follows easily:

\begin{lemma}\label{lem:mincom}
For all RC-sets $I$, \textsf{MIN} $\subseteq I$.
\end{lemma}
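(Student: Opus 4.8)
The plan is to fix an arbitrary RC-set $I$ together with a term $s \in \textsf{MIN}$, and to prove $s \in I$ by well-founded induction on $\comprel$. This induction is legitimate precisely because every element of $\textsf{MIN}$ is terminating under $\comprel$, so there is no infinite $\comprel$-sequence starting from $s$; I may therefore assume as induction hypothesis that every $\comprel$-reduct of $s$ which lies in $\textsf{MIN}$ is already in $I$. The whole argument will then reduce to a single application of the third closure clause in the definition of an RC-set.

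The first key step is to observe that every $s \in \textsf{MIN}$ is \emph{neutral} in the sense of \refDef{def:comprel}. Since $s$ has base type it is not an abstraction, so it can be written $\apps{h}{s_1}{s_n}$ where $h = \head(s)$ is a variable, a function symbol, or an abstraction. The head $h$ cannot be a function symbol: in that case $s = \apps{h}{s_1}{s_n}$ would already be in $\mathtt{head}\beta$-normal form with a function-symbol head, contradicting the defining requirement of $\textsf{MIN}$ that the $\mathtt{head}\beta$-normal form have the shape $\apps{x}{s_1}{s_\maa}$ with $x \in \V$. Hence either $h$ is a variable, giving $s = \apps{x}{s_1}{s_n}$, or $h$ is an abstraction, in which case $n \geq 1$ (as $s$ is not itself an abstraction) and $s = \apps{(\abs{x}{u})}{s_1}{s_n}$. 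In both cases $s$ matches one of the two neutral shapes.

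With neutrality established, the plan is to invoke the third RC-set clause: a neutral term belongs to $I$ as soon as all of its $\comprel$-reducts do. So it suffices to show $t \in I$ for every $t$ with $s \comprel t$. Each such $t$ is terminating (since $s$ is) and, by the closure of $\textsf{MIN}$ under $\comprel$ established in the proof of \refLemma{lem:MINSN}, lies again in $\textsf{MIN}$; the induction hypothesis then yields $t \in I$. Consequently $s \in I$, which closes the induction.

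I expect the only point requiring genuine care to be the neutrality observation, and within it the argument ruling out a function-symbol head by appeal to the shape of the $\mathtt{head}\beta$-normal form. Everything else — the well-foundedness of the induction and the stability of $\textsf{MIN}$ under reduction — is handed to us for free by the termination requirement built into $\textsf{MIN}$ and by \refLemma{lem:MINSN}, so once the setup is in place the remaining steps are routine.
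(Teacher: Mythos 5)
Your proof is correct and takes essentially the same route as the paper's: well-founded induction on $\comprel$, the observation that every element of \textsf{MIN} is neutral, and a single application of the third RC-set clause combined with the closure of \textsf{MIN} under $\comprel$ from \refLemma{lem:MINSN}. The only difference is that you spell out the neutrality argument (ruling out a function-symbol head via the shape of the $\mathtt{head}\beta$-normal form), which the paper dismisses as ``easy to see''.
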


\begin{proof}
We prove by induction on $\comprel$ that all elements of \textsf{MIN}
are also in $I$.  It is easy to see that if $s \in \textsf{MIN}$ then
$s$ is neutral.  Therefore, $s \in I$ if $t \in I$ whenever $s
\comprel t$.  But since \textsf{MIN} is closed by
\refLemma{lem:MINSN}, each such $t$ is in \textsf{MIN}, so also in
$I$ by the induction hypothesis.
\qed
\end{proof}

Aside from minimality of \textsf{MIN}, \refLemma{lem:mincom}
actually provides $I$-computability of all variables, regardless of $I$.
We prove this alongside termination of all $I$-computable terms.

\begin{lemma}\label{lem:compresults}
Let $I$ be an RC-set.  The following statements hold for all types
$\atype$:
\begin{enumerate}
\item\label{lem:compresults:vars}
  all variables $x : \atype$ are $I$-computable
\item\label{lem:compresults:term}
  all $I$-computable terms $s : \atype$ are terminating (w.r.t.\ $\comprel$)
\end{enumerate}
\end{lemma}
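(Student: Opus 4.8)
The plan is to prove the two statements \emph{simultaneously} by induction on the type $\atype$, so that while establishing both claims at $\atype$ I may assume both at every structurally smaller type. The reason this arrangement works is that the two parts feed each other only across strictly smaller types: part~(\ref{lem:compresults:term}) at an arrow type will consume part~(\ref{lem:compresults:vars}) at the smaller argument type, and part~(\ref{lem:compresults:vars}) at an arbitrary type will consume part~(\ref{lem:compresults:term}) at its (smaller) argument types. Because every appeal is to a strictly smaller type, the induction is well founded and no circularity arises.

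For part~(\ref{lem:compresults:term}), termination of $I$-computable terms, I would split on the shape of $\atype$. If $\atype = \asort \in \Sorts$, then an $I$-computable $s : \asort$ lies in $I$ by definition, and every element of $I$ is terminating by the RC-set conditions of \refDef{def:RCset}. If $\atype = \btype \arrtype \ctype$, I would pick a variable $y : \btype$, which is $I$-computable by part~(\ref{lem:compresults:vars}) at the smaller type $\btype$; then $\app{s}{y} : \ctype$ is $I$-computable by the definition of computability at an arrow type, hence terminating by part~(\ref{lem:compresults:term}) at the smaller type $\ctype$. Any infinite $\comprel$-reduction of $s$ would, by monotonicity of $\comprel$ (\refDef{def:comprel}), lift to an infinite reduction of $\app{s}{y}$, contradicting its termination; so $s$ is terminating.

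For part~(\ref{lem:compresults:vars}), computability of every variable $x : \atype$, I would write $\atype = \atype_1 \arrtype \dots \arrtype \atype_\maa \arrtype \asort$ and unfold computability: $x$ is $I$-computable exactly when, for all $I$-computable $t_1 : \atype_1, \dots, t_\maa : \atype_\maa$, the fully applied term $\apps{x}{t_1}{t_\maa} : \asort$ lies in $I$ (the case $\maa = 0$, where this term is the bare variable $x$, is subsumed). The crucial observation is that each $\atype_i$ is structurally smaller than $\atype$, so part~(\ref{lem:compresults:term}) gives that every such $t_i$ is terminating. Since the head is a variable, the second clause of \refDef{def:comprel} forces every $\comprel$-reduct of $\apps{x}{t_1}{t_\maa}$ to reduce a single argument, so an infinite reduction of the whole would produce an infinite reduction of some $t_i$; hence $\apps{x}{t_1}{t_\maa}$ is terminating, and it is its own $\mathtt{head}\beta$-normal form with a variable head. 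Thus it lies in $\textsf{MIN}$ by \refLemma{lem:MINSN}, and $\textsf{MIN} \subseteq I$ by \refLemma{lem:mincom}, so it lies in $I$, as required.

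I expect the main obstacle to be organisational rather than conceptual: namely, verifying that each use of the induction hypothesis is at a strictly smaller type (the argument types $\atype_i$ and the result types $\btype, \ctype$ all qualify), and justifying the localisation-of-reductions step for variable-headed terms, which is exactly what the variable clause of \refDef{def:comprel} supplies. It is worth stressing that no separate closure-under-reduction or neutral-term (CR3-style) lemma is needed here, precisely because a variable-headed neutral term can never contract a head redex; termination of $\apps{x}{t_1}{t_\maa}$ therefore reduces cleanly to termination of its arguments, and \textsf{MIN} absorbs the remaining shape condition.
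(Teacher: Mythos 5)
Your proof is correct and follows essentially the same route as the paper's: a mutual induction on the structure of the type, establishing part~(1) by applying the variable to computable arguments (terminating by the induction hypothesis for part~(2) at smaller types) and concluding via $\textsf{MIN}$ and $\textsf{MIN} \subseteq I$ (\refLemma{lem:MINSN} and \refLemma{lem:mincom}), and part~(2) by applying $s$ to computable variables and invoking monotonicity of $\comprel$. The only cosmetic difference is that in part~(2) you peel off one argument at a time (reusing the termination IH at the result type), where the paper applies $s$ to variables of all argument types at once and concludes directly from the RC-set termination property.
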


\begin{proof}
By a mutual induction on the form of $\atype$, which we may safely
write $\atype_1 \arrtype \dots \arrtype \atype_\maa \arrtype \asort$
(with $\maa \geq 0$ and $\asort \in \Sorts$).

(\ref{lem:compresults:vars}) By definition of $I$-computability,
$x : \atype$ is computable if and only if $\apps{x}{s_1}{s_\maa} \in I$
for all $I$-computable terms $s_1 : \atype_1,\dots,s_\maa : \atype_\maa$
in $\Terms(\F,\V)$.
However, as all $\atype_i$ are smaller types, we know that such terms
$s_i$ are terminating, so \refLemma{lem:mincom} gives the required
result.

(\ref{lem:compresults:term}) Let $x_1 : \atype_1,\dots,x_\maa :
\atype_m$ be variables; by the induction hypothesis they are
computable, and therefore $\apps{s}{x_1}{x_\maa}$ is in $I$ and
therefore terminating. Then the head, $s$, cannot itself be
non-terminating (by monotonicity of $\comprel$).
\qed
\end{proof}

While \textsf{SN} is indisputably the easiest RC-set to define and
work with, it will be beneficial for the strength of the method to
consider a set strictly between \textsf{MIN} and \textsf{SN}.  To this
end, we assume given an ordering on types, and a function mapping each
function symbol $\afun$ to a set $\Acc(\afun)$ of arguments positions.
Here, we deviate from the text by not fixing $\Acc$; again, this
generality is not needed for the current paper, but is done with an
eye on future extensions.

\begin{definition}[Generalisation of Def.~\ref{def:accArgs}]
Assume given a quasi-ordering $\greqsort$ on $\Sorts$ whose strict part
$\grsort\ :=\ \greqsort \setminus \leqsort$ is well-founded.  Let $\eqsort$
denote the corresponding equivalence relation $\eqsort\ :=\ \greqsort \cap
\leqsort$.

For a type $\atype\ \equiv\ \atype_1 \arrtype \dots \arrtype \atype_\maa
\arrtype \bsort$ (with $\bsort \in \Sorts$) and sort $\asort$, we
write $\asort \gracsortup \atype$ if $\asort \greqsort \bsort$ and
$\asort \gracsortdown \atype_i$ for each $i$, and we
write $\asort
\gracsortdown \atype$ if $\asort \grsort \bsort$ and $\asort
\gracsortup \atype_i$ for each $i$.

For $\afun : \atype_1 \arrtype \dots \arrtype \atype_\maa \arrtype
\asort$ \textcolor{red}{\emph{we assume given a set $\Acc(\afun) \subseteq \{ i \mid 1
\leq i \leq \maa \wedge \asort \gracsortup \atype_i \}$.}}
For $x : \atype_1 \arrtype \dots \arrtype \atype_\maa \arrtype \asort
\in \V$, we write $\Acc(x) = \{ i \mid 1 \leq i \leq \maa \wedge
\atype_i = \btype_1 \arrtype \dots \arrtype \btype_n \arrtype \bsort
\wedge \asort \greqsort \bsort \}$.
We
write $s \gracc t$ if either $s = t$, or $s = \abs{x}{s'}$ and $s'
\gracc t$, or $s = \apps{a}{s_1}{s_n}$ with $a \in \F \cup \V$ and $s_i
\gracc t$ for some $i \in \Acc(a)$
with $a \notin \FV(s_i)$.
\end{definition}

\emph{Remark:} This definition of the accessibility relations deviates
from, e.g., \cite{bla:jou:rub:15} by using a pair of relations ($\gracsortup$ and
$\gracsortdown$) rather than positive and negative positions.  This is not
an important difference, but simply a matter of personal preference;
using a pair of relations avoids the need to discuss type positions in
the text, allowing for a shorter presentation.  It is also not common
to allow a choice in
$\Acc(\afun)$, but rather to fix $\Acc(\afun) = \{ \atype_i \mid 1 \leq i
\leq \maa \wedge \asort \gracsortup \atype \}$ for \emph{some} symbols
(for instance constructors) and $\Acc(\afun) = \emptyset$ for the
rest.  We elected to leave the choice open for greater generality.

\medskip
The interplay of the positive and negative relations $\gracsortup$ and
$\gracsortdown$ leads to an important result on RC-sets.

\begin{lemma}\label{lem:gracsortinterplay}
Fix a sort $\asort \in \Sorts$.
Suppose $I,J$ are RC-sets such that $I(\bsort) = J(\bsort)$ for all
$\bsort$ with $\asort \grsort \bsort$ and $I(\bsort) \subseteq
J(\bsort)$ if $\asort \eqsort \bsort$.  Let $s : \atype$.  Then we have:
\begin{itemize}
\item If $\asort \gracsortup \atype$, then if $s$ is $I$-computable
  also $s$ is $J$-computable.
\item If $\asort \gracsortdown \atype$, then if $s$ is $J$-computable
  also $s$ is $I$-computable.
\end{itemize}
\end{lemma}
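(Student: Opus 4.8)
The plan is to prove both statements simultaneously by induction on the structure of the type $\atype$, exploiting the fact that $\gracsortup$ and $\gracsortdown$ are defined by mutual recursion with a polarity flip on the argument types. Throughout, the sort $\asort$ and the RC-sets $I,J$ stay fixed; only the type varies, so the induction is well-founded on type size alone, and at each arrow the direction of the implication we are proving will swap in step with the swap between $\gracsortup$ and $\gracsortdown$.

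For the base case, $\atype$ is a sort $\bsort$. If $\asort \gracsortup \bsort$ then $\asort \greqsort \bsort$, i.e.\ either $\asort \grsort \bsort$ or $\asort \eqsort \bsort$; in the first case the hypothesis gives $I(\bsort) = J(\bsort)$ and in the second $I(\bsort) \subseteq J(\bsort)$, so an $I$-computable $s : \bsort$ (which is just a member of $I(\bsort)$) is also $J$-computable. If instead $\asort \gracsortdown \bsort$ then $\asort \grsort \bsort$, whence $J(\bsort) = I(\bsort)$ and a $J$-computable $s$ is $I$-computable. I would stress here how the asymmetry in the hypothesis --- equality for strictly smaller sorts, mere inclusion for equivalent ones --- is exactly what is needed to make both base cases go through.

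For the inductive step I would write $\atype = \atype_1 \arrtype \btype$ with $\btype = \atype_2 \arrtype \dots \arrtype \atype_\maa \arrtype \bsort$, and recall that a term of arrow type is computable for a given RC-set iff applying it to any computable argument yields a computable term. Suppose first $\asort \gracsortup \atype$; by definition this gives $\asort \gracsortdown \atype_1$ and $\asort \gracsortup \btype$. Given an $I$-computable $s$, to show it is $J$-computable I take any $J$-computable $t : \atype_1$; by the induction hypothesis for $\atype_1$ (the negative direction) $t$ is $I$-computable, so $s\,t$ is $I$-computable, and then by the induction hypothesis for $\btype$ (the positive direction) $s\,t$ is $J$-computable. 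The case $\asort \gracsortdown \atype$ is symmetric: here $\asort \gracsortup \atype_1$ and $\asort \gracsortdown \btype$, and starting from a $J$-computable $s$ one converts an $I$-computable argument $t$ into a $J$-computable one (positive direction on $\atype_1$), obtains that $s\,t$ is $J$-computable, and finally concludes $s\,t$ is $I$-computable (negative direction on $\btype$).

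I expect no serious obstacle: the only point requiring care is keeping the two directions of the implication aligned with the polarity flip at each arrow, and checking that $\asort \gracsortup \btype$ (resp.\ $\asort \gracsortdown \btype$) genuinely follows from $\asort \gracsortup \atype$ (resp.\ $\asort \gracsortdown \atype$), which it does since dropping the first argument only discards conditions. As both $\atype_1$ and $\btype$ are strictly smaller than $\atype$, the mutual induction is well-founded, and the lemma follows.
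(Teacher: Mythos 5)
Your proof is correct and takes essentially the same approach as the paper: a simultaneous induction on the structure of the type, with the two implications swapping direction in step with the polarity flip between $\gracsortup$ and $\gracsortdown$, and with the base case resting on exactly the asymmetry (equality below $\asort$, inclusion at $\asort$) that you highlight. The only cosmetic difference is that the paper writes the type fully as $\atype_1 \arrtype \dots \arrtype \atype_\maa \arrtype \bsort$ and applies the induction hypothesis to all argument types at once, whereas you peel off one argument at a time and also invoke the hypothesis on the residual type $\atype_2 \arrtype \dots \arrtype \atype_\maa \arrtype \bsort$; both amount to the same argument.
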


\begin{proof}
We prove both statements together by a shared induction on the form of
$\atype$.  We can always
write $\atype\ \equiv\ \atype_1 \arrtype
\dots \arrtype \atype_\maa \arrtype \bsort$ with $\bsort \in \Sorts$.

First suppose $\asort \gracsortup \atype$; then $\asort \greqsort
\bsort$ -- so $I(\bsort) \subseteq J(\bsort)$ -- and each $\asort
\gracsortdown \atype_i$.  Assume that $s$ is $I$-computable; we must show
that it is $J$-computable, so that for all $J$-computable $t_1 :
\atype_1,\dots,t_\maa : \atype_\maa$ we have:
$\apps{s}{t_1}{t_\maa} \in J$.  However, by the induction hypothesis each
$t_i$ is also $I$-computable, so $\apps{s}{t_1}{t_\maa} \in I(\bsort)
\subseteq J(\bsort)$ by the assumption.

For the second statement, suppose $\asort \gracsortdown \atype$; then
$\asort \grsort \bsort$, so $I(\bsort) = J(\bsort)$.  Assume that $s$
is $J$-computable; $I$-computability follows if $\apps{s}{t_1}{t_\maa}
\in I(\bsort) = J(\bsort)$ whenever $t_1,\dots,t_\maa$ are $I$-computable.
By the induction hypothesis they are $J$-computable, so this holds
by assumption.
\qed
\end{proof}

The RC-set $C$ whose existence is asserted below offers computability
with a notion of accessibility.  It is worth noting that this is
\emph{not} a standard definition, but is designed to provide an
additional relationship $\accreduce{I}$ that is terminating on computable
terms.  This re\-lation will be useful in termination proofs using static
DPs.

\begin{theorem}\label{thm:defC2}
Let $\accreduce{I}$ be the relation on base-type terms where $
\apps{\afun}{s_1}{s_\maa} \accreduce{I} \apps{s_i}{t_1}{t_n}$ whenever $
i \in \Acc(\afun)$ and $s_i : \atype_1 \arrtype \dots \arrtype \atype_n
\arrtype \asort $ and  each $t_j$ is $I$-computable.

There exists an RC-set $C$ such that $C = \{ s \in \Terms(\F,\V) \mid$
$s$ has base type $\wedge\ s$ is terminating under
$\comprel \cup \accreduce{C}$ and if $s \comprel^*
\apps{\afun}{s_1}{s_\maa}$ then $s_i$ is $C$-computable for
all $i \in \Acc(\afun) \}$.
\end{theorem}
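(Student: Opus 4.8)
The plan is to construct $C$ exactly as the proof sketch after \refThm{thm:defC} indicates, but organised so that the self-reference of the defining equation (through both $\accreduce{C}$ and the clause ``$s_i$ is $C$-computable'') is resolved by a well-founded stratification over sorts. Concretely, I would first partition $\Sorts$ into the equivalence classes of $\eqsort$ and order these classes by the well-founded relation $\grsort$. I then define the restriction of $C$ to each class by recursion on this order: when treating a class $E$, I may assume that $C(\bsort)$ has already been fixed for every $\bsort$ with $\asort \grsort \bsort$ (for $\asort \in E$), and I construct $C$ on $E$ as a fixpoint of a monotone operator, via Knaster--Tarski on the complete lattice of all RC-families $(I(\asort))_{\asort \in E}$ (with the strictly-smaller sorts held fixed to $C$) ordered by pointwise inclusion; this is a complete lattice because arbitrary intersections of RC-sets are again RC-sets, as one checks directly against \refDef{def:RCset}.

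For a candidate family $I$ on $E$, let $I^{+}$ denote the RC-set obtained by combining $I$ on the sorts of $E$ with the already-fixed $C$ on the sorts strictly below $E$; this $I^{+}$ induces a notion of $I^{+}$-computability and a relation $\accreduce{I^{+}}$. I define $\Phi_E(I)(\asort)$ to be the set of terms $s : \asort$ that are terminating under $\comprel \cup \accreduce{I^{+}}$ and satisfy: whenever $s \comprel^* \apps{\afun}{s_1}{s_\maa}$, then $s_i$ is $I^{+}$-computable for every $i \in \Acc(\afun)$. The observation that makes this well-founded is that, by \refDef{def:accArgs}, any $\accreduce{I^{+}}$-step out of a term of sort $\asort \in E$ has the form $\apps{\afun}{s_1}{s_\maa} \accreduce{I^{+}} \apps{s_i}{t_1}{t_n}$ with $i \in \Acc(\afun)$, so $\asort \gracsortdown \atype_j$ for the argument types $\atype_j$ of the $t_j$; hence each $t_j$ lives at a type whose base sort is $\grsort$-strictly below $\asort$, where $C$ is already fixed. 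Thus $\accreduce{I^{+}}$, as used in the termination clause for $E$-sorted terms, does not depend on the varied part $I$ at all, and the only genuine $I$-dependence sits in the clause ``$s_i$ is $I^{+}$-computable'', where $s_i : \atype_i$ with $\asort \gracsortup \atype_i$. Monotonicity of $\Phi_E$ then follows directly from \refLemma{lem:gracsortinterplay}: enlarging $I$ at the positive sorts of $E$, with the strictly smaller sorts held fixed, can only turn more such $s_i$ into computable terms, so $\Phi_E(I) \subseteq \Phi_E(J)$ whenever $I \subseteq J$.

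It remains to check that the fixpoint obtained this way is genuinely an RC-set and satisfies the stated equation. I would verify the three RC-set conditions of \refDef{def:RCset} for $\Phi_E(I)$ directly from the construction: termination under $\comprel \cup \accreduce{C}$ is built into membership, while closure under $\comprel$ and the neutral-term clause are routine once one uses that $\comprel$ preserves the head decomposition and interacts well with head-$\beta$ (the properties assumed in \refDef{def:comprel}), exactly as in \refLemma{lem:MINSN}. This shows $\Phi_E$ maps RC-families to RC-families, so \refLemma{lem:gracsortinterplay} applies legitimately throughout the recursion; assembling the per-class least fixpoints yields a single set $C$, and unfolding $\Phi_E$ at the fixpoint gives precisely the asserted characterisation. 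The main obstacle I anticipate is not any single calculation but the bookkeeping that keeps the circularity benign: one must make the stratification argument airtight — that \emph{every} negative (i.e.\ $\gracsortdown$) occurrence of $C$-computability, and every argument consumed by an $\accreduce{C}$-step, genuinely drops to a sort strictly below the current $\eqsort$-class — since this is exactly what separates the fixed, harmless anti-monotone dependencies from the monotone positive ones, and thereby licenses the combined use of Knaster--Tarski and \refLemma{lem:gracsortinterplay}.
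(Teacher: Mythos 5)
Your construction follows the same route as the paper's proof: well-founded recursion over the $\eqsort$-classes, a complete lattice of RC-sets agreeing with the already-built part of $C$ on strictly smaller sorts, a monotone operator, Knaster--Tarski, and \refLemma{lem:gracsortinterplay}. But there is a genuine gap, and it sits exactly at the point you yourself single out as the main obstacle: the claim that $\accreduce{I^+}$, as it appears in the termination clause for $E$-sorted terms, ``does not depend on the varied part $I$ at all'', because everything consumed by a step drops to a sort strictly below the class. This claim is false. If $\apps{\afun}{s_1}{s_\maa} \accreduce{I^+} \apps{s_i}{t_1}{t_n}$ with the source term of sort $\asort \in E$, then indeed $\asort \gracsortdown \atype_j$ for the types $\atype_j$ of the $t_j$, so the \emph{output} sort of each $\atype_j$ lies strictly below $\asort$; but $I^+$-computability of $t_j$ at $\atype_j$ is not membership in the already-fixed part of $C$. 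By \refDef{def:accArgs}, $\asort \gracsortdown \atype_j$ forces $\asort \gracsortup \ctype$ for each \emph{argument} type $\ctype$ of $\atype_j$, and $\gracsortup$ allows the output sort of $\ctype$ to be $\eqsort \asort$. Concretely, take $\asort \grsort \bsort$ and $\afun : ((\asort \arrtype \bsort) \arrtype \asort) \arrtype \asort$ with $1 \in \Acc(\afun)$ (permitted, since $\asort \gracsortup ((\asort \arrtype \bsort) \arrtype \asort)$). A step $\afun\ s_1 \accreduce{I^+} s_1\ t_1$ requires $t_1 : \asort \arrtype \bsort$ to be $I^+$-computable, i.e.\ $t_1\ u \in C(\bsort)$ for \emph{every} $u \in I(\asort)$ --- a condition quantifying over the varied part $I(\asort)$ itself. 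Hence the step relation genuinely changes as $I$ grows, and your monotonicity argument, which treats only the positive clause and declares the termination clause constant, does not go through as written.

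The repair is the one the paper carries out: the dependence of $\accreduce{I^+}$ on $I$ is \emph{antitone}, not absent. For $I \subseteq J$ in the lattice, the $\gracsortdown$ half of \refLemma{lem:gracsortinterplay} yields that every $J^+$-computable $t_j$ at a type $\atype_j$ with $\asort \gracsortdown \atype_j$ is also $I^+$-computable, so $\accreduce{J^+} \mathop{\subseteq} \accreduce{I^+}$, and termination under $\comprel \cup \accreduce{I^+}$ implies termination under $\comprel \cup \accreduce{J^+}$. Monotonicity of the termination clause thus comes from the step relation \emph{shrinking} as $I$ grows; this is also exactly why \refLemma{lem:gracsortinterplay} is stated as two mutually recursive implications: the negative dependencies cannot be stratified away, they must be shown to point in the harmless direction. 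With this one correction (and an explicit top element for your lattice, which the paper builds from \textsf{SN}, so that Knaster--Tarski applies), your construction coincides with the paper's proof.
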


\begin{proof}
We will define, by well-founded induction on $\asort$ using $\greqsort$,
a set $A_\asort$ of terms as follows.

Assume $A_\bsort$ has already been defined for all $\bsort$ with
$\asort \grsort \bsort$, and let $X_\asort$ be the set of RC-sets $I$
such that $I(\bsort) = A_\bsort$ whenever $\asort \grsort \bsort$.
We observe that $X_\asort$ is a complete lattice with respect to
$\subseteq$: defining the bottom element $\sqcup \emptyset := \bigcup \{
A_\bsort \mid \asort \grsort \bsort \} \cup \textsf{MIN}$ and the
top element $\sqcap \emptyset := \bigcup \{ A_\bsort \mid \asort
\grsort \bsort \} \cup \bigcup \{$ \textsf{SN}$(\bsort) \mid \neg
(\asort \grsort \bsort) \}$, and letting
$\sqcup Z := \bigcup Z,\ \sqcap Z := \bigcap Z$ for non-empty $Z$, it
is easily checked that $\sqcap$ and $\sqcup$ give a greatest lower and
least upper bound within $X_\asort$ respectively.
Now for an RC-set $I \in X_\asort$, we let:
\[
\begin{array}{rcl}
F_\asort(I) & = & \{ s \in I \mid s : \bsort \not\eqsort \asort \} \\
& \cup & \{s \in \Terms(\F,\V) \mid s : \bsort \eqsort \asort
   \wedge s\ \text{is terminating} \\
& & \ \ \text{under}\ \comprel \cup \accreduce{I}
   \wedge\ \text{if}\ s \comprel^* \apps{\afun}{s_1}{s_\maa}\ 
   \text{for a symbol} \\
& & \ \ \afun \in \F\ \text{then}\ 
   \forall i \in \Acc(\afun):\, s_i\ \text{is}\ I\text{-computable}]
   \}
\end{array}
\]

Clearly, $F_\asort$ maps elements of $X_\asort$ to $X_\asort$: terms of
type $\bsort \not\eqsort \asort$ are left alone, and $F_\asort(I)$
satisfies the properties to be an RC-set.  Moreover, $F_\asort$ is
monotone.  To see
this, let $I,J \in X_\asort$ such that $I \subseteq J$; we must see that
$F_\asort(I) \subseteq F_\asort(J)$.  To this end, let $s \in
F_\asort(I)$; we will see that also $s \in F_\asort(J)$.  This is
immediate if $s : \bsort \not\eqsort \asort$, as membership in $X_\asort$
guarantees that $F_\asort(I)(\bsort) = I(\bsort) \subseteq J(\bsort) =
F_\asort(J)(\bsort)$.  So assume $s : \bsort \eqsort \asort$.  We must see
two things:
\begin{itemize}
\item $s$ is terminating under $\comprel \cup \accreduce{J}$.  We
  show that $\comprel \cup \accreduce{J}\ \subseteq\ 
  \comprel \cup \accreduce{I}$; as $s$ is terminating in the
  latter, the requirement follows.  Clearly $\comprel\ \subseteq\ 
  \comprel \cup \accreduce{I}$, so assume $s \accreduce{J} s'$.
  Then $s = \apps{\identifier{f}}{t_1}{t_\maa}$ and $s' =
  \apps{t_i}{u_1}{u_n}$ for $i \in \Acc(\identifier{f})$ and
  $J$-computable $u_1,\dots,u_n$.
  We can write $t_i : \atype_1 \arrtype \dots \arrtype \atype_n \arrtype
  \bsort$ and since $i \in \Acc(\identifier{f})$ we have $\asort
  \greqsort \bsort$ and $\asort \gracsortdown \atype_j$ for each $j$.
  By \refLemma{lem:gracsortinterplay} then each $u_j$ is also
  $I$-computable, so also $s \accreduce{I} s_1$.
\item If $s \comprel^* \apps{\identifier{f}}{s_1}{s_\maa}$ for some
  symbol $\identifier{f}$ then for all $i \in \Acc(\identifier{f})$:
  $s_i$ is $J$-computable.  But this is obvious: as $s \in F_\asort(I)$,
  we know that such $s_i$ are $I$-computable, and since $\asort
  \gracsortup \atype_i$ for $i \in \Acc(\identifier{f})$,
  \refLemma{lem:gracsortinterplay} provides $J$-computability.
\end{itemize}
Thus, $F$ is a monotone function on a complete lattice; by
Tarski's fixpoint theorem there is a fixpoint, so an RC-set $I$ such
that for all sorts $\bsort$:
\begin{itemize}
\item if $\asort \grsort \bsort$ then $I(\bsort) = A_\bsort$;
\item if $\asort \eqsort \bsort$ then $I(\bsort) = \{ s \in
  \Terms(\F,\V) \mid s : \bsort \wedge s$ is terminating under
  $\comprel \cup \accreduce{I} \wedge$ if $s \comprel^*
  \apps{\identifier{f}}{s_1}{s_\maa}$ for a symbol $\identifier{f}$ then
  $\forall i \in \Acc(\identifier{f}):\, s_i$ is $I$-computable$ \}$
\end{itemize}
We define $A_\bsort := I(\bsort)$ for all $\bsort \eqsort \asort$.

Now we let $C := \bigcup_{\asort \in \Sorts} A_\asort$.  Clearly, $C$
satisfies the given requirement.
\qed
\end{proof}

\refThm{thm:defC2} easily gives the proof of \refThm{thm:defC}
in the text:

\begin{proof}[Proof of \refThm{thm:defC}]
\refThm{thm:defC} follows by taking $\arr{\Rules}$ for $\comprel$
(which satisfies the requirements by \refLemma{lem:rewriterelationsuffices}) 
and taking for each
$\Acc(\afun)$ the maximum set $\{ i \mid 1 \leq i \leq \maa \wedge
 \asort \gracsortup \atype_i \}$.
\qed
\end{proof}

\subsection{Additional properties of computable terms}\label{subsec:compprop}

For reasoning about computable terms (as we will do when defining
static DPs and reasoning about computable chains), there are a number
of properties besides those in \refLemma{lem:compresults} that will
prove very useful to have.  In the following, we fix the RC-set $C$ as
obtained from \refThm{thm:defC2}.

\begin{lemma}\label{lem:preservecomp}
If $s$ is $C$-computable and $s \comprel t$, then $t$ is
also
$C$-computable.
\end{lemma}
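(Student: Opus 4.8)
The plan is to prove the statement by induction on the type of $s$. Since $\comprel$ relates only terms of the same type, $t$ has the same type as $s$, and this type structure is exactly what drives the induction. The two cases of the induction correspond to the two clauses in the definition of $C$-computability.

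First I would handle the base case, where $s : \asort$ for some $\asort \in \Sorts$. Here $C$-computability of $s$ means simply $s \in C$. Since $C$ is an RC-set, it is by definition closed under $\comprel$, so from $s \comprel t$ we immediately obtain $t \in C$; as $t$ also has base type, this says precisely that $t$ is $C$-computable.

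Next, for the arrow case $s : \atype \arrtype \btype$, I would unfold the definition of computability: to show that $t$ is $C$-computable it suffices to show that $\app{t}{u}$ is $C$-computable for every $C$-computable $u : \atype$. So fix such a $u$. Since $s$ is $C$-computable, $\app{s}{u}$ is $C$-computable, and $\app{s}{u} : \btype$. Monotonicity of $\comprel$ (one of the assumed properties in \refDef{def:comprel}) turns $s \comprel t$ into $\app{s}{u} \comprel \app{t}{u}$. As $\btype$ is a strictly smaller type than $\atype \arrtype \btype$, the induction hypothesis applies to the pair $(\app{s}{u},\app{t}{u})$, yielding that $\app{t}{u}$ is $C$-computable. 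Since $u$ was an arbitrary $C$-computable term of type $\atype$, this establishes $C$-computability of $t$.

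I do not anticipate a genuine obstacle here: the argument is a direct structural induction on types, and the only ingredient beyond the definitions of RC-set and of $C$-computability is the monotonicity of $\comprel$, which is available by hypothesis. The one point requiring care is to invoke the induction hypothesis at the right place --- namely on $\app{s}{u} \comprel \app{t}{u}$ at the reduced type $\btype$ --- rather than attempting to relate $s$ and $t$ directly.
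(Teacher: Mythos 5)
Your proof is correct and follows exactly the paper's argument: induction on the type of $s$, using closure of the RC-set $C$ under $\comprel$ in the base case, and monotonicity of $\comprel$ together with the induction hypothesis at the smaller type $\btype$ in the arrow case.
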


(This actually holds for any RC-set, but we will only use
it for $C$.)

\begin{proof}
By induction on the type of $s$.  If $s$ has base type, then
$C$-computability implies that $s \in C$, and because $C$
is an RC-set all reducts of $s$ are also in $C$.
Otherwise, $s : \atype \arrtype \btype$ and computability of $s$ implies
computability of $s\ u$ for all computable $u : \atype$.  By the
induction hypothesis, the fact that $s\ u \comprel t\ u$ by
monotonicity of $\comprel$ implies that $t\ u$ is computable for
all computable $u$, and therefore by definition $t$ is computable.
\qed
\end{proof}

Thus, computability is preserved under $\comprel$; the following result
shows that it is also preserved under $\accreduce{C}$.

\begin{lemma}\label{lem:preservecompaccreduce}
If $s$ is $C$-computable and $s \accreduce{C} t$, then $t$ is also
$C$-computable.
\end{lemma}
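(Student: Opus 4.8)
The plan is to read the conclusion off directly from the defining fixpoint property of $C$ established in \refThm{thm:defC2}, combined with the fact that $C$-computability is preserved under application to $C$-computable arguments. First I would unfold the hypothesis $s \accreduce{C} t$: by the definition of $\accreduce{C}$ this means that $s = \apps{\afun}{s_1}{s_\maa}$ has base type, that $t = \apps{s_i}{t_1}{t_n}$ for some $i \in \Acc(\afun)$, that the argument $s_i$ has a type of the form $\atype_1 \arrtype \dots \arrtype \atype_n \arrtype \asort$, and that each $t_j : \atype_j$ is $C$-computable.

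The key observation is that the defining clause of $C$ applies to $s$ itself. Since $s$ is $C$-computable and has base type, the definition of computability gives $s \in C$. By \refThm{thm:defC2}, membership $s \in C$ guarantees that whenever $s \comprel^* \apps{\afun}{s_1}{s_\maa}$, each $s_j$ with $j \in \Acc(\afun)$ is $C$-computable. Instantiating this with the length-zero reduction $s \comprel^* s$ (so that the displayed application is $s$ itself) yields in particular that $s_i$ is $C$-computable for our chosen index $i \in \Acc(\afun)$.

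It then remains only to observe that $t = \apps{s_i}{t_1}{t_n}$ is $C$-computable. This follows by a trivial induction on $n$ from the definition of $C$-computability: a $C$-computable term $u : \atype \arrtype \btype$ applied to a $C$-computable argument $v : \atype$ yields a $C$-computable term $u\ v$. Applying this $n$ times, starting from the $C$-computable $s_i$ and successively feeding in the $C$-computable arguments $t_1, \dots, t_n$, establishes $C$-computability of $t$.

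I do not expect a serious obstacle here: the statement is essentially immediate once the defining clause of $C$ is instantiated at the reflexive reduction $s \comprel^* s$. The only points needing a modicum of care are checking that the types line up — that the $\atype_j$ appearing in the type of $s_i$ are exactly the types of the $t_j$, which is built into the definition of $\accreduce{C}$ — and that the repeated-application step is legitimate, which is immediate from the inductive-over-types definition of $C$-computability.
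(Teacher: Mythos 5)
Your proof is correct and follows essentially the same route as the paper's: unfold the definition of $\accreduce{C}$, use the fixpoint property of $C$ from \refThm{thm:defC2} (instantiated at the zero-step reduction $s \comprel^* s$) to conclude that $s_i$ is $C$-computable, and then apply the arrow-type clause of $C$-computability $n$ times to the arguments $t_1,\dots,t_n$. The paper's proof is merely a terser statement of exactly these steps.
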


\begin{proof}
If $s \accreduce{C} t$, then both terms have base type, so
$C$-compu\-tability is simply membership in $C$.  We have
$s = \apps{\afun}{s_1}{s_\maa}$ and $t = \apps{s_i}{t_1}{t_n}$
with each $t_j$ $C$-computable.  Since, by definition of $C$, also
$s_i$ is $C$-computable, $C$-computability of $t$ immediately follows.
\qed
\end{proof}

Finally, we will see that $C$-computability is also preserved under
$\gracc$.  For this, we first make a more general statement,
which will also handle variables below binders (which are freed in
subterms).

\begin{lemma}\label{lem:preservecompacchelper}
Let $s : \atype_1 \arrtype \dots \arrtype \atype_\maa \arrtype \asort$
and $t : \btype_1 \arrtype \dots \arrtype \btype_n \arrtype \bsort$ be
meta-terms, such that $s \gracc t$.
Let $\gamma$ be a substitution with $\FMV(s) \subseteq \domain(\gamma)
\subseteq \M$.

Let $u_1 : \btype_1,\dots,u_n : \btype_n$ be $C$-computable terms, and
$\delta$ a substitution with $\domain(\delta) \subseteq \V$ such that
each $\delta(x)$ is $C$-computable, and for $t' := \apps{(t(\gamma \cup
\delta))}{u_1}{u_n}$ there is no overlap between $\FV(t')$ and the
variables bound in $s$.

Then there exist a $C$-computable substitution $\xi$ with
$\domain(\xi) \subseteq \V$ and $C$-computable terms $v_1 : \atype_1,
\dots,v_\maa : \atype_\maa$ such that we have $\apps{(s(\gamma \cup
\xi))}{v_1}{v_\maa}\ (\accreduce{C} \cup \arr{\mathtt{head}\beta})^*\ 
t'$.
\end{lemma}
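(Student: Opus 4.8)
The plan is to argue by induction on the derivation of $s \gracc t$ (\refDef{def:accArgs}), i.e.\ a structural induction following its three clauses; in the two recursive clauses the term shrinks, so the induction is well founded. Throughout I pick the bound variables of all terms I build, and the variables in $\domain(\xi)$, fresh for $t'$ and the $v_j$, using the hypothesis that $\FV(t')$ avoids the variables bound in $s$ to rule out capture, and I suppress these routine $\alpha$-renamings. In the \emph{base clause} $s = t$ (so $\maa = n$ and the types agree) I take $v_j := u_j$ and let $\xi$ coincide with $\delta$ on $\domain(\delta)$ and fix every other free variable of $s$; then $\apps{(s(\gamma \cup \xi))}{v_1}{v_\maa}$ is syntactically $t'$ and the empty reduction suffices, while $\xi$ is $C$-computable since the values of $\delta$ are (by hypothesis) and variables are ($\refLemma{lem:compresults}$). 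In the \emph{abstraction clause} $s = \abs{x}{s'}$ with $s' \gracc t$ I invoke the induction hypothesis on $s'$ with the same $\gamma,\delta,u_1,\dots,u_n$ (its freshness requirement is inherited, as the binders of $s'$ are among those of $s$), obtaining a computable $\xi'$ on $\FV(s') = \FV(s) \cup \{x\}$ and computable $v_2,\dots,v_\maa$ with $\apps{(s'(\gamma \cup \xi'))}{v_2}{v_\maa}\ (\accreduce{C} \cup \arr{\mathtt{head}\beta})^*\ t'$. Setting $v_1 := \xi'(x)$ and letting $\xi$ be $\xi'$ restricted to $\FV(s)$, a single $\arr{\mathtt{head}\beta}$-step from $\apps{(\abs{x}{s'(\gamma \cup \xi)})}{v_1} v_2 \cdots v_\maa$ performs $x := v_1$; since $\xi$ omits $x$, the result is exactly $s'(\gamma \cup \xi')$ applied to $v_2,\dots,v_\maa$, and the induction hypothesis closes the reduction.

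The \emph{application clause} $s = \apps{a}{s_1}{s_q}$, with $a \in \F \cup \V$, $i \in \Acc(a)$, $a \notin \FV(s_i)$ and $s_i \gracc t$, is the crux. I first apply the induction hypothesis to $s_i$ (again with $\gamma,\delta,u_1,\dots,u_n$), getting a computable $\xi_i$ and computable $w_1,\dots,w_p$ (with $p$ the arity of $s_i$) such that $\apps{(s_i(\gamma \cup \xi_i))}{w_1}{w_p}\ (\accreduce{C} \cup \arr{\mathtt{head}\beta})^*\ t'$. I let $\xi$ equal $\xi_i$ on $\FV(s_i)$ and fix the other free variables of $s$, except for $a$ (handled below); since $a \notin \FV(s_i)$, this choice does not disturb $s_i(\gamma \cup \xi) = s_i(\gamma \cup \xi_i)$. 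The top-level $v_1,\dots,v_\maa$ are fresh computable variables (they are discarded below). If $a \in \F$, then $i \in \Acc(a)$ is the function-symbol condition $\asort \gracsortup (\text{type of } s_i)$, which is exactly the side condition of $\accreduce{C}$; hence the single step $a\ s_1(\gamma \cup \xi) \cdots s_q(\gamma \cup \xi)\, v_1 \cdots v_\maa \accreduce{C} \apps{(s_i(\gamma \cup \xi_i))}{w_1}{w_p}$ (legal as the $w_j$ are computable) followed by the induction hypothesis yields the claim.

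The delicate sub-case is $a \in \V$. Now $i \in \Acc(a)$ only gives $\asort \greqsort \bsort'$, where $\bsort'$ is the output sort of $s_i$, and \emph{not} the positivity $\asort \gracsortup (\text{type of } s_i)$; so $\accreduce{C}$ cannot extract the (possibly functional) argument $s_i$, while $\arr{\mathtt{head}\beta}$ cannot lower the sort from $\asort$ to $\bsort'$. The idea is to instantiate $a$ by a term that first applies $s_i$ to its computable arguments -- reaching base type $\bsort'$ -- and then re-exposes the result under a \emph{fresh} symbol whose accessibility \emph{is} available. Concretely, choosing a fresh rule-free $\cfun : \bsort' \arrtype \asort$ (available since $\F$ has symbols of every type), I set
\[
\xi(a) \ :=\ \abs{y_1 \cdots y_q\, x_1 \cdots x_\maa}{\cfun\ (\apps{y_i}{w_1}{w_p})}.
\]
This is well typed (the body has type $\bsort'$, so $\cfun\ (\cdots) : \asort$) and $C$-computable: applied to computable arguments it $\arr{\mathtt{head}\beta}$-reduces to $\cfun\ r$ with $r$ a computable base-type term, and $\cfun\ r \in C$ because $\cfun$ has no rules (using the closure clauses of \refDef{def:RCset} and \refLemma{lem:preservecomp}). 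Now $\apps{(s(\gamma \cup \xi))}{v_1}{v_\maa} \arrr{\mathtt{head}\beta} \cfun\ (\apps{(s_i(\gamma \cup \xi_i))}{w_1}{w_p})$, and since $1 \in \Acc(\cfun)$ holds precisely because $\asort \greqsort \bsort'$, one $\accreduce{C}$-step deletes $\cfun$ to give $\apps{(s_i(\gamma \cup \xi_i))}{w_1}{w_p}$, from which the induction hypothesis reaches $t'$.

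I expect this variable sub-case to be the main obstacle: it is exactly where the weaker, output-sort-only accessibility of variables collides with the sort-preservation of $\arr{\mathtt{head}\beta}$ and the restriction of $\accreduce{C}$ to function-symbol heads. The one non-routine device is the detour through a fresh $\cfun : \bsort' \arrtype \asort$, consuming $s_i$ down to base type before re-exposing it. Everything else is bookkeeping: verifying that each constructed $\xi$ is $C$-computable from the RC-set closure properties (\refDef{def:RCset}) and discharging the capture side conditions supplied by the freshness hypothesis on $\FV(t')$.
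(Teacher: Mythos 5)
Your proof is correct and follows essentially the same route as the paper's: induction on the derivation of $s \gracc t$, with the abstraction case handled by one $\arr{\mathtt{head}\beta}$-step ($v_1 := \xi'(x)$), the function-symbol case by one $\accreduce{C}$-step, and the variable-head case by instantiating the head variable with an abstraction whose body wraps a fresh rule-free constructor $\cfun : \bsort' \arrtype \asort$ with $\Acc(\cfun) = \{1\}$ --- exactly the paper's device. If anything, your write-up is slightly more explicit than the paper's in spelling out the final $\accreduce{C}$-step that strips $\cfun$ and in justifying why $1 \in \Acc(\cfun)$ follows from $\asort \greqsort \bsort'$.
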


\begin{proof}
We prove the lemma by induction on the derivation of $s \gracc t$; in
this, we can assume (by $\alpha$-conversion) that variables that occur
bound in $s$ do not also occur free or occur in $\gamma$.

If $s = t$, then we are done choosing $\xi$ and $\vec{v}$
equal to $\delta$ and $\vec{u}$.

If $s = \abs{x}{s'}$ with $x : \atype_1$ and $s' \gracc t$, then by the
assumption based on $\alpha$-conversion above, $x$ does not occur free
in $s$ or in the range of $\gamma$.
By the induction hypothesis, there exist a computable substitution $\xi'$
and computable terms $v_2,\dots,v_\maa$ such that $\apps{(s'(\gamma
\cup \xi'))}{v_2}{v_\maa}\ (\accreduce{C} \cup \arr{\mathtt{head}
\beta})^*\ t'$.  We
can safely assume that $x$ does not occur in the range of $\xi'$, since
$x$ does not occur in $t'$ either
(if $x$ does occur, we can replace it by a different
variable).
Therefore, if we define $\xi := [x:=x] \cup [y:=\xi'(y) \mid y \in \V
\wedge y \neq x]$, we have $s'(\gamma \cup \xi') = (s'(\gamma \cup \xi))
[x:=\xi'(x)]$.  Choosing $v_1 := \xi'(x)$,
we get
$\apps{(s(\gamma \cup \xi))}{v_1}{v_\maa} \arr{\mathtt{head}\beta}
\apps{(s'(\gamma \cup \xi'))}{v_2}{v_\maa}\ (\accreduce{C} \cup
\arr{\mathtt{head}\beta})^*\ t'$.

If $s = \apps{x}{s_1}{s_j}$ for $s_i : \ctype_1 \arrtype \dots
\arrtype \ctype_{n'} \arrtype \bsort'$ with $\asort \greqsort \bsort'$
and $x \notin \FV(s_i)$ and $s_i \gracc t$, then the induction
hypothesis provides $C$-computable terms $w_1 : \ctype_1,\dots,w_{n'}
: \ctype_{n'}$ and a substitution $\xi'$ such that $\apps{(s_i(\gamma
\cup \xi'))}{w_1}{w_{n'}}\ (\accreduce{C} \cup \arr{\mathtt{head}\beta}
)^*\ t'$.  Since $x \notin \FV(s_i)$ by definition of 
$\gracc$, we can safely assume that $x
\notin \domain(\xi')$.
Now recall that by assumption $\F$ contains infinitely many constructors
of all types; let $\symb{c} : \bsort \arrtype \asort$ be a symbol that
does not occur anywhere in $\Rules$.  We can safely assume that
$\Acc(\symb{c}) = \{ 1 \}$.  Then $w := \abs{y_1 \dots y_j z_1 \dots
z_\maa}{\symb{c}\ (\apps{y_i}{w_1}{w_{n'}})}$ is $C$-computable.
Now let $\xi := [x:=w] \cup [y:=\xi'(y) \mid y \in \V \wedge y \neq x]$,
and let $v_1,\dots,v_\maa$ be variables (which are $C$-computable by
Lemma~\ref{lem:compresults}(\ref{lem:compresults:vars})).
Then $\apps{(s(\gamma \cup \xi))}{v_1}{v_\maa} \arr{\mathtt{head}\beta}^{
j+\maa} \apps{s_i(\gamma \cup \xi')}{w_1}{w_n'}\ (\accreduce{C} \cup
\arr{\mathtt{head}\beta})^*\ t'$.

Otherwise, $s = \apps{\identifier{f}}{s_1}{s_n}$ and $s_i \gracc t$
for some $i \in \Acc(\identifier{f})$; by the induction hypothesis
there exist $\xi$ and $C$-computable terms $w_1,\dots,w_{n'}$ such that
$s' := \apps{(s_i(\gamma \cup \xi))}{w_1}{w_{n'}}\ (\accreduce{C} \cup
\arr{\mathtt{head}\beta})^* t'$.
We have $\apps{(s(\gamma \cup \xi))}{v_1}{v_\maa} \accreduce{C} s'$
for any $\vec{v}$ (e.g., variables).
\qed
\end{proof}

From this we conclude:

\begin{lemma}\label{lem:preservecompacc}
Let $s$ be a closed meta-term, $\gamma$ a substitution with
$\FMV(s) \subseteq \domain(\gamma) \subseteq \M$ and $t$ such that
$s \gracc t$ and $s\gamma$ is $C$-computable.  Then for all
substitutions $\delta$ mapping $\FV(t)$ to computable terms:
$t(\gamma \cup \delta)$ is $C$-computable.
\end{lemma}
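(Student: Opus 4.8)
The plan is to obtain this as an essentially immediate corollary of \refLemma{lem:preservecompacchelper}, by unfolding the definition of $C$-computability at the outermost level and then transporting membership in $C$ along the reduction that the helper lemma produces. Writing $t : \btype_1 \arrtype \dots \arrtype \btype_n \arrtype \bsort$, I first recall that $t(\gamma \cup \delta)$ is $C$-computable if and only if, for all $C$-computable $u_1 : \btype_1,\dots,u_n : \btype_n$, the base-type term $t' := \apps{(t(\gamma \cup \delta))}{u_1}{u_n}$ is $C$-computable, i.e.\ lies in $C$. So I fix arbitrary such $u_1,\dots,u_n$ and aim to show $t' \in C$.

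Next I would apply the helper lemma to this situation. By $\alpha$-conversion I may assume the variables bound in $s$ are disjoint from $\FV(t')$, discharging the no-overlap side condition; the substitution $\delta$ has $\domain(\delta) = \FV(t) \subseteq \V$ with each image $C$-computable, as required. Hence \refLemma{lem:preservecompacchelper} supplies a $C$-computable substitution $\xi$ with $\domain(\xi) \subseteq \V$ and $C$-computable terms $v_1 : \atype_1,\dots,v_\maa : \atype_\maa$ (where $s : \atype_1 \arrtype \dots \arrtype \atype_\maa \arrtype \asort$) such that $\apps{(s(\gamma \cup \xi))}{v_1}{v_\maa}\ (\accreduce{C} \cup \arr{\mathtt{head}\beta})^*\ t'$.

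The remaining work is to show that the left-hand side of this reduction is in $C$ and that $C$-membership is preserved along it. Since $s$ is closed and $\domain(\xi) \subseteq \V$, I have $s(\gamma \cup \xi) = s\gamma$, which is $C$-computable by hypothesis. Applying this computable term to the $C$-computable arguments $v_1,\dots,v_\maa$, the term $\apps{(s\gamma)}{v_1}{v_\maa}$ is $C$-computable by the definition of computability, and, being of base type $\asort$, it lies in $C$. I then propagate this along the reduction to $t'$ by induction on its length: each $\accreduce{C}$-step preserves $C$-computability by \refLemma{lem:preservecompaccreduce}, and each $\arr{\mathtt{head}\beta}$-step is a $\comprel$-step (head-$\beta$ being a special case of $\arr{\Rules}$), so preserves it by \refLemma{lem:preservecomp}. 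Thus $t' \in C$, and since $u_1,\dots,u_n$ were arbitrary $C$-computable terms, $t(\gamma \cup \delta)$ is $C$-computable.

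I expect the only delicate points to be bookkeeping around the helper lemma rather than any genuine difficulty: namely, that the no-overlap hypothesis can always be arranged by renaming the bound variables of $s$, that $s(\gamma \cup \xi)$ collapses to $s\gamma$ precisely because $s$ is closed, and — the one conceptually important choice — that the correct invariant to carry along the mixed $\accreduce{C}/\arr{\mathtt{head}\beta}$ reduction is \emph{membership in $C$} rather than the full computability predicate, since $\accreduce{C}$ can change the base sort and so need not relate terms of a single type. Given the two preservation lemmas already established, the propagation step is routine once this invariant is fixed.
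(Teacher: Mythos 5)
Your proposal is correct and follows essentially the same route as the paper's own proof: instantiate the definition of $C$-computability with arbitrary computable arguments, invoke \refLemma{lem:preservecompacchelper} (using closedness of $s$ so that $s(\gamma\cup\xi)=s\gamma$), and transport computability along the $(\accreduce{C} \cup \arr{\mathtt{head}\beta})^*$ reduction via Lemmas \ref{lem:preservecompaccreduce} and \ref{lem:preservecomp}. The extra bookkeeping you flag ($\alpha$-renaming for the no-overlap condition, and the observation that for base-type terms computability coincides with membership in $C$) is harmless and consistent with the paper's argument.
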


\begin{proof}
$t(\gamma \cup \delta)$ is $C$-computable if $\apps{(t(\gamma \cup
\delta))}{u_1}{u_n}$ is $C$-computable for all computable $u_1,\dots,
u_n$.  By \refLemma{lem:preservecompacchelper} and the fact that $s$
is closed, there exist $C$-computable terms $v_1,\dots,v_\maa$ such
that $\apps{(s\gamma)}{v_1}{v_\maa}\ (\accreduce{C} \cup \arr{\mathtt{
head}\beta})^*\ \apps{(t(\gamma \cup \delta))}{u_1}{u_n}$.  But
$s\gamma$ is $C$-computable, and therefore so is
$\apps{(s\gamma)}{v_1}{v_\maa}$.  Since $\accreduce{C}$ and
$\arr{\mathtt{head}\beta}$ are both computability-preserving by
Lemmas \ref{lem:preservecompaccreduce} and \ref{lem:preservecomp}
respectively (as $\arr{\mathtt{head}\beta}$ is included in $\comprel$)
we are done.
\qed
\end{proof}

\begin{lemma}\label{lem:neutralcomp}
A neutral term is $C$-computable if and only if all its
$\comprel$-reducts are $C$-computable.
\end{lemma}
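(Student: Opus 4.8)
The plan is to prove the two implications separately. The forward direction is immediate: if the neutral term $s$ is $C$-computable and $s \comprel t$, then $t$ is $C$-computable by \refLemma{lem:preservecomp} (which holds for any RC-set, in particular $C$), and neutrality is not even needed here. The reverse direction is the standard property that a neutral term all of whose $\comprel$-reducts are computable is itself computable, and I would establish it by induction on the type $\atype$ of $s$.

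So let $s$ be neutral with every $\comprel$-reduct $C$-computable. If $\atype \in \Sorts$, then being $C$-computable at base type is just membership in $C$; each reduct of $s$ is a $C$-computable base-type term, hence lies in $C$, so the third clause of \refDef{def:RCset} -- which is stated exactly for neutral terms -- gives $s \in C$. If $\atype = \btype \arrtype \ctype$, I must show $\app{s}{u}$ is $C$-computable for every $C$-computable $u : \btype$. Fixing such a $u$, I argue by a nested well-founded induction on $u$ with respect to $\comprel^+$, which is legitimate since $u$ is terminating by \refLemma{lem:compresults}(\ref{lem:compresults:term}). The term $\app{s}{u}$ is again neutral (its head is that of $s$) and has the strictly smaller type $\ctype$, so by the outer (type) induction hypothesis it suffices to prove that every $\comprel$-reduct $w$ of $\app{s}{u}$ is $C$-computable.

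The reducts $w$ are analysed by where the step occurs. If the step lies within $s$, then $w = \app{s'}{u}$ with $s \comprel s'$; here $s'$ is $C$-computable by assumption, so $\app{s'}{u}$ is $C$-computable by the definition of computability at arrow type. If the step reduces the argument, then $w = \app{s}{u'}$ with $u \comprel u'$; then $u'$ is $C$-computable by \refLemma{lem:preservecomp}, and the nested induction hypothesis, applied to the $\comprel^+$-reduct $u'$ of $u$, yields that $\app{s}{u'}$ is $C$-computable. When $s = \apps{x}{s_1}{s_n}$ is variable-headed these are the only cases, since the second clause of \refDef{def:comprel} forces a reduct of a variable-headed term to reduce a single argument.

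The remaining, genuinely delicate, case is $s = \apps{(\abs{x}{v})}{s_0}{s_n}$ abstraction-headed, where $\app{s}{u}$ additionally admits a head-$\beta$ step at the root to $\app{s''}{u}$, with $s''$ the head-$\beta$ reduct of $s$. For the rewrite-relation instance this causes no trouble: $\arr{\mathtt{head}\beta} \subseteq \comprel$ (cf.\ the proof of \refLemma{lem:rewriterelationsuffices}) makes $s''$ an ordinary $\comprel$-reduct of $s$, hence $C$-computable by assumption, whence $\app{s''}{u}$ is $C$-computable. For a general relation $\comprel$ this is exactly the point that must be settled through the head-$\beta$ axioms (the third and fourth bullets of \refDef{def:comprel}), transporting $C$-computability between $s$ and its head-$\beta$ normal form. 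This abstraction-headed case is the main obstacle; once it is handled, all reducts of $\app{s}{u}$ are $C$-computable, the type induction hypothesis applies at $\ctype$, and the reverse implication follows.
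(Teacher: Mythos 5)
Your proof is correct, and it takes a genuinely different inductive route from the paper's. The paper does not induct on types at all: it first unfolds the definition of computability so that the goal becomes $\apps{s}{t_1}{t_\maa} \in C$ for all $C$-computable $t_1,\dots,t_\maa$, and then performs a \emph{single} well-founded induction on the whole tuple $(t_1,\dots,t_\maa)$ under the product extension of $\comprel$, applying the RC-set neutrality clause once to the base-type term $\apps{s}{t_1}{t_\maa}$; the two cases for its reducts (a step in $s$, handled by computability of the reduct $s'$ applied to computable arguments, or a step in some $t_i$, handled by the induction hypothesis) are exactly your two cases. Your version trades the tuple induction for an outer induction on the type plus a nested induction on a single argument ordered by $\comprel^+$, and in exchange needs the extra observation that $\app{s}{u}$ is again neutral of smaller type so that the outer hypothesis applies --- a step the paper never needs since it descends to base type in one go. Both arguments rest on the same two pillars: \refLemma{lem:preservecomp} and the claim that every reduct of a neutral application is obtained by reducing the neutral head part or one argument. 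On that last point you are actually more scrupulous than the paper: the paper simply asserts, ``since $s$ is neutral, all reducts of $u$ either have the form $\apps{s'}{t_1}{t_\maa}$ with $s \comprel s'$ or the form $\apps{s}{t_1}{t_i'}\cdots t_\maa$,'' covering the abstraction-headed case by fiat, whereas you isolate it and discharge it for the rewrite-relation instance via $\arr{\mathtt{head}\beta} \subseteq \comprel$ (which is what matters for \refThm{thm:defC}, by \refLemma{lem:rewriterelationsuffices}). Neither you nor the paper derives this characterisation purely from the abstract axioms of \refDef{def:comprel}, so your flagged ``main obstacle'' is not a defect relative to the paper --- it is the same implicit step, made visible.
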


\begin{proof}
That $C$-computability of a term implies $C$-computability of its
reducts is given by \refLemma{lem:preservecomp}.  For the other
direction, let $s : \atype_1 \arrtype \dots \arrtype \atype_\maa \arrtype
\asort$ be neutral and suppose that all its reducts are
$C$-computable.  To prove that also $s$ is $C$-computable, we must see
that for all $C$-computable terms $t_1 : \atype_1,\dots,t_\maa :
\atype_\maa$ the term $u := \apps{s}{t_1}{t_\maa}$ is in $C$.  We prove
this by induction on $(t_1,\dots,t_\maa)$ ordered by
$\comprel_{\mathtt{prod}}$.
Clearly, since $s$ does not have the form $\apps{\afun}{s_1}{s_n}$ with
$\Acc(\afun) \neq \emptyset$, nor does $u$, so $u \in C$ if all its
reducts are in $C$.  But since $s$ is neutral, all reducts of
$u$ either have the form $\apps{s'}{t_1}{t_\maa}$ with $s \comprel s'$
-- which is in $C$ because all $t_i$ are $C$-computable and $s'$ is
computable as a reduct of $s$ -- or the form $\apps{s}{t_1}{t_i'} \cdots
t_\maa$ with $t_i \comprel t_i'$ -- which is in $C$ by the induction
hypothesis.
\qed
\end{proof}

Using the $\arr{\mathtt{head}\beta}$-restrictions on $\comprel$, we
obtain the following result:

\begin{lemma}\label{lem:abscomputable}
Let $x : \atype \in \V$.
A term $\abs{x}{s} \in \Terms(\F,\V)$ is $C$-computable if and only if
$s[x:=t]$ is computable for all $C$-computable $t : \atype$.
\end{lemma}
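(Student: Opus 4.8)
The plan is to reduce the biconditional to the level of a single application and then to base type. By the definition of $C$-computability at an arrow type, $\abs{x}{s}$ is $C$-computable iff $(\abs{x}{s})\ t$ is $C$-computable for every $C$-computable $t : \atype$; so it suffices to fix a $C$-computable $t$ and prove that $(\abs{x}{s})\ t$ is $C$-computable iff $s[x:=t]$ is. Writing $u := (\abs{x}{s})\ t$ and $w := s[x:=t]$, the two are linked by the single head step $u \arr{\mathtt{head}\beta} w$, and the whole statement is thus an instance of ``head-$\beta$ (anti)reduction preserves $C$-computability''. The forward direction is then immediate: since $\arr{\mathtt{head}\beta}$ is contained in $\comprel$ and $C$-computability is preserved along $\comprel$ by \refLemma{lem:preservecomp}, $C$-computability of $u$ yields $C$-computability of $w = s[x:=t]$.

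For the converse I would prove a head-$\beta$ expansion statement, first reducing it to base type. If $\btype \equiv \btype_1 \arrtype \dots \arrtype \btype_n \arrtype \bsort$, then by the definition of computability $u$ is $C$-computable iff $\apps{u}{v_1}{v_n} \in C$ for all $C$-computable $v_1,\dots,v_n$; since $\apps{u}{v_1}{v_n} \arr{\mathtt{head}\beta} \apps{w}{v_1}{v_n}$ and the latter lies in $C$ (as $w$ is $C$-computable), it is enough to treat base-type redexes of the shape $a = \apps{(\abs{x}{s})\ t}{v_1}{v_n}$ whose head contractum $a' = \apps{w}{v_1}{v_n}$ is in $C$. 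Such an $a$ is neutral, so by \refLemma{lem:neutralcomp} it belongs to $C$ once all its $\comprel$-reducts do. To bound these reducts I would use the head-$\beta$ commutation property of \refDef{def:comprel}: every reduct $c$ of $a$ admits a term $y$ with $a' \comprel^* y$ and $c \arrr{\mathtt{head}\beta} y$, where $y \in C$ by closure. The argument is driven by a Noetherian induction anchored on the fact that $t$, the $v_j$ and $s$ are all $\comprel$-terminating (the first two from $C$-computability, the last because any infinite reduction of $s$ would lift to one of the terminating $w = s[x:=t]$); internal reductions of these components strictly decrease the measure while preserving the hypotheses ($s'[x:=t]$ and $s[x:=t']$ remain $C$-computable as reducts of $w$, and $t'$ stays $C$-computable as a reduct of $t$), whereas a head contraction lands in $a' \in C$.

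The main obstacle is exactly this base-type expansion, and in particular the choice of well-founded measure. One cannot induct on the contractum $w$ alone, nor on the number of head steps, since a reduction inside an argument that is \emph{discarded} by the contraction --- for instance the $t$ in $(\abs{x}{s})\ t$ when $x \notin \FV(s)$ --- produces a reduct whose head-$\beta$ distance to $a'$ is unchanged, so the naive induction fails to decrease. Indeed the statement genuinely requires $t$ to be $C$-computable: a redex $(\abs{x}{s})\ t$ with $x \notin \FV(s)$ and non-terminating $t$ head-reduces into $C$ yet is itself non-terminating, hence outside $C$. Consequently the induction must be carried on the $\comprel$-termination of $s$, $t$ and the $v_j$ jointly, and the delicate work lies in invoking the commutation property of \refDef{def:comprel} to classify the $\comprel$-reducts of $a$ and to match each of them to a strictly smaller instance of the hypothesis while keeping the computability side-conditions intact.
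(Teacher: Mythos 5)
Your proposal follows essentially the same route as the paper's proof. The forward direction is identical: unfold the definition of computability at arrow type, then use \refLemma{lem:preservecomp} together with the inclusion of $\arr{\mathtt{head}\beta}$ in $\comprel$. For the converse, the paper likewise observes that $(\abs{x}{s})\ t$ is neutral, applies \refLemma{lem:neutralcomp} to reduce the claim to computability of all $\comprel$-reducts, and then uses the commutation property of \refDef{def:comprel}: every reduct $u$ comes with some $v$ such that $s[x:=t] \comprel^* v$ and $u \arrr{\mathtt{head}\beta} v$, so either $u = v$ and $u$ is computable by \refLemma{lem:preservecomp}, or $u$ is again neutral and one recurses. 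Your preliminary descent to base type is harmless but unnecessary, since \refLemma{lem:neutralcomp} is available at all types.

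The one genuine difference is the well-founded measure underlying that recursion, which you correctly single out as the crux. The paper dispatches it in one phrase (``induction on $u$ oriented with $\comprel$''), i.e., it silently assumes that $\comprel$ is well-founded on the reducts of the redex; you instead perform Noetherian induction on the joint $\comprel$-termination of $s$, $t$ and the $v_j$, justifying each of these, and your analysis of why naive measures fail and why computability of $t$ is indispensable is exactly right. However, your measure only decreases if every $\comprel$-reduct of $\apps{(\abs{x}{s})\ t}{v_1}{v_n}$ is either a reduct of the head-contractum or again a redex whose components are $\comprel$-reducts of $s$, $t$, $\vec{v}$; and your termination argument for $s$ lifts reductions of $s$ through the substitution $[x:=t]$. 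Both are properties of rewrite relations (\refLemma{lem:rewriterelationsuffices}) --- the only instantiation the paper actually uses --- but neither follows from the abstract conditions of \refDef{def:comprel}, and the commutation property you invoke cannot deliver this classification: for a reduct $c$ it only yields some $y$ with $c \arrr{\mathtt{head}\beta} y$ and the contractum $\comprel^*$-reducing to $y$, which says nothing about the components of $c$. (Indeed, the lemma can fail for artificial relations satisfying \refDef{def:comprel}, e.g.\ $\arr{\beta}$ extended with self-loops on every term containing a fixed redex $(\abs{x}{y})\ z$ with $x \notin \FV(y)$: then $y[x:=t]$ is computable for every computable $t$, yet $(\abs{x}{y})\ z$ is non-terminating; so some rewrite-specific reasoning is unavoidable at this point.) In short: for $\comprel\ =\ \arr{\Rules}$ your proof is complete and, at the critical induction, more rigorous than the paper's; in the fully abstract setting it shares with the paper's own proof the defect of relying on structure that \refDef{def:comprel} does not provide.
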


\begin{proof}
If $\abs{x}{s}$ is $C$-computable, then by definition so is
$(\abs{x}{s})\ t$ for all $C$-computable $t$; by
\refLemma{lem:preservecomp} and inclusion of $\arr{\mathtt{head}\beta}$
in $\comprel$, this implies $C$-computability of the reducts $s[x:=t]$.

For the other direction, suppose $s[x:=t]$ is $C$-computable for all
$C$-computable $t : \atype$.
To obtain $C$-computability of $\abs{x}{s}$, we must see that
$(\abs{x}{s})\ t$ is $C$-computable for all $C$-computable $t : \atype$.
As $(\abs{x}{s})\ t$ is neutral, this holds if all its
$\comprel$-reducts $u$ are $C$-computable by
\refLemma{lem:neutralcomp}, and certainly if all its
$\comprel^+$-reducts are $C$-computable, which we prove by induction
on $u$ oriented with $\comprel$.  But by definition of $\comprel$ (and
induction on the derivation $(\abs{x}{s})\ t \comprel^+ u$) there
exists a term $v$ such that $s[x:=t] \comprel^* v$ and $u
\arrr{\mathtt{head}\beta} v$.  If $u = v$, we thus obtain the
required property, and if $u \arr{\mathtt{head}\beta}^+ v$, then $u$
is neutral and hence is $C$-computable if all its
$\comprel$-reducts are, which is the case by the induction hypothesis.
\qed
\end{proof}

\section{Static dependency pairs}\label{app:staticdp}

In this appendix, we will first prove the main result from
\refSec{sec:dp}: \refThm{thm:sschain}.  Then, we will prove the
``inverse'' result, \refThm{thm:chainreverse}.  Finally, to provide
a greater context to the current work, we will discuss how the
definitions in \cite{kus:iso:sak:bla:09,suz:kus:bla:11} relate to
the definitions here.

\subsection{Static dependency pairs: the main result}\label{app:sdp}

To start, we prove \refThm{thm:sschain}, which states that a properly
applied, accessible function passing AFSM with rules $\Rules$ is
terminating if it admits no $\Rules$-computable formative
$(\SDP(\Rules),\Rules) $-dependency chains.
\refThm{thm:basicchain}, which states that an AFSM is terminating if
it admits no $(\SDP(\Rules),\Rules)$-dependency chains, follows as a
corollary.

In the following, let $C = C_{\Rules}$ be a computability predicate
following \refDef{thm:defC2} for $\comprel$ the rewrite relation
$\arr{\Rules}$.
We will briefly call a term ``computable'' if it is $C$-computable.

Henceforth, we will assume without explicitly stating it that
$(\F,\Rules)$ is properly applied, so we can speak of
$\minarity{\Rules}(\afun)$ without disclaimers; we let
$\minarity{\Rules}(\afun) = 0$ for $\afun \notin \Defineds$.
We start with an observation on the consequences of accessible
function passingness:

\begin{lemma}\label{lem:pfp}
Let $\ell$ be a closed pattern, $Z$ a meta-variable and $x_1,\dots,
x_\mia$ variables such that $\ell \gracc \meta{Z}{x_1,\dots,x_\mia}$.
If $\ell\gamma$ is a computable term, then so is $\gamma(Z)$.
\end{lemma}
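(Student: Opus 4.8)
The plan is to derive this directly from \refLemma{lem:preservecompacc} -- the general computability-preservation result for the accessibility relation $\gracc$ -- and then close the remaining gap with a short "$\beta$-expansion" argument based on \refLemma{lem:abscomputable}.

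First I would reduce to the meta-variable fragment of $\gamma$. Since $\ell$ is closed, both $\ell\gamma$ and $\gamma(Z)$ depend only on the restriction of $\gamma$ to $\M$, so I may assume $\domain(\gamma) \subseteq \M$. The assumption $\ell \gracc \meta{Z}{x_1,\dots,x_\mia}$ forces $\meta{Z}{x_1,\dots,x_\mia}$ to occur as an accessible sub-meta-term of $\ell$, hence $Z \in \FMV(\ell) \subseteq \domain(\gamma)$ and $\gamma(Z)$ is a genuine term. Thus the hypotheses of \refLemma{lem:preservecompacc} are met with $s := \ell$ and $t := \meta{Z}{x_1,\dots,x_\mia}$, where $\FV(t) = \{x_1,\dots,x_\mia\}$.

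Applying \refLemma{lem:preservecompacc} then yields: for every substitution $\delta$ sending $x_1,\dots,x_\mia$ to computable terms, $\meta{Z}{x_1,\dots,x_\mia}(\gamma \cup \delta)$ is computable. By the substitution clauses this term is exactly $\metaapply{\gamma(Z)}{\delta(x_1),\dots,\delta(x_\mia)}$, so I obtain that $\metaapply{\gamma(Z)}{u_1,\dots,u_\mia}$ is computable for all computable $u_1,\dots,u_\mia$ of the appropriate types.

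It remains to turn this into computability of $\gamma(Z)$ itself. Unfolding the definition of computability $\mia$ times, $\gamma(Z)$ is computable iff $\apps{\gamma(Z)}{u_1}{u_\mia}$ is computable for every choice of computable $u_i$; and by \refLemma{lem:abscomputable}, applied once per leading abstraction of $\gamma(Z)$, this requirement reduces precisely to computability of $\metaapply{\gamma(Z)}{u_1,\dots,u_\mia}$, which we already have. The step I would be most careful about -- and the main (though routine) obstacle -- is that $\gamma(Z)$ need not carry $\mia$ leading $\lambda$-binders, so I would split on whether its number $p$ of leading binders satisfies $p \geq \mia$ or $p < \mia$: in the former case the first clause of $\metaapply$ applies directly, in the latter the "$n < k$" clause applies and one still has to peel off the residual arrow type via the definition of computability. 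In both cases \refLemma{lem:abscomputable} combined with the matching $\metaapply$ clause closes the gap, so the result follows; everything else is a direct appeal to \refLemma{lem:preservecompacc}.
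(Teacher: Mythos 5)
Your proposal is correct and takes essentially the same route as the paper's own proof: closedness of $\ell$ plus \refLemma{lem:preservecompacc} yields computability of $\metaapply{\gamma(Z)}{u_1,\dots,u_\mia}$ for all computable $u_1,\dots,u_\mia$, and computability of $\gamma(Z)$ itself is then recovered from \refLemma{lem:abscomputable} together with the definition of computability at arrow types. The case split you flag as the delicate point --- whether $\gamma(Z)$ carries $\mia$ leading abstractions or fewer, with the corresponding clause of $\metaapply{\cdot}{\cdot}$ --- is exactly the distinction the paper makes, so nothing is missing.
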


\begin{proof}
Since $\ell$ is closed, $\ell(\gamma \cup \delta) = \ell\gamma$ is
computable for all computable substitutions $\delta$ whose domain is
contained in $\V$.  By \refLemma{lem:preservecompacc}, we
thus have computability of $\meta{Z}{x_1,\dots,x_\mia}(\gamma
\cup \delta)$ for all such $\delta$.
Since $\ell\gamma$ is a term, $Z \in \domain(\gamma)$ so
we can either write $\gamma(Z) = \abs{x_1 \dots x_\mia}{s}$
or $\gamma(Z) = \abs{x_1 \dots x_i}{s'}$ with $i < \mia$ and $s'$ not
an abstraction.

In the first case, if we let $\delta := [x_1:=u_1,\dots,x_\mia:=
u_\mia]$ for computable terms $u_1,\dots,u_\mia$ we have computability
of $\meta{Z}{x_1,\dots,x_\mia}(\gamma \cup \delta) =
s[x_1:=u_1,\dots,x_k:=u_\mia]$.  Since this holds for \emph{all}
computable $u_1,\dots,u_\mia$, \refLemma{lem:abscomputable} provides
computability of $\abs{x_1 \dots x_\mia}{s} = \gamma(Z)$.
In the second case, the same substitution $\delta$ provides
computability of $\apps{s'[x_1:=u_1,\dots,x_i:=u_i]}{u_{i+1}}{u_n}$
which (since this holds for \emph{any} $u_{i+1},\dots,u_n$) implies
computability of $s'[x_1:=u_1,\dots,x_i:=u_i]$, and this in turn
implies computability of $\gamma(Z)$ as before.
\qed
\end{proof}

Thus, computability of the left-hand side of an instantiated DP
implies computability of all instantiated meta-variables.
To transfer this property to the right-hand side of the instantiated
pair, we have a closer look at the relation $\bsuptermeq{A}$.

In the following, we say that a meta-term $s$ \emph{respects}
$\minarity{\Rules}$ if $s \bsuptermeq{\beta} \apps{\afun}{t_1}{t_n}$
implies $n \geq \minarity{\Rules}(\afun)$.

\begin{lemma}\label{lem:scandcomplete}
Let $s$ be a meta-term that respects $\minarity{\Rules}$ and $\gamma$
a substitution on a finite domain with $\FMV(s) \subseteq
\domain(\gamma) \subseteq \M$, such that all $\gamma(Z)$ are computable.
If there exists a computable substitution $\delta$ on a variable
domain (that is, $\domain(\gamma) \subseteq \V$) such that $s(\gamma
\cup \delta)$ is not computable, then there exists a pair $t\ (A) \in
\cand(s)$ such that all of the following hold:
\begin{itemize}
\item there is a computable substitution $\delta$ on variable
  domain such that $t(\gamma \cup \delta)$ is not computable;
\item $\gamma$ respects $A$;
\item for all $t' \neq t$ such that $t \bsuptermeq{B} t'$ holds for
  some $B$ respected by $\gamma$: $t'(\gamma \cup \delta)$ is
  computable for all computable substitutions $\delta$ on variable
  domain.
\end{itemize}
\end{lemma}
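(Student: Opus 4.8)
The plan is to prove the lemma in three stages: an \emph{existence} stage that produces \emph{some} non-computable candidate, a \emph{minimal choice} stage that secures the third bullet, and a \emph{trimming} stage that shrinks the meta-variable conditions to meet the minimality requirement (c) built into $\cand$. Throughout I would rely on the fact that $\bsuptermeq{\beta}$ is well-founded, being contained in the union of head-$\beta$-reduction and the strict subterm relation on simply-typed meta-terms (both terminating); this both licenses induction over $\bsuptermeq{\beta}$ and guarantees that $\bsuptermeq{\beta}$-minimal elements exist. I would also freely use that $\bsuptermeq{A}$ is transitive (so a $\bsuptermeq{\beta}$-reduct of $s$ again respects $\minarity{\Rules}$) and that if $\gamma$ respects $A$ then it respects every subset of $A$.

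For the existence stage I would prove, by well-founded induction on $s$ under $\bsuptermeq{\beta}$: whenever $s(\gamma \cup \delta)$ is non-computable for some computable variable-domain $\delta$, there is a pair $t\ (A)$ with $s \bsuptermeq{A} \apps{t}{s_{\mia+1}}{s_n}$ for some trailing arguments, with $t$ of candidate shape $\apps{\afun}{s_1}{s_\mia}$ ($\afun \in \Defineds$, $\mia = \minarity{\Rules}(\afun)$), with $\gamma$ respecting $A$, and with $t(\gamma \cup \delta')$ non-computable for some computable $\delta'$. The induction is a case analysis on the head of $s$. \emph{Abstraction} $s = \abs{x}{s'}$: by \refLemma{lem:abscomputable}, non-computability yields a computable $v$ with $s'(\gamma \cup \delta[x:=v])$ non-computable, and I recurse on $s'$. \emph{Variable or constructor head}: since a variable (\refLemma{lem:compresults}) and a constructor applied to computable arguments are computable, some argument's instance must be non-computable, and I recurse into it without enlarging $A$. \emph{Defined-symbol head} $\apps{\afun}{s_1}{s_n}$: here $n \geq \mia$ because $s$ respects $\minarity{\Rules}$, so $\apps{\afun}{s_1}{s_\mia}$ is already a candidate; if its instance is non-computable I am done, otherwise the non-computability sits in a trailing argument $s_{\mia+1},\dots,s_n$, into which I recurse. \emph{Meta-variable head} $\apps{\meta{Z}{t_1,\dots,t_\mac}}{s_1}{s_n}$: as $\gamma(Z)$ is computable, the instance is $\gamma(Z)$ applied (via the built-in $\beta$-steps of meta-application) to the instances of the $t_i$ and $s_j$; were all these computable the whole would be too, so some $s_j$ is non-computable (recurse, no new condition) or some $t_i$ is, and for a $t_i$ to matter its binder in $\gamma(Z)$ must be used or applied --- which is exactly that $\gamma$ respects $(Z:i)$ --- so I may add $(Z:i)$ to $A$ and recurse. \emph{Head $\beta$-redex} $\apps{(\abs{x}{u})}{s_0}{s_n}$: the instance head-$\beta$-reduces to the instance of the contractum $s' = \apps{u[x:=s_0]}{s_1}{s_n}$, a strict $\bsuptermeq{\beta}$-predecessor, and non-computability propagates either to this contractum or to an argument.

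With existence in hand, I would pick, among all non-computable candidates reachable from $s$ with conditions respected by $\gamma$, a $\bsuptermeq{\beta}$-minimal one $t$; well-foundedness ensures it exists. Shrinking its reaching conditions to a subset-minimal $A$ (still respected by $\gamma$) gives $t\ (A) \in \cand(s)$ together with the first two bullets. For the third bullet, suppose some $t' \neq t$ with $t \bsuptermeq{B} t'$ and $B$ respected by $\gamma$ instantiated to a non-computable term; applying the existence result to $t'$ would yield a non-computable candidate lying strictly below $t$ under $\bsuptermeq{\beta}$, contradicting the minimality of $t$. Hence every such $t'$ is computable under all computable variable substitutions.

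The hard part will be the $\beta$-redex and meta-variable cases, specifically the claim that non-computability of a \emph{neutral} instance is located either in its head-$\beta$ contractum or in a genuinely relevant argument. This hinges on the $\beta$-expansion/neutral-term behaviour of computability (\refLemma{lem:neutralcomp} and the head-$\beta$ reasoning underlying \refLemma{lem:abscomputable}, together with \refLemma{lem:preservecomp}), and on the observation that an argument of a meta-variable for which $\gamma$ fails to respect $(Z:i)$ is simply erased by the substitution and therefore irrelevant to computability. This last point is precisely where the meta-variable conditions $A$ are generated and simultaneously shown to be respected by $\gamma$, so getting its bookkeeping exactly right is the crux of the argument.
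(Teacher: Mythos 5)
Your existence stage reproduces, case for case, the same analysis the paper uses (abstraction via \refLemma{lem:abscomputable}, neutral heads via \refLemma{lem:neutralcomp}, erased meta-variable arguments generating the conditions $Z:i$), and your appeal to well-foundedness of $\bsuptermeq{\beta}$ and to inheritance of ``respect'' under subsets and unions is sound. The genuine problem is in stage~3. The existence result applied to $t'$ does not give you a candidate lying $\bsuptermeq{\beta}$-below $t$: it gives a candidate $t''$ \emph{together with trailing arguments} $w_1,\dots,w_m$ such that $t' \bsuptermeq{A''} \apps{t''}{w_1}{w_m}$. The relation $\bsuptermeq{\beta}$ has no clause that passes from an application to its prefix, so when $m > 0$ -- which happens exactly when the defined symbol heading $t''$ occurs applied to more than $\minarity{\Rules}$ arguments, the very situation the trailing arguments in the definition of $\cand$ exist for -- you cannot conclude $t \bsuptermeq{\beta} t''$, and minimality of $t$ among candidates is not contradicted. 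Concretely: for $s = \afun\ (\bfun\ a\ b)$ with $\minarity{\Rules}(\afun) = \minarity{\Rules}(\bfun) = 1$, the candidates $\afun\ (\bfun\ a\ b)$ and $\bfun\ a$ are $\bsuptermeq{\beta}$-incomparable, so your stage~2 may legitimately select $t = \afun\ (\bfun\ a\ b)$ even when $\bfun\ a\ b$ (a strict BRSMT of $t$) has a non-computable instance, and then the third bullet simply fails for your chosen $t$.

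The repair is to minimize one level up: over the BRSMTs of $s$ that witness candidacy rather than over the candidates themselves. That is, consider the set of pairs $v\ (A)$ with $s \bsuptermeq{A} v$, $\gamma$ respecting $A$, and $v(\gamma \cup \delta)$ non-computable for some computable $\delta$; pick a $\bsuptermeq{\beta}$-minimal, then $\subseteq$-minimal, element, and truncate only at the end. Your case analysis then shows (by contradiction with minimality, instead of as a separate induction) that the minimal $v$ must have the form $\apps{\afun}{t_1}{t_n}$ with $\afun \in \Defineds$ and $n \geq \minarity{\Rules}(\afun)$; the trailing $t_{\mia+1},\dots,t_n$ are strict BRSMTs of $v$, hence have computable instances, so the truncation $\apps{\afun}{t_1}{t_\mia}$ is non-computable; and every strict BRSMT of the truncation is a strict BRSMT of $v$ (it factors through some $t_i$ with $i \leq \mia$), which yields the third bullet directly from minimality of $v$. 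This is precisely how the paper's proof is organized, so with this change your argument coincides with it.
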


\vspace{-4pt}
\begin{proof}
Let $S$ be the set of all pairs $t\ (A)$ such that (a) $s
\bsuptermeq{A} t$, (b) there exists a computable substitution $\delta$
on variable domain such that $t(\gamma \cup \delta)$ is
not computable, and (c) $\gamma$ respects $A$.
This set is non-empty, as it contains $\{ s\ (\emptyset) \}$.
As the relations $\bsuptermeq{\beta}$ and $\supseteq$ are both
well-founded quasi-orderings (the latter on finite sets),
we can select a pair $t\ (A)$ that is,
in a sense,
\emph{minimal} in $S$: for all $t'\ (A') \in S$: if $t
\bsuptermeq{\beta} t'$ then $t = t'$ and not $A' \subsetneq A$
(it is possible that $A$ and $A'$ are incomparable).
We observe that for all $t',B$ such that $t' \neq t$ and $t
\bsuptermeq{B} t'$ and $\gamma$ respects $B$ we cannot have $t'\ (A
\cup B) \in S$ by minimality of $t\ (A)$, so since $s \bsuptermeq{
A \cup B} t \bsuptermeq{A \cup B} t'$ and clearly $\gamma$ respects
$A \cup B$, it can only follow that requirement (b) is not satisfied
for $t'$.

Now suppose that $t$ has the form $\apps{\afun}{t_1}{t_n}$.  Then by
the above reasoning, all $t_i(\gamma \cup \delta)$ are computable, and
by definition of ``$s$ respects $\minarity{\Rules}$'' we know that $n
\geq \mia := \minarity{\Rules}(\afun)$.  Thus, $(\apps{\afun}{t_1}{
t_\mia})(\gamma \cup \delta)$ is not computable (since otherwise
$t(\gamma \cup \delta)$ would be computable), and by definition of
$S$ as a set of BRSMTs of $s$ (and minimality of $A$) we have
$\apps{\afun}{t_1}{t_\mia}\ (A) \in \cand(s)$.  By minimality of $t$
in $S$, we see that $\apps{\afun}{t_1}{t_\mia}\ (A)$ satisfies all
the requirements for the lemma to hold.

Thus, if $t$ has the form $\apps{\afun}{t_1}{t_n}$, we are done;
towards a contradiction we will show that if $t$ does \emph{not}
have this form, then $t\ (A)$ is not minimal.

Consider the form of $t$:
\begin{itemize}
\item $t = \abs{x}{t'}$: by \refLemma{lem:abscomputable},
  non-computability of $t(\gamma \cup \delta)$ implies
  non-computabi\-lity of $t'(\gamma \cup \delta)[x:=u]$ for some
  computable $u$.  Since, by $\alpha$-conversion, we can assume that
  $x$ does not occur in domain or range of $\gamma$ or $\delta$,
  we have non-computability of $t'(\gamma \cup \delta \cup [x:=u])$,
  and $\delta \cup [x:=u]$ is a computable substitution on
  variable domain while $t \bsuptermeq{A} t'$.
\item $t = \apps{x}{t_1}{t_n}$ with $x \in \V$: whether $x \in
  \domain(\delta)$ or not, $\delta(x)$ is computable (either by the
  assumption on $\delta$ or by
  \refLemma{lem:compresults}(\ref{lem:compresults:vars})).  Therefore,
  the only way for $t(\gamma \cup \delta)$ to not be computable is if
  some $s_i(\gamma \cup \delta)$ is not computable, and $s
  \bsuptermeq{A} s_i$.
\item $t = \apps{\identifier{c}}{t_1}{t_n}$ with $\identifier{c}
  \in \F \setminus \Defineds$: $t(\gamma \cup \delta)$ is
  non-computable only if there exist computable terms $u_{n+1},\dots,
  u_\maa$ such that the term
  $\apps{\apps{\identifier{c}}{t_1}{t_n}}{u_{n+1}}{u_\maa}$
  of base type
  is not in
  $C$.  This can only be the case if it is non-terminating or some
  $t_i(\gamma \cup \delta)$ is not computable.  Since head-reductions
  are impossible, non-termination implies non-termination of some
  $t_i(\gamma \cup \delta)$ or $u_j$, which by
  \refLemma{lem:compresults}(\ref{lem:compresults:term}) implies
  non-computability; as all $u_j$ are computable by assumption, this
  means some $t_i(\gamma \cup \delta)$ is non-computable.  We are
  done because $t \bsuptermeq{A} t_i$.
\item $t = \apps{\afun}{t_1}{t_n}$ with $\afun \in \Defineds$
  but $n < \arity(\afun)$: same as above, because terms of this form
  cannot be reduced at the head (or the root).
\item $t = \apps{(\abs{x}{u})}{t_0}{t_n}$: $t(\gamma \cup \delta)$ is
  neutral, so by \refLemma{lem:neutralcomp} non-computability implies
  the non-computability of a reduct.  If the reduct
  $\apps{u(\gamma \cup \delta)[x:=t_0(\gamma \cup \delta)]}{
  (t_1(\gamma 
  \cup \delta))}{(t_n(\gamma \cup \delta))} =
  (\apps{u[x:=t_0]}{t_1}{t_n})(\gamma \cup \delta)$ is non-computable,
  we are done because $t \bsuptermeq{A} \apps{u[x:=t_0]}{t_1}{t_n}$.
  Otherwise, note that all many-step reducts of $t(\gamma \cup \delta)$
  are either also a reduct of $(\apps{u[x:=t_0]}{t_1}{t_n})(\gamma \cup
  \delta)$ -- and therefore computable -- or have the form
  $\apps{(\abs{x}{u'})}{t_0'}{t_n'}$ with $u(\gamma \cup \delta)
  \arrr{\Rules} u'$ and each $t_i(\gamma \cup \delta) \arrr{\Rules}
  t_i'$.  Thus, at least one of $u(\gamma \cup \delta)$ or $t_i(\gamma
  \cup \delta)$ has to be non-terminating.  But if $u(\gamma \cup
  \delta)$ is non-terminating, then so is $u[x:=u'](\gamma \cup \delta)$,
  contradicting computability of $(\apps{u[x:=t_0]}{t_1}{t_n})(\gamma
  \cup \delta)$.  The same holds if $t_i(\gamma \cup \delta)$ is
  non-terminating for some $i \geq 1$.  Thus, $t_0(\gamma \cup \delta)$
  is non-terminating and therefore non-computable, and we indeed have
  $t \bsuptermeq{A} t_0$.
\item $t = \apps{\meta{Z}{s_1,\dots,s_\mac}}{t_1}{t_n}$: we either
  have $\gamma(Z) = \abs{x_1 \dots x_\mac}{u}$ or $\gamma(Z) =
  \abs{x_1 \dots x_i}{u'}$ with $i < \mac$ and $u'$ not an abstraction;
  in the latter case let $u := \abs{x_{i+1} \dots x_\mac}{\apps{u'}{
  x_{i+1}}{x_\mac}}$.  Either way, $t(\gamma \cup \delta) =
  \apps{u[x_1:=s_1(\gamma \cup \delta),\dots,x_\mac:=s_\mac(\gamma \cup
  \delta)]}{(t_1(\gamma \cup \delta))}{(t_n(\gamma \cup \delta))}$.

  For this term to be non-computable, either some $t_i(\gamma \cup
  \delta)$ should be non-computable, or
  $u[x_1:=s_1(\gamma \cup \delta),\dots,x_\mac:=s_\mac(\gamma \cup
  \delta)]$.  The former case immediately contradicts minimality,
  since $t \bsuptermeq{\emptyset} t_i$, so we assume the latter.
  However, if all $s_i(\gamma \cup \delta)$ are computable, then so
  is $u[x_1:=s_1(\gamma \cup \delta),\dots,x_\mac:=s_\mac(\gamma \cup
  \delta)]$:
  \begin{itemize}
  \item if $\gamma(Z) = \abs{x_1 \dots x_\mac}{u}$ then this holds by
    computability of all $\gamma(Z)$ and \refLemma{lem:abscomputable};
  \item if $\gamma(Z) = \abs{x_1 \dots x_i}{u'}$ with $i < \mac$ and
    $u = \apps{u'}{x_{i+1}}{x_\mac}$, then computability of
    $\gamma(Z)$ and \refLemma{lem:abscomputable} provide computability
    of $u'[x_1:=q_1,\dots,x_i:=q_i]$, which by definition of
    computability for higher-order terms implies computability
    for
    $\apps{u'[x_1:=q_1,\dots,x_i:=q_i]}{q_{i+1}}{q_n} = u[x_1:=q_1,
    \dots,x_n:=q_n]$.
  \end{itemize}

  Thus, some $s_i(\gamma \cup \delta)$ must be non-computable, and
  since substituting an unused variable has no effect, this must be
  the case for some $i$ with $x_i \in \FV(u)$.  So in this case,
  $\gamma$ respects $B := A \cup \{ Z : i \}$ and we obtain $s
  \bsuptermeq{B} t \bsuptermeq{B} s_i$.
\qed
\end{itemize}
\end{proof}

Next, let us consider formative reductions.  We will prove that
reductions from a terminating term to some instance of a pattern may
be assumed to be formative.

\begin{lemma}\label{lem:formative}
Let $\ell$ be a pattern and $\gamma$ a substitution on domain
$\FMV(\ell)$ such that a meta-variable $Z$ with $\arity(Z) = \mia$
is mapped to a term $\abs{x_1 \dots x_\mia}{t}$.
Let $s$ be a terminating term.
If $s \arrr{\Rules} \ell\gamma$, then there exists a substitution
$\delta$ on the same domain as $\gamma$ such that each $\delta(Z)
\arrr{\Rules} \gamma(Z)$ and $s \arrr{\Rules} \ell\delta$ by an
$\ell$-formative reduction.
\end{lemma}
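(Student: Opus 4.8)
The plan is to prove the statement by well-founded induction on $s$, ordered by the relation $(\arr{\Rules} \cup \supterm)^+$. This order is well-founded on $s$: since $\arr{\Rules}$ is monotonic (\refDef{def:rule}) we have $\supterm \cdot \arr{\Rules} \,\subseteq\, \arr{\Rules} \cdot \supterm$ (a step in a proper subterm lifts to a step in the whole term, leaving a $\supterm$ between the results), so every $\supterm$-step can be postponed past a $\arr{\Rules}$-step; as $\supterm$ strictly decreases term size and $s$ is $\arr{\Rules}$-terminating, there is no infinite $(\arr{\Rules} \cup \supterm)$-sequence starting from $s$. The induction will be set up so that \emph{every} recursive appeal is to a term strictly below $s$ in this order, while the formative reduction is \emph{constructed} by following the clauses of \refDef{def:formative}.

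First I would dispatch two base cases that need no recursion. If $\ell$ is not a fully extended linear pattern, then by the first clause of \refDef{def:formative} every reduction to an instance of $\ell$ is already $\ell$-formative, so $\delta := \gamma$ works. If $\ell = \meta{Z}{x_1,\dots,x_\mia}$, then using the hypothesis $\gamma(Z) = \abs{x_1\dots x_\mia}{t}$ we have $\ell\gamma = t$; I take $\delta(Z) := \abs{x_1\dots x_\mia}{s}$, so that $\ell\delta = s$ (the empty reduction, which is $\ell$-formative by the second clause) while $\delta(Z) = \abs{x_1\dots x_\mia}{s} \arrr{\Rules} \abs{x_1\dots x_\mia}{t} = \gamma(Z)$ by monotonicity, since $s \arrr{\Rules} t$.

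For the remaining case $\ell$ is a fully extended linear pattern that is not a meta-variable application, and I analyse the form of $s$. If $s = \abs{x}{s'}$, then the reduction stays an abstraction throughout, forcing $\ell = \abs{x}{\ell'}$ with $s' \arrr{\Rules} \ell'\gamma$; I recurse on $s'$ (with $s \supterm s'$) and apply the fourth clause. If $s = \apps{(\abs{x}{u})\ v}{w_1}{w_n}$ is a head $\beta$-redex, then since $\ell\gamma$ is not such a redex the head redex must eventually be contracted; postponing it to the front (it can only be touched by the head $\beta$-step, as no rule has an abstraction at its head) yields $\apps{u[x:=v]}{w_1}{w_n} \arrr{\Rules} \ell\gamma$, and I recurse on this $\beta$-reduct (strictly $\arr{\beta}$-below $s$) via the fifth clause. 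If $s = \apps{a}{s_1}{s_n}$ with $a$ a variable or constructor, the head cannot be rewritten, so $a$ persists, $\ell = \apps{a}{\ell_1}{\ell_n}$, and each $s_i \arrr{\Rules} \ell_i\gamma$; I recurse into the arguments (each $s \supterm s_i$) via the third clause, combining the resulting substitutions into a single $\delta$ — which is well defined because linearity of $\ell$ makes the meta-variable sets of the $\ell_i$ disjoint. The same third-clause treatment applies when $s = \apps{\afun}{s_1}{s_n}$ with $\afun \in \Defineds$ but the reduction never contracts a rule at the root.

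The main obstacle is the subcase $s = \apps{\afun}{s_1}{s_n}$ with $\afun \in \Defineds$ where a root rule-step \emph{does} occur, which requires the sixth clause. Here I would take the \emph{first} such step: before it all steps lie inside arguments, so $s \arrr{\Rules} (\apps{\ell'}{Z_1}{Z_m})\delta'$ for the applied rule $\ell' \arrz r'$ and fresh meta-variables $\vec{Z}$ capturing the extra arguments, and this prefix reduction contracts no redex at the root. I make it $(\apps{\ell'}{\vec{Z}})$-formative either directly by the first clause (if $\ell'$ is not fully extended linear) or by recursing into the arguments with the third clause; crucially these argument recursions are on proper subterms of $s$, hence strictly below $s$ in $(\arr{\Rules}\cup\supterm)^+$. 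This produces $\delta''$ with $(\apps{\ell'}{\vec{Z}})\delta'' \arr{\Rules} (\apps{r'}{\vec{Z}})\delta'' \arrr{\Rules} \ell\gamma$; since $s \arr{\Rules}^+ (\apps{r'}{\vec{Z}})\delta''$, the continuation is strictly below $s$, so the induction hypothesis finishes it. The delicate point — and the reason the term-based measure is indispensable — is that the intermediate pattern $\apps{\ell'}{\vec{Z}}$ comes from an arbitrary rule and bears no structural relation to $\ell$, so any attempt to run the induction on the \emph{pattern} fails precisely at the sixth clause; it is only because both the argument recursions ($\supterm$-decrease) and the tail recursion ($\arr{\Rules}^+$-decrease) shrink $s$ in $(\arr{\Rules}\cup\supterm)^+$ that the argument goes through.
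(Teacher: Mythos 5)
Your proof is correct, and its skeleton matches the paper's: the same initial dispatch of non-fully-extended-linear patterns and of meta-variable patterns (taking $\delta(Z) := \abs{x_1 \dots x_\mia}{s}$), the same postponement of the head $\beta$-step, and the same use of an extended rule $\apps{\ell'}{Z_{m+1}}{Z_n} \arrz \apps{r'}{Z_{m+1}}{Z_n}$ with fresh meta-variables in the root-rule-step case, assembled via the last clause of \refDef{def:formative}. The one genuine difference is the induction bookkeeping: the paper inducts lexicographically, first on $s$ ordered by $\arr{\Rules} \cup \supterm$, and \emph{second on the length of the reduction} $s \arrr{\Rules} \ell\gamma$, and in the root-step case it invokes the second (length-based) hypothesis to make the prefix $s \arrr{\Rules} (\apps{\ell'}{Z_{m+1}}{Z_n})\eta'$ formative as a single unit (legitimate, since the prefix omits at least the root step and so is strictly shorter). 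You eliminate that second measure by decomposing the prefix argument-wise: since it contains no headmost steps, each argument reduction falls under the term-based hypothesis (each $s \supterm s_i$), the fresh meta-variables are absorbed by the empty-reduction clause, the pieces combine by linearity, and the first clause of \refDef{def:formative} serves as the escape hatch when $\apps{\ell'}{Z_{m+1}}{Z_n}$ is not fully extended linear. Both treatments are sound; yours buys a single well-founded measure -- whose well-foundedness you, unlike the paper, explicitly justify via the commutation $\supterm \cdot \arr{\Rules} \subseteq \arr{\Rules} \cdot \supterm$ -- while the paper's secondary induction spares it from repeating the spine decomposition inside the root-step case. Your closing observation that an induction on the pattern would founder at the sixth clause (because $\apps{\ell'}{Z_{m+1}}{Z_n}$ bears no structural relation to $\ell$) is also accurate, and is precisely why both proofs measure the term and not the pattern.
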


Note that the restriction on $\gamma$ is very light: every substitution
$\gamma$ on domain $\FMV(\ell)$ can be altered to map meta-variables
with arity $\mia$ to terms with $\mia$ abstracted variables:
if $\gamma(Z) = \abs{x_1 \dots x_\mia}{t}$ with $\mia = \arity(Z)$
then let $\gamma'(Z) = \gamma(Z)$, and if $\gamma(Z) = \abs{x_1 \dots
x_i}{t}$ with $i < \arity(Z)$ and $t$ not an abstraction, then replace
this by setting $\gamma'(Z) := \abs{x_1 \dots x_\mia}{\apps{t}{x_{i+
1}}{x_\mia}}$.  Note that $\meta{Z}{x_1,\dots,x_\mia}\gamma =
\meta{Z}{x_1,\dots,x_\mia}\gamma'$.  Therefore, we always have
$\ell\gamma = \ell\gamma'$.

\begin{proof}
We prove the lemma by induction first on $s$ ordered by $\arr{\Rules}
\mathop{\cup} \supterm$, second on the length of the reduction $s
\arrr{\Rules} \ell\gamma$.  If $\ell$ is not a
fully extended linear pattern, then we are done choosing $\delta :=
\gamma$. Otherwise, we consider four cases:
\begin{enumerate}
\item $\ell$ is a meta-variable application $\meta{Z}{x_1,\dots,x_\mia}$;
\item $\ell$ is not a meta-variable application, and the reduction
  $s \arrr{\Rules} \ell\gamma$ does not contain any headmost steps;
\item $\ell$ is not a meta-variable application, and the reduction
  $s \arrr{\Rules} \ell\gamma$ contains headmost steps, the first of
  which is a $\arr{\beta}$ step;
\item $\ell$ is not a meta-variable application, and the reduction
  $s \arrr{\Rules} \ell\gamma$ contains headmost steps, the first of
  which is not a $\arr{\beta}$ step.
\end{enumerate}

In the first case, if $\ell$ is a meta-variable application
$\meta{Z}{x_1,\dots,x_\mia}$, then by $\alpha$-conversion we may
write $\gamma = [Z:=\abs{x_1\dots x_\mia}{t}]$ with $\ell\gamma = t$.
Let $\delta$ be $[Z := \abs{x_1 \dots x_\mia}{s}]$.  Then $\delta$
has the same domain as $\gamma$, and indeed $\delta(Z) = \abs{x_1
\dots x_\mia}{s} \arrr{\Rules} \abs{x_1 \dots x_\mia}{(\ell\gamma)} =
\gamma(Z)$.

In the second case, a reduction without any headmost steps, note
that $s$ has the same outer shape as $\ell$: either (a) $s = \abs{x}{s'}$
and $\ell = \abs{x}{\ell'}$, or (b) $s = \apps{a}{s_1}{s_n}$ and $\ell =
\apps{a}{\ell_1}{\ell_n}$ for some $a \in \V \cup \F$ (since $\ell$ is
a pattern, $a$ cannot be a meta-variable application or abstraction if
$n > 0$).
In case (a), we obtain $\delta$ such that $s' \arrr{\Rules}
\ell'\delta$ by an $\ell'$-formative reduction and $\delta
\arrr{\Rules} \gamma$ by the induction hypothesis (as sub-meta-terms of
linear patterns are still linear patterns).  In case (b), we let
$\gamma_i$ be the restriction of $\gamma$ to $\FMV(\ell_i)$ for $1
\leq i \leq n$; by linearity of $\ell$, all $\gamma_i$ have
non-overlapping domains and $\gamma = \gamma_1 \cup \dots \cup
\gamma_n$.  The induction hypothesis provides $\delta_1,\dots,
\delta_n$ on the same domains such that each $s_i \arrr{\Rules}
\ell_i\delta_i$ by an $\ell_i$-formative reduction and $\delta_i
\arrr{\Rules} \gamma_i$; we are done choosing $\delta := \delta_1 \cup
\dots \cup \delta_n$.

In the third case, if the first headmost step is a
$\beta$-step,
note that $s$ must have the form $\apps{(\abs{x}{t})\ u}{q_1}{q_n}$,
and moreover $s \arrr{\Rules} \apps{(\abs{x}{t'})\ u'}{q_1'}{q_n'} \arr{\beta}
\apps{t'[x:=u']}{q_1'}{q_n'} \arrr{\Rules} \ell\gamma$ by steps in the
respective subterms.  But then also $s \arr{\beta} \apps{t[x:=u]}{
q_1}{q_n} \arrr{\Rules} \apps{t'[x:=u']}{q_1'}{q_n'} \arrr{\Rules}
\ell\gamma$, and we can
get
$\delta$ and an $\ell$-formative
reduction for $\apps{t[x:=u]}{q_1}{q_n} \arrr{\Rules} \ell\delta$ by
the induction hypothesis.

In the last case, if the first headmost step is not a
$\beta$-step, then we can write $s = \apps{\afun}{s_1}{s_n}
\arrr{\Rules} \apps{\afun}{s_1'}{s_n'} = \apps{(\ell'\eta)}{s_{i+1}'
}{s_n'} \arr{\Rules} \apps{(r\eta)}{s_{i+1}'}{s_n'} \arrr{\Rules}
\ell\gamma$ for some $\afun \in \Defineds$, terms
$s_j \arrr{\Rules} s_j'$ for $1 \leq j \leq n$,
rule $\ell' \arrz r$
and substitution $\eta$ on domain $\FMV(\ell')$.  But then
$\apps{\ell'}{Z_{i+1}}{Z_n} \arrz \apps{r}{Z_{i+1}}{Z_n} \in
\RulesEta$, and for $\eta' := \eta \cup [Z_{i+1}:=s_{i+1}',
\dots,Z_n:=s_n']$ we both have $s \arrr{\Rules} (\apps{\ell'}{Z_{i+1}
}{Z_n})\eta'$ without any headmost steps, and $(\apps{r}{Z_{i+1}
}{Z_n})\eta' \arrr{\Rules} \ell\gamma$.
By the second induction hypothesis, there exists a substitution
$\xi$ such that $s \arrr{\Rules} (\apps{\ell'}{Z_{i+1}}{Z_n})\xi$ by
a $(\apps{\ell'}{Z_{i+1}}{Z_n})$-formative reduction and $\xi
\arrr{\Rules} \eta'$.  This gives $s \arr{\Rules}^+
(\apps{r}{Z_{i+1}}{Z_n})\xi \arrr{\Rules} (\apps{r}{Z_{i+1}}{Z_n})
\eta' \arrr{\Rules} \ell\gamma$, so by the first induction
hypothesis we obtain $\delta$ such that $(\apps{r}{Z_{i+1}}{Z_n})\xi
\arrr{\Rules} \ell\delta$ by an $\ell$-formative reduction, and
$\delta \arrr{\Rules} \gamma$.
\qed
\end{proof}

Essentially, \refLemma{lem:formative} states that we can postpone
reductions that are not needed to obtain an instance of the given
pattern.  This is not overly surprising, but will help us eliminate
some proof obligations later in the termination proof.

From this, we have the main result on static dependency chains.

\oldcounter{\thmSSChain}{\thmSSChainSec}
\ssChainTheThm
\startappendixcounters

\begin{proof}
In the following, let a \emph{minimal non-computable term} be a term
$s := \apps{\afun}{s_1}{s_\mia}$ with $\mia = \minarity{\Rules}(\afun)$,
such that $\afun \in \Defineds$, and $s$ is not computable but all $s_i$
are $C$-computable.  We say that $s$ is MNC.

We first observe that if $\arr{\Rules}$ is non-terminating, then there
exists a MNC term.  After all, if $\arr{\Rules}$ is non-terminating,
then there is a non-terminating term $s$, which (by
\refLemma{lem:compresults}(\ref{lem:compresults:term})) is also
non-computable.
Let $t\ (A)$ be the element of $\cand(s)$ that is given by
\refLemma{lem:scandcomplete} for $\gamma = \delta = []$.  Then $A =
\emptyset$ and $t$ has the form $\apps{\afun}{t_1}{t_\mia}$ with
$\mia = \minarity{\Rules}(\afun)$, and there exists a computable
substitution $\delta$ such that $t\delta$ is not computable but all
$t_i\delta$ are.

Thus, assuming $\arr{\Rules}$ is non-terminating, we can select a MNC
term $t_{-1}$.

Now for $i \in \N$, let a MNC term $t_{i-1} =
\apps{\afun}{q_1}{q_\mia}$ be given.  By definition of computability,
there are computable $q_{\mia+1},\dots,q_{\maa}$ such that
$\apps{\afun}{q_1}{q_\mia}$ has base type and is not computable.
Since all $q_i$ are computable this implies that $\apps{\afun}{q_1}{
q_\mia}$ is non-terminating, and since they are terminating, there
is eventually a reduction at the head: there exist a rule $\apps{
\afun}{\ell_1}{\ell_\mia} \arrz r$ and a substitution $\gamma$ such
that $\apps{\afun}{q_1}{q_\maa} \arrr{\Rules,in} \apps{\afun}{q_1'}{
q_\maa'}$ (where $\arrr{\Rules,in}$ indicates a reduction in the
argument terms $q_j$)
$= \apps{\apps{\afun}{(\ell_1\gamma)}{(\ell_\mia\gamma)}}{q_{\mia+1}'
}{q_\maa'} \arr{\Rules} \apps{(r\gamma)}{q_{\mia+1}'}{q_\maa'}$,
which latter term is still non-terminating.  But then also
$\apps{(r\gamma)}{q_{\mia+1}}{q_\maa}$ is non-terminating (as it
reduces to the term above), so $r\gamma$ is not computable.

From the above we conclude: $t_{i-1} \arrr{\Rules,in}
(\apps{\afun}{\ell_1}{\ell_\mia})\gamma
\arr{\Rules} r\gamma$, and $r\gamma$ is not computable.
By \refLemma{lem:formative}, we can safely assume that the reductions
$q_j \arrr{\Rules} \ell_j\gamma$ are $\ell_j$-formative if
$\apps{\identifier{f}}{\ell_1}{\ell_\mia}$ is a fully extended linear
pattern; and since $\apps{\afun}{\ell_1}{\ell_\mia}$ is closed we can
safely assume that $\domain(\gamma) =
\FMV(\apps{\afun}{\ell_1}{\ell_\mia})$.

Let $s_i := \apps{\afun^\sharp}{(\ell_1\gamma)}{(\ell_\mia\gamma)}$, and
note that all $\ell_j\gamma$ are computable by
\refLemma{lem:preservecomp}.
We observe that
for all $Z$ occurring in $r$ we have that $\gamma(Z)$ is $C$-computable
by a combination of accessible function passingness, computability
of $\ell_j\gamma$ and \refLemma{lem:pfp}.
As $r\gamma$ is non-computable, \refLemma{lem:scandcomplete} provides
an element $t\ (A)$ of $\cand(r)$ with pleasant minimality properties
and a computable substitution $\delta$ on domain $\FV(t)$ such that
$\gamma$ respects $A$ and $t(\gamma \cup \delta)$ is not computable.
For $\FV(t) = \{ x_1,\dots,x_n\}$, let $Z_1,\dots,Z_n$ be fresh
meta-variables; then $p :=  t[x_1:=Z_1,\dots,x_n:=Z_n] = \mathit{
metafy}(t)$, and $p\eta = t\delta$ for $\eta$ the substitution
mapping $X \in \FMV(\ell)$ to $\gamma(\ell)$ and each $Z_j$ to
$\delta(x_j)$.

Set $\rho_i := \ell^\sharp \arrdp p^\sharp\ (A)$ and $t_i :=
p^\sharp\eta$.  Then $t_i$ is MNC, because the meta-term
$t$ supplied by \refLemma{lem:scandcomplete} has the form
$\apps{\bfun}{u_1}{u_\maa}$ with $\maa = \arity(\bfun)$ and $u_j(
\gamma \cup \delta)$ is $C$-computable for each $j$ because $t
\bsuptermeq{A} u_j$.  Thus, we can continue the construction.

The chain $[(\rho_0,s_0,t_0), (\rho_1,s_1,t_1), \ldots]$ thus constructed is an
infinite formative $(\SDP(\Rules),\Rules)$-dependency chain.  That it
is a $(\SDP(\Rules),
\Rules)$-dependency chain is obvious because each $\rho_i \in
\SDP(\Rules)$ (since $t\ (A) \in \cand(r)$), because $\gamma$ respects
$A$ and $\eta$ corresponds with $\gamma$ on all meta-variables that
take arguments, and because $\FV(p) = \emptyset$ and $\domain(\eta) =
\FMV(\ell) \cup \{ Z_1,\dots, Z_n \} = \FMV(\ell) \cup \FMV(p)$.
That it is a formative chain follows by the initial selection of
$\gamma$, as we assumed formative reductions to each $\ell_j\gamma$.

It is also a computable chain: clearly we have $t_i = p\eta$ in step $i$ in
the construction above.  Suppose $p^\sharp \bsuptermeq{B} v$ and
$\eta$ respects $B$, but $(\abs{y_1 \dots y_n}{v})\eta$ is not
computable for $\FV(v) = \{ y_1,\dots,y_n \}$ -- so by
\refLemma{lem:abscomputable}, $v(\eta \cup \zeta)$ is not computable for
some computable substitution $\zeta$ on domain $\FV(v)$.  Since the
meta-variables $Z_j$ do not occur applied in $p$, we can safely
assume that $B$ contains only conditions for the meta-variables
in $\domain(\gamma)$.  By renaming each $Z_j$ back to $x_j$, we
obtain that $\gamma$ respects $B$ and $t \bsuptermeq{B} v'$ with
$v = v'[x_1:=Z_1,\dots,x_n:=Z_n]$.  But then $r\gamma \bsuptermeq{A
\cup B} t \bsuptermeq{A \cup B} v'$ and $\gamma$ respects $A \cup
B$ and $v'(\gamma \cup \delta \cup \zeta)$ is non-computable.  By
minimality of the choice $t\ (A)$, we have $v' = t$, so $v = p^\sharp$.
However, $p^\sharp\eta$ has a marked symbol $\symb{g}^\sharp$ as a
head symbol, and thus cannot be reduced at the top; by
$C$-computability of its immediate subterms, it is computable.
\qed
\end{proof}

We also prove the statement that $\AlterRules$-computability
implies minimality:

\begin{lemma}\label{lem:computableminimal}
Every $\AlterRules$-computable $(\P,\Rules)$-dependency chain is 
 minimal.
\end{lemma}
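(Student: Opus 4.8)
The plan is to establish the slightly stronger fact that each $t_i$ is itself terminating under $\arr{\Rules}$; the minimality requirement, which only concerns the \emph{strict} subterms of each $t_i$, then follows at once.

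First I would extract computability of $t_i$ directly from the defining condition of an $\AlterRules$-computable chain. Writing $\rho_i = \ell_i \arrdp p_i\ (A_i)$, recall that by \refDef{def:dp} the right-hand side $p_i$ is a \emph{closed} meta-term, so $\FV(p_i) = \emptyset$. Instantiating the condition with $v := p_i$ and $B := \emptyset$ is therefore legitimate: $p_i \bsuptermeq{\emptyset} p_i$ holds by the reflexive clause of $\bsuptermeq{A}$, the substitution $\gamma_i$ vacuously respects $\emptyset$, and the empty free-variable set makes the abstraction $\abs{x_1 \dots x_n}{v}$ collapse to $p_i$ with $n = 0$. Hence the condition yields that $p_i\gamma_i = t_i$ is $C_{\AlterRules}$-computable.

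Next I would descend from computability to termination. Every $C_{\AlterRules}$-computable term is $\arr{\AlterRules}$-terminating (by \refLemma{lem:compresults}(\ref{lem:compresults:term}), instantiating $\comprel$ with $\arr{\AlterRules}$), and since $\AlterRules$-computability of the chain requires $\arr{\AlterRules} \supseteq \arr{\Rules}$, the term $t_i$ is also $\arr{\Rules}$-terminating. Finally, any strict subterm $u$ of $t_i$ inherits termination: an infinite reduction $u = u_0 \arr{\Rules} u_1 \arr{\Rules} \cdots$ would, by monotonicity of $\arr{\Rules}$, lift through the surrounding context $C$ to an infinite reduction $t_i = C[u_0] \arr{\Rules} C[u_1] \arr{\Rules} \cdots$, contradicting termination of $t_i$.

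I expect no deep obstacle, as the argument is short; the one point needing care is the lifting step when $u$ lies below a binder of $t_i$, so that a variable bound in $t_i$ is free in the standalone term $u$. Here I would rely on the fact, noted after \refDef{def:rule}, that $\arr{\Rules}$ permits reduction at any position, even below a $\lambda$, so that each step on $u$ can be replayed inside $C$ without capture and the lifted reduction stays well-formed.
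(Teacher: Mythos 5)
Your proof is correct and follows essentially the same route as the paper's own argument: instantiate the $\AlterRules$-computability condition with $v := p_i$ and $B := \emptyset$ (using reflexivity of $\bsuptermeq{\emptyset}$ and closedness of $p_i$) to get $C_{\AlterRules}$-computability of $t_i$, then pass to $\arr{\AlterRules}$-termination via Lemma~\ref{lem:compresults}, to $\arr{\Rules}$-termination via $\arr{\AlterRules} \supseteq \arr{\Rules}$, and finally to termination of strict subterms. Your write-up is in fact slightly more careful than the paper's (spelling out why the abstraction collapses and why subterm termination lifts through contexts, including below binders), but the substance is identical.
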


\begin{proof}
Let
$[(\rho_0,s_0,t_0), (\rho_1,s_1,t_1), \ldots]$
be a $\AlterRules$-computable
$(\P,\Rules)$-chain and let $i \in \N$; we must prove that the strict
subterms of $t_i$ are terminating under $\arr{\Rules}$.  By
definition, since $\bsuptermeq{\emptyset}$ is a
reflexive relation,
$t_i$ is $C_{\AlterRules}$-computable where $C_{\AlterRules}$ is given by
\refThm{thm:defC} for a relation $\arr{\AlterRules} \mathop{\supseteq}
\arr{\Rules}$.  By
\refLemma{lem:compresults}(\ref{lem:compresults:term}), $t_i$ is
therefore terminating under $\arr{\AlterRules}$, so certainly under
$\arr{\Rules}$ as well.  The strict subterms of a terminating term are
all terminating.
\qed
\end{proof}

\subsection{Static dependency pairs: the inverse result}\label{app:sdp:reverse}

In this section, we prove \refThm{thm:chainreverse}, which states that
the existence of certain kinds of dependency chains implies
non-termination of the original AFSM.  This is not a true inverse of
\refThm{thm:sschain} or even \refThm{thm:basicchain}: as observed in
the text, there are terminating AFSMs that do admit an infinite chain.
It does, however, give us a way to use the DP framework to prove
non-termination in some cases.

We begin by exploring the connection between $\bsuptermeq{\beta}$ and
reduction steps.  Note that this result is \emph{not} limited to
PA-AFP AFSMs, so in the following we do not assume that the rules or
dependency pairs involved satisfy arity restrictions.

\begin{lemma}\label{lem:candidatereduce}
Let $s,t$ be meta-terms and suppose $s \bsuptermeq{A} t$ for some set
$A$ of meta-variable conditions.  Then for any substitution $\gamma$
that respects $A$ and has a finite domain with $\FMV(s) \subseteq
\domain(\gamma) \subseteq \M$: $s\gamma\ (\bettersupterm \mathop{\cup}
\arr{\beta})^*\ t\gamma$.
\end{lemma}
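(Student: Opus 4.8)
The plan is to prove \refLemma{lem:candidatereduce} by induction on the derivation of $s \bsuptermeq{A} t$, following the case structure in the definition of $\bsuptermeq{A}$. Since $\bettersupterm \cup \arr{\beta}$ is being built up by the reflexive-transitive closure, each inductive step needs to contribute finitely many $\bettersupterm$- or $\arr{\beta}$-steps, and these compose by transitivity. The key observation that makes the induction go through is that substitution commutes with both the ``subterm'' steps (which become $\bettersupterm$ steps after instantiation) and the ``$\beta$-at-root'' steps (which become actual $\arr{\beta}$ steps after instantiation), and that the meta-variable condition $A$ is exactly what licenses passing into a meta-variable argument.

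First I would handle the base case $s = t$, which is immediate since $s\gamma = t\gamma$ and the reflexive closure covers it. Then for each recursive clause I would argue as follows. For $s = \abs{x}{u}$ with $u \bsuptermeq{A} t$: after instantiation $s\gamma = \abs{x}{(u\gamma)} \bettersupterm u\gamma$ (using clause (a) of $\bettersupterm$), and $u\gamma\ (\bettersupterm \cup \arr{\beta})^*\ t\gamma$ by the induction hypothesis. For $s = \apps{a}{s_1}{s_n}$ with $a \in \F \cup \V$ and some $s_i \bsuptermeq{A} t$: instantiation gives $s\gamma = \apps{a}{(s_1\gamma)}{(s_n\gamma)}$, from which $s\gamma \bettersupterm s_i\gamma$ (noting $s_i\gamma$ is a fully applied subterm since $a$ is a symbol, not a partially applied head), and then apply the induction hypothesis. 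For the meta-variable cases $s = \apps{\meta{Z}{t_1,\dots,t_\mac}}{s_1}{s_n}$, if we descend into some $s_i$ the argument is the same $\bettersupterm$-step reasoning; the interesting subcase is descending into a meta-variable argument $t_i$ with $(Z:i) \in A$, where respecting $A$ guarantees $\gamma(Z) = \abs{x_1\dots x_j}{v}$ with $x_i \in \FV(v)$, so that after the implicit $\beta$-reduction performed by meta-substitution, $t_i\gamma$ genuinely appears as a fully applied subterm of $s\gamma$, reachable by $\arr{\beta}$ then $\bettersupterm$ steps.

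The main obstacle will be the $\beta$-redex clause $s = \apps{(\abs{x}{u})}{s_0}{s_n}$, which splits into descending into some $s_i$ (routine, as above) or continuing from $\apps{u[x:=s_0]}{s_1}{s_n}$. In the latter case I must show that $s\gamma = \apps{(\abs{x}{(u\gamma)})}{(s_0\gamma)}{(s_n\gamma)} \arr{\beta} \apps{(u\gamma)[x:=s_0\gamma]}{(s_1\gamma)}{(s_n\gamma)}$ and that this target equals $(\apps{u[x:=s_0]}{s_1}{s_n})\gamma$; the delicate point is verifying the substitution lemma $(u[x:=s_0])\gamma = (u\gamma)[x:=s_0\gamma]$ under the variable-freshness conventions (so that $x$ does not clash with $\domain(\gamma)$ or free variables of its range), after which the induction hypothesis finishes. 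The same substitution-commutation care arises in the meta-variable-argument case, where the implicit $\beta$-steps folded into meta-substitution must be unfolded into explicit $\arr{\beta}$ steps. Throughout, I would lean on $\alpha$-conversion to keep bound variables disjoint from $\gamma$, and on the fact established just after \refDef{def:terms} that every term decomposes as $\apps{\head(s)}{\cdots}{}$, to justify that the descents indeed produce \emph{fully applied} subterms and thus $\bettersupterm$ rather than merely $\supterm$ steps.
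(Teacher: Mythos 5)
Your proposal follows the paper's own proof essentially step for step: the same induction on the derivation of $s \bsuptermeq{A} t$, the same case split along the clauses of the definition, the same use of $\alpha$-conversion to keep bound variables away from $\gamma$ (harmless since $\domain(\gamma) \subseteq \M$), and the same substitution-commutation identity $(u[x:=s_0])\gamma = (u\gamma)[x:=s_0\gamma]$ to turn the root-$\beta$ clause into an explicit $\arr{\beta}$ step. As a matter of approach there is nothing separating the two proofs.

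However, the step you call ``the interesting subcase'' -- descending into argument $t_i$ of a meta-variable with $(Z:i) \in A$ -- is asserted rather than proved, and the assertion is too strong. (The paper's proof makes exactly the same leap; its sentence ``therefore $\gamma(Z) \bettersuptermeq t_i\gamma$'' is garbled, but the intended claim is yours.) Respecting $A$ only gives $\gamma(Z) = \abs{x_1 \dots x_j}{v}$ with $x_i \in \FV(v)$; it does \emph{not} give that $x_i$ occurs at a fully applied position of $v$. If $x_i$ occurs only as the head of an application, the claim fails: take $\gamma(Z) = \abs{x_1}{x_1\ \nul}$, which respects $\{Z : 1\}$, and $s = \meta{Z}{t_1}$ with $t_1 = \abs{y}{y}$, so that $s \bsuptermeq{\{Z:1\}} t_1$. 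Then $s\gamma = (\abs{y}{y})\ \nul$ (the meta-substitution contracts the redex, so no explicit $\arr{\beta}$ step is available to ``unfold''), its only strict fully applied subterms are $\nul$ and $y$ (clause (c) of $\bettersupterm$ is strict, so heads of applications are excluded), and its only $\arr{\beta}$-reduct is $\nul$; the target $t_1\gamma = \abs{y}{y}$ is unreachable by $(\bettersupterm \cup \arr{\beta})^*$. So this case cannot be completed for $\bettersupterm$ as stated. Everything does go through verbatim for the ordinary subterm relation $\supterm$, whose head clause is non-strict (any free occurrence of $x_i$ in $v$ then yields $v[\vec{x}:=\vec{t\gamma}] \suptermeq t_i\gamma$), and that weaker form is all that is ever needed: the proof of \refLemma{lem:dpreduce} immediately relaxes $\bettersupterm$ to $\supterm$ anyway. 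The honest repair, for your write-up and for the paper alike, is to state the lemma with $(\supterm \cup \arr{\beta})^*$, or to strengthen the ``respects'' condition to demand a fully applied occurrence of $x_i$.
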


\begin{proof}
By induction on the definition of $\bsuptermeq{A}$.  Consider the last
step in its derivation.
\begin{itemize}
\item If $s = t$ then $s\gamma = t\gamma$.
\item If $s = \abs{x}{u}$ and $u \bsuptermeq{A} t$, then by
  $\alpha$-conversion we can assume that $x \notin \FV(\gamma(Z))$ for
  any $Z \in \FMV(s)$. Thus, $s\gamma = \abs{x}{(u\gamma)}
  \bettersupterm u\gamma\ (\bettersupterm \mathop{\cup}
  \arr{\beta})^*\ t\gamma$ by the induction hypothesis.
\item If $s = \apps{(\abs{x}{u})}{s_0}{s_n}$ and $\apps{u[x:=s_0]}{
  s_1}{s_n} \bsuptermeq{A} t$, then by $\alpha$-conversion we can
  safely assume that $x$ is fresh w.r.t.\ $\gamma$ as above; thus,
  $s\gamma = \apps{(\abs{x}{(u\gamma)})}{(s_0\gamma)}{\linebreak
  (s_n\gamma)}
  \arr{\beta} \apps{(u\gamma[x:=s_0\gamma])}{(s_1\gamma)}{(s_n\gamma)}
  = (\apps{u[x:=s_0]}{s_1}{s_n})\gamma$, which reduces to $t\gamma$ by
  the induction hypothesis.
\item If $s = \apps{u}{s_1}{s_n}$ for $u$ an abstraction, variable,
  function symbol or meta-variable application, and $s_i \bsuptermeq{A}
  t$, then $s\gamma = \apps{(u\gamma)}{(s_1\gamma)}{(s_n\gamma)}
  \bettersupterm s_i\gamma\ (\bettersupterm \mathop{\cup}
  \arr{\beta})^*\ t\gamma$ by the induction hypothesis.
\item If $s = \apps{\meta{Z}{t_1,\dots,t_\mac}}{s_1}{s_n}$ and $t_i
  \bsuptermeq{A} t$ for some $1 \leq i \leq k$ with $(Z : i) \in A$,
  then we can write $\gamma(Z) = \abs{x_1 \dots x_n}{w}$ (where $n
  \leq \mac$), and $s\gamma = \apps{w[x_1:=t_1\gamma,\dots,x_n:=t_n
  \gamma]}{(t_{n+1}\gamma)}{(t_\mac\gamma)}$.
  Since $\gamma$ respects $A$, either $x_i$ occurs in $w$ or $i > n$;
  therefore $\gamma(Z) \bettersuptermeq t_i\gamma$.
  We again complete by the induction hypothesis.
\qed
\end{itemize}
\end{proof}

In fact, the text is ambiguous regarding the definition of $\SDP$ when
an AFSM is not properly applied, since $\minarity{\Rules}(\afun)$ may
not be uniquely defined.  However, the result holds for \emph{any}
choice of $\minarity{\Rules}(\afun)$.  In the following lemma, we only
use that the elements of $\SDP(\Rules)$ are DPs
$\apps{\afun^\sharp}{\ell_1}{\ell_\mia} \arrdp
\apps{\bfun^\sharp}{p_1}{p_i}\ (A)$ where $\apps{\afun^\sharp}{
\ell_1}{\ell_\mia} \arrz r$ is a rule and there exist $p_{i+1} \dots
p_n$ such that $r \bsuptermeq{A} \apps{\bfun}{p_1}{p_n}$.

\begin{lemma}\label{lem:dpreduce}
For $\ell^\sharp \arrdp p^\sharp\ (A) \in \SDP(\Rules)$ such that
$\FMV(p) \subseteq \FMV(\ell)$, and substitution $\gamma$ on domain
$\FMV(\ell)$ such that $\gamma$ respects the meta-variable conditions
in $A$: both $\ell\gamma$ and $p\gamma$ are terms and $\ell\gamma\ 
(\arr{\Rules} \mathop{\cup} \supterm)^+\ p\gamma$.
\end{lemma}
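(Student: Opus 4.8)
The plan is to read the structure of the dependency pair off the definition of $\SDP$ and then chain a single rule step together with the candidate-reduction supplied by \refLemma{lem:candidatereduce}. By \refDef{def:sdp}, a pair $\ell^\sharp \arrdp p^\sharp\ (A) \in \SDP(\Rules)$ arises from a rule $\ell \arrz r \in \Rules$ and a candidate $\hat{p}\ (A) \in \cand(r)$ with $p^\sharp = \metafy(\hat{p}^\sharp)$; by \refDef{def:candidates} we may write $\hat{p} = \apps{\bfun}{p_1}{p_\mia}$ with $\bfun \in \Defineds$, $\mia = \minarity{\Rules}(\bfun)$, and there are meta-terms $p_{\mia+1},\dots,p_n$ ($n \geq \mia$) with $r \bsuptermeq{A} \apps{\bfun}{p_1}{p_n}$. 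The first step I would take is to use conservativity to show that $\metafy$ acts trivially here. Indeed, $\metafy$ replaces each free variable of $\hat{p}$ (i.e.\ each variable freed by descending under a binder of $r$) by a new corresponding meta-variable that does not occur in $\ell$, so if $\FV(\hat{p}) \neq \emptyset$ then $\FMV(p)$ would contain a meta-variable outside $\FMV(\ell)$, contradicting $\FMV(p) \subseteq \FMV(\ell)$. Hence $\FV(\hat{p}) = \emptyset$, so $p = \hat{p} = \apps{\bfun}{p_1}{p_\mia}$ is closed and $r \bsuptermeq{A} \apps{p}{p_{\mia+1}}{p_n}$.

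With this the two ``is a term'' claims are immediate: $\ell$ is a closed pattern with $\FMV(\ell) = \domain(\gamma)$ and $\gamma$ maps its domain to terms, so $\ell\gamma$ has no meta-variables and is a term; likewise $\FMV(p) \subseteq \FMV(\ell) = \domain(\gamma)$ gives that $p\gamma$ is a term. For the reduction itself I would first invoke the (\textsf{Rule}) clause of \refDef{def:rule}: since $\ell \arrz r \in \Rules$ and $\domain(\gamma) = \FMV(\ell)$, we obtain $\ell\gamma \arr{\Rules} r\gamma$ (note $\FMV(r) \subseteq \FMV(\ell)$, so $r\gamma$ is a term). Next, as $r \bsuptermeq{A} \apps{\bfun}{p_1}{p_n}$, $\gamma$ respects $A$, and $\FMV(r) \subseteq \FMV(\ell) = \domain(\gamma) \subseteq \M$ with $\domain(\gamma)$ finite, \refLemma{lem:candidatereduce} yields $r\gamma\ (\bettersupterm \cup \arr{\beta})^*\ (\apps{\bfun}{p_1}{p_n})\gamma$.

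It then remains to descend from $(\apps{\bfun}{p_1}{p_n})\gamma = \apps{(p\gamma)}{(p_{\mia+1}\gamma)}{(p_n\gamma)}$ to $p\gamma$ and to normalise the relations involved. The $n - \mia$ trailing arguments are stripped off by clause (c) of the subterm relation $\supterm$, which (unlike $\bettersupterm$) permits descending into the head of an application up to equality, giving $\apps{(p\gamma)}{(p_{\mia+1}\gamma)}{(p_n\gamma)}\ \suptermeq^{*}\ p\gamma$. Finally, using that every fully applied subterm is a subterm (so $\bettersupterm \subseteq \supterm$) and that $\arr{\beta} \subseteq \arr{\Rules}$, the whole composite rewrites to $\ell\gamma\ \arr{\Rules}\ r\gamma\ (\arr{\Rules} \cup \supterm)^{*}\ p\gamma$, and the leading rule step makes the relation nonempty, i.e.\ $\ell\gamma\ (\arr{\Rules} \cup \supterm)^{+}\ p\gamma$. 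The only genuinely delicate point is the bookkeeping of the first paragraph: recognising that conservativity is exactly the hypothesis ensuring the metafied right-hand side coincides with the true (closed) candidate, so that the instance $p\gamma$ is precisely the term reached by \refLemma{lem:candidatereduce}; once that is settled, everything else is routine.
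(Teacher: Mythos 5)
Your proposal is correct and follows essentially the same route as the paper's proof: identify the underlying rule $\ell \arrz r$ and candidate with $r \bsuptermeq{A} \apps{p}{r_1}{r_n}$ (using conservativity to ensure $\metafy$ introduces nothing new), take the rule step $\ell\gamma \arr{\Rules} r\gamma$, apply \refLemma{lem:candidatereduce}, strip the trailing arguments with $\suptermeq$, and conclude via the inclusions $\bettersupterm \subseteq \supterm$ and $\arr{\beta} \subseteq \arr{\Rules}$. The only difference is that you spell out in more detail the bookkeeping the paper leaves implicit (why $\metafy$ acts trivially under the hypothesis $\FMV(p) \subseteq \FMV(\ell)$, and why $\ell\gamma,\ p\gamma$ are terms), which is a faithful elaboration rather than a different argument.
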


\begin{proof}
By definition of $\SDP$ and the fact that no fresh meta-variables
occur on the right, there is a rule $\ell \arrz r$ such that
$p\ (A) \in \cand(r)$, so there are $r_1,\dots,r_n$ such that
$r \bsuptermeq{A} \apps{p}{r_1}{r_n}$.
Clearly, we have $\ell\gamma \arr{\Rules} r\gamma$ by that rule,
and $r\gamma\ (\supterm \mathop{\cup} \arr{\Rules})^*\ (\apps{p}{r_1
}{r_n})\gamma \suptermeq p\gamma$ by \refLemma{lem:candidatereduce}
(using that $\bettersupterm$ is a sub-relation of $\supterm$).
We are done because $\arr{\beta}$ is included in $\arr{\Rules}$.
\qed
\end{proof}

This allows us to draw the required conclusion:

\oldcounter{\thmReverseChain}{\thmReverseChainSec}
\reverseChainTheThm
\startappendixcounters

\begin{proof}
Let $\unsharp{s_i}, \unsharp{t_i}$ denote the terms $s_i,t_i$ with all
$\sharp$ marks removed.
An infinite $(\SDP(\Rules),\Rules)$-dependency chain that does not use
any DPs where fresh meta-variables are introduced on the right-hand
side provides a sequence $(s_i,t_i)$ for $i \in \N$ such that for all
$i$, $\unsharp{s_i}\ (\arr{\Rules} \mathop{\cup} \supterm)^+\ 
\unsharp{t_i}$ (by \refLemma{lem:dpreduce}), and $\unsharp{t_i}
\arrr{\Rules} \unsharp{s_{i+1}}$.  Thus, we obtain an infinite
$\arr{\Rules} \mathop{\cup} \supterm$ sequence, which provides an
infinite $\arr{\Rules}$ sequence due to monotonicity of $\arr{\Rules}$.
\qed
\end{proof}

 \subsection{Original static dependency pairs}\label{app:sdp:original}
 
 Since the most recent work on static dependency pairs has been defined
 for a polymorphic variation of the HRS formalism, it is not evident
 from sight how our definitions relate.  Here, we provide context by
 showing how the definitions from~\cite{kus:iso:sak:bla:09,suz:kus:bla:11}
 apply to the restriction of HRSs that can be translated to AFSMs.
 
 It should be noted that HRSs, as translated to AFSMs, should be seen
 as $\eta$-expanded rules; in practice, for $\ell \arrz r$ we have
 that $\ell \bettersuptermeq s$ or $r \bettersuptermeq s$ implies that
 either $s$ is an abstraction, or $s$ has base type.  This definition
 implies that the system is properly applied, but is much stronger.
 We will refer to this restriction as \emph{fully applied}.
 
 \begin{definition}\label{def:pfp1}
 An AFSM $(\F,\Rules)$ is plain function passing following
 \cite{kus:iso:sak:bla:09} if:
 \begin{itemize}
 \item
 for all rules $\apps{\afun}{\ell_1}{\ell_\maa} \arrz r$ and all
   $Z \in \FMV(r)$:
   if $Z$ does not have base type, then there are variables
   $x_1,\dots,x_n$ and some $i$ such that $\ell_i = \abs{x_1 \dots
   x_n}{\meta{Z}{x_{j_1},\dots,x_{j_\mia}}}$.
 \end{itemize}
 
 An AFSM $(\F,\Rules)$ is plain function passing following
 \cite{suz:kus:bla:11} if:
 \begin{itemize}
 \item
   for all rules $\apps{\afun}{\ell_1}{\ell_\maa} \arrz r$ and all
   $Z \in \FMV(r)$: there are
      some variables
   $x_1,\dots,x_\mia$ and
   some $i \leq \maa$ such that $\ell_i \safesup \meta{Z}{x_1,\dots,
   x_k}$, where the relation $\safesup$ is given by:
   \begin{itemize}
   \item
     $s \safesup s$,
   \item
     $\abs{x}{t} \safesup s$ if $t \safesup s$,
   \item
     $\apps{x}{t_1}{t_n} \safesup s$ if $t_i \safesup s$ for some
     $i$ with $x \in \V \setminus \FV(t_i)$%
   \item
     $\apps{\identifier{f}}{t_1}{t_n} \safesup s$ if $t_i \safesup
     s$ for some $t_i$ of base type.%
     \footnote{The authors of \cite{suz:kus:bla:11} refer to such
     subterms as \emph{accessible}.  We do not use this terminology,
     as it does not correspond to the accessibility notion in
     \cite{bla:jou:oka:02,bla:jou:rub:15} which we follow here.
     In particular, the accessibility notion we use considers the
     relation $\gracsortup$, which corresponds to the positive/negative
     inductive types in \cite{bla:jou:oka:02,bla:jou:rub:15}.  This is
     not used in \cite{suz:kus:bla:11}.}
   \end{itemize}
 \end{itemize}
 
 In addition, in both cases right-hand sides of rules are assumed to be
 presented in $\beta$-normal form and are fully applied.
 \end{definition}
 
 The definitions of PFP in \cite{kus:iso:sak:bla:09,suz:kus:bla:11}
 also capture some non-pattern HRSs,
 but these cannot be represented as AFSMs.
 Note that the key difference
 between $\safesup$ and $\suptermeq$ for patterns is that the former
 is not allowed to descend into a non-base argument of a function symbol.
 The same difference applies when comparing $\safesup$ with $\gracc$:
 $\safesup$ also cannot descend into the accessible higher-order
 arguments.
 
 \begin{example}
 The rules from \refEx{ex:mapintro} are PFP
 following both defi\-nitions.
 The rules from \refEx{ex:deriv} are PFP following
 \cite{suz:kus:bla:11} but not following \cite{kus:iso:sak:bla:09}.
 The rules from \refEx{ex:ordrec} are not PFP
 in either definition, since $\symb{lim}\ F \safesup F$ does
 not hold (although they \emph{are} AFP).
 \end{example}
 
 For a PFP AFSM, static dependency pairs are then
 defined as \emph{pairs} $\ell^\sharp \arrdp \apps{\identifier{f}^{
 \sharp}}{p_1}{p_m}$.
 This allows for a very simple notion of chains, even closer to the one
 in the first-order setting than our \refDef{def:chain}.
 
 \begin{definition}\label{def:sdpchain1}
 A static dependency chain following
 \cite{kus:iso:sak:bla:09,suz:kus:bla:11} is an infinite sequence
 $[(\ell_0 \arrdp p_0,\gamma_0),(\ell_1 \arrdp p_1,\gamma_1),\ldots]$
 where
 $p_i\gamma_i \arrr{\Rules} \ell_{i+1}\gamma_{i+1}$ for all $i$.
 It is \emph{minimal} if each $p_i\gamma_i$ is terminating under
 $\arr{\Rules}$.
 \end{definition}
 
 Both papers present a counterpart of Theorems \ref{thm:basicchain}
 and \ref{thm:sschain} that roughly translates to the following:
 
 \begin{theorem}[\cite{kus:iso:sak:bla:09,suz:kus:bla:11}]\label{thm:pfp1}
 Let $\Rules$ be plain function passing following either definition in
 \refDef{def:pfp1}.  Let $\P = \{ \ell^\sharp \arrdp \apps{
 \identifier{f}^\sharp}{p_1}{p_\maa} \mid \ell \arrz r \in \Rules \wedge
 r \suptermeq \apps{\identifier{f}}{p_1}{p_\maa} \wedge \identifier{f}
 \in \Defineds \wedge \maa = \arity(\identifier{f}) \}$.  If $\arr{\Rules}$
 is non-terminating, then there is an infinite minimal static
 dependency chain with all $\ell_i \arrdp p_i \in \P$.
 \end{theorem}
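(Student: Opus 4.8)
The plan is to obtain \refThm{thm:pfp1} as a consequence of the machinery already developed for PA-AFP AFSMs, rather than redoing the computability argument from scratch. Concretely, I would show that every fully applied PFP AFSM (under either clause of \refDef{def:pfp1}) is properly applied and accessible function passing, invoke \refThm{thm:sschain} to extract an infinite $\Rules$-computable $(\SDP(\Rules),\Rules)$-chain, and finally read this chain off as a minimal static dependency chain over $\P$ in the sense of \refDef{def:sdpchain1}.

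First I would fix the sort ordering $\greqsort$ that equates all sorts. Then $\grsort = \emptyset$, so unravelling \refDef{def:accArgs} gives $\Acc(\afun) = \{ i \mid \atype_i \in \Sorts \}$ (the base-type argument positions of $\afun$) and $\Acc(x)$ equal to all argument positions of $x$. I would then verify $\safesup\ \subseteq\ \gracc$ on patterns by induction on the definition of $\safesup$: its function-symbol clause only descends into base-type arguments, which are exactly the accessible ones, and its variable clause carries the same non-occurrence side condition $x \notin \FV(t_i)$ that $\gracc$ requires, while its abstraction and reflexivity clauses match verbatim. Hence whenever PFP following \cite{suz:kus:bla:11} supplies $\ell_i \safesup \meta{Z}{x_1,\dots,x_\mia}$ we also have $\ell_i \gracc \meta{Z}{x_1,\dots,x_\mia}$, and the stricter clause of \cite{kus:iso:sak:bla:09} is an even more special case. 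The remaining obligation --- that \emph{every} $Z \in \FMV(r)$, not just the higher-type ones constrained by \cite{kus:iso:sak:bla:09}, be accessible from some $\ell_i$ --- is settled by the fully applied restriction: a base-type meta-variable necessarily occurs in a base-type, hence accessible, position of $\ell$. Full application also yields properly applied directly, so the AFSM is PA-AFP.

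Next I would line up the two notions of dependency pair. In the fully applied, $\beta$-normal fragment the BRSMT relation $\bsuptermeq{A}$ collapses to ordinary subterm $\suptermeq$ (there are no head $\beta$-redexes to contract), so by the reasoning behind \refEx{ex:metanonapplied} the candidates of $r$ are precisely its fully applied subterms $\apps{\afun}{p_1}{p_\maa}$ with $\afun \in \Defineds$. Consequently each element of $\SDP(\Rules)$ is $\ell^\sharp \arrdp \metafy(p^\sharp)\ (A)$ for such a subterm $p$, and it maps to the pair $\ell^\sharp \arrdp \apps{\afun^\sharp}{p_1}{p_\maa} \in \P$ simply by forgetting the meta-variable conditions $A$ and renaming back the meta-variables introduced by $\metafy$ to the variables of $r$ they replaced. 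A meta-substitution $\gamma_i$ respecting $A_i$ is then exactly a substitution instantiating those freed variables. Applying \refThm{thm:sschain} to the non-terminating AFSM produces an infinite $\Rules$-computable $(\SDP(\Rules),\Rules)$-chain $[(\rho_0,s_0,t_0),(\rho_1,s_1,t_1),\ldots]$; reading off $\gamma_i$ with $s_i = \ell_i\gamma_i$, $t_i = p_i\gamma_i$ and using $t_i \arrr{\Rules} s_{i+1}$ gives $p_i\gamma_i \arrr{\Rules} \ell_{i+1}\gamma_{i+1}$, i.e.\ an infinite static dependency chain over $\P$. Minimality is immediate: an $\Rules$-computable chain is minimal by \refLemma{lem:computableminimal}, so in particular each $t_i = p_i\gamma_i$ is $\arr{\Rules}$-terminating.

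The part requiring most care, and the main obstacle, is the second step: matching PFP's $\safesup$ --- which, unlike $\gracc$, refuses to descend into the higher-order accessible arguments of a constructor --- against $\gracc$ for \emph{all} constrained meta-variables, and confirming that the base-type meta-variables left unconstrained by \cite{kus:iso:sak:bla:09} really are accessible under the fully applied restriction. The chain translation in the third paragraph is then routine bookkeeping, since dropping the meta-variable conditions only weakens the requirements imposed on the chain.
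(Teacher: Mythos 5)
You should first note what the paper actually does with this statement: it does \emph{not} prove it. \refThm{thm:pfp1} is imported from the cited works (``Both papers present a counterpart of Theorems \ref{thm:basicchain} and \ref{thm:sschain} that roughly translates to the following''), and the only justification in the paper is the closing remark that $\gracc$ coincides with $\safesup$ when $\greqsort$ equates all sorts. Your plan --- re-deriving the literature's theorem as a corollary of \refThm{thm:sschain} by showing PFP $\Rightarrow$ PA-AFP --- is therefore a genuinely different, self-contained route, and half of it is solid: for the notion of \cite{suz:kus:bla:11}, your clause-by-clause verification that $\safesup\ \subseteq\ \gracc$ under the all-equating sort ordering is exactly the paper's own remark, full application does give properly applied, and the chain translation (un-metafying, discarding the meta-variable conditions, which only weakens the constraints on a chain, and obtaining minimality from \refLemma{lem:computableminimal}) is correct bookkeeping.

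The gap is in the case of \cite{kus:iso:sak:bla:09}, at exactly the step you flag as delicate. Your claim that ``a base-type meta-variable necessarily occurs in a base-type, hence accessible, position of $\ell$'' conflates the \emph{type} at which $Z$ occurs with accessibility of the \emph{path} to it: $\gracc$ requires every function-symbol argument traversed between the root of $\ell_i$ and $Z$ to be an accessible (here: base-type) argument position. Concretely, take $\afun : \symb{o} \arrtype \symb{o}$, $\cfun : (\symb{o} \arrtype \symb{o}) \arrtype \symb{o}$, $Z : \symb{o}$ and the rule $\afun\ (\cfun\ (\abs{x}{\meta{Z}{}})) \arrz \meta{Z}{}$. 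This rule is $\beta$-normal, fully applied, and PFP following \cite{kus:iso:sak:bla:09} as rendered in \refDef{def:pfp1} (its only meta-variable has base type, so that definition imposes no condition at all), yet $Z$ is not accessible in $\ell_1 = \cfun\ (\abs{x}{\meta{Z}{}})$ for \emph{any} sort ordering: one would need $1 \in \Acc(\cfun)$, i.e.\ $\symb{o} \grsort \symb{o}$, which is impossible. So the AFSM is not AFP, \refThm{thm:sschain} does not apply, and your reduction establishes \refThm{thm:pfp1} only for the \cite{suz:kus:bla:11} notion. (The example itself is terminating and generates no DPs, so it does not contradict \refThm{thm:pfp1}; it only shows your route to it breaks. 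For the \cite{kus:iso:sak:bla:09} notion one must fall back on that paper's own proof --- and the example shows that the paper's parenthetical claim that \refDef{def:apfp} ``includes both notions'' is itself only accurate for meta-variables actually constrained by those definitions.)
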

 
 Note that the chains are proved \emph{minimal}, but not
 \emph{computable} (which is a new definition in the current paper).
 
 However, there is no counterpart to \refThm{thm:chainreverse}: this
 result relies on the presence of meta-variable conditions, which are
 not present in the static DPs from the literature.
 
 \medskip
 Note that $\gracc$ corresponds to $\safesup$ (from
 \refDef{def:pfp1}) if $\greqsort$
  equates all sorts (as then always
 $\Acc(\afun) = \{$ the indices of all base type arguments
 of $\afun\}$).  Thus,
 \refDef{def:apfp} includes both notions from \refDef{def:pfp1}.

\section{Dependency pair processors}\label{app:processors}

In this appendix, we prove the soundness -- and where applicable
completeness -- of all DP processors defined in the text.

We first observe:

\begin{lemma}\label{lem:subsetcomplete}
If $\Proc$ maps every DP problem to a set of problems such that for
all $(\P',\Rules',m',f') \in \Proc(\P,\Rules,m,f)$ we have that $\P'
\subseteq \P$, $\Rules' \subseteq \Rules$, $m' \succeq m$ and $f' =
f$, then $\Proc$ is complete.
\end{lemma}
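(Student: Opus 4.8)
The plan is to unfold the definition of completeness from \refDef{def:proc} and exploit the fact that, under the stated hypothesis, $\Proc$ always returns a \emph{set} of DP problems and never $\no$. Hence $\Proc(\adpprob) = \no$ can never occur, and establishing completeness reduces to a single implication: whenever some element $(\P',\Rules',m',f') \in \Proc(\P,\Rules,m,f)$ is infinite, the original problem $(\P,\Rules,m,f)$ is infinite too. A key preliminary observation is that the notion of an \emph{infinite} DP problem in \refDef{def:dpproblem} makes no reference to the flags $m$ and $f$; consequently the hypotheses $m' \succeq m$ and $f' = f$ will play no role, and only the inclusions $\P' \subseteq \P$ and $\Rules' \subseteq \Rules$ are needed.

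First I would fix an infinite element $(\P',\Rules',m',f')$ of $\Proc(\P,\Rules,m,f)$ and case-split on the two disjuncts in the definition of infiniteness. In the first case $\arr{\Rules'}$ is non-terminating; since $\Rules' \subseteq \Rules$ and $\arr{\Rules}$ is by \refDef{def:rule} the smallest monotonic relation closed under the (\textsf{Rule}) and (\textsf{Beta}) steps for $\Rules$, every $\arr{\Rules'}$-step is also an $\arr{\Rules}$-step, so $\arr{\Rules} \mathop{\supseteq} \arr{\Rules'}$ is non-terminating, making $(\P,\Rules,m,f)$ infinite.

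In the second case there is an infinite $(\P',\Rules')$-chain all of whose DPs are conservative. Here I would argue that the very same sequence is an infinite $(\P,\Rules)$-chain: each DP of the chain lies in $\P' \subseteq \P$; each reduction step $u_j \arrr{\Rules'} w_j$ required by clause~\ref{depchain:reduce} of \refDef{def:chain} is also an $\arrr{\Rules}$-reduction by the same monotonicity observation; and clause~\ref{depchain:dp} together with the meta-variable conditions is insensitive to enlarging the rule set. Conservativity of the DPs is preserved verbatim, so an infinite conservative $(\P,\Rules)$-chain exists and $(\P,\Rules,m,f)$ is infinite.

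The whole argument is essentially a one-line monotonicity observation, so I do not anticipate any genuine obstacle; the only point requiring a moment's care is confirming that $\Rules' \subseteq \Rules$ indeed yields $\arr{\Rules'} \mathop{\subseteq} \arr{\Rules}$, which is immediate from the ``smallest monotonic relation'' formulation in \refDef{def:rule}. I would close by remarking that, since $\Proc$ never outputs $\no$, the two cases above exhaust every way in which $\Proc(\adpprob)$ can witness infiniteness, whence $\Proc$ is complete.
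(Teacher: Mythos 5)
Your argument is sound with respect to the literal wording of \refDef{def:dpproblem}, and its skeleton is the same as the paper's: $\Proc$ never returns $\no$, $\Rules' \subseteq \Rules$ gives $\arr{\Rules'} \mathop{\subseteq} \arr{\Rules}$, and the witness of infiniteness transfers from $(\P',\Rules')$ to $(\P,\Rules)$. The genuine divergence is your claim that the hypotheses $m' \succeq m$ and $f' = f$ ``play no role''. The paper's proof is built around exactly those hypotheses: after disposing of the case that $\arr{\Rules}$ is non-terminating, it assumes $\arr{\Rules}$ (hence $\arr{\Rules'}$) terminating, takes the infinite $(\P',\Rules')$-chain witnessing infiniteness, and shows that, read as a $(\P,\Rules)$-chain, it also respects the flags of the \emph{input} problem: formative because $f = f'$, minimal because every chain over terminating rules is trivially minimal, and $\AlterRules$-computable when $m = \static_\AlterRules$ because $\static_\AlterRules$ is maximal under $\succeq$, so $m' \succeq m$ forces $m' = \static_\AlterRules$. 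This extra work is indeed superfluous under the literal text of the definition of ``infinite'' (which asks only for conservativity of the DPs), and under that reading your shorter proof is complete and even shows the flag conditions could be dropped from the lemma statement. But it is precisely what is needed under the flag-sensitive reading of infiniteness that the paper's own proof presupposes (it extracts a flag-compliant chain from the infiniteness of $(\P',\Rules',m',f')$, which the flag-free definition does not license) and that the non-termination processor (\refThm{def:nontermproc}) and \refEx{ex:nontermproc} use, where the conservative chain must exist ``according to the flags $m$ and $f$''. Under that evidently intended semantics, your proof has a gap exactly where you wave the flags away: you would still need to upgrade the transferred chain to one that is formative, minimal, or $\AlterRules$-computable as dictated by $m$ and $f$, which is where the discarded hypotheses and the termination of $\arr{\Rules}$ re-enter. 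So your route buys brevity and nominally weaker hypotheses by committing to the weakest reading of the definition; the paper's route is robust to both readings.
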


\begin{proof}
$\Proc(\P,\Rules,m,f)$ is never \texttt{NO}.  Suppose $\Proc(\P,
\Rules,m,f)$ contains an infinite element $(\P',\Rules',m',f')$; we
must prove that then $(\P,\Rules,m,f)$ is infinite as well.  This is
certainly the case if $\arr{\Rules}$ is non-terminating, so assume
that $\arr{\Rules}$ is terminating.  Then certainly $\arr{\Rules'}
\mathop{\subseteq} \arr{\Rules}$ is
terminating as well, so $(\P',\Rules',m',f')$ can be infinite only
because there exists an infinite $(\P',\Rules')$-chain that is
$\AlterRules$-computable if $m' = \static_\AlterRules$, minimal if
$m' = \minimal$ and formative if $f' = \formative$.  By definition,
this is also a $(\P,\Rules)$-dependency chain, which is formative if
$f = f' = \formative$.  Since $\arr{\Rules}$ is terminating, this
chain is also minimal.
If we have $m = \static_\AlterRules$, then also $m' = \static_\AlterRules$
(since $\static_\AlterRules$ is maximal under $\succeq$)
and the chain is indeed $\AlterRules$-computable.
\qed
\end{proof}

\subsection{The dependency graph}

The dependency graph processor lets us split a DP problem into
multiple smaller ones.
To prove soundness of its main processor, we first prove a helper
result.

\begin{lemma}\label{lem:graphend}
Let $\adpprob = (\P,\Rules,m,f)$ and $G_\theta$ an approximation of
its dependency graph.  Then for every infinite $\adpprob$-chain
$[(\rho_0,s_0,t_0), (\rho_1,s_1,t_1), \ldots]$
there exist $n \in \N$ and a cycle
$C$ in $G_\theta$ such that for all $i > n$: $\theta(\rho_i) \in C$.
\end{lemma}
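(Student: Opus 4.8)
The plan is to convert the infinite chain into an infinite walk in the finite graph $G_\theta$ and then extract a cycle by a pigeonhole argument. First I would show that every consecutive pair in the chain induces an edge of the dependency graph $\mathit{DG}$. Given the infinite $M$-chain $[(\rho_0,s_0,t_0),(\rho_1,s_1,t_1),\ldots]$, for each $i$ the two-element sequence $[(\rho_i,s_i,t_i),(\rho_{i+1},s_{i+1},t_{i+1})]$ is itself a $(\P,\Rules)$-chain: conditions (1) and (2) of \refDef{def:chain} hold for it because they already hold, index by index, in the infinite chain. Moreover it inherits the $m,f$ properties, since these are all stated pointwise. Minimality concerns the strict subterms of each individual $t_j$; formativity concerns each individual reduction $t_j \arrr{\Rules} s_{j+1}$ being $\ell_{j+1}$-formative; and $\AlterRules$-computability concerns each individual substitution $\gamma_j$. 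Hence restricting attention to the indices $i,i+1$ preserves whichever properties the flags demand, and by \refDef{def:depgraph} there is an edge from $\rho_i$ to $\rho_{i+1}$ in $\mathit{DG}$.

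Next, since $G_\theta$ approximates $\mathit{DG}$, each such edge $\rho_i \to \rho_{i+1}$ of $\mathit{DG}$ yields an edge $\theta(\rho_i) \to \theta(\rho_{i+1})$ of $G_\theta$. Therefore the sequence $\theta(\rho_0),\theta(\rho_1),\ldots$ is an infinite walk in $G_\theta$. (Note that this step does not require $\theta$ to be injective: even if $\theta(\rho_i)=\theta(\rho_{i+1})$, the resulting self-loop is a legitimate edge.)

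Finally I would extract the cycle. Because $G_\theta$ is finite, the set $V$ of nodes occurring as $\theta(\rho_i)$ for infinitely many $i$ is nonempty, and the finitely many nodes visited only finitely often occur only before some bound; choosing $n$ past that bound gives $\theta(\rho_i) \in V$ for all $i > n$. For any two nodes $u,v \in V$, both occur infinitely often after $n$, so a contiguous segment of the walk between an occurrence of $u$ and a later occurrence of $v$ is a path from $u$ to $v$, and symmetrically there is a path from $v$ to $u$; thus all nodes of $V$ are mutually reachable and lie in a single strongly connected component $C$ of $G_\theta$. Since for each $i > n$ the walk traverses the edge $\theta(\rho_i) \to \theta(\rho_{i+1})$ with both endpoints in $V$, the component $C$ carries a (non-trivial) cycle through its nodes, and $\theta(\rho_i) \in C$ for all $i > n$, as required.

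The first step is the conceptually important one, but it is technically light because the flag conditions are genuinely pointwise; the only place calling for care is the final cycle extraction, where one must be precise that the tail of the walk stays within one \emph{non-trivial} strongly connected component (equivalently, on a common cyclic walk), which is exactly what the downstream dependency graph processor (\refThm{def:graphproc}) consumes.
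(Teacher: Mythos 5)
Your proof is correct and follows essentially the same route as the paper's: consecutive chain elements give edges in $\mathit{DG}$ (hence in $G_\theta$ by the approximation property), and a pigeonhole argument on the infinitely-visited nodes yields the tail cycle. Your treatment is in fact slightly more careful than the paper's at the first step, where you spell out that the $m,f$ flag conditions are pointwise and therefore inherited by the two-element sub-chain — a point the paper dismisses as ``obvious''.
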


\begin{proof}
We claim (**):
for all $i \in \N$, there is an edge from $\theta(\rho_i)$ to
$\theta(\rho_{i+1})$.
By definition of \emph{approximation}, the claim follows if $DG$ has
an edge from $\rho_i$ to $\rho_{i+1}$.  But this is obvious: by
definition of a chain, if
$[(\rho_0,s_0,t_0), (\rho_1,s_1,t_1), \ldots]$
is a
dependency chain, then so is $[(\rho_i,s_i,t_i),(\rho_{i+1},s_{i+1},
t_{i+1})]$.

Now, having (**), the chain traces an infinite path in
$G_\theta$.  Let $C$ be the set of nodes that occur infinitely often
on this path; then for every node $d$ that is not in $C$, there is
an index $n_d$ after which $\theta(\rho_i)$ is never $d$ anymore.
Since $G_\theta$ is a finite graph, we can take $n :=
\max(\{ n_d \mid d$ a node in $G_\theta \wedge d \notin C \})$.  Now
for every pair $d,b \in C$: because they occur infinitely often, there
is some $i > n$ with $\theta(\rho_i) = d$ and there is $j > i$ with
$\theta(\rho_j) = b$.  Thus, by (**) there is a path in $G_\theta$
from $d$ to $b$.  Similarly, there is a path from $b$ to $d$.
Hence, they are on a cycle.
\qed
\end{proof}

Note that to find a chain with all $\theta(\rho_i) \in C$, we do not
need to modify the original chain at all: the satisfying chain is a
tail of the original chain.  Hence, the same flags apply to the
resulting chain.
This makes it very easy to prove correctness of the main processor:

\oldcounter{\procDPGraph}{\procDPGraphSec}
\DPGraphTheProc
\startappendixcounters

\begin{proof}
Completeness follows by \refLemma{lem:subsetcomplete}.
Soundness follows because if $(\P,\Rules,m,\linebreak
f)$ admits an infinite
chain, then by \refLemma{lem:graphend} there is a cycle $C$ such that
a tail of this chain is mapped into $C$.  Let $C'$ be the strongly
connected component in which
$C$ lies, and $\P' = \{ \rho \in \P \mid \theta(\rho) \in
C' \}$.  Then clearly the same tail lies in $\P'$, giving an infinite
$(\P',\Rules,m,f)$-chain, and $(\P',\Rules,m,f)$ is one of the elements
of the set returned by the dependency graph processor.
\qed
\end{proof}

The dependency graph processor is essential to prove termination in
our framework because it is the only processor defined so far that
can map a DP problem to $\emptyset$.

\subsection{Processors based on reduction triples}

\oldcounter{\procRedpair}{\procRedpairSec}
\redpairTheProc
\startappendixcounters

\begin{proof}
Completeness follows by \refLemma{lem:subsetcomplete}.  Soundness
follows because every infinite $(\P_1 \uplus \P_2,\Rules)$-chain
$[(\rho_0,s_0,t_0), (\rho_1,s_1,t_1), \ldots]$
with $\P_1,\P_2,\Rules$ satisfying
the given properties induces an infinite $\pgt \cup \pge \cup \rge$
sequence, and every occurrence of a DP in $\P_1$ in the chain
corresponds to a $\pgt$ step in the sequence.  By compatibility of
the relations, well-foundedness guarantees that there can only be
finitely many such steps, so there exists some $n$ such that
$[(\rho_{n},s_{n},t_{n}), (\rho_{n+1},s_{n+1},t_{n+1}), \ldots]$
is an infinite
$(\P_2,\Rules)$-chain.

To see that we indeed obtain the sequence, let $i \in \N$.  Denote
$\rho_i := \ell \arrdp\ p\ (A)$, and let $\gamma$ be a substitution
on domain $\FMV(\ell) \cup \FMV(p)$ such that $s_i = \ell\gamma$ and
$t_i = p\gamma$.  Meta-stability gives us that $s_i = \ell\gamma\ 
(\pge \cup \pgt)\ p\gamma = t_i$.  As $\arr{\Rules}$ is included in
$\rge$ by meta-stability and monotonicity, and because $t_i \arrr{
\Rules} s_{i+1}$, we have $t_i \rge s_{i+1}$.
Thus, $s_i (\pge \cup \pgt) \cdot \rge s_{i+1}$.
Moreover, a $\pgt$ step is used if $\rho_i \in \P_1$.
\qed
\end{proof}

Now that we have seen a basic processor using reduction triples,
soundness of the base-type processor presented in the text follows
easily.

\oldcounter{\procBasetypeRedpair}{\procBasetypeRedpairSec}
\basetypeRedpairTheProc
\startappendixcounters

\begin{proof}
Completeness follows by \refLemma{lem:subsetcomplete}.  Soundness
follows by soundness of \refThm{thm:redpairproc}: let $(\rge,\pge,\pgt)$
be a reduction triple satisfying the requirements above, and for
$R \in \{\pge,\pgt\}$ define $R'$ as follows: for $s : \atype_1
\arrtype \dots \arrtype \atype_\maa \arrtype \asort$ and $t :
\btype_1 \arrtype \dots \arrtype \btype_n \arrtype \bsort$, let
$s\ R'\ t$ if for all $u_1 : \atype_1,\dots,u_\maa : \atype_\maa$
there exist $w_1 : \btype_1,\dots,w_n : \btype_n$ such that
$\apps{s}{u_1}{u_\maa}\ R\ \apps{t}{w_1}{w_n}$.  We claim that
$(\rge,\pge',\pgt')$ is a reduction triple satisfying the requirements
of \refThm{thm:redpairproc}, which implies soundness of the present
processor.

It is clear that $\pge'$ and $\pgt'$ satisfy the requirements of
\refThm{thm:redpairproc}: if $\ell \arrdp p\ (A) \in \P_1$, then
for any $u_1,\dots,u_\maa$ we let $w_1:=\bot_{\btype_1},\dots,w_n:=
\bot_{\btype_n}$ and have $\apps{\ell}{u_1}{u_\maa} \pgt
\apps{p}{w_1}{w_n}$ by meta-stability of $\pgt$; the same holds for
$\pge$.  It remains to be seen that $\pgt'$ and $\pge'$ are both
transitive and meta-stable, that $\pge'$ is
reflexive and that $\pgt'$
is well-founded.
\begin{itemize}
\item Meta-stability: given that $\ell \pgt' p$ and $\gamma$ is a
  substitution on domain $\FMV(\ell) \cup \FMV(p)$, we must see that
  $\ell\gamma \pgt' p\gamma$ (the case for $\pge'$ follows in the
  same way).  Let $u_1,\dots,u_\maa$ be arbitrary terms and $\delta
  := \gamma \cup [Z_1:=u_1,\dots,Z_\maa:=u_\maa]$; then
  $(\apps{\ell}{Z_1}{Z_\maa})\delta \pgt (\apps{p}{\bot_{\btype_1}}{
  \bot_{\btype_n}})\delta$, so indeed $\ell\delta = \ell\gamma \pgt'
  p\gamma$.
\item Transitivity: if $s \pgt t \pgt v$ then for all $\vec{u}$
  there exist $\vec{w}$ such that $s\ \vec{u} \pgt t\ \vec{w}$, and
  for all $\vec{w}$ there exist $\vec{q}$ such that $t\ \vec{w}
  \pgt v\ \vec{q}$; thus, also $s\ \vec{u} \pgt v\ \vec{q}$.  The
  case for $\pge$ is similar.
\item
Reflexivity
of $\pge'$: always $s \pge' s$ since for all $u_1,
  \dots,u_\maa$ we have $s\ \vec{u} \pge s\ \vec{u}$.
\item Well-foundedness of $\pgt'$: suppose $s_1 \pgt' s_2 \pgt' \dots$
  and let $\vec{u_1}$ be a sequence of variables;
  we find
  $\vec{u_2},\vec{u_3},\dots$ such that
  $s_1\ \vec{u_1} \pgt' s_2\ \vec{u_2} \pgt' \dots$ as in the case for
  transitivity.
\qed
\end{itemize}
\end{proof}

\subsection{Rule removal without search for orderings}

There is very little to prove: the importance is in the definition.

\oldcounter{\procFormative}{\procFormativeSec}
\formativeTheProc
\startappendixcounters

\begin{proof}
Completeness follows by \refLemma{lem:subsetcomplete}.  Soundness
follows by definition of a formative rules approximation (a formative
infinite
$(\P,\Rules)$-dependency chain can be built using only rules in
$\FR(\P,\Rules)$).
\qed
\end{proof}

The practical challenge lies in proving that a given formative rules
approximation really is one.  The definition of a good approximation
function is left to future work.

\subsection{Subterm criterion processors}

Next, we move on to the subterm processors.  We first present the
basic one -- which differs little from its first-order counterpart,
but is provided for context.

\oldcounter{\procSubtermCriterion}{\procSubtermCriterionSec}
\subtermCriterionTheProc
\startappendixcounters

\begin{proof}
Completeness follows by \refLemma{lem:subsetcomplete}.  Soundness
follows because an infinite $(\P,\Rules,m,f)$-chain with the
properties above induces an infinite sequence $\project(s_0)
\suptermeq \project(t_0) \arrr{\Rules} \project(s_1) \suptermeq
\project(t_1) \arrr{\Rules} \dots$.
Since the chain is minimal (either because $m = \minimal$, or by
\refLemma{lem:computableminimal} if $m = \static_\AlterRules$),
$\project(p_0)$ is terminating, and therefore it is terminating under
$\arr{\Rules} \mathop{\cup} \supterm$.
Thus, there is some index $n$ such
that for all $i \geq n$: $\project(s_i) = \project(t_i) =
\project(s_{i+1})$.  But this can only be the case if
$\project(\ell_i) = \project(p_i)$.
But then the tail of the chain starting at
position $n$ does not use any pair in $\P_1$, and is therefore an
infinite $(\P_2,\Rules,m,f)$-chain.
\qed
\end{proof}

We now turn to the proof of the computable subterm
criterion processor.  This proof is very similar to the one for the
normal subterm criterion, but it fundamentally uses the definition of
a computable chain.

\oldcounter{\procStaticSubtermCriterion}{\procStaticSubtermCriterionSec}
\staticSubtermCriterionTheProc
\startappendixcounters

\begin{proof}
Completeness follows by \refLemma{lem:subsetcomplete}.  Soundness
follows because, for $C := C_\AlterRules$ the computability predicate
corresponding to $\arr{S}$, an infinite $(\P,\Rules,\linebreak
\static_\AlterRules,f)$-chain
induces an infinite $\accreduce{C} \cup \arr{\Rules}$ sequence starting
in the $C$-computable term $\project(s_1)$, with always $s_i
(\accreduce{C} \cup \arr{\Rules})^* t_i$ if $\rho_i \in \P_1$ and
$\project(s_i) = \project(t_i)$ if $\rho_i \in \P_2$; like in the proof
of the subterm criterion, this proves that the chain has a tail that is
a $(\P_2,\Rules,\static_\AlterRules,f)$-chain because, by definition
of $C$, $\arr{\Rules} \cup \accreduce{C}$ is terminating on
$C$-computable terms.

It remains to be seen that we indeed have $\project(s_i)\ (\accreduce{C}
\cup\ \arr{\Rules})^+
 \project(t_i)$ whenever $\rho_i \in \P_1$.  So
suppose that $\rho$ is a dependency pair $\ell \arrdp p\ (A) \in \P_1$
such that $\project(\ell) \sqsupset \project(p)$; we must see that
$\project(\ell\gamma)\ (\accreduce{C} \cup \arr{\beta})^+
\project(p\gamma)$ for any substitution $\gamma$ on domain
$\FMV(\ell) \cup \FMV(r)$ such that $v\gamma$ is $C$-computable for
all $v,B$ such that $r \bsuptermeq{B} v$ and $\gamma$ respects $B$.

Write $\ell = \apps{\afun}{\ell_1}{\ell_\mia}$ and $p = \apps{\bfun}{
p_1}{p_n}$; then $\project(\ell\gamma) = \ell_{\nu(\afun)}\gamma$ and
$\project(p\gamma) = p_{\nu(\bfun)}\gamma$.
Since, by definition of a dependency pair, $\ell$ is closed, we also
have $\FV(\ell_{\nu(\afun)}) = \emptyset$.
Consider the two possible reasons why $\ell_{\nu(\afun)} \sqsupset
p_{\nu(\bfun)}$.

\begin{itemize}
\item $\ell_{\nu(\afun)} \gracc p_{\nu(\bfun)}$:
  since both sides have base type by assumption and $\ell_{\nu(\afun)}$
  is closed, 
  by \refLemma{lem:preservecompacchelper} also
  $\ell_{\nu(\afun)}\gamma\ (\accreduce{C} \cup \arr{\beta})^*
  p_{\nu(\bfun)}\gamma$.
\item $\ell_{\nu(\afun)} \gracc \meta{Z}{x_1,\dots,x_\mia}$ and
  $p_{\nu(\bfun)} = \apps{\meta{Z}{u_1,\dots,u_\mac}}{v_1}{v_n}$:
  denote $\gamma(Z) = \abs{x_1 \dots x_\mia}{q}$ and also $\gamma(Z)
  \approxp \abs{x_1 \dots x_\mac}{q'}$.  Then we can write $q =\linebreak
  \abs{x_{\mia+1} \dots x_i}{q''}$ as well as $q' = \apps{q''}{x_{i+1}}{
  x_\mac}$ for some $\mia \leq i \leq \mac$.
  Moreover:
  \[
  p_{\nu(\bfun)}\gamma = \apps{q'[x_1:=u_1\gamma,\dots,
  x_\mac:=u_\mac\gamma]}{v_1\gamma}{v_n\gamma}
  \]

  By definition of an $\AlterRules$-computable chain, $v_j\gamma$ is computable
  for each $1 \leq j \leq n$, and $u_j\gamma$ is computable for each
  $1 \leq j \leq \mac$ such that $x_j \in \FV(q')$.  Write $v_j' :=
  v_j\gamma$ and let $u_j' := u_j\gamma$ if $x_j \in \FV(q')$,
  otherwise $u_j' :=$ a fresh variable; then all $u_j'$ and $v_j'$
  are computable, and still:
  \[
  \begin{array}{cl}
    & p_{\nu(\bfun)}\gamma\\
  = & \apps{q'[x_1:= u_1',\dots,x_\mac:=u_\mac']}{v_1'}{v_n'}\\
  = & \apps{\apps{q''[x_1:=u_1',
      \dots,x_i:=u_i']}{u_{i+1}'}{u_{\mac'}'}}{v_1'}{v_n'}
  \end{array}
  \]
  On the other hand, by \refLemma{lem:preservecompacchelper} and the
  observation that $\FV(\ell_{\nu(\afun)}) = \emptyset$, we have
  $\ell_{\nu(\afun)
  }\gamma\ (\accreduce{C} \cup \arr{\beta})^+\
  \apps{\apps{q[x_1:=u_1',
  \dots,
  x_\mia:=u_\mia']}{u_{\mia+1}'}{u_\mac}'}{v_1'}{v_n'}$,
  and as
  $q = \abs{x_{\mia+1} \dots x_i}{q''}$ this term $\beta$-reduces to
  $\apps{\apps{q''[x_1:=u_1',\dots,x_i:=u_i']}{u_{i+1}'}{
  u_{\mac'}'}}{v_1'}{v_n'} = p_{\nu(\bfun)}\gamma$.
\qed
\end{itemize}
\end{proof}

\subsection{Non-termination}

Soundness and completeness of the non-termination processor
in \refThm{def:nontermproc} are both direct consequences of
\refDef{def:dpproblem} and \refDef{def:proc}.

\section{Experimental results}\label{sec:experiments}

Finally, while the main paper focuses on a theoretical
exposition, we here present an experimental evaluation of the
results in this paper.  The work has been implemented in the second
author's termination tool \wanda, using \emph{higher-order polynomial
interpretations} \cite{fuh:kop:12} and a \emph{recursive path ordering}
\cite[Chapter~5]{kop:12} for reduction triples.  Both methods rely on
an encoding of underlying constraints into SAT.  The search for a sort
ordering, and a projection function for the subterm criterion, is also
delegated to SAT.  The non-termination processor has not been implemented
(\wanda performs some loop analysis, but outside the DP framework; this
is a planned future improvement), and the subterm criterion processor
has been merged with the computable subterm criterion processor.

In addition to the results in this paper, \wanda includes a
search for monotonic termination orderings outside the DP framework
(using successive rule removal), a dynamic DP framework (following
\cite{kop:raa:12}), and a mechanism \cite{fuh:kop:11}
within both DP frameworks to delegate some first-order parts of
the AFSM to a first-order termination tool (here we use
\aprove\ \cite{gie:asc:bro:emm:fro:fuh:hen:ott:plu:sch:str:swi:thi:17}).

We have evaluated the power of our techniques on the
\emph{Termination Problems Database} \cite{tpdb}, version 10.5.
Of the 198 benchmarks in the category \emph{Higher Order Union Beta},
153 are accessible function passing.  Comparing the power of static
DPs versus dynamic DPs or no DP framework gives the following results
(where \emph{Time} is the average runtime on success in seconds):

\smallskip\noindent
\begin{tabular}{|l|l|l|}
\hline
\textbf{Technique} & \textbf{Yes} & \textbf{Time} \\
\hline
Only rule removal & \phantom{0}92 & \ 0.32 \\
\hline
Static DPs with techniques from this paper & 124 & \ 0.07 \\
\hline
Dynamic DPs with techniques from this paper & 132 & \ 0.52 \\
\hline
Static DPs with delegation to a first-order prover &
  129 & \ 0.58 \\
\hline
Dynamic DPs with delegation to a first-order prover &
  137 & \ 0.93 \\
\hline
Static and dynamic DPs with delegation to a first-order prover &
  150 & \ 0.70 \\
\hline
Non-terminating? & \phantom{0}16 & \ \\
\hline
\end{tabular}

\smallskip
While static DPs have a slightly lower success rate than dynamic DPs,
their evaluation is much faster, since they allow for greater
modularity: the dynamic setting includes dependency pairs
where the right-hand does not have a (marked) defined symbol at the
head (e.g., $\map\ F\ (\cons\ H\ T) \arrdp F\ H$), which make both the
subterm criterion processor and the dependency graph processor harder
to apply. The combination of static and dynamic DPs performs
substantially better than either style alone: although the gains can
be seen as modest if the size of the data set is not taken into account
(153 versus 166 YES+NO, giving an 8\% increase),
the numbers indicate a 29\% \emph{decrease} in failure rate (from 45
to 32).

We have also compared the individual techniques in this paper by
disabling them from the second test above.  This gives the table
below:

\smallskip\noindent
\begin{minipage}{0.39\textwidth}
\begin{tabular}{|l|l|l|}
\hline
\textbf{Disabled} & \textbf{Yes} & \textbf{Time} \\
\hline
Formative rules & 124 & \ 0.07 \\
\hline
Usable rules & 124 & \ 0.10 \\
\hline
Subterm criterion & 121 & \ 0.16 \\
\hline
Graph & 121 & \ 0.35 \\
\hline
Reduction triples & \phantom{0}92 & \ 0.01 \\
\hline
Nothing & 124 & \ 0.07 \\
\hline
\end{tabular}
\end{minipage}
\parbox{0.6\textwidth}{Note that none of the techniques individually
give much power except for reduction triples:
where one method is disabled,
another can typically pick up the slack.  If all processors except
reduction triples are disabled, only 105 benchmarks are proved.
Most  processors do individually give a significant \emph{speedup}.
Only formative rules does not; this processor is useful in the dynamic
setting, but does not appear to be so here.}

\medskip
Evaluation pages for these and further experiments
are available at:
\begin{center}
\url{https://www.cs.ru.nl/~cynthiakop/experiments/esop19/}
\end{center}

\end{document}